\g@addto@macro\bfseries{\boldmath}
\g@addto@macro\mdseries{\unboldmath}
\g@addto@macro\normalfont{\unboldmath}
\g@addto@macro\rmfamily{\unboldmath}
\g@addto@macro\upshape{\unboldmath}
\renewcommand*{\multicitedelim}{\addcomma\space}
\newcommand{\myhref}[1]{%
  \iffieldundef{doi}
    {\iffieldundef{url}
       {#1}
       {\href{\strfield{url}}{#1}}}
    {\href{http://dx.doi.org/\strfield{doi}}{#1}}%
}
    \newlength{\temp@x}%
    \newlength{\temp@y}%
    \newlength{\temp@w}%
    \newlength{\temp@h}%
    \def\my@coords#1#2#3#4{%
      \setlength{\temp@x}{#1}%
      \setlength{\temp@y}{#2}%
      \setlength{\temp@w}{#3}%
      \setlength{\temp@h}{#4}%
      \adjustlengths{}%
      \my@pdfliteral{\strip@pt\temp@x\space\strip@pt\temp@y\space\strip@pt\temp@w\space\strip@pt\temp@h\space re}}%
      \def\my@pdfliteral#1{\pdfliteral page{#1}}
      \def\adjustlengths{}%
      \def\my@pdfliteral #1{}
      \def\adjustlengths{\setlength{\temp@h}{-\temp@h}\addtolength{\temp@y}{1in}\addtolength{\temp@x}{-1in}}%
    \def\Hy@colorlink#1{%
      \begingroup
        \ifHy@ocgcolorlinks
          \def\Hy@ocgcolor{#1}%
          \my@pdfliteral{q}%
          \my@pdfliteral{7 Tr}
        \else
          \HyColor@UseColor#1%
        \fi
    }%
    \def\Hy@endcolorlink{%
      \ifHy@ocgcolorlinks%
        \my@pdfliteral{/OC/OCPrint BDC}%
        \my@coords{0pt}{0pt}{\pdfpagewidth}{\pdfpageheight}%
        \my@pdfliteral{F}
        %
        \my@pdfliteral{EMC/OC/OCView BDC}%
        \begingroup%
          \expandafter\HyColor@UseColor\Hy@ocgcolor%
          \my@coords{0pt}{0pt}{\pdfpagewidth}{\pdfpageheight}%
          \my@pdfliteral{F}
        \endgroup%
        \my@pdfliteral{EMC}%
        \my@pdfliteral{0 Tr}
        \my@pdfliteral{Q}%
      \fi
      \endgroup
    }%
\def\moverlay{\mathpalette\mov@rlay}
\def\mov@rlay#1#2{\leavevmode\vtop{%
		\baselineskip\z@skip \lineskiplimit-\maxdimen
		\ialign{\hfil$\m@th#1##$\hfil\cr#2\crcr}}}
\newcommand{\charfusion}[3][\mathord]{
	#1{\ifx#1\mathop\vphantom{#2}\fi
		\mathpalette\mov@rlay{#2\cr#3}
	}
	\ifx#1\mathop\expandafter\displaylimits\fi}
\newcommand{\cupdot}{\charfusion[\mathbin]{\cup}{\cdot}}
\colorlet{DarkRed}{red!50!black}
\colorlet{DarkGreen}{green!50!black}
\colorlet{DarkBlue}{blue!50!black}
\declaretheorem[numberwithin=section]{theorem}
\declaretheorem[numberlike=theorem]{lemma}
\declaretheorem[numberlike=theorem]{corollary}
\declaretheorem[numberlike=theorem]{definition}
\newcommand{\defeq}{:=}
\newcommand{\dist}{d}
\newcommand{\ww}{w}
\newcommand{\lap}{\mathcal{L}}
\newcommand{\prob}[2][]{\mathbb{P}_{#1} \left[ #2 \right]}
\newcommand{\expec}[2][]{\mathbb{E}_{#1} \left[ #2 \right]}
\newcommand{\abs}[1]{\left| #1 \right|}
\newcommand{\VC}{V\!C}
\newcommand{\MVC}{M\!V\!C}
\newcommand{\OPT}{O\!PT}
\newcommand{\tSpectral}{\lceil 12 (c+1) \alpha \epsilon^{-2} \ln{n} \rceil}
\newcommand{\tCut}{ C_\xi  c  \alpha \log W \log^2 n / \epsilon^2 }
\def\polylog{\operatorname{polylog}}
\def\poly{\operatorname{poly}}
\newenvironment{fminipage}%
  {\begin{Sbox}\begin{minipage}}%
  {\end{minipage}\end{Sbox}\fbox{\TheSbox}}
\newenvironment{algbox}[0]{\vskip 0.2in
\noindent
\begin{fminipage}{6.3in}
}{
\end{fminipage}
\vskip 0.2in
}
\def\todo#1{{ \color{red} TODO: #1}}
\def\sebastian#1{\marginpar{$\leftarrow$\fbox{S}}\footnote{\sf \color{magenta} #1 --Sebastian}}
\title{On Fully Dynamic Graph Sparsifiers}
\author{
Ittai Abraham\thanks{VMware Research}
\and
David Durfee\thanks{Georgia Institute of Technology}
\and
Ioannis Koutis\thanks{University of Puerto Rico, Rio Piedras}
\and
Sebastian Krinninger\thanks{Max Planck Institute for Informatics, Saarland Informatics Campus, Germany
}
\and
Richard Peng\thanks{Georgia Institute of Technology}
}
\date{}
\begin{document}
\pagenumbering{roman}
\maketitle
\begin{abstract}
We initiate the study of fast dynamic algorithms for graph sparsification problems and obtain fully dynamic algorithms, allowing both edge insertions and edge deletions, that take polylogarithmic time after each update in the graph.
Our three main results are as follows.
First, we give a fully dynamic algorithm for maintaining a $ (1 \pm \epsilon) $-spectral sparsifier with amortized update time $\poly(\log{n}, \epsilon^{-1})$.
Second, we give a fully dynamic algorithm for maintaining a $ (1 \pm \epsilon) $-cut sparsifier with \emph{worst-case} update time $\poly(\log{n}, \epsilon^{-1})$.
Both sparsifiers have size $ n \cdot \poly(\log{n}, \epsilon^{-1})$.
Third, we apply our dynamic sparsifier algorithm to obtain a fully dynamic algorithm for maintaining a $(1 + \epsilon)$-approximation to the value of the maximum flow in an unweighted, undirected, bipartite graph with amortized update time $\poly(\log{n}, \epsilon^{-1})$.

\end{abstract}
\newpage

\tableofcontents
\newpage

\setcounter{page}{0}
\pagenumbering{arabic}

\section{Introduction}

Problems motivated by graph cuts are well studied
in theory and practice.
The prevalence of large graphs motivated sublinear time
algorithms for cut based problems such as
clustering~\cite{SpielmanT13,BorgsBCT12,AndersenCL06,
AndersenP09,OrecchiaV11,GharanT12}.
In many cases such as social networks or road networks, these
algorithms need to run on dynamically evolving graphs.
In this paper, we study an approach for obtaining sublinear time
algorithms for these problems based on dynamically maintaining
graph sparsifiers.

Recent years have seen a surge of interest in dynamic graph algorithms.
On the one hand, very efficient algorithms, with polylogarithmic running time per update in the graph, could be found for some key problems in the field~\cite{HenzingerK99,HolmLT01,KapronKM13,OnakR10,NeimanS13,BaswanaGS15,BhattacharyaHI15,BaswanaKS12,AbrahamCDGW16}.
On the other hand, there are polynomial conditional lower bounds for many basic graph problems~\cite{Patrascu10,AbboudW14,HenzingerKNS15}.
This leads to the question which problems can be solved with polylogarithmic update time.
Another relatively recent trend in graph algorithmics is graph sparsification where we reduce the size of graphs while approximately
preserving key properties such as the sizes of cuts~\cite{BenczurK15}.
These routines and their extensions to the spectral setting~\cite{SpielmanT11,BatsonSST13}
play central roles in a number of recent algorithmic
advances~\cite{Madry10,Sherman13,KelnerLOS14,PengS14,SpielmanTengSolver:journal,KyngLPSS16,Peng16},
often leading to graph algorithms that run in almost-linear time.
In this paper, we study problems at the intersection of dynamic algorithms and graph sparsification, leveraging ideas from both fields.

At the core of our approach are data structures that dynamically
maintain graph sparsifiers in $\polylog{n}$ time per edge
insertion or deletion.
They are motivated by the spanner based constructions
of spectral sparsifiers of Koutis~\cite{Koutis14}.
By modifying dynamic algorithms for spanners~\cite{BaswanaKS12},
we obtain data structures that spend amortized $\polylog{n}$ per update.
Our main result for spectral sparsifiers is:
\begin{restatable}[]{theorem}{mainSpectral}
\label{thm:mainSpectral}
Given a graph with polynomially bounded edge weights,
we can dynamically maintain a $(1 \pm \epsilon)$-spectral sparsifier of size $n \cdot \poly(\log{n}, \epsilon^{-1})$ with amortized update time
$\poly(\log{n}, \epsilon^{-1})$ per edge insertion / deletion.
\end{restatable}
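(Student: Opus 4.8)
The plan is to turn Koutis's spanner-based spectral sparsification \cite{Koutis14} into a dynamic data structure. First recall the static construction. After a standard reduction handling polynomially bounded weights (splitting into $O(\log n)$ weight classes, or carrying weights through the recursion directly), the sparsifier is built in $L = O(\log n)$ rounds: one maintains residual graphs $G_0 = G, G_1, G_2, \dots$, and in round $j$ extracts from $G_j$ a \emph{$t$-bundle spanner} $B_j = H_j^{(1)} \cup \dots \cup H_j^{(t)}$, where $H_j^{(i)}$ is an $O(\log n)$-stretch spanner of $G_j \setminus (H_j^{(1)} \cup \dots \cup H_j^{(i-1)})$ and $t = \poly(\log{n}, \epsilon^{-1})$. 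The point of the bundle is that every edge $e \notin B_j$ has $t$ short, essentially edge-disjoint paths through $B_j$, hence effective resistance, and thus leverage score relative to $G_j$, at most $O(\log n / t) \le \epsilon^{2}/(C \log n)$; standard leverage-score / matrix-concentration sampling then guarantees that keeping each non-bundle edge of $G_j$ independently with probability $\frac12$ and reweighting survivors by $2$ produces $G_{j+1}$ such that the union of the $B_j$'s, with the weights they inherit from the recursion, is a $(1\pm\epsilon)$-spectral sparsifier of $G$ with high probability. Each round removes a constant fraction of the edges in expectation, so after $L = O(\log n)$ rounds the residual already has $n \cdot \poly(\log{n}, \epsilon^{-1})$ edges and is kept in full; the total size is $L \cdot \max_j |B_j| = n \cdot \poly(\log{n}, \epsilon^{-1})$.

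To make this dynamic, I would first make the randomness oblivious to the update sequence: attach to each potential edge $e$, once and for all, an independent uniform value $r_e \in [0,1]$, and declare $e$ ``alive at level $j$'' iff $r_e < 2^{-j}$ and $e$ was never captured by $B_0, \dots, B_{j-1}$. This realizes all the rate-$\frac12$ subsamplings simultaneously while ensuring that touching one edge of $G$ never re-randomizes another, so correctness at every point in time follows from the static analysis applied to the current graph against an oblivious adversary. Second, for each level $j = 0, \dots, L$ maintain the bundle $B_j$ of the current $G_j$ by running $t$ chained copies of a fully dynamic spanner data structure, where copy $i$ maintains $H_j^{(i)}$ on $G_j$ minus the edges currently claimed by copies $1, \dots, i-1$. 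The spanner data structure is obtained by adapting the Baswana--Kumar--Sen spanner \cite{BaswanaKS12}: run its deletions-only variant and upgrade it to fully dynamic by rebuilding after every linear batch of insertions (inserting an edge into a spanner is a cheap, non-cascading operation), and --- the key modification --- arrange that each update causes few structural changes to the spanner. An update to $G$ is a single edge update at level $0$; whenever the bundle structure at some level $j$ changes, the induced edge updates are forwarded to $G_{j+1}$, and so on down the hierarchy.

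The remaining and main task is to bound the work, i.e.\ how much one update to $G$ fans out across the $L$ levels. Insertions are benign: a new edge propagates downward and at each level is either absorbed into that level's bundle (one structural change, with no cascade, since adding an edge to a spanner never ejects another) or dies because $r_e \ge 2^{-j}$ or reaches level $L$ --- total $O(L)$ work plus $O(L)$ calls to the spanner structures. Deleting a non-bundle edge is likewise benign, a simple walk down the chain. The dangerous case is deleting a \emph{bundle} edge at level $j$: the dynamic spanner must repair $B_j$, possibly pulling formerly-non-bundle edges of $G_j$ up into $B_j$, each of which must then be deleted from $G_{j+1}$ and below, possibly triggering further repairs, and so on. The fan-out is thus controlled by the amortized \emph{recourse} (structural changes per update) of the dynamic spanner, and the naive bound of recourse $\rho$ per level only yields a $(1+\rho)^{L}$ blow-up, which is $\poly(n)$, not $\poly(\log n)$. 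I expect this to be the crux of the proof: one must run a global amortized (potential) argument that charges each bundle repair at level $j$ against a combination of the Baswana--Kumar--Sen clustering potential at level $j$ (paid by the deletions seen since that level was last rebuilt) and the edges that leave the hierarchy for good, so that the total number of edge updates processed across \emph{all} $L$ levels over the whole update sequence is only a $\poly(\log{n}, \epsilon^{-1})$ factor larger than the number of updates to $G$. Since the spanner data structure spends $\poly(\log{n}, \epsilon^{-1})$ time per update it handles, this yields amortized $\poly(\log{n}, \epsilon^{-1})$ update time; the reliance on amortized (rather than worst-case) recourse of the Baswana--Kumar--Sen structure is exactly why the final bound is amortized.
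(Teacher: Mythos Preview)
Your high-level plan is right (Koutis bundles, layered residuals, chained spanner instances), and you have correctly located the crux: the fan-out of a bundle-edge deletion through the $L\cdot t$ spanner instances. But your proposed resolution --- a ``global amortized potential argument'' that charges cascaded repairs against the Baswana--Khurana--Sarkar clustering potential --- is not a plan, it is a hope. Without further structure on the spanner, a single deletion can cause $\Omega(\log n)$ edges to enter the spanner, each of which is then a deletion for the next instance in the chain; nothing in BKS's analysis bounds the total recourse across a chain of $\Theta(\log^2 n \,\epsilon^{-2})$ instances by anything better than the exponential blow-up you already noted, and there is no off-the-shelf potential that does so.

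The paper sidesteps this entirely by a structural modification rather than an accounting trick. It first works purely \emph{decrementally} and alters the BKS spanner to satisfy a \emph{monotonicity} property: any edge added to the spanner $T_i$ stays in $T_i$ until it is deleted from $G$. The consequence is that $G\setminus T_1$ never sees an insertion, hence (inductively) each $G\setminus\bigcup_{j<i}T_j$ is itself a decremental instance, and across the whole chain every edge is deleted at most once per instance --- the cascade simply cannot happen. Proving that the monotone variant still has near-linear size requires a new lemma bounding the expected number of parent changes in the clustering forest (Lemma~\ref{lem:number of parent changes}), which is the real technical content. Only after the decremental sparsifier is in hand does the paper pass to fully dynamic, via the standard decomposability reduction (Lemma~\ref{lem:decremental to fully dynamic}) that partitions edges by insertion time into $O(\log n)$ decremental subproblems. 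Your attempt to be fully dynamic from the start, with ad hoc batch rebuilds for insertions, loses exactly the invariant that makes the chain tractable.
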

When used as a black box, this routine allows us to run cut
algorithms on sparse graphs instead of the original,
denser network.
Its guarantees interact well with most routines that compute
minimum cuts or solve linear systems in the graph Laplacian.
Some of them include:
\begin{enumerate}
\item min-cuts, sparsest cuts, and separators~\cite{Sherman09},
\item eigenvector and heat kernel computations~\cite{OrecchiaSV12},
\item approximate Lipschitz learning on graphs~\cite{KyngRSS15} and a variety of matrix polynomials in the graph Laplacian~\cite{ChengCLPT15}.
\end{enumerate}

In many applications the full power of spectral sparsifiers is not needed, and it suffices to work with a cut sparsifier.
As spectral approximations imply cut approximations,
research in recent years has focused spectral sparsification
algorithms~\cite{KelnerL13,KoutisLP12,KapralovLMMS14,ZhuLO15,LeeS15,JindalK15}.
In the dynamic setting however we get a strictly stronger result for cut sparsifiers than for spectral sparsifiers: we can dynamically maintain cut sparsifiers with polylogarithmic \emph{worst-case} update time after each insertion / deletion.
We achieve this by generalizing Koutis' sparsification paradigm~\cite{Koutis14} and replacing spanners with approximate maximum spanning trees in the construction.
While there are no non-trivial results for maintaining spanners with worst-case update time, spanning trees can be maintained with polylogarithmic worst-case update time by a recent breakthrough result~\cite{KapronKM13}.
This allows us to obtain the following result for cut sparsifiers:
\begin{restatable}[]{theorem}{mainCut}
\label{thm:mainCut}
Given a graph with polynomially bounded edge weights,
we can dynamically maintain a $(1 \pm \epsilon)$-cut sparsifier of size $n \cdot \poly(\log{n}, \epsilon^{-1})$ with worst-case update time
$\poly(\log{n}, \epsilon^{-1})$ per edge insertion / deletion.
\end{restatable}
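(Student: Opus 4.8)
The plan is to adapt the spanner-based spectral-sparsification scheme of Koutis~\cite{Koutis14} (the same scheme underlying \cref{thm:mainSpectral}) to produce \emph{cut} sparsifiers, replacing the role of spanners with (approximately) maximum-weight spanning forests. The payoff is that spanning forests, unlike spanners, admit a fully dynamic data structure with polylogarithmic \emph{worst-case} update time~\cite{KapronKM13}.

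\emph{Static construction.} Given the weighted input $G$, I would iterate the following for $r = \Theta(\log W \cdot \poly(\log n, \epsilon^{-1}))$ rounds (since the weights are polynomially bounded, $\log W = O(\log n)$). Starting from the current graph $G'$, peel off a ``$t$-bundle'' $B = F_1 \cup \dots \cup F_t$, where $F_1$ is a maximum-weight spanning forest of $G'$, $F_2$ one of $G' \setminus F_1$, and so on; set $B$ aside; then subsample every remaining edge independently with probability $\tfrac12$, doubling its weight, and recurse on the subsample. Output the union of all the bundles together with the final residual graph. Since $t = \poly(\log n, \epsilon^{-1})$ and each bundle has at most $t(n-1)$ edges, while the residual graph, after enough rounds, has only $n \cdot \poly(\log n, \epsilon^{-1})$ edges, the size bound holds. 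For correctness, one checks as in Bencz\'ur--Karger~\cite{BenczurK15} that a single round $B \cup \operatorname{Sample}_{1/2}(G' \setminus B)$ is a $(1 \pm \epsilon')$-cut sparsifier of $G'$: by the spanning-forest exchange property applied to each $F_j$, the endpoints of any non-bundle edge $e$ are joined inside $B$ by $t$ edge-disjoint paths of bottleneck weight at least $w_e$, so $e$'s endpoints have local connectivity $\ge t\, w_e$ in $B$; taking $t$ a large enough $\poly(\log n, \epsilon^{-1})$ makes the rate-$\tfrac12$ subsample of the non-bundle edges admissible. Composing the per-round errors over the $r$ rounds with $\epsilon' = \Theta(\epsilon / r)$ gives the overall $(1 \pm \epsilon)$ guarantee.

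\emph{Dynamic implementation.} I would maintain a hierarchy $G = G_0, G_1, \dots, G_r$, where $B_i$ is the $t$-bundle of $G_i$ and $G_{i+1}$ is the rate-$\tfrac12$ subsample of $G_i \setminus B_i$; the maintained sparsifier is $\bigcup_i B_i \cup G_r$. Each of the $t$ spanning forests at each level is maintained by the worst-case polylogarithmic dynamic spanning-forest structure of~\cite{KapronKM13}; weighted maximum spanning forests reduce to this by bucketing weights into $O(\log n)$ geometric classes (costing a $(1+\epsilon)$ slack and a polylogarithmic factor), which is also why it suffices to maintain \emph{approximately} maximum spanning forests. Crucially, the rate-$\tfrac12$ subsampling is implemented with \emph{fixed} hash functions, so that membership ``$e \in G_{i+1}$'' is a deterministic function of $G_i$ and an update cannot spuriously re-randomize an entire level. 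An update to $G$ is then pushed down level by level: update $G_i$, repair $B_i$ with a few calls to the \cite{KapronKM13} structure, recompute which affected edges now lie in $G_{i+1}$, and recurse.

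\emph{Main obstacle.} The hard part is bounding the cascade of changes down the hierarchy: a single update changes $G_0$ by one edge, but repairing the $t$-bundle $B_0$ may move $\Theta(t)$ edges between the bundle and the residual set, hence change $G_1$, whose bundle repair may move another $\Theta(t)$ edges, and so on; done carelessly this multiplies by $\Theta(t)$ per level and becomes quasi-polynomial after the $r = \poly\log n$ levels. The heart of the proof must be a stability argument --- using canonical/stable choices of the maximum spanning forests (a replacement for a deleted forest edge is drawn from the residual pool whenever possible, so that only a bounded number of forests are ever disturbed) together with the fixed sampling randomness --- that the number of edge changes, and of data-structure calls, at \emph{each} level stays $\poly(\log n, \epsilon^{-1})$. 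Granting this, the worst-case update time is the per-level cost summed over $r = \poly(\log n, \epsilon^{-1})$ levels, i.e.\ $\poly(\log n, \epsilon^{-1})$; and since every $G_i$, $B_i$ remains a faithful snapshot of the static construction, $\bigcup_i B_i \cup G_r$ is at all times a $(1 \pm \epsilon)$-cut sparsifier of the stated size.
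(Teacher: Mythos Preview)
Your overall plan---replace spanners by approximate maximum spanning forests, realize an $\alpha$-MST via geometric weight bucketing plus the worst-case dynamic spanning-forest structure of~\cite{KapronKM13}, and run the peel/sample iteration---is exactly the paper's approach. The static analysis you sketch (a $t$-bundle certifies local connectivity $\ge t\,w_e$ for non-bundle edges, so uniform rate-$1/2$ sampling works) is also what the paper does.

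Where you go wrong is the ``main obstacle.'' The cascade you fear---one update to $G$ causing $\Theta(t)$ changes to $G_1$, hence $\Theta(t^2)$ to $G_2$, etc.---does not occur, and no stability argument or fixed hash function is needed to prevent it. The key structural fact (which is the whole reason spanning forests are easier here than spanners) is that a dynamic spanning-forest algorithm changes \emph{at most one} edge of the forest per update: an insertion adds at most the inserted edge, and a deletion removes that edge and adds at most one replacement. Chaining this through the bundle $T_1,\dots,T_t$: a single change to $G\setminus\bigcup_{j<i}T_j$ triggers at most one edge change in $T_i$, which in turn causes at most one edge change in $G\setminus\bigcup_{j\le i}T_j$. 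Inductively, one update to $G$ causes \emph{at most one} change to the residual $G\setminus B$, not $\Theta(t)$; and then at most one change to each subsequent level $G_i$. The $t$ forest updates you pay per level are all internal to that level's bundle maintenance and do not fan out. The total worst-case cost is therefore simply $O(k\cdot t\cdot \log W\cdot \log^4 n)=\poly(\log n,\epsilon^{-1})$.

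Correspondingly, the paper does not fix the sampling randomness: whenever an edge enters $G_i\setminus B_i$ it is sampled afresh with probability $1/4$. Correctness over all $\operatorname{poly}(n)$ snapshots is handled by inflating $t$ by a constant factor and invoking the oblivious-adversary assumption (and by a periodic-restart trick to handle arbitrarily long sequences while preserving worst-case bounds). Your fixed-hash-function idea is unnecessary here and, since it couples the sampling to the adversary's future updates through the bundle dynamics, would actually require more care to analyze than the paper's fresh sampling.
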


We then explore more sophisticated applications of dynamic
graph sparsifiers.
A key property of these sparsifiers is that they have arboricity $\polylog{n}$.
This means the sparsifier is locally sparse, and
can be represented as a union of spanning trees.
This property is becoming increasingly important in recent
works~\cite{NeimanS13,PelegS16}: 
Peleg and Solomon~\cite{PelegS16} gave data structures
for maintaining approximate maximum matchings on fully dynamic
graphs with amortized cost parameterized by the arboricity of the graphs.
We demonstrate the applicability of our data structures
for designing better data structures on the undirected
variant of the problem.
Through a two-stage application of graph sparsifiers,
we obtain the first non-separator based approach for
dynamically maintaining $(1 - \epsilon)$-approximate maximum
flow on fully dynamic graphs:
\begin{restatable}[]{theorem}{mainBipartite}
\label{thm:mainBipartite}
Given a dynamically changing unweighted, undirected, bipartite graph $ G = (A, B, E) $ with demand $ -1 $ on every vertex in $ A $ and demand $ 1 $ on every vertex in $ B $, we can maintain a $(1 - \epsilon)$-approximation to the value of the maximum flow, as well as query access to the associated approximate minimum cut, with amortized update time $\poly(\log{n}, \epsilon^{-1})$ per edge insertion / deletion.
\end{restatable}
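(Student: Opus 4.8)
The plan is to reduce the flow quantity to bipartite maximum matching, maintain a sparsifier of the input graph, argue that the sparsifier approximately preserves this matching number, and then run an arboricity‑sensitive dynamic matching routine on top of the sparsifier. \textbf{Step 1 (from flow to matching).} With demand $-1$ on $A$, demand $+1$ on $B$, and unit edge capacities, the maximum flow is exactly the classical reduction of bipartite matching to flow: attach a source $s$ with unit‑capacity edges to every vertex of $A$ and a sink $t$ with unit‑capacity edges from every vertex of $B$, obtaining an augmented network $G^+$. By max‑flow/min‑cut the flow value equals the minimum $s$--$t$ cut of $G^+$, and a short analysis of that cut (for a fixed source‑side choice $S\subseteq A$ the best sink‑side choice among $B$ is $N_G(S)$) gives the König--Egerváry/Hall deficiency identity $\min_{S\subseteq A}\bigl(|A\setminus S|+|N_G(S)|\bigr)=|A|-\max_{S\subseteq A}\bigl(|S|-|N_G(S)|\bigr)$, i.e. the size of a maximum matching of $G$ (equivalently, a minimum vertex cover). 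So it suffices to maintain a $(1\pm\epsilon)$ estimate of the maximum matching of $G$ together with a near‑optimal set $S$, which extended by $N_G(S)$ is the associated approximate $s$--$t$ cut.

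\textbf{Step 2 (first sparsification).} Run the dynamic sparsifier data structure of \ref{thm:mainCut} (the amortized guarantee of \ref{thm:mainSpectral} would also do) on $G$, maintaining a $(1\pm\epsilon)$‑cut sparsifier $H$. I need to extract from that construction the following structural properties: $H$ is a reweighted \emph{subgraph} of $G$ (so $N_H(S)\subseteq N_G(S)$ for every $S$); it has $n\cdot\poly(\log n,\epsilon^{-1})$ edges and arboricity $\poly(\log n,\epsilon^{-1})$ (it decomposes into that many forests, because it is built from (approximate maximum) spanning trees); its edge weights are all at least $1$; and each update to $G$ causes only $\poly(\log n,\epsilon^{-1})$ edge changes in $H$, amortized. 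Verifying each of these from the construction underlying \ref{thm:mainCut} is a routine but necessary bookkeeping step.

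\textbf{Step 3 (the key lemma, and the main obstacle).} The heart of the proof is that the maximum matching of the underlying unweighted graph of $H$ is within a $(1\pm O(\epsilon))$ factor of the maximum matching of $G$. One direction is immediate: since $H\subseteq G$ we have $N_H(S)\subseteq N_G(S)$ for all $S$, hence $\operatorname{match}(H)\le\operatorname{match}(G)$. For the reverse, let $S_0$ maximize $|S_0|-|N_H(S_0)|$; by the deficiency identity it suffices to bound $|D|$, where $D:=N_G(S_0)\setminus N_H(S_0)$, by $O(\epsilon)\cdot\operatorname{match}(G)$. The point is that if $S_0$ reaches many $B$‑vertices in $G$ but collapses onto few of them in $H$, then preserving the weight of the cut $(S_0,\overline{S_0})$ forces $H$ to overweight those few vertices, which must not violate their singleton cuts; and comparing the cuts $(S_0,\overline{S_0})$ and $(S_0\cup D,\ \overline{S_0\cup D})$ in $H$ versus $G$ forces $D$ to carry roughly $|D|$ units of weight into $A\setminus S_0$, a load that is edge‑disjoint from any maximum matching of $S_0$ and can therefore be charged against $\operatorname{match}(G)$. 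Making this accounting precise --- carefully tracking the multiplicative $(1\pm\epsilon)$ errors across the two or three cuts used and converting the resulting edge counts into a genuine matching bound --- is the step I expect to be the main obstacle, since it is exactly where the \emph{combinatorial} matching number must be reconciled with the purely metric guarantee of a cut sparsifier. (Note that individual bridge/degree‑one edges are forced into every cut sparsifier, which is the qualitative reason the lemma cannot fail.)

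\textbf{Step 4 (second sparsification / dynamic matching on $H$).} Since $H$ has arboricity $\poly(\log n,\epsilon^{-1})$ and only $\poly(\log n,\epsilon^{-1})$ edges change per update to $G$, I can run on top of it the arboricity‑sensitive dynamic $(1+\epsilon)$‑approximate maximum matching data structure of Peleg and Solomon~\cite{PelegS16}, which itself maintains a small matching‑preserving kernel --- this is the second sparsification stage. Its amortized update time is $\poly(\text{arboricity},\log n,\epsilon^{-1})=\poly(\log n,\epsilon^{-1})$ per update to $H$, hence $\poly(\log n,\epsilon^{-1})$ amortized per update to $G$. The maintained matching of $H$ has size between $(1-\epsilon)\operatorname{match}(H)\ge(1-O(\epsilon))\operatorname{match}(G)$ and $\operatorname{match}(H)\le\operatorname{match}(G)$, so after rescaling $\epsilon$ it is a $(1-\epsilon)$‑approximation of the flow value; running each of the constantly many layers with parameter $\Theta(\epsilon)$ absorbs the composed error. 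Finally, the near‑optimal set $S_0$ produced by this data structure (together with the adjacency structure of $G$, used to decide membership in $N_G(S_0)$) yields, via Step 1, query access to an $s$--$t$ cut of $G^+$ of value within $(1+O(\epsilon))$ of the flow, i.e. the associated approximate minimum cut.
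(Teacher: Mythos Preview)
Your Step 1 is wrong, and this invalidates the whole plan. The problem is \emph{undirected}: in the augmented network $G^+$ the edges $s$--$A$, $A$--$B$, and $B$--$t$ all carry undirected unit capacity, so flow can traverse $A$--$B$ edges in either direction and augmenting paths like $s\,a_1\,b_1\,a_2\,b_2\,t$ are legal. The K\"onig/deficiency identity you invoke holds only in the \emph{directed} reduction (equivalently, when $A$--$B$ edges have infinite capacity and cannot be cut); in the undirected setting the $s$--$t$ min cut can be arbitrarily larger than the maximum matching. The paper spells this out at the start of Section~\ref{sec:dynamic min cut}: with $|A_k|=|B_k|=k$, $|A_{k^2}|=|B_{k^2}|=k^2$, complete bipartite graphs on $(A_{k^2},B_k)$, $(A_k,B_k)$, $(A_k,B_{k^2})$, and no $A_{k^2}$--$B_{k^2}$ edges, the set $A_k\cup B_k$ is a vertex cover (so the matching is $O(k)$) while the undirected $s$--$t$ min cut is $\Omega(k^2)$. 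The quantity you maintain in Step~4 is therefore not the one in the theorem. Your Step~3 is independently suspect even for matching: a $(1\pm\epsilon)$ guarantee on \emph{weighted} cut values does not control the \emph{number} of vertices in $D$, since the sparsifier may reweight surviving edges by $\poly(n)$ factors, so one unit of cut weight need not correspond to one edge or one neighbor.

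The paper never reduces to matching; it keeps the problem as an $s$--$t$ min cut throughout. It maintains a low-arboricity cut sparsifier $\tilde G$ (Corollary~\ref{lem:additional properties of fully dynamic cut sparsifier}) and from it a small ``branch'' vertex cover $\VC$ consisting of the non-leaf vertices of the constituent forests, so that every non-cover vertex has degree $\poly(\log n,\epsilon^{-1})$. It then builds a terminal-cut-sparsifier $H$ with respect to $\VC$ via a \emph{vertex}-sampling scheme (Section~\ref{sec:vertSparsify}) whose analysis mirrors edge-sampling cut-counting on the clique-eliminated graph $\tilde G_{\VC}$, yielding $H\approx_\epsilon^{\VC}\tilde G$ with $|V_H|=O(|\VC|\cdot\poly(\log n,\epsilon^{-1}))$. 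Since $|\VC|=O(\OPT\cdot\poly(\log n,\epsilon^{-1}))$ by Lemma~\ref{lem:mvcOPT} and Corollary~\ref{cor:lowdegInd}, a static approximate $s$--$t$ min-cut computation on $H$ costs $O(\OPT\cdot\poly(\log n,\epsilon^{-1}))$ and need only be redone every $\Theta(\epsilon\cdot\OPT)$ updates, which amortizes to $\poly(\log n,\epsilon^{-1})$ once one checks (Sections~\ref{subsec:dataStructures} and~\ref{subsec:dynMinCut1}) that all the intermediate data structures can be updated in $\poly(\log n,\epsilon^{-1})$ time per edge change in $\tilde G$.
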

To obtain this result we give stronger guarantees for vertex sparsification in bipartite graphs, identical to the terminal cut sparsifier question addressed by Andoni, Gupta, and Krauthgamer~\cite{AndoniGK14}.
Our new analysis profits from the ideas we develop by going back and forth
between combinatorial reductions and spectral sparsification.
This allows us to analyze a vertex sampling process via a mirror edge sampling
process, which is in turn much better understood.

Overall, our algorithms bring together a wide range of tools from
data structures,  spanners, and randomized algorithms.
We will provide more details on our routines, as well as how
they relate to existing combinatorial and probabilistic tools
in Section~\ref{sec:overview}.

\section{Background}
\label{sec:overview}

\subsection{Dynamic Graph Algorithms}

In this paper we consider undirected graphs $ G = (V, E) $ with $ n $ vertices and $ m $ edges that are either unweighted or have non-negative edge weights.
We denote the weight of an edge $ e = (u, v) $ in a graph $ G $ by $ \ww_G (e) $ or $ \ww_G (u, v) $ and the ratio between the largest and the smallest edge weight by $ W $.
The weight $ \ww_G (F) $ of a set of edges $ F \subseteq E $ is the sum of the individual edge weights.
We will assume that all weights are polynomially bounded because there
are standard reductions from the general case using minimum
spanning trees (e.g. ~\cite{SpielmanS11} Section 10.2.,~\cite{ElkinEST08} Theorem 5.2).
Also, these contraction schemes in the data structure setting introduces
another layer of complexity akin to dynamic connectivity, which we believe
is best studied separately.

A \emph{dynamic algorithm} is a data structure for dynamically maintaining the result of a computation while the underlying input graph is updated periodically.
We consider two types of updates: edge insertions and edge deletions.
An \emph{incremental} algorithm can handle only edge insertions, a \emph{decremental} algorithm can handle only edge deletions, and a \emph{fully dynamic} algorithm can handle both edge insertions and deletions.
After every update in the graph, the dynamic algorithm is allowed to process the update to compute the new result.
For the problem of maintaining a sparsifier, we want the algorithm to output the changes to the sparsifier (i.e., the edges to add to or remove from the sparsifier) after every update in the graph.

\subsection{Running Times and Success Probabilities}

The running time spent by the algorithm after every update is called \emph{update time}.
We distinguish between \emph{amortized} and \emph{worst-case} update time.
A dynamic algorithm has amortized update time $ T (m, n, W) $, if the total time spent after $ q $ updates in the graph is at most $ q T (m, n, W) $.
A dynamic algorithm has worst-case update time $ T (m, n, W) $, if the total time spent after \emph{each} update in the graph is at most $ T (m, n, W) $.
Here $ m $ refers to the maximum number of edges ever contained in the graph.
All our algorithms are randomized.

The guarantees we report in this paper (quality and size of sparsifier, and update time) will hold \emph{with high probability (w.h.p.)}, i.e. with probability at least $ 1 - 1/n^{c} $ for some arbitrarily chosen constant $ c \geq 1 $.
These bounds are against an \emph{oblivious adversary} who chooses
its sequence of updates independently
from the random choices made by the algorithm.
Formally, the oblivious adversary chooses its sequence of updates before the algorithm starts.
In particular, this means that the adversary is not allowed
to see the current edges of the sparsifier.
As our composition of routines involve $\poly(n)$ calls, we
will assume the composability of these w.h.p. bounds.

Most of our update costs have the form
$O(\log^{O(1)}{n} \epsilon^{-O(1)})$, where $\epsilon$ is the
approximation error.
We will often state these as $\poly(\log{n}, \epsilon^{-1})$ when the
exponents exceed $3$, and explicitly otherwise.

\subsection{Cuts and Laplacians}

A $ \emph{cut} $ $ U \subseteq V $ of $ G $ is a subset of vertices whose removal makes $ G $ disconnected.
We denote by $ \partial_G (U) $ the edges crossing the cut $ U $, i.e., the set of edges with one endpoint in $ U $ and one endpoint in $ V \setminus U $.
The weight of the cut $ U $ is $ \ww_G (\partial_G (U)) $.
An \emph{edge cut} $ F \subseteq E $ of $ G $ is a a subset of edges whose removal makes $ G $ disconnected and the weight of the edge cut $ F $ is $ \ww_G (F) $.
For every pair of vertices $ u $ and $ v $, the \emph{local edge connectivity} $ \lambda_G (u, v) $ is the weight of the minimum edge cut separating $ u $ and $ v $.
If $ G $ is unweighted, then $ \lambda_G (u, v) $ amounts to the number of edges that have to be removed from $ G $ to make $ u $ and $ v $ disconnected.

Assuming some arbitrary order $ v_1, \ldots v_n $ on the vertices, the \emph{Laplacian matrix} $ \lap_G $ of an undirected graph $ G $ is the $ n \times n $ matrix that in row $ i $ and column $ j $ contains the negated weight $ - w_G (v_i, v_j) $ of the edge $ (v_i, v_j) $ and in the $i$-th diagonal entry contains the weighted degree $ \sum_{j=1}^{n} w_G (v_i, v_j) $ of vertex $ v_i $.
Note that Laplacian matrices are symmetric.
The matrix $ \lap_e $ of an edge $ e $ of $ G $ is the $ n \times n $ Laplacian matrix of the subgraph of $ G $ containing only the edge $ e $.
It is $ 0 $ everywhere except for a $ 2 \times 2 $ submatrix.

For studying the spectral properties of $ G $ we treat the graph as a resistor network.
For every edge $ e \in E $ we define the \emph{resistance} of $ e $ as $ r_G (e) = 1 / \ww_G (e) $.
The \emph{effective resistance} $ R_G (e) $ of an edge $ e = (v, u) $ is defined as the potential difference that has to be applied to $ u $ and $ v $ to drive one unit of current through the network.
A closed form expression of the effective resistance is $ R_G (e) =  b_{u,v}^{\top} \lap_G^\dagger b_{u,v} $, where $ \lap_G^\dagger $ is the Moore-Penrose pseudo-inverse of the Laplacian matrix of $ G $ and $ b_{u,v} $ is the $n$-dimensional vector that is $ 1 $ at position $ u $, $ -1 $ at position $ v $, and $ 0 $ otherwise.

\subsection{Graph Approximations}

The goal of graph sparsification is to find sparse subgraphs, or similar small objects, that approximately preserve certain metrics of the graph.
We first define spectral sparsifiers where we require that Laplacian quadratic form of the graph is preserved approximately.
Spectral sparsifiers play a pivotal role in fast algorithms for solving Laplacian systems, a special case of linear systems.
\begin{definition}
	A \emph{$ (1 \pm \epsilon) $-spectral sparsifier} $ H $ of a graph $ G $ is a subgraph of $ G $ with weights $ \ww_H $ such that for every vector $ x \in \mathbb{R}^n $
	\begin{equation*}
	(1 - \epsilon) x^{\top} \lap_H x \leq x^{\top} \lap_G x \leq (1 + \epsilon) x^{\top} \lap_H x \, . \label{eq:spectral sparsifier inequality}
	\end{equation*}
\end{definition}
Using the Loewner ordering on matrices this condition can also be written as $ (1 - \epsilon) \lap_H \preceq \lap_G \preceq (1 + \epsilon) \lap_H $.
An $ n \times n $ matrix $ \mathcal{A} $ is \emph{positive semi-definite}, written as $ \mathcal{A} \succeq 0 $, if $ x^{\top} \mathcal{A} x \geq 0 $ for all $ x \in \mathbb{R}^n $.
For two $ n \times n $ matrices $ \mathcal{A} $ and $ \mathcal{B} $ we write $ \mathcal{A} \succeq \mathcal{B} $ as an abbreviation for $ \mathcal{A} - \mathcal{B} \succeq 0 $.

Note that $x^{\top} \lap_G x = \sum_{(u, v) \in E} w(u, v) (x(u) - x(v))^2$ where the vector $ x $ is treated as a function on the vertices and $ x(v) $ is the value of $ x $ for vertex $ v $.
A special case of such a function on the vertices is given by the binary indicator vector $ x_U $ associated with a cut $ U $, where $ x_U (v) = 1 $ is $ v \in U $ and $ 0 $ otherwise.
If limited to such indicator vectors, the sparsifier approximately preserves the value of every cut.
\begin{definition}
A \emph{$ (1 \pm \epsilon) $-cut sparsifier} $ H $ of a graph $ G $ is a subgraph of $ G $ with weights $ \ww_H $ such that for every subset $ U \subseteq V $
\begin{equation*}
	(1 - \epsilon) \ww_H (\partial_H (U)) \leq \ww_G (\partial_G (U)) \leq (1 + \epsilon) \ww_H (\partial_H (U)) \, . \label{eq:cut sparsifier inequality}
\end{equation*}
\end{definition}


\subsection{Sampling Schemes for Constructing Sparsifiers}

Most efficient constructions of sparsifiers are randomized,
partly because when $G$ is the complete graph, the resulting
sparsifier needs to be an expander.
These randomized schemes rely on importance sampling, which
for each edge:
\begin{enumerate}
	\item Keeps it with probability $p_e$,
	\item If the edge is kept, its weight is rescaled
		to $\frac{w_e}{p_e}$.
\end{enumerate}

A crucial property of this process is that the edge's
expectation is preserved.
As both cut and spectral sparsifiers can be viewed
as preserving sums over linear combinations of edge
weights, each of these terms have correct expectation.
The concentration of such processes can then be bounded
using either matrix concentration bounds in the spectral
case~\cite{Tropp12,SpielmanS11}, or a variety of combinatorial
arguments~\cite{BenczurK15}.

Our algorithms in this paper will use an even simpler
version of this importance sampling scheme: all of our
$p_e$'s will be set to either $1$ or $1/2$.
This scheme has a direct combinatorial interpretation:
\begin{enumerate}
	\item Keep some of the edges.
	\item Take a random half of the other edges,
		and double the weights of the edges kept.
\end{enumerate}		
Note that composing such a routine $O(\log{n})$ times gives a sparsifier,
as long as the part we keep is small.
So the main issue is to figure out how to get a small part to keep.

\subsection{Spanning Trees and Spanners}

A \emph{spanning forest} $ F $ of $ G $ is a forest (i.e., acyclic graph) on a subset of the edges of $ G $ such that every pair of vertices that is connected in $ G $ is also connected in $ F $.
A minimum/maximum spanning forest is a spanning forest of minimum/maximum total weight.

For every pair of vertices $ u $ and $ v $ we denote by $ \dist_G (u, v) $ the distance between $ u $ and $ v $ (i.e., the length of the shortest path connecting $ u $ and $ v $) in $ G $ with respect to the resistances.
The graph sparsification concept also exists with respect to distances in the graph.
Such sparse subgraphs that preserves distances approximately are called spanners.

\begin{definition}
A \emph{spanner of stretch $ \alpha $}, or short \emph{$ \alpha $-spanner}, (where $ \alpha \geq 1 $) of an undirected (possibly weighted) graph $ G $ is a subgraph $ H $ of $ G $ such that, for every pair of vertices $ u $ and $ v $, $ \dist_H (u, v) \leq \alpha \dist_G (u, v) $.
\end{definition}

\section{Overview and Related Work}
\label{sec:overview}

\subsection{Dynamic Spectral Sparsifier}\label{sec:overview spectral sparsifier}

We first develop a fully dynamic algorithm for maintaining a spectral sparsifier of a graph with polylogarithmic amortized update time.

\paragraph{Related Work.}

Spectral sparsifiers play important roles in fast numerical
algorithms~\cite{BatsonSST13}.
Spielman and Teng were the first to study these objects~\cite{SpielmanT11}.
Their algorithm constructs a $ (1 \pm \epsilon) $-spectral sparsifier of size $ O (n \cdot \poly(\log{n}, \epsilon^{-1})) $ in nearly linear time.
This result has seen several improvements in recent years~\cite{SpielmanS11,SilvaHS16,Zouzias12,ZhuLO15}.
The state of the art in the sequential model is an algorithm by Lee and Sun~\cite{LeeS15} that computes a $ (1 \pm \epsilon) $-spectral sparsifier of size $ O (n \epsilon^{-2}) $ in nearly linear time.
Most closely related to the data structural question are streaming
routines, both in one pass incremental~\cite{KelnerL13}, and turnstile~\cite{AhnGM13,KapralovW14,KapralovLMMS14}.

A survey of spectral sparsifier constructions is given in~\cite{BatsonSST13}.
Many of these methods rely on solving linear systems built on the graph, for which there approaches with a combinatorial flavor using low-stretch spanning trees~\cite{KelnerOSZ13,LeeS13} and purely numerical solvers relying on sparsifiers~\cite{PengS14} or recursive constructions~\cite{KyngLPSS16}.
We build on the spectral sparsifier obtained by a simple, combinatorial construction of Koutis~\cite{Koutis14}, which initially was geared towards parallel and distributed implementations.

\paragraph{Sparsification Framework.}
In our framework we determine `sampleable' edges by using spanners to compute a set of edges of bounded effective resistance.
From these edges we then sample by coin flipping to obtain a (moderately sparser) spectral sparsifier in which the number of edges has been reduced by a constant fraction.
This step can then be iterated a small number of times in order to compute the final sparsifier.

Concretely, we define a $t$-bundle spanner $ B = T_1 \cup \dots \cup T_t $ (for a suitable, polylogarithmic, value of~$t$) as a sequence of spanners $ T_1, \ldots, T_t $ where the edges of each spanner are removed from the graph before computing the next spanner, i.e., $ T_1 $ is a spanner of $ G $, $ T_2 $ is a spanner of $ G \setminus T_1 $, etc;
here each spanner has stretch $ O (\log{n}) $.
We then sample each non-bundle edge in $ G \setminus B $ with some constant probability $ p $ and scale the edge weights of the sampled edges proportionally.
The $t$-bundle spanner serves as a certificate for small resistance of the non-bundle edges in $ G \setminus B $ as it guarantees the presence of $t$ disjoint paths of length at most the stretch of the spanner.
Using this property one can apply matrix concentration bounds~\cite{Tropp12} to show the $t$-bundle together with the sampled edges is a moderately sparse spectral sparsifier.
We repeat this process of `peeling off' a $t$-bundle from the graph and sampling from the remaining edges until the graph is sparse enough (which happens after a logarithmic number of iterations).
Our final sparsifier consists of all $t$-bundles together with the sampled edges of the last stage.

\paragraph{Towards a Dynamic Algorithm.}
To implement
the spectral sparsification algorithm in the dynamic setting
we need to dynamically maintain a $t$-bundle spanner.
Our approach to this problem is to run $t$ different instances
of a dynamic spanner algorithm, in order to separately maintain
a spanner $ T_i $ for each graph
$ G_{i} = G \setminus \bigcup_{j=1}^{i-1} T_j$, for $ 1 \leq i \leq t $.

Baswana, Khurana, and Sarkar~\cite{BaswanaKS12} gave a fully dynamic algorithm for maintaining a spanner of stretch $ O (\log n) $ and size $ O (n \log^2 n) $ with polylogarithmic update time.\footnote{More precisely, they gave two fully dynamic algorithms for maintaing a $ (2k-1) $-spanner for any integer $ k \geq 2 $: The first algorithm guarantees a spanner of expected size $ O (k n^{1+1/k} \log{n}) $ and has expected amortized update time $ O (k^2 \log^2 n) $ and the second algorithm guarantees a spanner of expected size $ O (k^8 n^{1+1/k} \log^2{n}) $ and has expected amortized update time $ O (7^{k/2}) $.}
A natural first idea would be to use this algorithm in a black-box fashion in order to
separately maintain each spanner of a $t$-bundle.
However, we do not know how to do this because of the following obstacle.
A single update in $ G $ might lead to several changes of edges in the spanner $ T_1 $, an average of
$\Omega(\log n)$ according to the amortized upper bound.
This means that the next instance of the fully dynamic spanner algorithm which is used for maintaining $ T_2 $, not only has to deal with the deletion in $ G $ but also the artificially created updates in $G_2 = G \setminus T_1 $. This of course propagates to more updates in all graphs $G_i$.
Observe also that any given update in $ G_t $ caused by an update in $G$, can be requested {\em repeatedly}, as a result of subsequent updates in $G$.
Without further guarantees, it seems that with this approach we can only hope for an upper bound of $O (\log^{t-1} {n}) $ (on average) on the number of changes to be processed for updating $ G_t $ after a single update in $ G $.
That is too high because the sparsification algorithm
requires us to take $ t = \Omega (\log{n}) $. 
Our solution to this problem lies in a substantial modification
of the dynamic spanner algorithm in~\cite{BaswanaKS12}
outlined below.

\paragraph{Dynamic Spanners with Monotonicity.}
The spanner algorithm of Baswana et al.~\cite{BaswanaKS12} is at its
core a decremental algorithm (i.e., allowing only edge deletions in $ G $),
which is subsequently leveraged into a fully dynamic algorithm by a black-box reduction.
We follow the same approach by first designing a decremental algorithm
for maintaining a $t$-bundle spanner. This is achieved by modifying the
decremental spanner algorithm so that, in addition to its original guarantees,
it has the following \textbf{monotonicity} property:
\begin{center}
Every time an edge is added to the spanner $ T $, it stays in $ T $ until it is deleted from $ G $.
\end{center}

Recall that we initially want to maintain a $t$-bundle spanner $ T_1, \dots, T_t $ under edge deletions only.
In general, whenever an edge is added to $ T_1 $, it will cause its deletion from the graph $ G \setminus T_1 $ for which the spanner $ T_2 $ is maintained. Similarly, removing an edge from $ T_1 $ causes its insertion into $ G \setminus T_1 $, \emph{unless} the edge is deleted from $G$. This is precisely what the monotonicity property
guarantees: that an edge will not be removed from $T_1$ unless deleted from $G$. The
consequence is that no edge insertion can occur for $G_2 = G \setminus T_1$.
Inductively, no edge is ever inserted into  $ G_i $, for each $i$. Therefore
the algorithm for maintaining the spanner $ T_i $ only has to deal with edge deletions from the graph $G_i$, thus it becomes possible to run a different instance of the same decremental spanner algorithm for each $G_i$. A single deletion from $G$ can still generate many updates in the bundle. But for each~$i$, the instance of the dynamic spanner algorithm working on $G_i$ can only delete each edge {\em once}. Furthermore, we only run a small number $t$ of instances. So the total number of updates remains bounded, allowing us to claim the upper bound on the amortized update time.

In addition to the modification of the dynamic spanner algorithm, we have also deviated from Koutis' original scheme~\cite{Koutis14} in that we explicitly
`peel off' each iteration's bundle from the graph.
In this way we avoid that the $t$-bundles from different iterations share any edges, which seems hard to handle in the decremental setting we ultimately want to restrict ourselves to.

The modified spanner algorithm now allows us to maintain $t$-bundles in polylogarithmic update time, which is the main building block of the sparsifier algorithm.
The remaining parts of the algorithm, like sampling of the non-bundle edges by coin-flipping, can now be carried out in the straightforward way in polylogarithmic amortized update time.
At any time, our modified spanner algorithm can work in a purely decremental setting.
As mentioned above, the fully dynamic sparsifier algorithm is then obtained by a reduction from the decremental sparsifier algorithm.

\subsection{Dynamic Cut Sparsifier}

We then give dynamic algorithms for maintaining a $ (1 \pm \epsilon) $-cut sparsifier.
We obtain a fully dynamic algorithm with polylogarithmic worst-case update time by leveraging a recent worst-case update time algorithm for dynamically maintaining a spanning tree of a graph~\cite{KapronKM13}.
As mentioned above, spectral sparsifiers are more general than cut sparsifiers.
The big advantage of studying cut sparsification as a separate problem is that we can achieve polylogarithmic \emph{worst-case} update time, where the update time guarantee holds for each individual update and is \emph{not} amortized over a sequence of updates.

\paragraph{Related Work.}
In the static setting, Bencz{\'{u}}r and Karger~\cite{BenczurK15} developed an algorithm for computing a $ (1 \pm \epsilon) $-cut sparsifier of size $ O (n \cdot \poly(\log{n}, \epsilon^{-1})) $ in nearly linear time.
Their approach is to first compute a value called \emph{strength} for each edge and then sampling each edge with probability proportional to its strength.
Their proof uses a cut-counting argument that shows that the majority of cuts are
large, and therefore less likely to deviate from their expectation.
A union bound over these (highly skewed) probabilities then gives
the overall w.h.p. success bound.
This approach was refined by Fung et al.~\cite{FungHHP11} who show that a cut sparsifier can also be obtained by sampling each edge with probability inversely proportional to its (approximate) local edge connectivity, giving slightly better guarantees on the sparsifier.
The work of Kapron, King, and Mountjoy~\cite{KapronKM13} contains a fully dynamic approximate ``cut oracle'' with worst-case update time $ O (\log^2 n) $.
Given a set $ U \subseteq V $ as the input of a query, it returns a $2$-approximation to the number of edges in $ U \times V \setminus U $ in time $ O (|U| \log^2 n) $.
The cut sparsifier question has also been studied in the (dynamic) streaming model~\cite{AhnG09,AhnGM-SODA12,AhnGM-PODS12}.

\paragraph{Our Framework.}
The algorithm is based on the observation that the spectral sparsification scheme outlined above in \Cref{sec:overview spectral sparsifier}.
becomes a cut sparsification algorithm if we simply replace spanners by maximum weight spanning trees (MSTs).
This is inspired by sampling according to edge connectivities; the role of the MSTs is to certify lower bounds on the edge connectivities.
We observe that the framework does not require us to use exact MSTs.
For our $t$-bundles we can use a relaxed, approximate concept that we call $ \alpha $-MST that. Roughly speaking, an $ \alpha $-MST guarantees a `stretch' of $ \alpha $ in the infinity norm and, as long as it is sparse, does not necessarily have to be a tree.

Similarly to before, we define a $t$-bundle $\alpha$-MST $ B $ as the union of a sequence of $\alpha$-MSTs $ T_1, \ldots T_t $ where the edges of each tree are removed from the graph before computing the next $\alpha$-MST. The role of $\alpha$-MST is to certify uniform lower bounds on the connectivity of edges; these bounds are sufficiently large to allow uniform sampling with a fixed probability. 

This process of peeling and sampling is repeated sufficiently often and our cut sparsifier then is the union of all the $t$-bundle $\alpha$-MSTs and the non-bundle edges remaining after taking out the last bundle.
Thus, the cut sparsifier consists of a polylogarithmic number of $\alpha$-MSTs and a few (polylogarithmic) additional edges.
This means that for $\alpha$-MSTs based on spanning trees, our cut sparsifiers are not only sparse, but also have polylogarithmic \emph{arboricity}, which is the minimum number of forests into which a graph can be partitioned.

\paragraph{Simple Fully Dynamic Algorithm.}
Our approach immediately yields a fully dynamic algorithm by using a fully dynamic algorithm for maintaining a spanning forest.
Here we basically have two choices.
Either we use the randomized algorithm of Kapron, King, and Mountjoy~\cite{KapronKM13} with polylogarithmic worst-case update time.
Or we use the deterministic algorithm of Holm, de Lichtenberg, and Thorup~\cite{HolmLT01} with polylogarithmic amortized update time.
The latter algorithm is slightly faster, at the cost of providing only amortized update-time guarantees.
A $t$-bundle $2$-MST can be maintained fully dynamically by running, for each of the $ \log W $ weight classes of the graph, $t$ instances of the dynamic spanning tree algorithm in a `chain'.

An important observation about the spanning forest algorithm is that with every update in the graph, at most one edge is changed in the spanning forest:
If for example an edge is deleted from the spanning forest, it is replaced by another edge, but no other changes are added to the tree.
Therefore a single update in $G$ can only cause one update for each graph $ G_{i} = G \setminus \bigcup_{j=1}^{i-1} T_j$ and $T_i$.
This means that each instance of the spanning forest algorithm creates at most one `artificial' update that the next instance has to deal with.
In this way, each dynamic spanning forest instance used for the $t$-bundle has polylogarithmic update time.
As $ t = \polylog n $, the update time for maintaining a $t$-bundle is also polylogarithmic.
The remaining steps of the algorithm can be carried out dynamically in the straightforward way and overall give us polylogarithmic worst-case or amortized update time.

A technical detail of our algorithm is that the high-probability correctness achieved by the Chernoff bounds only holds for a polynomial number of updates in the graph.
We thus have to restart the algorithm periodically.
This is trivial when we are shooting for an amortized update time.
For a worst-case guarantee we can neither completely restart the algorithm nor change all edges of the sparsifier in one time step.
We therefore keep two instances of our algorithm that maintain two sparsifiers of two alternately growing and shrinking subgraphs that at any time partition the graph.
This allows us to take a blend of these two subgraph sparsifiers as our end result and take turns in periodically restarting the two instances of the algorithm.

\subsection{$(1 - \epsilon)$-Approximate Undirected Bipartite Flow}
\label{subsec:overviewApproxFlow}

We then study ways of utilizing our sparsifier constructions to give
routines with truly sublinear update times.
The problem that we work with will be maintaining an approximate
maximum flow problem on a bipartite graph $G_{A,B} = (A,B,E)$
with demand $-1$ and $1$ on each vertex in $A$ and $B$, respectively.
 All edges are unit weight and we dynamically insert and delete edges.
The maximum flow minimum cut theorem states that the objective here
equals to the minimum $s-t$ cut or maximum $s-t$
flow in $G$, which will be $G_{A,B}$ where we add vertices $s$ and $t$, and
connect each vertex in $A$ to $s$ and each vertex in $B$ to $t$.
The only dynamic changes in this graph will be in edges between $A$ and $B$.
As our algorithms builds upon cut sparsifiers,
and flow sparsifiers~\cite{KelnerLOS14} are more involved,
we will focus on only finding cuts.

This problem is motivated by the dynamic approximate
maximum matching problem, which differs in that the edges
are directed, and oriented from $A$ to $B$.
This problem has received much attention
recently~\cite{OnakR10,BaswanaGS15,NeimanS13,GuptaP13,PelegS16, BernsteinS16},
and led to the key definition of low arboricity graphs~\cite{NeimanS13,PelegS16}.
On the other hand, bipartite graphs are known to be difficult
to sparsify: the directed reachability matrix from $A$ to $B$
can encode $\Theta(n^2)$ bits of information.
As a result, we study the undirected variant of this problem instead,
with the hope that this framework can motivate other definitions
of sparsification suitable for wider classes of graphs.

Another related line of work are fully dynamic algorithm for maintaining the
global minimum cut~\cite{Thorup07,ThorupK00} with update time $ O (\sqrt{n} \polylog{n})$.
As there are significant differences between approximating global minimum cuts
and $st$-minimum cuts in the static setting~\cite{Karger00}, we believe that there are some
challenges to adapting these techniques for this problem.
The data structure by Thorup~\cite{Thorup07} can either maintain global edge connectivity up
to $\polylog{n}$ exactly or,
with high probability, arbitrary global edge connectivity with an approximation of $ 1 + o(1) $.
The algorithms also maintain concrete (approximate) minimum cuts, where in the
latter algorithm the update time increases to $ O (\sqrt{m} \polylog{n}) $
(and cut edges can be listed in time $ O (\log{n}) $ per edge).
Thorup's result was preceded by a randomized algorithm with worse approximation ratio for the global edge connectivity by Thorup and Karger~\cite{ThorupK00} with update time $ O (\sqrt{n} \polylog{n}) $.

At the start of Section~\ref{sec:dynamic min cut} we will show that the problem
we have formulated above is in fact different from matching.
On the other hand, our incorporation of sparsifiers for maintaining
solutions to this problem relies on several properties that hold in
a variety of other settings:
\begin{enumerate}
\item The static version can be efficiently approximated.
\item The objective can be approximated via graph sparsifiers.
\item A small answer (for which the algorithm's current approximation
may quickly become sub-optimal) means the graph also has a
small vertex cover.
\item The objective does not change much per each edge update.
\end{enumerate}

As with algorithms for maintaining high quality matchings~\cite{GuptaP13,PelegS16}, our approach aims to get a small amortized cost by keeping the same minimum $s-t$ cut for many consecutive dynamic steps.
Specifically, if we have a minimum $s-t$ cut of size $(2 + \frac{\epsilon}{2})\OPT$, then we know this cut will remain $(2 + \epsilon)$ approximately optimal for $\frac{\epsilon}{2} \OPT$ dynamic steps.
This allows us to only compute a new minimum $s-t$ cut every $\frac{\epsilon}{2} \OPT$ dynamic steps.

As checking for no edges would be an easy boundary case, we will assume throughout all the analysis that $\OPT > 0$.
To obtain an amortized $O(\poly(\log{n}, \epsilon^{-1}))$ update cost, it suffices for this computation to take $O(\OPT \cdot \poly(\log{n}, \epsilon^{-1}))$ time.
In other words, we need to solve approximate maximum flow on a graph
of size $O(\OPT \cdot \poly(\log{n}, \epsilon^{-1}))$.
Here we incorporate sparsifiers using the other crucial property used in
matching data structures~\cite{OnakR10,GuptaP13,PelegS16}: if $\OPT$ is small,
$G$ also has a small vertex cover.

\begin{restatable}[]{lemma}{smallCover}
	\label{lem:mvcOPT}
	The minimum vertex cover in $G$ has size at most $\OPT + 2$ where $\OPT$ is the size of the minimum $s-t$ cut in $G$.
\end{restatable}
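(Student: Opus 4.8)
The plan is to produce, from a minimum $s$--$t$ cut of $G$, an explicit vertex cover of $G$ of size at most $\OPT + 2$. Recall that $G$ is built from $G_{A,B}=(A,B,E)$ by adding two new vertices $s,t$ together with all edges $\{s,a\}$ for $a\in A$ and $\{b,t\}$ for $b\in B$, so the edge set of $G$ is $\{\{s,a\} : a\in A\} \cup E \cup \{\{b,t\} : b\in B\}$, and an $s$--$t$ cut is a partition of $V(G)$ into a part $S\ni s$ and a part $T\ni t$.

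First I would fix a minimum $s$--$t$ cut $(S,T)$ and write $A_S = A\cap S$, $A_T = A\cap T$, $B_S = B\cap S$, $B_T = B\cap T$. Classifying the crossing edges by their endpoints, the value of the cut (which equals $\OPT$) is $\OPT = |A_T| + |B_S| + e(A_S,B_T) + e(A_T,B_S)$, where $e(P,Q)$ denotes the number of edges of $E$ with one endpoint in $P$ and one in $Q$: indeed an edge $\{s,a\}$ crosses $(S,T)$ exactly when $a\in A_T$, an edge $\{b,t\}$ crosses exactly when $b\in B_S$, and an edge of $E$ crosses exactly when its two endpoints lie on opposite sides.

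Next I would define $C = \{s,t\} \cup A_T \cup B_S \cup X$, where $X$ holds one arbitrarily chosen endpoint of each edge of $E$ running between $A_S$ and $B_T$ (so $|X|\le e(A_S,B_T)$). The verification that $C$ is a vertex cover of $G$ is a short case check: every edge $\{s,a\}$ is covered by $s$, every edge $\{b,t\}$ is covered by $t$, and for an edge $\{a,b\}\in E$ with $a\in A$, $b\in B$, it is covered by $a$ if $a\in A_T$, by $b$ if $b\in B_S$, and otherwise $a\in A_S$ and $b\in B_T$, so it is covered by the endpoint placed in $X$. Since $s,t\notin A\cup B$, we get $|C| \le 2 + |A_T| + |B_S| + |X| \le 2 + |A_T| + |B_S| + e(A_S,B_T) \le 2 + \OPT$, which is the claim.

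I do not expect any real obstacle here; the only thing to get right is the case analysis showing that $A_T\cup B_S\cup X$ already covers all of $E$ (the endpoints put in $X$ are needed precisely because the $E$-edges between $A_S$ and $B_T$ are exactly the ones missed by $A_T\cup B_S$). An equivalent, slightly more conceptual route avoids referring to the cut directly: $\{s,t\}$ covers every edge incident to $s$ or $t$, so the minimum vertex cover of $G$ is at most $2$ plus the minimum vertex cover of the bipartite graph $G_{A,B}$; by K\"{o}nig's theorem this is $2$ plus the size of a maximum matching $M$ of $G_{A,B}$; and any such $M$ yields a feasible unit-capacity $s$--$t$ flow in $G$ of value $|M|$ by sending one unit along $s\to a\to b\to t$ for each $(a,b)\in M$ (vertex-disjointness of $M$ keeps every edge within its unit capacity), so $|M|$ is at most the maximum $s$--$t$ flow value, which equals $\OPT$ by the max-flow--min-cut theorem.
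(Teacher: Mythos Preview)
Your primary argument is correct and essentially identical to the paper's: both take a minimum $s$--$t$ cut, partition $A$ and $B$ by sides, and observe that $\{s,t\}\cup A_T\cup B_S$ together with a cover for the $A_S$--$B_T$ edges forms a vertex cover of size at most $\OPT+2$. The only cosmetic difference is that the paper takes all $A$-endpoints of the $A_S$--$B_T$ edges rather than one arbitrary endpoint per edge; the bound is the same either way.

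Your alternative route via K\"onig's theorem and max-flow--min-cut is not in the paper. It is a clean conceptual shortcut: it reduces the statement to ``maximum matching in $G_{A,B}$ is at most the minimum $s$--$t$ cut in $G$,'' which follows immediately since any matching routes as a feasible unit flow. This buys you a two-line proof with no case analysis, at the cost of invoking K\"onig and max-flow--min-cut; the paper's direct construction is more self-contained and also exhibits the cover explicitly in terms of the cut, which matches how the cover is actually used later.
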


We utilize the low arboricity of our sparsifiers to
find a small vertex cover with the additional property
that all non-cover vertices have small degree.
We will denote this (much) smaller set of vertices
as $VC$.
In a manner similar to eliminating vertices in numerical
algorithms~\cite{KyngLPSS16}, the graph can be reduced
to only edges on $VC$ at the cost of a
$(2 + \epsilon)$-approximation.
Maintaining a sparsifier of this routine again leads to an overall
routine that maintains a $(2 + \epsilon)$-approximation in
$\polylog{n}$ time per update, which we show in
Section~\ref{sec:dynamic min cut}.

Sparsifying vertices instead of edges inherently implies that an approximation of all cut values cannot be maintained. Instead, the sparsifier, which will be referred to as a \textit{terminal-cut-sparsifier}, maintains an approximation of all minimum cuts between any two terminal vertices, where the vertex cover is the terminal vertex set for our purposes. More specifically, given a minimum cut between two terminal vertices on the sparsified graph, by adding each independent vertex from the original graph to the cut set it is more connected to, an approximate minimum cut on the original graph is achieved. This concept of \textit{terminal-cut-sparsifier} will be equivalent to that in~\cite{AndoniGK14}, and will be given formal treatment in Section~\ref{sec:onePlusEpsilon}.

The large approximation ratio motivated us to reexamine the
sparsification routines, namely the one of reducing the
graph to one whose size is proportional to $\abs{VC}$.
This is directly related to the terminal cut sparsifiers studied
in~\cite{AndoniGK14,KoganK15}.
However, for an update time of $\poly(\log n, \epsilon^{-1})$, it is
crucial for the vertex sparsifier to have size $O(|\VC| \poly(\log{n}, \epsilon^{-1}))$.
As a result, instead of doing a direct union bound over
all $2^{|\VC|}$ cuts to get a size of $\poly(|\VC|)$ as in~\cite{AndoniGK14},
we need to invoke cut counting as with cut sparsifier constructions.
This necessitates the use of objects similar to $t$-bundles
to identify edges with small connectivity.
This leads to a sampling process motivated by the
$( 2 + \epsilon)$-approximate routine, but works on
vertices instead of edges.

By relating the processes, we are able to absorb the
factor $2$ error into the sparsifier size.
In Section~\ref{sec:vertSparsify}, we formalize
this process, as well as its guarantees on graphs
with bounded weights.
Here a major technical challenge compared to analyses of
cut sparsifiers~\cite{FungHHP11} is that
the natural scheme of bucketing by edge weights
is difficult to analyze because a sampled vertex could
have non-zero degree in multiple buckets.
We work around this issue via a pre-processing scheme
on $G$ that creates an approximation so that all vertices outside
of $\VC$ have degree $\polylog{n}$.
This scheme is motivated in part by the weighted expanders
constructions from~\cite{KyngLPSS16}.
Bucketing after this processing step ensures that each
vertex belongs to a unique bucket.
In terms of a static sparsifier on terminals, the result
 that is most comparable to results from previous works is:
\begin{restatable}[]{corollary}{terminalSparisfy}
	\label{cor:terminalSparsify}
	Given any graph $G = (V,E)$, and a vertex cover $\VC$ of $G$, where $X = V \setminus \VC$, with error $\epsilon$, we can build an $\epsilon$-approximate terminal-cut-sparsifier $H$ with $O(|\VC| \poly(\log{n},\epsilon^{-1}))$ vertices in $O(m\cdot \poly(\log{n},\epsilon^{-1}))$ work.

\end{restatable}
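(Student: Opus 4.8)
The plan is to derive this corollary directly from the static vertex-sparsification procedure built in Section~\ref{sec:vertSparsify}, assembled from three ingredients: a degree-regularizing preprocessing of the independent set $X = V \setminus \VC$, a partition of the edges into weight buckets, and an iterated peel-and-sample scheme run inside each bucket. Throughout, the terminal set is the vertex cover $\VC$; since $X$ is independent, every edge is incident to $\VC$, so sparsifying $X$ while preserving all $\VC$-to-$\VC$ minimum cuts is exactly the terminal-cut-sparsifier guarantee (and, as in the definition recalled above, an $X$-vertex deleted from $H$ is reattached on the side of the cut to which it is more strongly connected).

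First I would preprocess $G$ so that every vertex of $X$ has degree $\polylog n$ while perturbing every terminal min-cut by only a $(1\pm\epsilon)$ factor. Concretely, a high-degree $x \in X$ is replaced by a small bounded-degree weighted-expander gadget on copies of $x$ (in the spirit of the weighted-expander constructions of~\cite{KyngLPSS16}), which preserves the minimum cut between any two neighbors of $x$ up to $(1\pm\epsilon)$ but spreads $x$'s incident weight over several bounded-degree copies. This inflates the vertex and edge counts by only a $\poly(\log n,\epsilon^{-1})$ factor and runs in $O(m\cdot\poly(\log n,\epsilon^{-1}))$ time. Its purpose is to make the next step clean: after regularization we partition the edges into $O(\epsilon^{-1}\log n)$ weight buckets so that each $X$-vertex contributes edges to essentially one bucket, thereby sidestepping the difficulty — flagged in the overview — that a single sampled vertex could otherwise have nonzero degree in many buckets.

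Within each weight bucket I would then run the vertex analogue of the $t$-bundle peel-and-sample routine: repeatedly peel off a sparse certificate structure (the vertex analogue of an $\alpha$-MST / $t$-bundle, certifying a \emph{uniform} lower bound on the terminal connectivity of the remaining $X$-vertices), then keep each remaining $X$-vertex independently with probability $1/2$ and double the weights of its incident edges. After $O(\log n)$ rounds the number of surviving $X$-vertices in the bucket drops below $O(|\VC|\poly(\log n,\epsilon^{-1}))$; the final sparsifier $H$ is the union, over all buckets, of the peeled certificates together with the residual $X$-vertices. Summing over the $O(\epsilon^{-1}\log n)$ buckets keeps the vertex count at $O(|\VC|\poly(\log n,\epsilon^{-1}))$, and since each peeling and sampling step is near-linear the total work is $O(m\cdot\poly(\log n,\epsilon^{-1}))$.

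The crux — and the step I expect to be the main obstacle — is the correctness analysis: showing $H$ is an $\epsilon$-approximate terminal-cut-sparsifier. A naive union bound over the $2^{|\VC|}$ terminal cuts is fatal, so instead I would mirror the vertex-sampling process by an edge-sampling process on an auxiliary graph obtained by splitting each $X$-vertex into its incident edges; on this mirror graph the Fung--Hariharan--Hopcroft--Panigrahi cut-counting machinery~\cite{FungHHP11} applies, giving that the number of cuts of each size is bounded relative to the connectivity certified by the peeled structure, so the union bound survives with only $\polylog$ overhead. The certificate's uniform connectivity lower bound is precisely what legitimizes the fixed probability $1/2$ in each round, and the constant-factor slack between the vertex process and its mirror edge process is absorbed into the $\poly(\log n,\epsilon^{-1})$ blow-up in the sparsifier size rather than into $\epsilon$ (this is the improvement over the $\poly(|\VC|)$-size bound of~\cite{AndoniGK14}). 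Chaining the $(1\pm\epsilon/\Theta(\log n))$ guarantees across the $O(\log n)$ peeling rounds and the $O(\epsilon^{-1}\log n)$ buckets, composed with the $(1\pm\epsilon)$ guarantee of the preprocessing gadget, yields the claimed $(1\pm\epsilon)$ bound after a final rescaling of $\epsilon$.
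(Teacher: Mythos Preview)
Your proposal diverges from the paper at the very first step --- the degree regularization of $X$ --- and that divergence is a genuine gap. Replacing a high-degree $x\in X$ by copies connected through a weighted-expander gadget necessarily introduces edges whose \emph{both} endpoints lie in $X$, so $(\VC,X)$ is no longer a bipartition. Every tool you invoke afterwards (the bucketing, the peel-and-sample routine, and the concentration analysis of Lemma~\ref{lem:sampleHeavy}) is stated only for a bipartite graph with independent $X$. Worse, once the gadget edges are present you cannot sample the copies of $x$ independently with probability $1/2$: keeping some copies and dropping others severs the gadget and can change a terminal cut by the full gadget weight, which is large by design. (The reference to~\cite{KyngLPSS16} in the overview concerns sparsifying the elimination clique $K_x$, a graph on $\VC$, not splitting the star $N_x$ itself.) The paper sidesteps all of this with a different mechanism that never breaks bipartiteness: it alternates an ordinary \emph{edge}-sparsification pass --- bringing the edge count to $O(n\log n\,\epsilonhat^{-2})$, so that by averaging at least half of the remaining $X$-vertices have degree $O(\log^{2}n\,\epsilonhat^{-2})$ --- with a call to \textsc{VertexSparsify} (Theorem~\ref{thm:vertexSparsify}) on just those low-degree vertices. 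Each round halves $|X|$, so $O(\log n)$ rounds with $\epsilonhat=\epsilon/O(\log n)$ suffice.

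Two smaller points. First, bounded degree alone does not force each $X$-vertex into a single weight bucket; the paper achieves that via the separate \textsc{VertexBucketing} step (Theorem~\ref{thm:bucketing}), which reroutes the light edges of each $N_x$ through $x_{\max}$ into $\widehat{G}\setminus X$ and only then buckets by the max weight in $N_x$. Second, the mirror edge process in the concentration argument is not ``splitting $x$ into its incident edges'' but the elimination clique $K_x\subset G_{\VC}$ (Definition~\ref{def:kX}); the equivalence of cut classes is Lemma~\ref{lem:equivEquiv}, and the factor-$2$ slack between $\Delta_G$ and $\Delta_{G_{\VC}}$ from Theorem~\ref{thm:schurComplement} is precisely what gets absorbed into the bundle parameter $t$ rather than into $\epsilon$.
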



Turning this into a dynamic routine leads to the result
described in Theorem~\ref{thm:mainBipartite}:
a $(1 + \epsilon)$-approximate solution that can be
maintained in time $\polylog(n)$ per update. It is important to note that Theorem~\ref{thm:mainCut} plays an integral role in extending Corollary~\ref{cor:terminalSparsify} to a dynamic routine, particularly the low arboricity property that allows us to maintain a small vertex cover such that all non-cover vertices have low degree. These algorithmic extensions, as well as their incorporation
into data structures are discussed in Section~\ref{sec:onePlusEpsilon}.

\subsection{Discussion}

\paragraph{Graph Sparsification.}
We use a sparsification framework in which we `peel off' bundles of sparse subgraphs to determine `sampleable' edges, from which we then sample by coin flipping.
This leads to combinatorial and surprisingly straightforward algorithms for maintaining graph sparsifiers.
Additionally, this gives us low-arboricity sparsifiers; a property that we exploit for our main application.

Although spectral sparsification is more general than cut sparsification.
Our treatment of cut sparsification has two motivations.
First, we can obtain stronger running time guarantees.
Second, our sparsifier for the $(1 - \epsilon)$-approximate maximum flow algorithm on bipartite graphs hinges upon
improved routines for vertex sparsification,
a concept which leads to different objects in the spectral setting.

\paragraph{Dynamic Graph Algorithms.}
In our sparsification framework we sequentially remove bundles of sparse subgraphs to determine `sampleable' edges.
This leads to `chains' of dynamic algorithms where the output performed by one algorithm might result in updates to the input of the next algorithm.
This motivates a more fine-grained view on of dynamic algorithms with the goal of obtaining strong bounds on the number of changes to the output.

\paragraph{Future Work.}
The problem whether spectral sparsifiers can be maintained with polylogarithmic \emph{worst-case} update time remains open.
Our construction goes via spanners and therefore a natural question is whether spanners can be maintained with worst-case update time.
Maybe there are also other more direct ways of maintaining the sparsifier.
A more general question is whether we can find more dynamic algorithms for numerical problems.

Our dynamic algorithms cannot avoid storing the original graph, which is undesirable in terms of space consumption.
Can we get space-efficient dynamic algorithms without sacrificing fast update time?

The sparsification framework for peeling off subgraphs and uniformly sampling from the remaining edges is very general.
Are there other sparse subgraphs we could start with in the peeling process? Which properties do the sparsifiers obtained in this way have?
In particular, it would be interesting to see whether our techniques
can be generalized to flow sparsifiers~\cite{KelnerLOS14,AndoniGK14}.

The combination of sparsifiers with density-sensitive approaches for
dynamic graph data structures~\cite{NeimanS13,PelegS16} provides an
approach for obtaining $\poly(\log, \epsilon^{-1})$ update times.
We believe this approach can be generalized to other graph cut problems.
In particular, the flow networks solved for balanced cuts and graph
partitioning are also bipartite and undirected, and therefore natural
directions for future work.

\section{Dynamic Spectral Sparsifier}\label{sec:dynamic spectral sparsifier}

In this section we give an algorithm for maintaining a spectral sparsifier under edge deletions and insertions with polylogarithmic amortized update time. The main result of this section is as follows.

\begin{theorem}\label{thm:dynamic spectral sparsifier}
There exists a fully dynamic randomized algorithm with polylogarithmic update time
for maintaining a $(1 \pm \epsilon) $-spectral sparsifier $ H $
of a graph $G$, with probability at least $ 1 - 1/n^c $ for
any $0 < \epsilon \leq 1$ and $c \geq 1 $. Specifically, the amortized update time of the algorithm
is $$ O (c \epsilon^{-2} \log^3{\rho} \log^6{n}) $$
and the size of $H$ is $$ O (c n \epsilon^{-2} \log^3{\rho} \log^5{n} \log{W} + m \rho^{-1}) \, ,$$
where $ 1 \leq \rho \leq m $ is a parameter of choice.
Here, $ W $ is the ratio between the largest and the smallest edge weight in $ G $.
The ratio between the largest and the smallest edge weight in $ H $ is at most $ O (n W) $.
\end{theorem}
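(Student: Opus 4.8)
The plan is to implement the sparsification framework sketched in the overview: repeatedly peel off a $t$-bundle spanner (with $t = \Theta(\log n)$ and stretch $O(\log n)$), sample the remaining non-bundle edges by coin flips with probability $1/2$, and iterate $O(\log n)$ times until the residual graph has $O(m\rho^{-1})$ edges. The final sparsifier $H$ is the union of all the peeled $t$-bundles together with the last residual graph. The first task is to establish the \emph{static} correctness of one peeling-and-sampling step: given a $t$-bundle spanner $B = T_1 \cup \dots \cup T_t$ of $G$ with stretch $O(\log n)$, every non-bundle edge $e \in G \setminus B$ has $t$ vertex-disjoint low-resistance paths connecting its endpoints inside $B$, so its effective resistance in $B$ (hence in $G$) is $O(\log n / t)$ times its own resistance. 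Plugging this bound into a matrix Chernoff bound (Tropp~\cite{Tropp12}, as in Koutis~\cite{Koutis14}) shows that keeping $B$ and a random half of $G \setminus B$ with doubled weights yields a $(1 \pm \epsilon/\Theta(\log n))$-spectral sparsifier of $G$, and composing $O(\log n)$ such steps multiplies the errors to the desired $(1\pm\epsilon)$ bound; a union bound over the $\poly(n)$ randomized events gives the high-probability guarantee. This fixes $t = \Theta(\epsilon^{-2}(c+1)\alpha \log n)$ with $\alpha = O(\log n)$ the spanner stretch, which is where the $\log^2 n$ and $\epsilon^{-2}$ factors in the size/time enter.

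The second and main task is the \textbf{dynamic implementation}, and this is the genuine obstacle. Here I would first build a \emph{decremental} algorithm for maintaining a single $t$-bundle spanner. The key is to modify the decremental spanner algorithm of Baswana, Khurana, and Sarkar~\cite{BaswanaKS12} so that it satisfies the \textbf{monotonicity} property stated in the overview: once an edge enters $T_i$, it stays until it is deleted from $G_i$. Monotonicity is what makes the chain of $t$ spanner instances tractable: since removing an edge from $T_1$ never re-inserts it into $G_2 = G \setminus T_1$ (it was deleted from $G$), the graph $G_2$ only ever shrinks, so the instance maintaining $T_2$ is genuinely decremental, and inductively every $G_i$ is decremental. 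Then a single deletion in $G$ triggers, \emph{over the entire update sequence}, at most one deletion of each edge from each of the $t$ levels, so the total recourse summed over all updates is $O(tm)$, i.e.\ amortized $O(t \cdot \text{(spanner update time)}) = \poly(\log n, \epsilon^{-1})$ per update; I would also note that we explicitly peel each bundle out of the graph (deviating from~\cite{Koutis14}) so that bundles from different iterations are edge-disjoint, which is needed to keep the decremental invariant across the $O(\log n)$ iterations. The sampling-by-coin-flip layer and the bookkeeping between iterations add only another $\polylog n$ factor, giving the claimed amortized update time $O(c\epsilon^{-2}\log^3 \rho \log^6 n)$ and sparsifier size $O(cn\epsilon^{-2}\log^3\rho\log^5 n \log W + m\rho^{-1})$.

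Third, I would lift the decremental sparsifier to a \emph{fully dynamic} one by the standard reduction: spectral sparsifiers are decomposable (if $H_1 \preceq_\epsilon G_1$ and $H_2 \preceq_\epsilon G_2$ then $H_1 \cup H_2 \preceq_\epsilon G_1 \cup G_2$), so one can maintain a logarithmic hierarchy of decremental instances, reinserting a batch by rebuilding the affected level — this costs another $O(\log n)$ factor and is already absorbed into the stated bounds. Finally I would handle the two remaining technical points: (i) the matrix Chernoff bound only gives high probability for a $\poly(n)$-length update sequence, so the whole data structure is restarted every $\poly(n)$ updates, which is harmless for an amortized bound; and (ii) tracking the edge-weight blowup — each of the $O(\log n)$ sampling rounds doubles some weights, and the reduction over $\log W$ weight classes contributes the rest, so the weight ratio in $H$ is $O(nW)$ as claimed. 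The step I expect to be hardest is enforcing the monotonicity property while preserving the size and update-time guarantees of the Baswana–Khurana–Sarkar spanner, since their algorithm's efficiency rests on a backward/hierarchical-sampling analysis that must be re-examined to ensure that never retracting a spanner edge (except on deletion from $G$) does not blow up the spanner size or the amortized work.
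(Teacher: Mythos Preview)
Your proposal is correct and follows essentially the same approach as the paper: static correctness via $t$-bundle spanners plus matrix Chernoff, a decremental implementation hinging on the monotonicity modification of the Baswana--Khurana--Sarkar spanner so that each $G_i = G \setminus \bigcup_{j<i} T_j$ only shrinks, and the standard decremental-to-fully-dynamic reduction via decomposability of spectral sparsifiers. Two cosmetic slips worth fixing: the $t$ paths in the bundle are \emph{edge}-disjoint rather than vertex-disjoint, and the paper samples with probability $1/4$ (not $1/2$) and handles the $\poly(n)$-many-versions issue by bumping the constant in $t$ rather than restarting, though your restart argument is equally valid for an amortized bound.
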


After giving an overview of our algorithm, we first explain our spectral sparsification scheme in a static setting and prove its properties.
Subsequently, we show how we can dynamically maintain the edges of such a sparsifier by making this scheme dynamic.

\subsection{Algorithm Overview}

\paragraph{Sparsification Framework.}
In our framework we determine `sampleable' edges by using spanners to compute a set of edges of bounded effective resistance.
From these edges we then sample by coin flipping to obtain a (moderately sparser) spectral sparsifier in which the number of edges has been reduced by a constant fraction.
This step can then be iterated a small number of times in order to compute the final sparsifier.

Concretely, we define a $t$-bundle spanner $ B = T_1 \cup \dots \cup T_t $ (for a suitable, polylogarithmic, value of~$t$) as a sequence of spanners $ T_1, \ldots, T_t $ where the edges of each spanner are removed from the graph before computing the next spanner, i.e., $ T_1 $ is a spanner of $ G $, $ T_2 $ is a spanner of $ G \setminus T_1 $, etc;
here each spanner has stretch $ O (\log{n}) $.
We then sample each non-bundle edge in $ G \setminus B $ with some constant probability $ p $ and scale the edge weights of the sampled edges proportionally.
The $t$-bundle spanner serves as a certificate for small resistance of the non-bundle edges in $ G \setminus B $ as it guarantees the presence of $t$ disjoint paths of length at most the stretch of the spanner.
Using this property one can apply matrix concentration bounds to show the $t$-bundle together with the sampled edges is a moderately sparse spectral sparsifier.
We repeat this process of `peeling off' a $t$-bundle from the graph and sampling from the remaining edges until the graph is sparse enough (which happens after a logarithmic number of iterations).
Our final sparsifier consists of all $t$-bundles together with the sampled edges of the last stage.

\paragraph{Towards a Dynamic Algorithm.}
To implement
the spectral sparsification algorithm in the dynamic setting
we need to dynamically maintain a $t$-bundle spanner.
Our approach to this problem is to run $t$ different instances
of a dynamic spanner algorithm, in order to separately maintain
a spanner $ T_i $ for each graph
$ G_{i} = G \setminus \bigcup_{j=1}^{i-1} T_j$, for $ 1 \leq i \leq t $.

Baswana, Khurana, and Sarkar~\cite{BaswanaKS12} gave a fully dynamic algorithm for maintaining a spanner of stretch $ O (\log n) $ and size $ O (n \log^2 n) $ with polylogarithmic update time.\footnote{More precisely, they gave two fully dynamic algorithms for maintaing a $ (2k-1) $-spanner for any integer $ k \geq 2 $: The first algorithm guarantees a spanner of expected size $ O (k n^{1+1/k} \log{n}) $ and has expected amortized update time $ O (k^2 \log^2 n) $ and the second algorithm guarantees a spanner of expected size $ O (k^8 n^{1+1/k} \log^2{n}) $ and has expected amortized update time $ O (7^{k/2}) $.}
A natural first idea would be to use this algorithm in a black-box fashion in order to
separately maintain each spanner of a $t$-bundle.
However, we do not know how to do this because of the following obstacle.
A single update in $ G $ might lead to several changes of edges in the spanner $ T_1 $, an average of
$\Omega(\log n)$ according to the amortized upper bound.
This means that the next instance of the fully dynamic spanner algorithm which is used for maintaining $ T_2 $, not only has to deal with the deletion in $ G $ but also the artificially created updates in $G_2 = G \setminus T_1 $. This of course propagates to more updates in all graphs $G_i$.
Observe also that any given update in $ G_t $ caused by an update in $G$, can be requested {\em repeatedly}, as a result of subsequent updates in $G$.
Without further guarantees, it seems that with this approach we can only hope for an upper bound of $O (\log^{t-1} {n}) $ (on average) on the number of changes to be processed for updating $ G_t $ after a single update in $ G $.
That is too high because the sparsification algorithm
requires us to take $ t = \Omega (\log{n}) $. 
Our solution to this problem lies in a substantial modification
of the dynamic spanner algorithm in~\cite{BaswanaKS12}
outlined below.

\paragraph{Dynamic Spanners with Monotonicity.}
The spanner algorithm of Baswana et al.~\cite{BaswanaKS12} is at its
core a decremental algorithm (i.e., allowing only edge deletions in $ G $),
which is subsequently leveraged into a fully dynamic algorithm by a black-box reduction.
We follow the same approach by first designing a decremental algorithm
for maintaining a $t$-bundle spanner. This is achieved by modifying the
decremental spanner algorithm so  so that, additional to its original guarantees,
it has the following \textbf{monotonicity} property:
\begin{center}
Every time an edge is added to the spanner $ T $, it stays in $ T $ until it is deleted from $ G $.
\end{center}

Recall that we initially want to maintain a $t$-bundle spanner $ T_1, \dots, T_t $ under edge deletions only.
In general, whenever an edge is added to $ T_1 $, it will cause its deletion from the graph $ G \setminus T_1 $ for which the spanner $ T_2 $ is maintained. Similarly, removing an edge from $ T_1 $ causes its insertion into $ G \setminus T_1 $, \emph{unless} the edge is deleted from $G$. This is precisely what the monotonicity property
guarantees: that an edge will not be removed from $T_1$ unless deleted from $G$. The
consequence is that no edge insertion can occur for $G_2 = G \setminus T_1$.
Inductively, no edge is ever inserted into  $ G_i $, for each $i$. Therefore
the algorithm for maintaining the spanner $ T_i $ only has to deal with edge deletions from the graph $G_i$, thus it becomes possible to run a different instance of the same decremental spanner algorithm for each $G_i$. A single deletion from $G$ can still generate many updates in the bundle. But for each $i$ the instance of the dynamic spanner algorithm working on $G_i$ can only delete each edge {\em once}. Furthermore, we only run a small number $t$ of instances. So the total number of updates remains bounded, allowing us to claim the upper bound on the amortized update time.

In addition to the modification of the dynamic spanner algorithm, we have also deviated from Koutis' original scheme~\cite{Koutis14} in that we explicitly
`peel off' each iteration's bundle from the graph.
In this way we avoid that the $t$-bundles from different iterations share any edges, which seems hard to handle in the decremental setting we ultimately want to restrict ourselves to.

The modified spanner algorithm now allows us to maintain $t$-bundles in polylogarithmic update time, which is the main building block of the sparsifier algorithm.
The remaining parts of the algorithm, like sampling of the non-bundle edges by coin-flipping, can now be carried out in the straightforward way in polylogarithmic amortized update time.
At any time, our modified spanner algorithm can work in a purely decremental setting.
As mentioned above, the fully dynamic sparsifier algorithm is then obtained by a reduction from the decremental sparsifier algorithm.

\subsection{Spectral Sparsification}\label{sec:spectral sparsifier general}

As outlined above, iteratively `peels off' bundles of spanners from the graph.

\begin{definition}
A \emph{$t$-bundle $ \alpha $-spanner} (where $ t \geq 1 $, $ \alpha \geq 1 $) of an undirected  graph $ G $ is the union $ T = \bigcup_{i=1}^k T_i $ of a sequence of graphs $ T_1, \dots, T_k $ such that, for every $ 1 \leq i \leq k $, $ T_i $ is an $ \alpha $-spanner of $ G \setminus \bigcup_{j=1}^{i-1} T_j $.
\end{definition}

The algorithm for spectral sparsification is presented in Figures \ref{fig:half spectral sparsify} and
\ref{fig:static spectral sparsifier}. Algorithm \textsc{Light-Spectral-Sparsify} computes a
moderately sparser $(1 \pm \epsilon)$-spectral sparsifier. Algorithm \textsc{Spectral-Sparsify} takes a parameter $ \rho $ and computes the sparsifier in $ k = \lceil \log{\rho} \rceil $ iterations of \textsc{Light-Spectral-Sparsify}.

\begin{figure} [p!]
	\begin{algbox}
	\textsc{Light-Spectral-Sparsify} $(G, \epsilon)$
		\begin{enumerate}
        \item $ t \gets \tSpectral $ for some absolute constant $c$.
        \item let $B = \bigcup_{j=1}^t T_j$ be a $t$-bundle $\alpha$-spanner of $ G $
        \item $H:=B$
        \item \textbf{for each} edge $ e \in G \setminus B $
           \begin{enumerate}
            	\item	with probability $1/4 $: add $ e $ to $ H $ with $ \ww_{H} (e) \gets 4 \ww_{G} (e) $
	  \end{enumerate}
        \item \textbf{return} $(H, B)$
		\end{enumerate}
	\end{algbox}
	\caption{\textsc{Light-Spectral-Sparsify} $(G, c, \epsilon)$. We give a dynamic implementation of this algorithm in \Cref{sec:dynamization light spectral sparsify}. In particular we dynamically maintain the $t$-bundle $\alpha$-spanner $ B $ which results in a dynamically changing graph $ G \setminus B $.}
	\label{fig:half spectral sparsify}
\end{figure}

\begin{figure} [p!]
	\begin{algbox}
	 \textsc{Spectral-Sparsify} $(G, c,\epsilon)$
		\begin{enumerate}
          \item $ k \gets \lceil \log{\rho} \rceil $
          \item $ G_0 \gets G $
          \item $ B_0 \gets (V,\emptyset) $
          \item \textbf{for} $i=1$ \textbf{to} $k$
          \begin{enumerate}
             \item $(H_i ,B_i) \gets \textsc{Light-Spectral-Sparsify} (G_{i-1}, c , \epsilon/(2k))$
             \item $ G_i \gets H_i \setminus B_i $
             \item \textbf{if} $G_i$ has less than $(c+1)\ln n$ edges \textbf{then} \textbf{break}  \qquad {\small (* break loop *) }
          \end{enumerate}
          \item $ H \gets \bigcup_{1 \leq j \leq i} B_j \cup G_i $
          \item \textbf{return} $( H , \{B_j\}_{j=1}^{i}, G_i)$
		\end{enumerate}
	\end{algbox}
	\caption{\textsc{Spectral-Sparsify} $(G, c,\epsilon)$. We give a dynamic implementation of this algorithm in \Cref{sec:dynamization spectral sparsify}. In particular we dynamically maintain each $ H_i $ and $ B_i $ as the result of a dynamic implementation of \textsc{Light-Spectral-Sparsify} which results in dynamically changing graphs $ G_i $.}
	\label{fig:static spectral sparsifier}
\end{figure}

We will now prove the properties of these algorithms. We first need the following lemma that shows how $t$-bundle spanners can be used to bound effective resistances. We highlight the main intuition of this crucial observation in our proof sketch.

\begin{lemma}[\cite{Koutis14}]\label{lem:stretch resistance correspondence}
Let $ G $ be a graph and $ B $ be a $t$-bundle $\alpha$-spanner of $ G $.
For every edge $ e $ of $ G \setminus B $, we have
\begin{equation*}
\ww_G (e) \cdot R_G (e) \leq \frac{\alpha}{t}
\end{equation*}
which implies that
\begin{equation*}
\ww_G (e) \cdot \lap_e \preceq \frac{\alpha}{t} \cdot \lap_G
\end{equation*}
where $ \lap_e $ is the $ n \times n $ Laplacian of the unweighted edge $ e $.
\end{lemma}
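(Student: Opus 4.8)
The plan is to prove the effective-resistance bound $\ww_G(e) \cdot R_G(e) \le \alpha/t$ directly, and then derive the Loewner-order statement from it by a standard rank-one argument. Fix an edge $e = (u,v) \in G \setminus B$. The key observation is that the $t$-bundle $\alpha$-spanner $B = T_1 \cup \dots \cup T_t$ supplies $t$ paths from $u$ to $v$, one inside each $T_i$, that are pairwise edge-disjoint: since $e \notin T_j$ for every $j$ and $T_i$ is an $\alpha$-spanner of $G \setminus \bigcup_{j<i} T_j$, the graph $G \setminus \bigcup_{j<i} T_j$ still contains $e$, so $T_i$ contains a $u$–$v$ path $P_i$ of length (in resistance) at most $\alpha \cdot \dist_{G \setminus \bigcup_{j<i}T_j}(u,v) \le \alpha \cdot r_G(e) = \alpha / \ww_G(e)$, where I use that $e$ itself is an available $u$–$v$ path of resistance $r_G(e)$ in each of those graphs. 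Moreover $P_i \subseteq T_i$ and the $T_i$ are edge-disjoint by construction of the bundle, so $P_1, \dots, P_t$ are edge-disjoint.

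First I would bound the effective resistance of each individual path. A path of edges with resistances $r_1, \dots, r_\ell$ in series has effective resistance $\sum_{j} r_j \le \alpha / \ww_G(e)$. Then I would invoke Rayleigh monotonicity / the fact that effective resistance between $u$ and $v$ is at most the parallel combination of the resistances of any collection of edge-disjoint $u$–$v$ paths: $R_G(u,v) \le \left(\sum_{i=1}^{t} R_{P_i}(u,v)^{-1}\right)^{-1} \le \left(\sum_{i=1}^{t} \frac{\ww_G(e)}{\alpha}\right)^{-1} = \frac{\alpha}{t\,\ww_G(e)}$. Rearranging gives $\ww_G(e) \cdot R_G(e) \le \alpha/t$, the first claim. (Formally, the parallel bound follows from the variational characterization $R_G(u,v)^{-1} = \min\{ x^\top \lap_G x : x(u)-x(v) = 1\}$ applied to the subgraph consisting of the disjoint paths, whose Laplacian is dominated by $\lap_G$; or one can cite the classical Nash-Williams-type inequality directly.)

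For the second claim, I would use the standard identity that for any edge $e = (u,v)$, the rank-one Laplacian factors as $\lap_e = b_{u,v} b_{u,v}^\top$, and that $\lap_G \succeq \frac{1}{R_G(e)} b_{u,v} b_{u,v}^\top$. This last inequality is itself a consequence of $R_G(e) = b_{u,v}^\top \lap_G^\dagger b_{u,v}$ together with the fact that for any PSD matrix $\mathcal A$ and any vector $b$ in the range of $\mathcal A$ one has $\mathcal A \succeq (b^\top \mathcal A^\dagger b)^{-1} b b^\top$ (apply Cauchy–Schwarz in the $\mathcal A^\dagger$ inner product, or diagonalize). Combining, $\ww_G(e) \cdot \lap_e = \ww_G(e)\, b_{u,v} b_{u,v}^\top \preceq \ww_G(e) \cdot R_G(e) \cdot \lap_G \preceq \frac{\alpha}{t} \lap_G$, as desired.

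The main obstacle — really the only non-bookkeeping point — is justifying the parallel-resistance bound over edge-disjoint paths rigorously, i.e. that disjointness of the $P_i$ is what lets their conductances add. I would make sure the edge-disjointness argument is airtight (it rests entirely on the "peeling" definition of the $t$-bundle, where $T_i$ is extracted from $G$ with $T_1, \dots, T_{i-1}$ already removed) and then either cite a standard reference for the Nash–Williams inequality or give the two-line variational proof sketched above. Everything else (series resistance of a path, the rank-one Loewner identity) is routine.
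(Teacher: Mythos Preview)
Your proposal is correct and follows essentially the same approach as the paper's proof sketch: exhibit one $u$--$v$ path of resistance at most $\alpha/\ww_G(e)$ inside each of the $t$ edge-disjoint spanners $T_i$, combine series-within-path and parallel-across-paths, and then pass to the Loewner inequality via the rank-one identity $\lap_e = b_{u,v}b_{u,v}^\top$. You have in fact supplied more detail than the paper (which only gives a sketch), particularly on justifying the parallel bound via Rayleigh monotonicity and on deriving the second claim from the first.
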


\begin{proof}[Sketch]
Fix some edge $ e = (u, v) $ of $ G \setminus B $ and let $ T_1, \ldots T_t $ denote the (pairwise disjoint) $\alpha$-spanners contained in $ B $.
For every $ 1 \leq i \leq t $, let $ \pi_i $ denote the shortest path from $ u $ to $ v $ in $ T_i $.
The length of the path $ \pi $ in $ T_i $ exceeds the distance from $ u $ to $ v $ in $ G \setminus \bigcup_{j=1}^{i-1} T_j $ by at most a factor of $ \alpha $ (property of the spanner $ T_i $).
Since $ e $ is contained in $ G \setminus B $, the latter distance is at most the resistance of the edge $ e $ as we have defined distances as the length of shortest paths with respect to the resistances of the edges.

Consider each path $ \pi_i $ as a subgraph of $ G $ and let $ \Pi $ be the subgraph consisting of all paths~$ \pi_i $.
Observe that $ \Pi $ consists of a parallel composition of paths, which in turn consists of a serial composition of edges, the we can view as resistors.
We can now apply the well-known rules for serial and parallel composition for computing effective resistances and get the desired bounds.
\end{proof}

Our second tool in the analysis the following variant~\cite{Harvey12} of a matrix concentration inequality by Tropp~\cite{Tropp12}.

\begin{theorem}\label{thm:matrix concentration}
Let $ Y_1, \dots, Y_k $ be independent positive semi-definite matrices of size $ n \times n $.
Let $ Y = \sum_{i=1}^k Y_i $ and $ Z = \expec{Y} $.
Suppose $ Y_i \preceq R Z $, where $ R $ is a scalar, for every $ 1 \leq i \leq k $.
Then for all $ \epsilon \in [0, 1] $
\begin{align*}
\prob{ \sum_{i=1}^k Y_i \preceq (1 - \epsilon) Z } &\leq n \cdot \exp (-\epsilon^2 / 2R) \\
\prob{ \sum_{i=1}^k Y_i \succeq (1 + \epsilon) Z } &\leq n \cdot \exp (-\epsilon^2 / 3R)
\end{align*}
\end{theorem}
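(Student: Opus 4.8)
The final statement to prove is Theorem~\ref{thm:matrix concentration}, the matrix Chernoff / Tropp-type concentration inequality for sums of independent PSD matrices bounded by a multiple of their expectation. Let me plan a proof.

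\textbf{The plan.} The plan is to reduce the matrix statement to a scalar statement by working in the eigenbasis of $Z$ and using the standard matrix-Laplace-transform (Ahlswede--Winter / Tropp) machinery. The key obstacle is handling the pseudo-inverse issue: $Z = \expec{Y}$ need not be invertible (the graph Laplacian has a kernel spanned by the all-ones vector, and more if disconnected), so ``$Y_i \preceq RZ$'' and the conclusion must be read on the subspace $\operatorname{range}(Z)$; one must first argue that $\operatorname{range}(Y_i) \subseteq \operatorname{range}(Z)$ for each $i$ (which follows from $Y_i \succeq 0$ and $Y_i \preceq RZ$), so everything can be restricted to that subspace and we may pretend $Z$ is positive definite there. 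After that reduction, substitute $\tilde Y_i = Z^{-1/2} Y_i Z^{-1/2}$ (on the range), so that $\sum_i \tilde Y_i$ has expectation the identity $I$ (restricted), each $\tilde Y_i \succeq 0$, and $\tilde Y_i \preceq R I$, i.e. $\lambda_{\max}(\tilde Y_i) \leq R$. The two desired inequalities become $\prob{\lambda_{\min}(\sum_i \tilde Y_i) \leq 1-\epsilon}$ and $\prob{\lambda_{\max}(\sum_i \tilde Y_i) \geq 1+\epsilon}$, and these are exactly the conclusions of Tropp's matrix Chernoff bound (Theorem~1.1 of \cite{Tropp12}), which is what we cite.

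\textbf{Key steps in order.} First, establish the range containment: since $0 \preceq Y_i \preceq R Z$, any $x$ with $Zx = 0$ satisfies $x^\top Y_i x = 0$ hence (by PSD-ness) $Y_i x = 0$, so $\ker Z \subseteq \ker Y_i$ and $\operatorname{range} Y_i \subseteq \operatorname{range} Z$. Consequently $\operatorname{range} Y \subseteq \operatorname{range} Z$ as well, and all the Loewner comparisons among $Y$, $Y_i$, $Z$ live on the common subspace $S = \operatorname{range} Z$. Second, restrict to $S$: let $Z_S$ be the restriction of $Z$ to $S$, which is now genuinely positive definite, form $\tilde Y_i = Z_S^{-1/2} (Y_i|_S) Z_S^{-1/2}$; these are independent, PSD, $\sum_i \expec{\tilde Y_i} = I_S$, and $Y_i \preceq R Z \Rightarrow \tilde Y_i \preceq R I_S$. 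Third, apply the scalar-style argument via the matrix Laplace transform: for the upper tail, $\prob{\lambda_{\max}(\sum \tilde Y_i) \geq 1+\epsilon} \leq \inf_{\theta > 0} e^{-\theta(1+\epsilon)} \expec{\operatorname{tr} \exp(\theta \sum \tilde Y_i)}$, then use the Golden--Thompson / Lieb subadditivity of the matrix cumulant generating function together with the bound, valid for $0 \preceq \tilde Y_i \preceq R I_S$, that $\log \expec{e^{\theta \tilde Y_i}} \preceq g(\theta) \expec{\tilde Y_i}$ where $g(\theta) = (e^{\theta R}-1)/R$; summing gives $\leq n \exp((1+\epsilon) \cdot (\text{something}))$ and optimizing over $\theta$ yields the stated $n\exp(-\epsilon^2/3R)$. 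The lower tail is symmetric with $g(\theta) = (1 - e^{-\theta R})/R$ and optimization giving $n\exp(-\epsilon^2/2R)$. Since this is precisely the content of the cited theorems of Tropp~\cite{Tropp12} and the variant in~\cite{Harvey12}, the cleanest write-up is: perform the range/restriction reduction above, then invoke the cited result as a black box rather than re-deriving the Laplace-transform computation.

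\textbf{Main obstacle.} The genuinely delicate point is the degenerate-$Z$ reduction — making sure that ``$\preceq$'' statements and the final probabilistic conclusion are all interpreted on $\operatorname{range}(Z)$ and that nothing is lost in restricting, since in the graph-sparsifier application $Z = \lap_G$ (or a scaled version) is always singular. The rest is a citation: once we are on the subspace where $Z$ is invertible, normalizing to expectation $I$ reduces everything verbatim to the standard matrix Chernoff inequality, so no new calculation is needed beyond quoting~\cite{Tropp12, Harvey12}. A minor secondary point is tracking that the dimension factor in front is $n$ (the ambient dimension, or equivalently $\dim S \le n$), which is all the statement claims.
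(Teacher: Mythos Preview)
The paper does not prove this theorem; it simply states it as a known variant~\cite{Harvey12} of Tropp's matrix Chernoff inequality~\cite{Tropp12} and uses it as a black box. Your proposal is a correct derivation of the stated variant from the cited result: the reduction to $\operatorname{range}(Z)$ via $\ker Z \subseteq \ker Y_i$ and the normalization $\tilde Y_i = Z_S^{-1/2} Y_i Z_S^{-1/2}$ are exactly the standard way to pass from Tropp's formulation (independent PSD summands with $\lambda_{\max}(\tilde Y_i)\le R$ and $\sum_i \expec{\tilde Y_i}=I$) to the form stated here, and the dimension prefactor $\dim S \le n$ is indeed all that is claimed.
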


Given these facts we can now prove the following Lemma which is a slight generalization
of a Lemma in~\cite{Koutis14}.
As the proof is quite standard we have moved it to Appendix~\ref{apx:proofs spectral sparsifier} (together with the proofs of the subsequent two lemmas).
For applying the lemma in our dynamic algorithm it is crucial that the input graph (which might be generated by another randomized algorithm) is independent of the random choices of algorithm \textsc{Light-Spectral-Sparsify}.

\begin{restatable}{lemma}{sparsifyingstep}\label{lem:sparsifying step}
The output $H$ of \textsc{Light-Spectral-Sparsify} is a $(1\pm \epsilon)$-spectral sparsifier with probability at least $1-n^{-(c+1)}$ for any input graph $ G $ that is independent of the random choices of the algorithm.
\end{restatable}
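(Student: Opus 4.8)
The plan is to apply the matrix concentration inequality (Theorem~\ref{thm:matrix concentration}) to the random matrices corresponding to the sampled non-bundle edges, using Lemma~\ref{lem:stretch resistance correspondence} to control the ``per-edge variance'' parameter $R$. First I would write $\lap_H = \lap_B + \sum_{e \in G \setminus B} X_e \lap_e$, where for each non-bundle edge $e$ the random variable $X_e$ equals $4\ww_G(e)$ with probability $1/4$ and $0$ otherwise, so that $\expec{X_e \lap_e} = \ww_G(e)\lap_e$ and hence $\expec{\lap_H} = \lap_B + \sum_{e\in G\setminus B}\ww_G(e)\lap_e = \lap_G$. Thus the quantity we must control is the deviation of $\sum_{e} X_e \lap_e$ from its expectation $\sum_e \ww_G(e)\lap_e = \lap_G - \lap_B \preceq \lap_G$.

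To match the hypotheses of Theorem~\ref{thm:matrix concentration} exactly (which requires $Y = \sum Y_i$ and $Z = \expec{Y}$, with $Z$ the \emph{full} expectation), the cleanest route is to add $\lap_B$ itself as a fixed, deterministic ``$Y_0$'' term, or equivalently to argue in the subspace orthogonal to the kernel and observe that $\lap_G^\dagger{}^{1/2}\lap_H\lap_G^\dagger{}^{1/2}$ concentrates around the identity. Concretely, set $Y_e = X_e\, \lap_G^{\dagger/2}\lap_e\lap_G^{\dagger/2}$ plus a fixed term $Y_B = \lap_G^{\dagger/2}\lap_B\lap_G^{\dagger/2}$; then $Y := Y_B + \sum_e Y_e$ has $Z := \expec{Y} = \lap_G^{\dagger/2}\lap_G\lap_G^{\dagger/2} = \Pi$, the projection onto the image of $\lap_G$. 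Each summand is PSD, and by Lemma~\ref{lem:stretch resistance correspondence} we have $\ww_G(e)\lap_e \preceq \tfrac{\alpha}{t}\lap_G$, so $X_e\lap_e \preceq 4\ww_G(e)\lap_e \preceq \tfrac{4\alpha}{t}\lap_G$, giving $Y_e \preceq \tfrac{4\alpha}{t}\Pi = \tfrac{4\alpha}{t} Z$; the fixed term $Y_B \preceq \lap_G^{\dagger/2}\lap_G\lap_G^{\dagger/2} = Z$ also satisfies $Y_B \preceq R Z$ for $R = 4\alpha/t \le 1$ once $t$ is large enough. Hence we may take $R = 4\alpha/t$.

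With $t = \lceil 12(c+1)\alpha\epsilon^{-2}\ln n\rceil$ we get $R \le \epsilon^2/(3(c+1)\ln n)$, so Theorem~\ref{thm:matrix concentration} yields $\prob{Y \preceq (1-\epsilon)Z} \le n\exp(-\epsilon^2/2R) \le n\exp(-\tfrac{3}{2}(c+1)\ln n) \le n^{-(c+1)}$ (and similarly, with the $/3R$ exponent, $\prob{Y \succeq (1+\epsilon)Z}\le n^{-(c+1)}$, absorbing constants into the choice of the absolute constant in $t$). Conjugating the resulting inequality $(1-\epsilon)\Pi \preceq Y \preceq (1+\epsilon)\Pi$ back by $\lap_G^{1/2}$ and using that the kernels of $\lap_G$, $\lap_H$, $\lap_B$ all agree (since $B$ spans every connected component — it is a union of spanners of $G$), we obtain $(1-\epsilon)\lap_G \preceq \lap_H \preceq (1+\epsilon)\lap_G$ on all of $\mathbb{R}^n$, which after rearranging is the required $(1\pm\epsilon)$-spectral sparsifier guarantee (possibly after rescaling $\epsilon$ by a constant, or by stating the two-sided bound directly). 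The independence hypothesis is used precisely to ensure the coin flips $X_e$ are independent of $G$ and of $B$, so that $B$ is a fixed graph from the point of view of the randomness and the $Y_e$ are genuinely independent.

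The main obstacle is bookkeeping around the null space and the exact form demanded by Theorem~\ref{thm:matrix concentration}: the theorem wants $Z$ to be the expectation of the \emph{entire} sum, but the bundle contributes a deterministic part, so one must either fold $\lap_B$ in as a degenerate ``random'' matrix or restrict to $\mathrm{im}(\lap_G)$ and track that all the relevant Laplacians share a kernel; neither is deep, but it is where an otherwise-routine argument can go wrong. A secondary point is verifying the constant: one needs $R = 4\alpha/t$ small enough that \emph{both} tail bounds (with denominators $2R$ and $3R$) beat $n^{-(c+1)}$, which is what forces the factor $12$ in the definition of $t$.
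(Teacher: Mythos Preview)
Your overall strategy is right and you correctly flag the main obstacle---fitting the deterministic bundle term into the hypotheses of Theorem~\ref{thm:matrix concentration}---but your proposed resolution contains an error. You assert that ``the fixed term $Y_B \preceq \ldots = Z$ also satisfies $Y_B \preceq RZ$ for $R = 4\alpha/t \le 1$.'' This implication is backwards: from $Y_B \preceq Z$ and $R \le 1$ you get $RZ \preceq Z$, not $Y_B \preceq RZ$. The bundle can be a substantial fraction of $G$ (indeed, if $B = G$ then $Y_B = Z$ exactly), so there is no hope of bounding a single $Y_B$ by $RZ$ when $R$ is of order $\epsilon^2/\log n$. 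As written, Theorem~\ref{thm:matrix concentration} cannot be applied with your choice of summands.

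The paper's proof does precisely what your last paragraph gestures at but does not actually carry out: it chops the deterministic term into many small pieces. Working directly with $Z = \lap_G$ (no conjugation by $\lap_G^{\dagger/2}$ is needed, since the theorem already allows an arbitrary PSD $Z$), the paper sets $R = \epsilon^2/(3(c+1)\ln n)$ and introduces $\lceil 1/R\rceil$ deterministic summands $\lap_{B^{(j)}} := R\,\lap_B$ (with the last piece truncated so that they sum to exactly $\lap_B$). Each piece then obeys $\lap_{B^{(j)}} \preceq R\lap_B \preceq R\lap_G = RZ$; the sampled-edge summands satisfy $X_e \preceq 4\ww_G(e)\lap_e \preceq (4\alpha/t)\lap_G = R\lap_G$ by Lemma~\ref{lem:stretch resistance correspondence}; and the total has expectation $\lap_G$. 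Now Theorem~\ref{thm:matrix concentration} applies directly. Everything else in your outline---the expectation computation, the use of the stretch--resistance lemma, and the kernel remarks---is fine.
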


By iteratively applying the sparsification of \textsc{Light-Spectral-Sparsify} as done in \textsc{Spectral-Sparsify} we obtain sparser and sparser cut sparsifiers.

\begin{restatable}{lemma}{spectralsparsifiercorrectness}\label{lem:spectral sparsifier correctness}
The output $H$ of algorithm \textsc{Spectral-Sparsify} is a $(1\pm \epsilon)$-spectral
sparsifier with probability at least $ 1 - 1/n^{c+1} $ for any input graph $ G $ that is independent of the random choices of the algorithm.
\end{restatable}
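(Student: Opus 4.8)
The plan is to iterate Lemma~\ref{lem:sparsifying step} through the $k = \lceil \log \rho \rceil$ rounds of \textsc{Spectral-Sparsify} and accumulate the multiplicative errors via a union bound. The key structural observation is that each round $i$ invokes \textsc{Light-Spectral-Sparsify} on the graph $G_{i-1}$ with error parameter $\epsilon/(2k)$, producing $(H_i, B_i)$ with $G_i = H_i \setminus B_i$, and that $H_i$ is a subgraph containing $B_i$ together with reweighted sampled edges. I would first record the telescoping identity: since $H_i = B_i \cup G_i$ (with the edge weights as set by the algorithm), the final output $H = \bigcup_{1 \le j \le i^*} B_j \cup G_{i^*}$ is exactly obtained from $G = G_0$ by replacing, at each stage, $G_{j-1}$ by $H_j$ and keeping the peeled bundles $B_j$ untouched. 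Spectrally, $\lap_{H_j} = \lap_{B_j} + \lap_{G_j}$ and $\lap_{G_{j-1}} = \lap_{B_j} + \lap_{G_{j-1}\setminus B_j}$, so the statement ``$H_j$ is a $(1\pm\epsilon')$-spectral sparsifier of $G_{j-1}$'' composes additively across the peeled-off bundles.

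The main point requiring care is the independence hypothesis in Lemma~\ref{lem:sparsifying step}: that lemma only guarantees success when its input graph is independent of its own internal randomness. In round $i$, the input $G_{i-1}$ is produced by rounds $1, \dots, i-1$, hence depends on their coin flips — but it is independent of the fresh coin flips used in round $i$. So I would argue by induction on $i$: condition on the randomness of the first $i-1$ rounds; then $G_{i-1}$ is a fixed graph, independent of round $i$'s randomness, and Lemma~\ref{lem:sparsifying step} applies to give that $H_i$ is a $(1\pm \epsilon/(2k))$-spectral sparsifier of $G_{i-1}$ with probability at least $1 - n^{-(c+1)}$. A union bound over the at most $k$ rounds gives that all these events hold simultaneously with probability at least $1 - k/n^{c+1} \ge 1 - 1/n^c$ (using $k \le \log \rho \le \log m \le $ a polylog factor absorbed by raising $n$ to a slightly smaller constant, or more cleanly by noting $k \le n$).

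Conditioned on all $k$ success events, I would chain the Loewner-order inequalities: $(1 - \epsilon/(2k)) \lap_{H_i} \preceq \lap_{G_{i-1}} \preceq (1 + \epsilon/(2k)) \lap_{H_i}$ for each $i$, and because $\lap_{G_{i-1}} = \lap_{B_i} + \lap_{G_{i-1}\setminus B_i}$ while $\lap_{H_i} = \lap_{B_i} + \lap_{G_i}$, and $\lap_{B_i} \succeq 0$, the error only needs to be charged against the non-bundle part; adding back $\lap_{B_i}$ on both sides preserves the ordering. Telescoping through $i = 1, \dots, i^*$ then yields $\prod_i (1 - \epsilon/(2k)) \lap_H \preceq \lap_G \preceq \prod_i (1 + \epsilon/(2k)) \lap_H$, and finally I would use the elementary estimates $(1 - \epsilon/(2k))^{k} \ge 1 - \epsilon/2 \ge 1 - \epsilon$ and $(1 + \epsilon/(2k))^{k} \le e^{\epsilon/2} \le 1 + \epsilon$ (valid for $0 < \epsilon \le 1$) to conclude that $H$ is a $(1 \pm \epsilon)$-spectral sparsifier of $G$.

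The only genuine obstacle is the bookkeeping around independence — making sure that at each round the ``input graph independent of the algorithm's random choices'' precondition of Lemma~\ref{lem:sparsifying step} is met despite the inputs being randomly generated by earlier rounds — which is handled cleanly by the conditioning argument above; the rest is the routine telescoping of multiplicative errors and a standard union bound. Since this is exactly the kind of standard argument the authors say they have relegated to Appendix~\ref{apx:proofs spectral sparsifier}, I would keep the exposition brief and reference Lemma~\ref{lem:sparsifying step} as the workhorse.
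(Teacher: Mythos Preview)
Your proposal is correct and follows essentially the same approach as the paper: apply Lemma~\ref{lem:sparsifying step} at each of the $k$ rounds, telescope the Loewner-order inequalities $(1-\epsilon/(2k))\lap_{H_i}\preceq\lap_{G_{i-1}}\preceq(1+\epsilon/(2k))\lap_{H_i}$ through the chain (the paper does this via an explicit induction on the partial sum $\lap_{G_k}+\sum_{i=k-j+1}^{k}\lap_{B_i}$), and finish with $(1\pm\epsilon/(2k))^{k}\in[1-\epsilon,1+\epsilon]$. If anything you are more explicit than the paper about the conditioning argument needed to invoke Lemma~\ref{lem:sparsifying step} in each round and about the union bound over rounds; note that the latter yields $1-k/n^{c+1}$ rather than the stated $1-1/n^{c+1}$, a factor-$k$ slack the paper simply absorbs into its ``with high probability'' phrasing.
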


\begin{restatable}{lemma}{spectralsparsifiersize}\label{lem:spectral sparsifier size}
With probability at least $1-2n^{-c}$, the number of iterations before algorithm \textsc{Spectral-Sparsify} terminates is
$$\min \{ \lceil \log \rho \rceil, \lceil \log m/((c+1)\log n) \rceil\}.$$
Moreover the size of $H$
is $$ O \left(\sum_{1 \leq j \leq i} | B_i | + m/\rho + c \log{n} \right) ,$$
and the size of the third output of the graph is
at most $\max\{ O(c\log n), O(m/\rho)\}$.
\end{restatable}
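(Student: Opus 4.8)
The plan is to track the edge count $m_i \defeq |E(G_i)|$ across the $k = \lceil \log \rho \rceil$ iterations of \textsc{Spectral-Sparsify} and show it decreases geometrically with high probability; the iteration bound and both size bounds then follow. The key structural observation is that in iteration $i$ the bundle $B_i$, and hence the graph $G_{i-1} \setminus B_i$, are fixed before the coin flips of step~4 of \textsc{Light-Spectral-Sparsify}, so conditioned on $G_{i-1}$ (and on all randomness used to build $B_i$) the edge set of $G_i = H_i \setminus B_i$ is obtained by retaining each edge of $G_{i-1} \setminus B_i$ independently with probability $1/4$. In particular $m_i$ is a sum of independent Bernoulli$(1/4)$ variables with $\expec{m_i} \le m_{i-1}/4$, and $m_i \le m_{i-1}$ always since $G_i \subseteq G_{i-1}$.

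First I would prove a single-step contraction bound: conditioned on $G_{i-1}$ with $m_{i-1} \ge C(c+1)\ln n$ for a suitable absolute constant $C$, a Chernoff bound of the form $\prob{X \ge 2\expec{X}} \le \exp(-\expec{X}/3)$ gives $\prob{m_i > m_{i-1}/2} \le \exp(-\Omega(m_{i-1})) \le n^{-(c+1)}$. A union bound over the at most $k \le \log m \le 2\log n$ iterations bounds the total failure probability by $k \cdot n^{-(c+1)} \le n^{-c}$; the extra additive $n^{-c}$ in the statement absorbs a second union bound used below. On the complementary event a straightforward induction on $i$ shows $m_i \le \max\{\, m/2^i,\ C(c+1)\ln n \,\}$ for every iteration that the loop completes.

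From this I would read off the iteration count. Once $m_i$ drops below $(c+1)\ln n$ the loop breaks, and by the displayed bound this has happened after at most $\lceil \log(m/((c+1)\log n)) \rceil$ iterations (up to the constant $C$, which is where some care with the break threshold is needed — see below); on the other hand the loop is hard-capped at $k = \lceil \log \rho \rceil$ iterations by its definition, so the number of iterations is the minimum of the two quantities. For the size of the leftover graph $G_i$ at termination: if the loop broke on the size test then $m_i < (c+1)\ln n = O(c\log n)$; if instead it ran the full $k = \lceil \log \rho \rceil$ iterations then, on the good event, $m_i \le m/2^{\lceil \log \rho \rceil} \le m/\rho$ since $2^{\lceil \log \rho \rceil} \ge \rho$. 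Either way $m_i \le \max\{O(c\log n),\,O(m/\rho)\}$, the stated bound on the third output. Finally $H = \bigcup_{1\le j\le i} B_j \cup G_i$, so $|E(H)| \le \sum_{1\le j\le i}|B_j| + m_i = O\bigl(\sum_{1\le j\le i}|B_j| + m/\rho + c\log n\bigr)$, the remaining size bound (the $|B_j|$ terms being controlled separately via the size guarantees of the $\alpha$-spanners constituting each bundle).

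The main obstacle is bookkeeping rather than anything conceptual: the Chernoff step only certifies a constant-factor drop in $m_i$ while $m_{i-1}$ exceeds a threshold $\Theta(c\log n)$ whose constant $C$ is dictated by the sampling probability $1/4$ and the target failure probability $n^{-(c+1)}$, whereas \textsc{Spectral-Sparsify} tests against the bare threshold $(c+1)\ln n$. One must therefore check that the geometric-decrease regime and the ``graph is already small'' regime overlap, so that the induction closes and only $O(1)$ slack iterations can occur — which amounts to choosing the constants consistently (equivalently, reading the iteration-count claim up to a constant factor). Everything else is a direct combination of the Chernoff bound, a union bound, and the explicit form of $H$.
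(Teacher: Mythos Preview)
Your proposal is correct and follows essentially the same approach as the paper: both apply a Chernoff bound to show that each call to \textsc{Light-Spectral-Sparsify} halves the edge count of $G_{i-1}$ with probability at least $1-n^{-(c+1)}$, then take a union bound over the at most $O(\log n)$ iterations, and read off the iteration and size bounds from the resulting geometric decay. Your remark about the threshold constant $C$ is a valid bit of care that the paper's proof glosses over; the paper simply asserts the halving whenever $m_{i-1}\ge (c+1)\ln n$ and absorbs the slack into the big-$O$.
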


We conclude that with probability at least $1-n^{-c}$ our construction yields a $ (1 \pm \epsilon) $-spectral sparsifier that also has the properties of Lemma~\ref{lem:spectral sparsifier size}.

Typically, the $t$-bundle spanners will consist of a polylogarithmic number of spanners of size $ O (n \poly{\log{n}}) $ and thus the resulting spectral sparsifier will have size $ O (n \poly{\log{n}, \epsilon^{-1}} + m / \rho) $.
In each of the at most $ \log{n} $ iterations the weight of the sampled edges is increased by a factor of $ 4 $. Thus, the ratio between the largest and the smallest edge weight in $ H $ is at most by a factor of $ O (n) $ more than in $ G $, i.e., $ O (n W) $.

\subsection{Decremental Spanner with Monotonicity Property}\label{sec:decremental spanner}

We first develop the decremental spanner algorithm, which will give us a $ (\log{n}) $-spanner of size $ O (n \poly{(\log{n})}) $ with a total update time of $ O (m \poly{(\log{n})}) $.
Our algorithm is a careful modification of the dynamic spanner algorithm of Baswana et al.~\cite{BaswanaKS12} having the following additional monotonicity property: Every time an edge is added to $ H $, it stays in $ H $ until it is deleted from $ G $ by the adversary.
Formally, we will prove the following theorem.

\begin{lemma}\label{lem:decremental spanner}
For every $ k \geq 2 $ and every $ 0 < \epsilon \leq 1 $, there is a decremental algorithm for maintaining a $ (1 + \epsilon) (2k-1) $-spanner $ H $ of expected size $ O (k^2 n^{1 + 1/k} \log{n} \log_{1+\epsilon}{W}) $ for an undirected graph $ G $ with non-negative edge weights that has an expected total update time of $ O (k^2 m \log{n}) $, where $ W $ is the ratio between the largest and the smallest edge weight in $ G $.
Additional $ H $ has the following property: Every time an edge is added to $ H $, it stays in $ H $ until it is deleted from $ G $.
The bound on the expected size and the expected running time hold against an oblivious adversary.
\end{lemma}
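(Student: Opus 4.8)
The plan is to recall the structure of the Baswana--Khurana--Sarkar decremental spanner and to modify it so that edges, once inserted into the spanner, are never voluntarily removed. I would first reduce to the unweighted case in the standard way: partition the edge set into $O(\log_{1+\epsilon} W)$ weight classes, where class $i$ consists of edges of weight in $[(1+\epsilon)^i, (1+\epsilon)^{i+1})$, maintain a separate spanner for each class treating it as unweighted, and take the union. A $(2k-1)$-spanner of each unweighted class yields a $(1+\epsilon)(2k-1)$-spanner of $G$ overall, and the size bound gets the extra $\log_{1+\epsilon} W$ factor while the total update time is the sum over classes, which telescopes to $O(k^2 m \log n)$. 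So the heart of the matter is a decremental $(2k-1)$-spanner for unweighted graphs with the monotonicity property.

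For the unweighted case I would follow the hierarchical clustering approach: maintain a sequence of levels $0,1,\dots,k$, where level $0$ is all vertices and each subsequent level samples clusters with probability $n^{-1/k}$; each level maintains, for every clustered vertex, an edge to its cluster center (forming BFS-like trees of depth $\le$ level), and the spanner consists of these cluster trees together with, for each vertex and each neighboring cluster at the appropriate level, one connecting edge. The randomized analysis giving expected size $O(k n^{1+1/k})$ (times an extra $\log n$ from the levels / the probabilistic argument) and expected total update time $O(k^2 m \log n)$ is exactly as in~\cite{BaswanaKS12}; I would invoke it rather than reprove it. The key new ingredient is this: when an edge deletion forces a vertex to switch clusters or forces a connecting edge to be rechosen, the \emph{old} tree edges and the old connecting edge have already been committed to $H$, and we simply \emph{leave them in $H$} (they are still present in $G$ since only the triggering edge was deleted) rather than removing them. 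We only remove an edge from $H$ when the adversary deletes it from $G$. New structural edges are added as needed. This is the monotonicity property by construction.

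The main obstacle — and the place where real care is needed — is arguing that this ``lazy'' retention does not blow up the size of $H$. In the original algorithm each vertex contributes $O(1)$ connecting edges per level at any given time; under monotonicity a vertex may accumulate one connecting edge per level \emph{per reclustering event}, so a priori the size could grow with the number of updates. The fix is the standard amortization/epoch argument already implicit in the running-time analysis of~\cite{BaswanaKS12}: the expected number of times a fixed vertex is reclustered over the entire deletion sequence is $O(\log n)$ (each reclustering of $v$ is charged to the deletion of an edge incident to $v$'s current cluster, and over $m$ deletions the total reclustering work is $O(m \log n / n^{1/k})$-ish per level, summing to the claimed total update time). Hence the \emph{total} number of edges ever added to $H$ — which under monotonicity equals the final size plus the number adversarially deleted — is in expectation $O(k n \log n \cdot n^{1/k})$ per level, i.e. $O(k^2 n^{1+1/k} \log n)$ overall, and throwing in the $\log_{1+\epsilon} W$ weight classes gives the stated expected size bound. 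I would also note that monotonicity makes the algorithm genuinely decremental with respect to $H$'s internal bookkeeping: since committed edges stay, the data structure never needs to ``re-add'' an edge it removed, which is precisely what makes the bundle construction in the sparsifier work. The stretch bound $(1+\epsilon)(2k-1)$ is immediate: within each unweighted class the retained cluster-tree-plus-connector structure at every moment still contains a valid $(2k-1)$-spanner of the current class graph (the kept old edges only add to $H$, they never destroy the spanner property of the current structure), and the weight-class rounding contributes the $(1+\epsilon)$ factor.
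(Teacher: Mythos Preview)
Your high-level plan matches the paper's: reduce to unweighted via $O(\log_{1+\epsilon} W)$ weight classes, run the Baswana--Khurana--Sarkar clustering hierarchy, and enforce monotonicity by simply never dropping an edge from $H$ until the adversary deletes it. But there is a genuine gap in your size argument. You only account for the \emph{inter-cluster connecting edges} (type~2 in the paper's terminology), and you bound their accumulation by the number of recluster events. You do not address the \emph{cluster-tree (parent) edges} at all. A vertex $v$ can change its parent in the level-$i$ cluster tree $F_i$ \emph{without} changing its cluster --- any deletion that perturbs the shortest-path tree rooted at $v$'s center can do this --- and under monotonicity each such old parent edge is retained in $H$. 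Nothing in the original Baswana et~al.\ analysis bounds the number of parent changes; they only need the tree to be correct at each moment. So your ``the total number of edges ever added is $O(k^2 n^{1+1/k}\log n)$'' is unsupported for this edge type.

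The paper closes this gap by an actual modification of the algorithm, not just of the analysis: it fixes a uniformly random permutation $\sigma$ of $V$ and, whenever a vertex has several candidate parents in its cluster tree, breaks the tie using $\sigma$. With this rule one can prove (their Lemma~\ref{lem:number of parent changes}) that the expected number of parent changes of $v$ in $F_i$ over the whole deletion sequence is $O(i\log n)$, via the same backward-analysis style argument that gives the $O(i\log n)$ recluster bound. Summing over levels gives $O(k^2\log n)$ retained parent edges per vertex, which is dominated by the type-2 bound. Two smaller points: the recluster bound at level $i$ is $O(i\log n)$, not $O(\log n)$ as you wrote (this is why the final bound carries $k^2$ rather than $k$), and your charging sketch ``each reclustering is charged to a deletion of an edge incident to $v$'s cluster'' is not the actual mechanism --- both the recluster bound and the parent-change bound come from the random-permutation harmonic-sum argument, not from a deletion-charging scheme.
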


It would be possible to enforce the monotonicity property for any dynamic spanner algorithm by simply overriding the algorithms' decision for removing edges from the spanner before they are deleted from $ G $.
Without additional arguments however, the algorithm's bound on the size of the spanner might then not hold anymore.
In particular, we do not know how obtain a version of the spanner of Baswana et al.\ that has the monotonicity property without modifying the internals of the algorithm.

Similar to Baswana et al.~\cite{BaswanaKS12} we actually develop an algorithm for unweighted graphs and then extend it to weighted graphs as follows.
Let $ W $ be the ratio of the largest to the smallest edge weight in~$ G $.
Partition the edges into $ \log_{1 + \epsilon} W $ subgraphs based on their weights and maintain a $(2k - 1) $-spanner ignoring the weights.
The union of these spanners will be a $ (1 + \epsilon) (2k - 1) $-spanner of $ G $ and the size increases by a factor of $ \log_{1 + \epsilon} W $ compared to the unweighted version.
The update time stays the same as each update in the graph is performed only in one of the $ \log_{1 + \epsilon} W $ subgraphs.
Therefore we assume in the following that $ G $ is an unweighted graph.

\subsubsection{Algorithm and Running Time}

We follow the approach of Baswana et al.\ and first explain how to maintain a clustering of the vertices and then define our spanner using this clustering.

\paragraph{Clustering.}

Consider an unweighted undirected graph $ G = (V, E) $ undergoing edge deletions.
Let $ S \subseteq V $ be a subset of the vertices used as cluster centers.
Furthermore, consider a permutation $ \sigma $ on the set of vertices $ V $ and an integer $ i \geq 0 $.

The goal is to maintain a clustering $ C_{S, \sigma, i} $ consisting of disjoint clusters with one cluster $ C_{S, \sigma, i} [s] \subseteq V $ for every $ s \in S $.
Every vertex within distance $ i $ to the vertices in $ S $ is assigned to the cluster of its closest vertex in $ S $, where ties are broken according to the permutation $ \sigma $.
More formally, $ v \in C_{S, \sigma, i} [s] $ if and only if
\begin{itemize}
	\item $ \dist_G (v, s) \leq i $ and
	\item for every $ s' \in S \setminus \{ s \} $ either
	\begin{itemize}
		\item $ \dist_G (v, s) < \dist_G (v, s') $ or
		\item $ \dist_G (v, s) = \dist_G (v, s') $ and $ \sigma (s) < \sigma' (s) $.
	\end{itemize}
\end{itemize}

Observe that each cluster $ C_{S, \sigma, i} [s] $ of a vertex $ s \in S $ can be organized as a tree consisting of shortest paths to $ s $.
We demand that in this tree every vertex $ v $ chooses the parent that comes first in the permutation $ \sigma $ among all candidates (i.e., among the vertices that are in the same cluster $ C_i [s] $ as $ v $ and that are at distance $ \dist (v, s) - 1 $ from $ s $).\footnote{Using the permutation to choose a random parent is not part of the original construction of Baswana et al.}
These trees of the clusters define a forest $ F_{S, \sigma, i} $ that we wish to maintain together with the clustering $ C_{S, \sigma, i} $.

Using a modification of the Even-Shiloach algorithm~\cite{EvenS81} all the cluster trees of the clustering $ C_i $ together can be maintained in total time $ O (i m \log{n}) $.

\begin{theorem}[\cite{BaswanaKS12}]\label{thm:maintaining clusters}
Given a graph $ G = (V, E) $, a set $ S \subseteq V $, a random permutation~$ \sigma $ of $ V $, and an integer $ i \geq 0 $, there is a decremental algorithm for maintaining the clustering $ C_{S, \sigma, i} $ and the corresponding forest $ F_{S, \sigma, i} $ of partial shortest path trees from the cluster centers in expected total time $ O (m i \log{n}) $.
\end{theorem}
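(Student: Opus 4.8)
The plan is to reduce the problem to running, in parallel, a collection of bounded-depth Even--Shiloach (ES) shortest-path trees, one rooted at each cluster center, all sharing the same priority rule encoded by the random permutation $\sigma$. First I would introduce a virtual super-source $s^\ast$ connected by a zero-weight (or unit-weight, with a corresponding shift of the depth bound) edge to every vertex in $S$, so that $\dist_G(v, S) = \dist_{G'}(s^\ast, v) - 1$ in the augmented graph $G'$; then the clustering $C_{S,\sigma,i}$ is exactly the decomposition of the ball of radius $i$ around $s^\ast$ according to which center lies on the chosen shortest path. The chosen shortest-path forest $F_{S,\sigma,i}$ is then the ES tree of $s^\ast$ truncated at depth $i+1$, with the tie-breaking rule: among all neighbors at distance $\dist(v)-1$, pick the one minimal under $\sigma$.

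The core of the argument is the standard ES analysis adapted to this tie-break rule. I would maintain for each vertex $v$ a level $\ell(v) \in \{0, 1, \dots, i\} \cup \{\infty\}$ representing $\dist_G(v, S)$, and a parent pointer. When an edge is deleted, the only vertices whose level can change are those whose current parent edge was deleted (or lies on a broken shortest path); one puts these into a priority queue keyed by level and repeatedly pops the minimum-level vertex $v$, scans its incident edges to recompute $\ell(v) = 1 + \min_{u \sim v} \ell(u)$ and re-selects the $\sigma$-minimal valid parent, and if $\ell(v)$ increased, re-inserts $v$'s affected neighbors. The key accounting step: each time a vertex $v$'s level strictly increases, we charge $O(\deg(v))$ work, and since $\ell(v)$ only goes up and is bounded by $i$, the total over the whole deletion sequence is $\sum_v O(i \cdot \deg(v)) = O(i m)$; the priority-queue overhead contributes the extra $\log n$ factor, giving $O(i m \log n)$. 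Because $\sigma$ is a fixed permutation chosen once, the tie-breaking is consistent across the whole run and does not create cycles in the parent forest — this is what I would verify carefully, since an inconsistent or adaptively chosen tie-break could make $F$ fail to be a forest or could force extra recomputation.

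The main obstacle I anticipate is not the running-time bound per se (that is essentially textbook ES) but two subtleties specific to this construction. First, one must argue that re-selecting the $\sigma$-minimal parent among the current candidates whenever a vertex is touched does not blow up the work: a vertex could in principle switch parents $\Omega(\deg(v))$ times at the \emph{same} level as its neighbors' levels fluctuate, so I would need the observation that in a purely decremental setting a neighbor's level only increases, hence the set of valid parents of $v$ at a fixed level only shrinks, so each parent switch can be charged to the departure of a candidate — again $O(\deg(v))$ total per level, $O(i \deg(v))$ overall. Second, since the theorem is stated for a \emph{random} permutation and claims an \emph{expected} bound, I should note that for this particular lemma the running time $O(m i \log n)$ is actually deterministic given $\sigma$ (the randomness of $\sigma$ is only needed later, for bounding the spanner size), so ``expected'' here is harmless; I would remark on this rather than prove a genuinely probabilistic statement. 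Finally I would cite~\cite{EvenS81} for the original ES data structure and~\cite{BaswanaKS12} for the clustering formulation, and note that correctness of the clustering (every vertex within distance $i$ of $S$ lands in the cluster of its $\sigma$-closest center) is immediate from the invariant $\ell(v) = \dist_G(v,S)$ maintained by the truncated ES tree.
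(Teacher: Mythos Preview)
Your reduction to a single super-source Even--Shiloach tree is natural, but the tie-breaking rule you propose --- choose as parent the $\sigma$-minimal neighbour at level $\ell(v)-1$ --- does \emph{not} recover the clustering $C_{S,\sigma,i}$ as defined. The paper assigns $v$ to the cluster of the $\sigma$-minimal \emph{center} $s\in S$ among those realizing $\dist_G(v,S)$; your rule instead assigns $v$ to the cluster inherited from its $\sigma$-minimal \emph{parent}, and these disagree in general. Concretely, take centers $s_1,s_2$ with $\sigma(s_1)<\sigma(s_2)$, a vertex $u$ adjacent to both, a vertex $w$ adjacent only to $s_2$, and a vertex $v$ adjacent to $u$ and $w$ with $\sigma(w)<\sigma(u)$: the paper's rule puts $v\in C[s_1]$ (both centers are at distance~$2$ from $v$, and $s_1$ wins under $\sigma$), whereas your rule makes $w$ the parent of $v$ and hence places $v$ in $C[s_2]$. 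So the forest you maintain is not $F_{S,\sigma,i}$ and the clustering is not $C_{S,\sigma,i}$.

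This also undermines your remark that the $O(mi\log n)$ bound is deterministic and that the randomness of $\sigma$ matters only later for the spanner size. The paper (following~\cite{BaswanaKS12}) charges $O(\deg(v))$ work to each \emph{cluster change} of $v$ and bounds the expected number of such changes by $O(i\log n)$ precisely via the random-permutation argument of Lemma~\ref{lem:number of cluster changes}; this is where both the expectation and the $\log n$ factor originate. Maintaining the correct clustering requires tracking, for each $v$, the $\sigma$-minimal center among all centers currently at distance $\dist_G(v,S)$ --- not merely the $\sigma$-minimal neighbour one level up --- and the randomized cluster-change bound is what makes that analysis go through. A purely deterministic level-increase charging of the kind you sketch does not obviously control the work once the object being maintained is the center-based clustering rather than just a truncated shortest-path tree.
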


Note that we deviate from the original algorithm of Baswana et al.\ by choosing the parent in the tree of each cluster according to the random permutation.
In the algorithm of Baswana et al.\ the parents in these trees were chosen arbitrarily.
However, it can easily be checked that running time guarantee of \Cref{thm:maintaining clusters} also holds for our modification.

The running time analysis of Baswana et al.\ hinges on the fact that the expected number of times a vertex changes its cluster is $ O (i \log{n}) $.

\begin{lemma}[\cite{BaswanaKS12}]\label{lem:number of cluster changes}
For every vertex $ v $ the expected number of times $ v $ changes its cluster in $ C_{S, \sigma, i} $ is at most $ O (i \log{n}) $.
\end{lemma}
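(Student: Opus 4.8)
The plan is to use a backward-analysis / random-priority argument, which is the standard trick for bounding reassignments under a random permutation. Fix a vertex $v$. At any point in the deletion sequence, $v$'s cluster is determined by two quantities: its current distance $d := \dist_G(v,s)$ to its (unique) closest center, and among all centers $s' \in S$ at distance exactly $d$, the one minimizing $\sigma(s')$. Since $G$ only loses edges, $d = \dist_G(v,S)$ is nondecreasing over time and ranges over the $i+1$ values $0,1,\dots,i$ (once it exceeds $i$, $v$ is unclustered and never returns to a cluster, so this contributes nothing further). So it suffices to fix a single ``level'' $d = \ell$ and bound the expected number of times $v$ changes cluster while $\dist_G(v,S) = \ell$; summing over the $\le i+1$ levels then gives the bound.

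First I would restrict attention to one level $\ell$ and consider the set $A_\ell \subseteq S$ of centers that at \emph{some} point during the (contiguous) time interval in which $\dist_G(v,S)=\ell$ are at distance exactly $\ell$ from $v$. Crucially, because distances only grow, once a center $s'$ leaves the ``distance-exactly-$\ell$'' status it never comes back at that level: each $s' \in A_\ell$ is at distance $\ell$ from $v$ during a single contiguous sub-interval, and these sub-intervals only ever lose members as time goes on. The cluster of $v$ at level $\ell$ is always $\arg\min_{s' \text{ currently at distance }\ell} \sigma(s')$. Now process the sub-intervals in order of when their center \emph{stops} being at distance $\ell$ (equivalently, think of revealing the centers of $A_\ell$ in that order): $v$'s cluster changes at such an event only if the center that just dropped out was the current $\sigma$-minimizer among the still-active centers, i.e., only if it is the $\sigma$-minimum of the set of centers active up to that moment. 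If we order $A_\ell = \{s_1, s_2, \dots, s_r\}$ so that $s_j$ is the $j$-th center to drop out, then a change happens at the departure of $s_j$ precisely when $\sigma(s_j) = \min\{\sigma(s_1),\dots,\sigma(s_j)\}$ (restricted to those still present, but since $s_1,\dots,s_{j-1}$ have already left, the minimizer among the present centers being $s_j$ means $\sigma(s_j)$ beat all of $s_{j+1},\dots$; one must argue carefully that the relevant comparison reduces to a prefix-minimum event — this is where the random permutation earns its keep). Because $\sigma$ is a uniformly random permutation, the probability that $s_j$ attains the minimum of a fixed set of $j$ elements is $1/j$, so the expected number of changes at level $\ell$ is at most $\sum_{j=1}^{r} 1/j = H_r = O(\log n)$, using $r \le |S| \le n$.

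Summing over the $\le i+1$ levels yields expected $O(i\log n)$ cluster changes for $v$, as claimed. The main obstacle — and the step I would be most careful about — is the combinatorial bookkeeping that justifies reducing each cluster change to a ``prefix-minimum of a random permutation'' event: one has to check that (a) the set of centers ever at distance exactly $\ell$ from $v$ is well-defined and its per-center active sub-intervals are nested/monotone under deletions, (b) a reassignment of $v$ at level $\ell$ is triggered only by the departure of the current $\sigma$-minimizer (new centers cannot appear at distance $\ell$ since distances only increase, so no ``incoming'' center can preempt $v$'s cluster), and (c) the ordering used for the backward analysis (order of departure) is independent of $\sigma$, so that conditioned on that ordering the permutation restricted to $A_\ell$ is still uniform and the $1/j$ bound applies. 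Once these structural facts are in place, the harmonic-sum computation is routine. I would also note that the argument is insensitive to the tie-breaking-by-$\sigma$ convention for choosing parents within a cluster tree — that convention affects the shape of $F_{S,\sigma,i}$, not which cluster $v$ lands in — so it does not interfere with the count.
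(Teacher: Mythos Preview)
Your proposal is correct and follows essentially the same approach as Baswana et al.\ (and the paper's proof of the companion Lemma~\ref{lem:number of parent changes}): decompose by distance level $\ell$, observe that at each level the set of candidate centers can only shrink under deletions, and bound cluster changes at that level by a harmonic sum via the random-permutation argument. One cosmetic wobble: your indexing slips between ``prefix-minimum of $\{s_1,\dots,s_j\}$'' and the actual condition, which is that $s_j$ is the $\sigma$-minimum of the \emph{remaining} centers $\{s_j,\dots,s_r\}$ (probability $1/(r-j+1)$, not $1/j$); the sum is $H_r$ either way, so the conclusion is unaffected, but that step should be cleaned up.
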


By charging time $ O (\deg (v)) $ to every change of the cluster of $ v $ and every increase of the distance from $ v $ to $ S $ (which happens at most $i$ times), Baswana et al.\ get a total update time of $ O (i m \log{n}) $ over all deletions in $ G $.
For our version of the spanner that has the monotonicity property we additionally need the following observation whose proof is similar to the one of the lemma above.

\begin{lemma}\label{lem:number of parent changes}
For every vertex $ v $ the expected number of times $ v $ changes its parent in $ F_{S, \sigma, i} $ is at most $ O (i \log{n}) $.
\end{lemma}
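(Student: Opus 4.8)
The plan is to mirror the argument used for \Cref{lem:number of cluster changes}, but tracking a finer event: a vertex $v$ changing its parent in the forest $F_{S,\sigma,i}$, rather than changing cluster membership. Fix a vertex $v$. Its parent in $F_{S,\sigma,i}$ is, by construction, the vertex that comes first in the permutation $\sigma$ among the \emph{eligible candidates}, i.e.\ among those neighbours $u$ of $v$ that lie in the same cluster $C_{S,\sigma,i}[s]$ as $v$ and satisfy $\dist_G(u,s)=\dist_G(v,s)-1$. So a parent change can be triggered in one of two ways: either $v$ changes its cluster or its distance level (which, by \Cref{lem:number of cluster changes} together with the fact that distances only increase and do so at most $i$ times, happens $O(i\log n)$ times in expectation), or $v$ stays in the same cluster at the same level but the current parent leaves the candidate set — either because the edge to it is deleted, or because that vertex changes its own cluster/level.

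The key step is to bound, within one ``epoch'' during which $v$'s cluster center $s$ and distance $d=\dist_G(v,s)$ stay fixed, the expected number of parent changes caused by the current parent leaving the candidate set. Here I would re-use exactly the random-permutation/backward-analysis trick of Baswana et al.: think of the candidate set $N$ (the eligible neighbours) as being revealed in reverse order of the adversary's deletions, and observe that the parent is always the $\sigma$-minimum element of the current candidate set. A new parent is selected at a given moment only if the element removed from $N$ is the current $\sigma$-minimum; when $N$ has $j$ elements this happens with probability $1/j$ over the random permutation, since $\sigma$ restricted to $N$ is uniformly random and independent of the (oblivious) deletion order. Summing $\sum_{j=1}^{|N|} 1/j = O(\log n)$ gives $O(\log n)$ expected parent changes per epoch. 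Since there are at most $O(i\log n)$ epochs in expectation (one per cluster-or-level change of $v$), multiplying the two bounds gives $O(i\log^2 n)$ — so to get the claimed $O(i\log n)$ I would instead charge more carefully: a parent change due to the parent's edge deletion can be charged directly to that deletion (at most $\deg(v)$ of these over the whole run, but spread over epochs), and a parent change due to the parent changing cluster/level is charged to that neighbour's own cluster/level change via \Cref{lem:number of cluster changes}; combined with the epoch count this telescopes to $O(i\log n)$. Finally I would invoke linearity of expectation and the oblivious-adversary assumption to make the ``$\sigma$ restricted to $N$ is uniform and independent of the deletions'' claim rigorous.

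The main obstacle I anticipate is getting the accounting to land on $O(i\log n)$ rather than $O(i\log^2 n)$: the naive product of ``$O(i\log n)$ epochs'' times ``$O(\log n)$ parent changes per epoch'' overshoots, so the real work is to show that the within-epoch parent changes can be amortized against the same events (edge deletions incident to $v$, and cluster changes of $v$'s neighbours) that \Cref{lem:number of cluster changes} already pays for, rather than being counted afresh. If that charging argument turns out to be delicate, a safe fallback is to prove the weaker $O(i\log^2 n)$ bound, which is still polylogarithmic and suffices for the overall polylogarithmic update-time claim; but I expect the cleaner $O(i\log n)$ bound to go through by the charging scheme above, exactly paralleling how Baswana et al.\ prove \Cref{lem:number of cluster changes}.
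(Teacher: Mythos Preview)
Your proposal has the right flavor but misses the key structural idea that makes the bound $O(i\log n)$ rather than $O(i\log^2 n)$. The problem is your choice of ``epoch'': you fix both the cluster center $s$ and the distance level $d$, which by \Cref{lem:number of cluster changes} gives $O(i\log n)$ epochs in expectation, and then you are stuck multiplying by another $O(\log n)$ from the within-epoch backward analysis. Your charging argument to recover the lost $\log n$ does not work as written: charging a parent change to the neighbour's own cluster/level change gives, summed over all neighbours of $v$, a bound like $\deg(v)\cdot O(i\log n)$, not $O(i\log n)$; and charging to edge deletions incident to $v$ again costs $\deg(v)$, which is not polylogarithmic.

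The paper's fix is to make the phase decomposition coarser: a phase is the interval during which $\dist_G(v,S)=l$ for a fixed $l$, so there are at most $i$ phases \emph{deterministically}. Within a phase, instead of doing backward analysis on the set of candidate parents (which would only bound parent changes at a fixed center), one does backward analysis on the set of \emph{pairs} $(s,u)$ with $s$ at distance $l$ from $v$ and $u$ a neighbour of $v$, ordered by the time at which either condition fails. The current configuration (cluster center, parent) is always the lexicographically $\sigma$-smallest surviving pair, and the probability that the $j$-th pair (in departure order) is ever the configuration is at most $1/(t-j+1)$, giving a harmonic sum $O(\log n)$ per phase. Multiplying by $i$ phases yields $O(i\log n)$. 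The essential difference from your plan is that cluster changes within a fixed distance level are \emph{absorbed into the same backward analysis} via the pair encoding, rather than being counted separately as epoch boundaries.
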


\begin{proof}
Remember that we assume the adversary to be oblivious, which means that the sequence of deletions is independent of the random choices of our algorithm.
We divide the sequence of deletions into phases.
For every $ 1 \leq l \leq i $ the $l$-th phase consists of the (possibly empty) subsequence of deletions during which the distance from $ v $ to $ S $ is exactly $ l $, i.e., $ \dist_G (v, S) = l $.

Consider first the case $ l \geq 2 $.
We will argue about possible `configurations' $ (s, u) $ such that $ v $ is in the cluster of $ s $ and $ u $ is the parent of $ v $ that might occur in phase $ l $.
Let $ (s_1, u_1), (s_2, u_2), \dots, (s_{t^{(l)}}, u_{t^{(l)}}) $ (where $ t^{(l)} \leq n^2 $) be the sequence of all pairs of vertices such that, at the beginning of phase $ l $, for every $ 1 \leq j \leq t^{(l)} $, $ s_j $ is at distance $ l $ from $ v $ and $ u_j $ is a neighbor of $ v $.
The pairs $ (s_i, u_i) $ in this sequence are ordered according to the point in phase $ l $ at which they cease to be possible configurations, i.e., at which either the distance of $ s_i $ to $ v $ increases to more than $ l $ or $ u $ is not a neighbor of $ v $ anymore.

Let $ A_j^{(l)} $ denote the event that, at some point during phase $ l $, $ v $ is in the cluster of $ s_j $ and $ u_j $ is the parent of $ v $.
The expected number of times $ v $ changes its parent in $ F_{S, \sigma, i} $ during phase $ l $ is equal to the expected number of $j$'s such that event $ A_j^{(l)} $ takes place.
Let $ B_j^{(l)} $ denote the event that $ (s_j, u_j) $ is lexicographically first among all pairs $ (s_j, u_j), \ldots, (s_t, u_{t^{(l)}}) $ under the permutation $ \sigma $, i.e., for all $ j \leq j' \leq t^{(l)} $ either $ \sigma (s_j) \leq \sigma (s_{j'}) $ or $ \sigma (s_j) = \sigma (s_{j'}) $ and $ \sigma (u_j) , \sigma (u_{j'}) $.
Observe that $ \prob{ A_j^{(l)} } \leq \prob{ B_j^{(l)} } $ because the event $ A_j^{(l)} $ can only take place if the event $ B_j^{(l)} $ takes place.
Furthermore, $ \prob{ B_j^{(l)} } = 1 / (t^{(l)} - j + 1) $ as every pair of (distinct) vertices has the same probability of being first in the lexicographic order induced by $ \sigma $.
Thus, by linearity of expectation, the number of times $ v $ changes its parent in $ F_{S, \sigma, i} $ during phase $ l $ is at most
\begin{equation*}
\sum_{j=1}^{t^{(l)}} \prob{ A_j^{(l)} } \leq
	\sum_{j=1}^{t^{(l)}} \prob{ B_j^{(l)} } =
	\sum_{j=1}^{t^{(l)}} \frac{1}{t^{(l)} - j + 1} =
	\sum_{j=1}^{t^{(l)}} \frac{1}{j} =
	O (\log{t^{(l)}}) =
	O (\log{n}) \, .
\end{equation*}

In the second case $ l = 1 $, a slightly simpler argument bounds the number of times $ v $ changes its parent (which is equal to the number of times $ v $ changes its cluster) by ordering the neighbors of $ v $ in the order of deleting their edge to $ v $.
This is the original argument of Baswana et al.~\cite{BaswanaKS12} of \Cref{lem:number of cluster changes}.
We therefore also get that the number of times $ v $ changes its parent in $ F_{S, \sigma, i} $ in phase $ 1 $ is at most $ O (\log{n}) $.

We now sum up the expected number of changes during all phases, and, by linearity of expectation, get that the number of times $ v $ changes its parent in $ F_{S, \sigma, i} $ is at most $ O (i \log{n}) $.
\end{proof}

\paragraph{Spanner.}

Let $ 2 \leq k \leq \log{n} $ be a parameter of the algorithm.
At the initialization, we first create a sequence of sets $ V = S_0 \supseteq S_1 \supseteq \ldots \supseteq S_k = \emptyset $ by obtaining $ S_{i+1} $ from sampling each vertex of $ S_i $ with probability $ n^{-1/k} $.
Furthermore, we pick a random permutation $ \sigma $ of the vertices in $ V $.

We use the algorithm of \Cref{thm:maintaining clusters} to maintain, for every $ 1 \leq i \leq k $, the clustering $ C_i \defeq C_{S_i, \sigma} $ together with the forest $ F_i \defeq F_{S_i, \sigma} $.
Define the set $ V_i $ as $ V_i = \{ v \in V \mid d_G (v, S_i) \leq i \} $, i.e., the set of vertices that are at distance at most $ i $ to some vertex of $ S_i $.
Observe that the vertices in $ V_i $ are exactly those vertices that are contained in some cluster $ C_i [s] $ of the clustering~$ C_i $.
For every vertex $ v \in V_i $ (where $ C_i [s] $ is the cluster of $ v $) we say that a cluster $ C_i [s'] $ (for some $ s' \in S_i \setminus \{ s \} $) is \emph{neighboring} to $ v $ if $ G $ contains an edge $ (v, v') $ such that $ v' \in C_i [s'] $.

Our spanner $ H $ consists of the following two types of edges:
\begin{enumerate}
	\item For every $ 1 \leq i \leq k $, $ H $ contains all edges of the forest $ F_i $ consisting of partial shortest path trees from the cluster centers.
	\item For every $ 1 \leq i \leq k $, every vertex $ v \in V_i \setminus V_{i+1} $ (contained in some cluster $ C_i [s] $), and every neighboring cluster $ C_i [s'] $ of $ v $, $ H $ contains \emph{one} edge to $ C_i [s'] $, i.e., one edge $ (v, v') $ such that $ v' \in C_i [s'] $.
\end{enumerate}
The first type of edges can be maintained together with the spanning forests of the clustering algorithm of \Cref{thm:maintaining clusters}.
The second type of edges can be maintained with the following update rule:
Every time the clustering of a vertex $ v \in V_i \setminus V_{i+1} $ changes, we add to $ H $ \emph{one} edge to each neighboring cluster.
Every time such a `selected' edge is deleted from $ G $, we replace it with another edge to this neighboring cluster until all of them are used up.

We now enforce the monotonicity property mentioned above in the straightforward way.
Whenever we have added an edge to $ H $, we only remove it again from $ H $ when it is also deleted from~$ G $.
We argue below that this makes the size of the spanner only slightly worse than in the original construction of Baswana et al.

\subsubsection{Stretch and Size}

We now prove the guarantees on the stretch and size of $ H $.
The stretch argument is very similar to the ones of Baswana et al.
We include it here for completeness.
In the stretch argument we need stronger guarantees than Baswana et al.\ as we never remove edges from~$ H $, unless they are deleted from $ G $ as well.

\begin{lemma}[\cite{BaswanaKS12}]
$ H $ is a $ (2k-1) $-spanner of $ G $.
\end{lemma}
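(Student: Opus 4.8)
The goal is to show that the spanner $H$ constructed in Section~\ref{sec:decremental spanner} has stretch $2k-1$, i.e., $\dist_H(u,v) \le (2k-1)\dist_G(u,v)$ for every pair $u,v$. The standard reduction is to verify the stretch on edges only: if every edge $(u,v) \in G$ satisfies $\dist_H(u,v) \le 2k-1$, then concatenating paths along a shortest $u$-$v$ path in $G$ gives the bound for all pairs. So I would fix an edge $e = (u,v) \in G$ and argue that $H$ contains a $u$-$v$ path of length at most $2k-1$.

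The key case analysis follows Baswana et al. Let $i$ be the largest index such that $u \in V_i$ (such an $i$ exists since $u \in V_0 = V$, and $i \le k-1$ since $S_k = \emptyset$ forces $V_k = \emptyset$... more precisely $i < k$). So $u \in V_i \setminus V_{i+1}$, and $u$ lies in some cluster $C_i[s]$. Now split on where $v$ is. \textbf{Case 1:} $v$ belongs to a cluster $C_i[s']$ of the clustering $C_i$ (possibly $s' = s$). If $s' = s$, then both $u$ and $v$ are within distance $i$ of $s$ along tree edges of $F_i \subseteq H$, giving a path of length $\le 2i \le 2(k-1) < 2k-1$. If $s' \ne s$, then $C_i[s']$ is a neighboring cluster of $u$ (witnessed by the edge $e$), so by the second type of edges $H$ contains some edge $(u, v')$ with $v' \in C_i[s']$; then route $u \to v'$ (one edge), $v' \to s'$ (tree path in $F_i$, length $\le i$), $s' \to v$ (tree path in $F_i$, length $\le i$), total $\le 2i + 1 \le 2k-1$. \textbf{Case 2:} $v \notin V_i$. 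Since $v$ is a neighbor of $u$ and $u \in C_i[s]$ with $\dist_G(u,s) \le i$, we'd like to say $v$ is within distance $i+1$ of $S_i$; but the real argument is that $u \in V_i \setminus V_{i+1}$ means there is a level argument forcing $v$ into the picture. Actually the cleaner route, as in Baswana et al., is: since $u \in V_i$, $u$ has a cluster at level $i$; the edge $(u,v)$ together with monotonicity ensures this edge (or a replacement to $v$'s cluster) was selected. I would spell out that in Case 2 the relevant selected edge is incident to $u$ itself at level $i$, using that $v \notin V_{i+1}$ forces $v$ into a level-$i$ cluster too, contradicting Case 2 — so Case 2 reduces to showing that when $v \notin V_i$ we instead pick the symmetric largest index for $v$, and the edge $(u,v)$ is selected from $u$'s side.

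The \textbf{main obstacle}, and the place where the monotonicity modification matters, is this: in the original Baswana et al. analysis, the ``one edge to each neighboring cluster'' is current at the time of the query, but in our modified algorithm a selected edge stays in $H$ even after the clustering changes, and conversely the replacement rule only kicks in when a selected edge is \emph{deleted from $G$}. I need to argue that at any point in time, for every vertex $w \in V_i \setminus V_{i+1}$ and every cluster $C_i[s']$ currently neighboring $w$, $H$ contains at least one edge from $w$ into $C_i[s']$. This requires checking that (a) when $w$'s clustering last changed we added one edge per then-neighboring cluster, (b) clusters can only shrink or vertices only leave, so a \emph{currently} neighboring cluster $C_i[s']$ was neighboring at that time or became so via a later cluster change of $w$ (which would have triggered another round of additions), and (c) the edge we selected, if still present in $G$, is still in $H$ by monotonicity, and if deleted from $G$ was replaced. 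The subtlety is handling the interaction between $w$'s cluster changes and $s'$-cluster membership changes of $w$'s neighbors; I expect to phrase this as: "consider the last time the clustering of $w$ changed; every cluster neighboring $w$ at the query time was already neighboring $w$ then (since edges only get deleted and cluster memberships at level $i$ only change when cluster structure changes, which would re-trigger $w$'s update)", and push any residual gap into invoking that a cluster-membership change of a neighbor is itself a clustering change. Once this invariant is established, the stretch bound follows exactly as in the three cases above, and the size bound is deferred to the subsequent lemma.
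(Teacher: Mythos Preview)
Your overall plan (reduce to edges, then exhibit a short $u$--$v$ path in $H$ at some level using the cluster-tree edges plus one inter-cluster edge) matches the paper. Two steps diverge from the paper, and neither of your versions goes through.

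\textbf{Level selection.} You take $i$ to be the largest level with $u \in V_i$, which is one-sided and produces the confused Case~2 (where $v \notin V_i$; your sketch there goes in circles). The paper instead takes $j$ to be the largest level with \emph{both} $u,v \in V_j$; then automatically at least one of $u,v$ lies outside $V_{j+1}$, both endpoints are clustered at level $j$, and there is no Case~2 to consider.

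\textbf{The inter-cluster edge.} Your proposed invariant --- that for every $w \in V_j \setminus V_{j+1}$ and every currently neighboring cluster $C_j[s']$, the spanner $H$ contains an edge from $w$ into $C_j[s']$ --- is not established by your justification. The step ``a currently neighboring cluster was already neighboring at $w$'s last cluster change, since cluster memberships only change when the cluster structure changes, which would re-trigger $w$'s update'' is wrong: a neighbor $x$ of $w$ can change its level-$j$ cluster without $w$'s own cluster changing, and the algorithm only re-selects $w$'s inter-cluster edges when $w$'s cluster changes, not when a neighbor's does. Vertices do not merely leave clusters; they move between them. The paper avoids proving any such global invariant by a local argument tied to the specific edge $(u,v)$: among $u$ and $v$, consider the one whose level-$j$ cluster changed \emph{most recently} --- say $u$. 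Then $v$ has sat in its current cluster $C_j[s]$ ever since $u$'s last cluster change, and the edge $(u,v)$ was present throughout (decremental setting), so $C_j[s]$ was a neighboring cluster of $u$ at the moment $u$ last re-selected its inter-cluster edges. This ``most-recently-changed endpoint'' trick is what is missing from your plan.
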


\begin{proof}
Consider any edge $ (u, v) $ of the current graph~$ G $ and the first $ j $ such that $ u $ and $ v $ are both contained in $ V_j $ and at least one of $ u $ or $ v $ is not contained in $ V_{j+1} $.
Without loss of generality assume that $ u \notin V_{j+1} $.
Since $ v \in V_j $, we know that $ v $ is contained in some cluster $ C_j [s] $ and because of the edge $ (u, v) $ this cluster is neighboring to $ u $.
Similarly, the cluster of $ u $ is neighboring to $ v $.
Consider the vertex out of $ u $ and $ v $ that has changed its cluster within $ C_i $ most recently (or take any of the two if both of them haven't changed their cluster since the initialization).
Assume without loss of generality that this vertex was $ u $.
Then $ C_i [s] $ has been a neighboring cluster of $ u $ at the time the cluster of~$ u $ changed, and thus, the spanner~$ H $ contains some edge $ (u, v') $ such that $ v' \in C_j [s] $.
Using the cluster tree of $ C_j [s] $ we find a path from $ v' $ to $ v $ via~$ s $ of length at most $ 2 i $ in $ H $.
Thus, $ H $ contains a path from $ u $ to $ v $ of length at most $ 2i + 1 \leq 2k - 1 $ as desired.
\end{proof}

\begin{lemma}
The number of edges of $ H $ is $ O(k^2 n^{1+1/k} \log{n}) $ in expectation.
\end{lemma}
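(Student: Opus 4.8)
The plan is to bound the expected number of edges of each of the two types separately and sum them up. For the first type of edges, recall that $H$ contains the edges of the forests $F_i$ for $1 \leq i \leq k$. Since each $F_i$ is a forest on at most $n$ vertices, it has at most $n-1$ edges at any time. However, because of the monotonicity property, edges removed from $F_i$ (due to a parent change of some vertex) but not deleted from $G$ are kept in $H$. So I would charge a constant number of edges to each parent change. By Lemma~\ref{lem:number of parent changes}, each vertex changes its parent in $F_i$ an expected $O(i \log n) = O(k \log n)$ times, so summing over all $n$ vertices and all $k$ values of $i$ gives an expected total of $O(k^2 n \log n)$ edges of the first type ever added to $H$.

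For the second type of edges, the key quantity is the expected number of neighboring clusters a vertex $v \in V_i \setminus V_{i+1}$ can have. The standard argument of Baswana et al.\ shows that for a vertex $v \notin V_{i+1}$, the expected number of distinct clusters of $C_i$ that are neighboring to $v$ is $O(n^{1/k})$: this follows because $v \notin V_{i+1}$ means no neighbor of $v$ (nor $v$ itself) was sampled into $S_{i+1}$, and by a standard argument on the geometric-like distribution of which neighbor lands in the next level, the expected number of neighbors of $v$ that belong to distinct clusters before hitting a sampled vertex is $O(n^{1/k})$. Each time the cluster of $v$ changes, we add one edge to each neighboring cluster, so the number of such edges added per cluster change of $v$ is $O(n^{1/k})$ in expectation; and each such selected edge, once deleted from $G$, is replaced, so the number of selected edges for a fixed clustering-epoch of $v$ and fixed neighboring cluster is bounded by the number of edges between $v$ and that cluster that ever get deleted. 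The cleanest accounting is: sum over vertices $v$, over the $O(i \log n) = O(k \log n)$ expected cluster changes of $v$ (Lemma~\ref{lem:number of cluster changes}), of the $O(n^{1/k})$ expected neighboring clusters, giving $O(k^2 n^{1+1/k} \log n)$ edges — but one must also account for replacements of selected edges due to deletions, which contributes at most $O(m)$ additional edges total since each edge of $G$ is deleted at most once and can be "selected" only when it is the replacement. Since the spanner is only interesting when $m = O(k^2 n^{1+1/k}\log n)$ anyway (otherwise $H$ itself would be trivial), or more carefully by noting each deleted edge is charged only once, this stays within the claimed bound.

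Combining the two types, the expected total number of edges ever added to $H$ — and hence, a fortiori, the expected number present at any fixed time — is $O(k^2 n \log n) + O(k^2 n^{1+1/k} \log n) = O(k^2 n^{1+1/k} \log n)$, as claimed.

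The main obstacle I expect is the interaction between the monotonicity property and the size bound: in the original Baswana et al.\ analysis, stale edges are removed from $H$, so the size at any time is directly controlled, whereas here I must bound the \emph{cumulative} number of edges ever inserted into $H$ (since none leave except via deletion from $G$), which requires the per-change charging arguments above rather than a static snapshot count. The second delicate point is correctly handling the replacement of selected edges upon deletion without double-counting — one wants to argue that each edge of $G$ is "selected" at most a small number of times (ideally $O(1)$, or absorb the total into $O(m)$), and that this contribution does not dominate. I would resolve this by charging each replacement to the deletion event that triggered it, of which there are at most $m$ in total, and observing $m$ is subsumed in the bound in the regime where the lemma is non-vacuous.
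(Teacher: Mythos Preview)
Your charging for both edge types follows the paper's proof: type-1 edges are charged to parent changes in each $F_i$ (giving $O(k^2 n\log n)$ via Lemma~\ref{lem:number of parent changes}), and type-2 edges are charged, at each cluster change of $v$ in $C_i$, to the expected $O(n^{1/k})$ neighboring clusters, with the independence of the permutation $\sigma$ and the sampling of the $S_i$'s justifying the product of expectations. This is exactly the paper's argument.

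The one genuine flaw is your treatment of replacement edges. You bound them by $O(m)$ and then assert this is ``subsumed in the bound in the regime where the lemma is non-vacuous.'' That is backwards: the spanner is useful precisely when $m \gg k^2 n^{1+1/k}\log n$, so an additive $O(m)$ would collapse the statement to the triviality $|H|\le m$. The right observation (left implicit in the paper) is that replacements need no separate charge at all. Fix a vertex $v$, a level $i$, a neighboring center $s'$, and a single epoch of $v$ between consecutive cluster changes. The edges that ever fill this slot during the epoch form a sequence $e_1,e_2,\ldots$ in which $e_{j+1}$ is selected only after $e_j$ has been deleted from $G$. Hence at any moment at most one filler of this (slot, epoch) survives in $G$ and thus in $H$. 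Therefore the number of type-2 edges present in $H$ at any time is at most the number of (slot, epoch) pairs, which is exactly $\sum_{v,i}\sum_{\text{epochs of }v}(\#\text{ neighboring clusters at start of epoch})$---the quantity the paper bounds by $O(k^2 n^{1+1/k}\log n)$ in expectation. No $O(m)$ term arises.
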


\begin{proof}
Consider the first type of edges which are the ones stemming from the partial shortest path trees from the cluster centers.
We charge to each vertex $ v $ a total of $ O(k^2 \log{n}) $ edges given by all of $ v $'s parents in the partial shortest path trees from the cluster centers over the course of the algorithm.
For every $ 1 \leq i \leq k $, we know by \Cref{lem:number of parent changes} that the parent of $ v $ in $ F_i $ changes at most $ O (i \log{n}) $ times in expectation, which gives an overall bound of $ O(k^2 \log{n}) $.

We get the bound on the second type of edges by charging to each vertex~$ v $ a total of $ O (k^2 n^{1/k} \log{n}) $ edges.
Consider a vertex $ v \in V_i \setminus V_{i+1} $ for some $ 0 \leq i \leq k-1 $.
The number of neighboring clusters of $ v $ is equal to the number of vertices of $ S_i $ that are at distance exactly $ i+1 $ from $ v $.
Since $ v \notin V_{i+1} $ the number of such vertices is $ n^{1/k} $ in expectation.
Thus, whenever a vertex $ v \in V_i \setminus V_{i+1} $ changes its cluster in $ C_i $ we can charge $ n^{1/k} $ to $ v_i $ to pay for the $ n^{1/k} $ edges to neighboring clusters.
As $ v $ changes its cluster in $ C_i $ $ O (i \log{n}) $ times by \Cref{lem:number of cluster changes} and there are $ k $ clusterings, the total number of edges of the second type contained in $ H $ is $ O (k^2 n^{1 + 1/k} \log{n}) $.
Note that are allowed to multiply the two expectations because the random variables in question are independent.

The overall bound of $ O(k^2 n^{1+1/k} \log{n}) $ on the expected number of edges follows from the linearity of expectation.
\end{proof}

\subsection{Decremental Spectral Sparsifier}

In the following we explain how to obtain a decremental algorithm for maintaining a spectral sparsifier using the template of \Cref{sec:spectral sparsifier general}.
Internally we use our decremental spanner algorithm of \Cref{sec:decremental spanner}.
It is conceptually important for our approach to first develop a decremental algorithm, that is turned into a fully dynamic algorithm in \Cref{sec:decremental to fully dynamic}.
We follow the template of \Cref{sec:spectral sparsifier general} by first showing how to maintain $t$-bundle spanners under edge deletions, and then giving decremental implementations of \textsc{Light-Cut-Sparsify} and \textsc{Cut-Sparsify}.

The overall algorithm will use multiple instances of the dynamic spanner algorithm, where outputs of one instance will be used as the input of the next instance.
We will do so in a strictly hierarchical manner which means that we can order the instances in a way such that the output of instance $ i $ only affects instances $ i+1 $ and above.
In this way it is guaranteed that the updates made to instance $ i $ are independent of the internal random choices of instance $ i $, which means that each instance $ i $ is running in the oblivious-adversary setting required for \Cref{sec:decremental spanner}.

\subsubsection{Decremental $t$-Bundle Spanners}

We first show how to maintain a $t$-bundle $ \log{n} $-spanner under edge deletions for some parameter $ t $.
Using the decremental spanner algorithm of \Cref{lem:decremental spanner} with $ k = \lfloor (\log{n}) / 4 \rfloor $ and $ \epsilon = 1 $ we maintain a sequence $ H_1, \ldots H_t $ of $ \log{n} $-spanners by maintaining $ H_i $ as the spanner of $ G \setminus \bigcup_{1 \leq j \leq i-1} H_j $.
Here we have to argue that this is legal in the sense that every instance of the algorithm of \Cref{lem:decremental spanner} is run on a graph that only undergoes edge deletions.

\begin{lemma}\label{lem:decremental complement}
If no edges are ever inserted into $ G $ after the initialization, then this also holds for $ G \setminus \bigcup_{1 \leq j \leq i-1} H_j $ for every $ 1 \leq i \leq t+1 $.
\end{lemma}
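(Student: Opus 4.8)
The plan is to prove the statement by induction on $i$. For $i = 1$ the graph in question is $G \setminus \bigcup_{1 \le j \le 0} H_j = G$ itself, which by hypothesis undergoes only edge deletions, so the base case is immediate.

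For the inductive step, write $G_i = G \setminus \bigcup_{1 \le j \le i-1} H_j$ and $G_{i+1} = G \setminus \bigcup_{1 \le j \le i} H_j = G_i \setminus H_i$, and assume that $G_i$ only undergoes edge deletions. I would observe that there are exactly two conceivable ways an edge $e$ could be \emph{inserted} into $G_{i+1}$: either (a) $e$ is inserted into $G_i$, or (b) $e$ is removed from $H_i$ while $e$ is still present in $G_i$, so that $e$ re-appears in $G_i \setminus H_i$. Case (a) is ruled out directly by the induction hypothesis. For case (b), I would invoke the monotonicity property guaranteed by \Cref{lem:decremental spanner}: once an edge is added to $H_i$ it remains in $H_i$ until it is deleted from the underlying graph $G_i$; hence an edge never leaves $H_i$ unless it is simultaneously deleted from $G_i$, in which case it does not re-appear in $G_{i+1}$ at all. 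Therefore every change to $G_{i+1}$ is a deletion, completing the inductive step, and unrolling the induction up to $i = t+1$ gives the claim.

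The one point that I expect to require care — and really the only substantive point of the argument — is the logical dependency that makes the induction well-founded. The monotonicity guarantee of \Cref{lem:decremental spanner} is only available when the instance of the decremental spanner algorithm producing $H_i$ is run on a graph that itself undergoes only deletions; that graph is $G_i$, and the induction hypothesis is precisely what certifies this. So the decremental guarantee for $G_i$ is exactly what licenses applying \Cref{lem:decremental spanner} to the instance maintaining $H_i$, which in turn delivers the decremental guarantee for $G_{i+1}$. I would also note, to close the loop with the hypotheses of \Cref{lem:decremental spanner}, that the instances are composed strictly hierarchically, so the update sequence fed to the instance maintaining $H_i$ depends only on the updates to $G_i$ and on the randomness of instances $1,\dots,i-1$, hence is independent of that instance's own internal randomness; this is what allows us to regard it as operating against an oblivious adversary.
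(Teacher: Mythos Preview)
Your proposal is correct and follows essentially the same inductive argument as the paper: base case $i=1$ is immediate, and the inductive step rules out insertions into $G_{i+1}$ by combining the induction hypothesis (no insertions into $G_i$) with the monotonicity property of \Cref{lem:decremental spanner} (an edge only leaves $H_i$ when it is deleted from $G_i$). Your final paragraph on the hierarchical composition and oblivious-adversary requirement is a worthwhile addition that the paper handles in the surrounding prose rather than in the proof itself, but it does not change the argument.
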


\begin{proof}
The proof is by induction on $ i $.
The claim is trivially true for $ i = 1 $ by the assumption that there are only deletions in $ G $.
For $ i \geq 2 $ we the argument uses the monotonicity property of the dynamic algorithm for maintaining the spanner $ H_{i-1} $.
By the induction hypothesis we already know that no edges are ever added to the graph $ G \setminus \bigcup_{1 \leq j \leq i-2} H_j $.
Therefore the only possibility of an edge being added to $ G \setminus \bigcup_{1 \leq j \leq i-1} H_j $ would be to remove an edge $ e $ from $ H_{i-1} $.
However, by the monotonicity property, when $ e $ is removed from $ H_{i-1} $, it is also deleted from $ G $.
Thus, $ e $ will not be inserted into $ G \setminus \bigcup_{1 \leq j \leq i-2} H_j $.
\end{proof}

Our resulting $t$-bundle $ \log{n} $-spanner then is $ B = \bigcup_{1 \leq i \leq t} H_i $, the union of all these spanners.
Since the $ H_i's $ are disjoint the edges of $ B $ can be maintained in the obvious way by observing all changes to the $ H_i's $.
By our choice of parameters, $ n^{1/k} = O(1) $ and thus the expected size of $ B $ is $ O (t n \log^2{n} \log{W}) $.
Observe that \Cref{lem:decremental complement} implies that no edges will ever be inserted into the complement $ G \setminus B $, which will be relevant for our application in the spectral sparsifier algorithm.
We can summarize the guarantees of our decremental $t$-bundle spanner algorithm as follows.

\begin{lemma}\label{lem:decremental t-bundle spanner}
For every $ t \geq 1 $, there is a decremental algorithm for maintaining a $t$-bundle $ \log{n} $-spanner $ B $ of expected size $ O (t n \log^2{n} \log{W}) $ for an undirected graph $ G $ with non-negative edge weights that has an expected total update time of $ O (t m \log^3{n}) $, where $ W $ is the ratio between the largest and the smallest edge weight in $ G $.
Additional $ B $ has the following property: After the initialization, no edges are ever inserted into the graph $ G \setminus B $.
The bound on the expected size and the expected running time hold against an oblivious adversary.
\end{lemma}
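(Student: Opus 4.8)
The plan is to assemble the lemma directly from the two results already in hand: the \emph{monotone} decremental spanner of \Cref{lem:decremental spanner} and the ``complement stays decremental'' statement of \Cref{lem:decremental complement}. Concretely, I would run $t$ hierarchically chained instances of the algorithm of \Cref{lem:decremental spanner}, each instantiated with $k = \lfloor (\log n)/4 \rfloor$ and $\epsilon = 1$: instance $i$ maintains a spanner $H_i$ of the graph $G_i \defeq G \setminus \bigcup_{1 \le j \le i-1} H_j$, and we output $B \defeq \bigcup_{1 \le i \le t} H_i$. With these parameters each $H_i$ has stretch $(1+\epsilon)(2k-1) \le \log n$, so $B$ is a $t$-bundle $\log n$-spanner of $G$ by definition and the stretch guarantee is immediate. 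Likewise $G \setminus B = G_{t+1}$, so the promised property that no edge is ever inserted into $G \setminus B$ is exactly \Cref{lem:decremental complement} applied with $i = t+1$.

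Next I would check that each instance is in fact being run in the regime where \Cref{lem:decremental spanner} applies, i.e.\ on a graph undergoing only deletions and against an oblivious adversary. The purely-decremental part is \Cref{lem:decremental complement}, whose induction on $i$ is driven precisely by the monotonicity property of the instances below $i$: an edge leaves $H_{i-1}$ only when it is simultaneously deleted from $G$, so no edge is ever inserted into any $G_i$. For the oblivious-adversary requirement I would note that the update sequence fed to instance $i$ is a deterministic function of the (adversarially fixed) deletion sequence for $G$ together with the outputs of instances $1, \dots, i-1$; since those outputs depend only on the deletion sequence and on the random bits of instances $1, \dots, i-1$, the sequence seen by instance $i$ is independent of instance $i$'s own random bits. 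Hence the expected size and expected total update time bounds of \Cref{lem:decremental spanner} hold for each instance (conditionally on the realization of the lower instances, and hence also unconditionally), and they may be summed.

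It then remains to do the accounting. For size, the expected size of $B$ is at most $\sum_{i=1}^t \mathbb{E}[|H_i|] = t \cdot O(k^2 n^{1+1/k} \log n \log_{1+\epsilon} W)$ by \Cref{lem:decremental spanner}, and plugging in $k = \lfloor (\log n)/4 \rfloor$ (so $n^{1/k} = O(1)$ and $\log_{1+\epsilon} W = O(\log W)$) gives the claimed $O(t n \log^2 n \log W)$. For running time, the point is that each $G_i$ is a decremental graph on at most $m$ edges (it is a subgraph of $G$ and never gains edges), so by \Cref{lem:decremental spanner} instance $i$ has expected total update time $O(k^2 m \log n)$; summing over the $t$ instances and substituting the value of $k$ yields $O(t m \log^3 n)$.

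I expect the only real subtlety — and the place where the monotonicity property of \Cref{lem:decremental spanner} does all the work — to be this running-time accounting. A priori a single deletion from $G$ could cause an edge to leave one of the intermediate spanners and later re-enter it, so the same edge would be repeatedly deleted from and re-inserted into $G_i$ and the number of induced updates could blow up (by a factor growing with $t$, as with the naive $O(\log^{t-1} n)$ bound discussed earlier for the unmodified algorithm). Monotonicity rules this out: each $G_i$ is genuinely decremental, each edge is deleted from it at most once, and this is exactly what keeps the per-instance cost at $O(k^2 m \log n)$ and the total at $O(t m \log^3 n)$. Everything else is bookkeeping around \Cref{lem:decremental spanner} and \Cref{lem:decremental complement}.
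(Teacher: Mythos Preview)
Your proposal is correct and matches the paper's own argument essentially step for step: the paper instantiates \Cref{lem:decremental spanner} with $k = \lfloor (\log n)/4 \rfloor$, $\epsilon = 1$, chains the $t$ instances, invokes \Cref{lem:decremental complement} to guarantee each $G_i$ (and hence $G \setminus B$) is purely decremental, and then sums the per-instance size and time bounds exactly as you do. Your explicit treatment of the oblivious-adversary requirement and the stretch check $(1+\epsilon)(2k-1) \le \log n$ are spelled out in slightly more detail than the paper bothers with, but the approach is identical.
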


\subsubsection{Dynamic Implementation of {\sc Light-Spectral-Sparsify}}\label{sec:dynamization light spectral sparsify}

We now show how to implement the algorithm \textsc{Light-Spectral-Sparsify} decrementally for a graph $ G $ undergoing edge deletions.

For this algorithm we set $ t = \lceil 12 (c+3) \alpha \epsilon^{-2} \ln{n} \rceil $. 
Note that this value is slightly larger than the one proposed in the static pseudocode of \Cref{fig:half spectral sparsify}.
For the sparsification proof in \Cref{sec:spectral sparsifier general} we have to argue that by our choice of $ t $ certain events happen with high probability.
In the dynamic algorithm we need ensure the correctness for up to $ n^2 $ versions of the graph, one version for each deletion in the graph.
By increasing the multiplicative constant in $ t $ by $ 2 $ (as compared to the static proof of \Cref{sec:spectral sparsifier general}) all desired events happen with high probability for all, up to $ n^2 $, versions of the graph by a union bound.

The first ingredient of the algorithm is to maintain a $t$-bundle $ \log{n} $-spanner $ B $ of $ G $ under edge deletions using the algorithm of \Cref{lem:decremental t-bundle spanner}.
We now explain how to maintain a graph $ H' $ -- with the intention that $ H' $ contains the sampled non-bundle edges of $ G \setminus B $ -- as follows:
At the initialization, we determine the graph $ H' $ by sampling each edge of $ G \setminus B $ with probability $ 1/4 $ and adding it to $ H' $ with weight $ 4 \ww_G (e) $.
We then maintain $ H' $ under the edge deletions in $ G $ using the following update rules:

After every deletion in $ G $ we first propagate the update to the algorithm for maintaining the $t$-bundle spanner $ B $, possibly changing $ B $ to react to the deletion.
We then check whether the deletion in $ G $ and the change in $ B $ cause an deletion in the complement graph $ G \setminus B $.
Whenever an edge $ e $ is deleted from $ G \setminus B $, it is removed from $ H' $.
Note that by \Cref{lem:decremental t-bundle spanner} no edge is ever inserted into $ G \setminus B $.
We now simply maintain the graph $ H $ as the union of $ B $ and $ H' $ and make it the first output of our algorithm; the second output is $ B $.

By the update rules above (and the increased value of $ t $ to accommodate for the increased number of events), this decremental algorithm imitates the static algorithm of \Cref{fig:half spectral sparsify} and for the resulting graph $ H $ we get the same guarantees as in \Cref{lem:sparsifying step}.
The total update time of our decremental version of \textsc{Light-Spectral-Sparsify} is $ O (t m \log^3{n}) $, as it is dominated by the time for maintaining the $t$-bundle $ \log{n} $-spanner $ B $.

As an additional property we get that no edge is ever added to the graph $ H' = H \setminus B $.
Furthermore, for all edges added to $ H' $ weights are always increased by the same factor.
Therefore the ratio between the largest and the smallest edge weight in $ H' $ will always be bounded by $ W $, which is the value of this quantity in $ G $ (before the first deletion).

\subsubsection{Dynamic Implementation of {\sc Spectral-Sparsify}}\label{sec:dynamization spectral sparsify}

Finally, we show how to implement the algorithm \textsc{Spectral-Sparsify} decrementally for a graph $ G $ undergoing edge deletions.

We set $ k = \lceil \log{\rho} \rceil $ as in the pseudocode of \Cref{fig:static spectral sparsifier} and maintain $ k $ instances of the dynamic version of \textsc{Light-Spectral-Sparsify} above.
We maintain the $ k $ graphs $ G_0, \ldots, G_k $, $ B_1, \ldots, B_k $, and $ H_1, \ldots, H_k $ as in the pseudocode.
For every $ 1 \leq i \leq k $ we maintain $ H_i $ and $ B_i $ as the two results of running the decremental version of \textsc{Light-Spectral-Sparsify} on $ G_{i-1} $ and maintain $ G_i $ as the graph $ H_i \setminus B_i $.
As argued above (for $ H' $ in \Cref{sec:dynamization spectral sparsify}), no edge is ever added to $ G_i = H_i \setminus B_i $ for every $ 1 \leq i \leq k $ and we can thus use our purely decremental implementation of \textsc{Light-Spectral-Sparsify}.

At the initialization, we additionally count the number of edges of every graph $ G_i $ and ignore every graph $ G_i $ with less than $ (c+1) \ln{n} $ edges.
Formally we set $ k $ maximal such that $ G_k $ has at least $ (c+1) \ln{n} $ edges.

The output of our algorithm is the graph $ H = \bigcup_{i=1}^k B_i \cup G_k $.
Now by the same arguments as for the static case, $ H $ gives the same guarantees as in \Cref{lem:spectral sparsifier correctness,lem:spectral sparsifier size}.
Thus, by our choices of $ k $ and $ t $, $ H $ is a $ (1 \pm \epsilon) $-spectral sparsifier of size $ O (c \epsilon^{-2} \log^3{\rho} \log^4{n} \log{W} + m \rho^{-1}) $.
As the total running time is dominated by the running time of the $ k $ instances of the decremental algorithm for \textsc{Light-Spectral-Sparsify}, the total update time is $ O (c m \epsilon^{-2} \log^3{\rho} \log^5{n}) $.
The guarantees of our decremental sparsifier algorithm can be summarized as follows.

\begin{lemma}\label{lem:decremental sparsifier}
For every $ 0 < \epsilon \leq 1 $, every $ 1 \leq \rho \leq m $, and every $ c \geq 1 $, there is a decremental algorithm for maintaining, with probability at least $ 1 - 1/n^c $ against an oblivious adversary, a $ (1 \pm \epsilon) $-spectral sparsifier $ H $ of size $ O (c \epsilon^{-2} \log^3{\rho} \log^4{n} \log{W} + m \rho^{-1}) $ for an undirected graph $ G $ with non-negative edge weights that has a total update time of $ O (c m \epsilon^{-2} \log^3{\rho} \log^5{n}) $, where $ W $ is the ratio between the largest and the smallest edge weight in $ G $.
\end{lemma}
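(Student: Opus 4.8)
The plan is to run the decremental realization of \textsc{Spectral-Sparsify} set up in \Cref{sec:dynamization spectral sparsify}: fix $k \gets \lceil \log \rho \rceil$ (shrunk at initialization so that $G_k$ still has at least $(c+1)\ln n$ edges) and maintain, for $1 \le i \le k$, the pair $(H_i, B_i)$ as the output of the decremental \textsc{Light-Spectral-Sparsify} of \Cref{sec:dynamization light spectral sparsify} run on $G_{i-1}$, with $G_0 \gets G$ and $G_i \gets H_i \setminus B_i$. Each such instance in turn maintains a $t$-bundle $\log n$-spanner $B_i$ of $G_{i-1}$ by \Cref{lem:decremental t-bundle spanner} and keeps a static $1/4$-subsample $H_i'$ of the non-bundle edges (weights scaled by $4$), deleting an edge from $H_i'$ exactly when it leaves $G_{i-1} \setminus B_i$; its output is $H_i = B_i \cup H_i'$. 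The maintained sparsifier is $H \gets \bigcup_{1 \le j \le k} B_j \cup G_k$.

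Before anything else I would check that this chain of instances is well defined, i.e.\ that each instance sees a purely decremental input and runs against an oblivious adversary. For the first point, \Cref{lem:decremental t-bundle spanner} guarantees that no edge is ever inserted into $G_{i-1} \setminus B_i$, hence none into $H_i'$, hence none into $G_i = H_i \setminus B_i$; an induction on $i$ with base case ``$G$ is decremental'' then shows every $G_{i-1}$ is decremental, exactly as in \Cref{lem:decremental complement}. For the second point, the instances are arranged in a strict hierarchy: the output of instance $i$ feeds only instances $i{+}1,\dots,k$, so the update sequence handed to instance $i$ is determined by $G$ and by the random bits of instances $1,\dots,i{-}1$ alone, hence is independent of the internal randomness of instance $i$ --- which is precisely the obliviousness hypothesis required by \Cref{lem:decremental spanner} and, through it, by \Cref{lem:sparsifying step} and \Cref{lem:spectral sparsifier correctness}.

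Correctness then follows by applying the static analysis of \Cref{sec:spectral sparsifier general} at every time step. The only wrinkle is that \Cref{lem:spectral sparsifier correctness} and \Cref{lem:spectral sparsifier size} have to hold simultaneously for all of the at most $n^2$ graphs that arise over the execution; I would handle this by inflating the constant inside $t$ (taking $t = \lceil 12(c{+}3)\alpha\epsilon^{-2}\ln n\rceil$ with $\alpha = \log n$ and $\epsilon$ replaced by $\epsilon/(2k)$, as in \Cref{sec:dynamization light spectral sparsify}) so that a union bound over these $n^2$ versions still leaves total failure probability at most $1/n^c$. With this choice $t = O(c k^2 \epsilon^{-2}\log^2 n)$; since $n^{1/k} = O(1)$ in the spanner construction, each $B_i$ has expected size $O(t\, n \log^2 n \log W)$ by \Cref{lem:decremental t-bundle spanner}, and summing this over the $k \le \lceil\log\rho\rceil$ levels and adding the $O(m/\rho + c\log n)$ residual edges of $G_k$ supplied by \Cref{lem:spectral sparsifier size} yields the claimed size bound $O(c\epsilon^{-2}\log^3\rho\,\log^4 n\,\log W + m\rho^{-1})$.

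Finally, for the running time: each level's \textsc{Light-Spectral-Sparsify} is dominated by maintaining its $t$-bundle $\log n$-spanner, which costs $O(t m \log^3 n)$ in total by \Cref{lem:decremental t-bundle spanner} (the bookkeeping for $H_i'$ adds only $O(1)$ per update that level actually processes, and a level processes at most the changes emitted one level below, which is already absorbed in the spanner's total running time); over the $k$ levels this is $O(k t m \log^3 n) = O(c\epsilon^{-2}\log^3\rho\, m \log^5 n)$. The step I expect to be the real obstacle is exactly the independence/obliviousness bookkeeping together with the union bound over the $\le n^2$ intermediate graphs: this is what forces us to peel each bundle explicitly out of the graph (so that successive $B_i$ are edge-disjoint and the decremental monotonicity argument of \Cref{sec:decremental spanner} propagates along the chain) and to take $t$ slightly above its static value.
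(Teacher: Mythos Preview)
Your proposal is correct and follows essentially the same approach as the paper: you set up the chain of $k=\lceil\log\rho\rceil$ decremental \textsc{Light-Spectral-Sparsify} instances, use the monotonicity of the bundle spanner (\Cref{lem:decremental t-bundle spanner}) to argue inductively that each $G_i$ is purely decremental, invoke the strict hierarchy to preserve obliviousness at every level, inflate $t$ to $\lceil 12(c{+}3)\alpha\epsilon^{-2}\ln n\rceil$ so that the static guarantees of \Cref{lem:spectral sparsifier correctness,lem:spectral sparsifier size} survive a union bound over the at most $n^2$ versions, and then read off the size and time bounds from $t=O(ck^2\epsilon^{-2}\log^2 n)$ summed across the $k$ levels. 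This is exactly the argument the paper gives in \Cref{sec:dynamization light spectral sparsify,sec:dynamization spectral sparsify}, and your identification of the obliviousness bookkeeping together with the union bound as the crux is spot on.
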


\subsection{Turning Decremental Spectral Sparsifier into Fully Dynamic Spectral Sparsifier}\label{sec:decremental to fully dynamic}

We use a well-known reduction to turn our decremental algorithm into a fully dynamic algorithm.

\begin{lemma}\label{lem:decremental to fully dynamic}
Given a decremental algorithm for maintaining a $ (1 \pm \epsilon) $-spectral (cut) sparsifier of size $ S (m, n, W) $ for an undirected graph with total update time $ m \cdot T (m, n, W) $, there is a fully dynamic algorithm for maintaining a $ (1 \pm \epsilon) $-spectral (cut) sparsifier of size $ O (S (m, n, W) \log{n}) $ with amortized update time $ O (T (m, n, W) \log{n}) $.
\end{lemma}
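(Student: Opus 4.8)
The plan is to use the standard logarithmic rebuilding (Bentley--Saxe style) reduction, exploiting that spectral and cut sparsifiers are \emph{decomposable} under disjoint unions of edge sets. Since the vertex set $V$ is fixed and weights are polynomially bounded, $m < n^2$ at all times, so I fix $L := \lceil 2\log_2 n\rceil$. The data structure maintains a partition $E = E_0 \cupdot E_1 \cupdot \dots \cupdot E_L$ of the current edge set together with, for each $i$, an independent instance $\mathcal{D}_i$ of the given decremental algorithm running on $G_i := (V, E_i)$ and producing a $(1\pm\epsilon)$-sparsifier $H_i$; the maintained output is $H := \bigcup_{i=0}^L H_i$. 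Throughout we keep the invariant $|E_i| \le 2^i$, which holds whenever $\mathcal{D}_i$ is (re)built and is only preserved by subsequent decremental deletions.

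Correctness of $H$ follows from decomposability. Since the $E_i$ partition $E$, $\lap_G = \sum_i \lap_{G_i}$; since the $G_i$ (hence the $H_i \subseteq G_i$) are pairwise edge-disjoint, $\lap_H = \sum_i \lap_{H_i}$; and adding the per-instance guarantees $(1-\epsilon)\lap_{H_i} \preceq \lap_{G_i} \preceq (1+\epsilon)\lap_{H_i}$ over $i$ (the Loewner order is additive) yields $(1-\epsilon)\lap_H \preceq \lap_G \preceq (1+\epsilon)\lap_H$. For cut sparsifiers one argues identically using $\ww_G(\partial_G(U)) = \sum_i \ww_{G_i}(\partial_{G_i}(U))$ and $\ww_H(\partial_H(U)) = \sum_i \ww_{H_i}(\partial_{H_i}(U))$ for every $U \subseteq V$. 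The size bound is immediate: each $H_i$ has at most $S(m,n,W)$ edges (assuming w.l.o.g.\ that $S$ is non-decreasing in its first argument), so $|H| \le (L+1)\,S(m,n,W) = O(S(m,n,W)\log n)$. Finally, the deletion sequence fed to each $\mathcal{D}_i$ is determined by the (oblivious) update sequence and the \emph{deterministic} rebuilding rule alone --- it does not depend on the internal randomness of any $\mathcal{D}_{i'}$ --- so each $\mathcal{D}_i$ is run against an oblivious adversary; over a sequence of $\poly(n)$ updates there are only $\poly(n)$ rebuilt instances in total, and a union bound (absorbed into a slight increase of the constant $c$, as in Section~\ref{sec:overview}) gives the overall w.h.p.\ guarantee.

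It remains to specify the operations and bound the amortized time; we may assume the algorithm starts on the empty graph. A deletion of $e$ is dispatched to the unique $\mathcal{D}_i$ with $e \in E_i$ as a decremental deletion. An insertion of $e$ chooses the minimum $j \in \{0,\dots,L\}$ with $1 + \sum_{i=0}^{j} |E_i| \le 2^j$ (such $j$ exists, as $1 + \sum_i |E_i| \le m < n^2 \le 2^L$), discards $\mathcal{D}_0,\dots,\mathcal{D}_j$, sets $E_j \gets \{e\}\cup\bigcup_{i=0}^{j}E_i$ and $E_i \gets \emptyset$ for $i<j$, and rebuilds $\mathcal{D}_j$ from scratch on the new $G_j$. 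The choice of $j$ restores $|E_j| \le 2^j$, and building $\mathcal{D}_j$ together with every deletion it will ever process costs $O(2^j \cdot T(m,n,W))$ by hypothesis (again taking $T$ non-decreasing in its first argument; the batch of changes this causes to $H$ is fine since we only claim an amortized bound). I would charge this entire cost to the triggering insertion and analyze with the potential $\Phi := C \sum_{i=0}^L |E_i|\,(L-i)$ for a suitable $C = \Theta(T(m,n,W))$: a deletion never increases $\Phi$; an insertion rebuilding at level $j$ raises $\Phi$ by at most $C(L-j)$ for the new edge while lowering it by at least $C$ for each of the $N := \sum_{i<j}|E_i|$ edges that move up to level $j$, and minimality of $j$ forces $N \ge 2^{j-1}$, so $\Delta\Phi \le CL - C\,2^{j-1}$; choosing $C$ large enough that $C\,2^{j-1}$ dominates the $O(2^j\,T(m,n,W))$ rebuild cost makes the amortized cost of an insertion $O(CL) = O(T(m,n,W)\log n)$. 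Since $\Phi \ge 0$ and $\Phi = 0$ initially, the true total over any $q$ updates is at most the sum of amortized costs, i.e.\ $O(q\,T(m,n,W)\log n)$, giving amortized update time $O(T(m,n,W)\log n)$.

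The main obstacle is precisely this amortized analysis of insertions: unlike the insertion-only Bentley--Saxe setting where levels are always empty or exactly full, deletions leave levels partially filled, so a given edge can linger at the same level through several rebuilds and the naive ``each edge pays for $O(\log n)$ rebuilds'' charging fails; the potential function above --- whose guaranteed decrease per rebuild comes from the minimality of the chosen level --- is what salvages the bound. The only other point requiring care is that the oblivious-adversary assumption survives the composition, which it does because the level structure and rebuild decisions are deterministic functions of the update sequence.
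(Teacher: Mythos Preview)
Your proof is correct but takes a different route from the paper's. The paper (following Baswana--Khurana--Sarkar) drives the rebuilding by a \emph{binary counter} on the number of insertions: on each insertion it rebuilds at the level of the highest flipped bit, so level $i$ is rebuilt exactly once per $2^{i+1}$ insertions irrespective of deletions, and the amortized analysis is a one-line count (over $\ell$ updates, level $i$ is rebuilt $\le \ell/2^i$ times at cost $\le 2^i\,T$ each, totaling $\ell\,T$ per level). Your size-based rule (``merge up to the smallest level that fits'') is equally valid, but precisely because the rebuild schedule now depends on deletions you are forced into the potential-function argument you give; the minimality-of-$j$ observation $N \ge 2^{j-1}$ is the crux that makes it go through. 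Both approaches yield the same bounds and both rely on the same decomposability lemma for the correctness of $H$; the counter-based version buys a shorter analysis with no $j=0$ edge case, while yours is arguably more natural and rebuilds only when a level would actually overflow.
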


Together with \Cref{lem:decremental sparsifier} this immediately implies \Cref{thm:dynamic spectral sparsifier}.
A similar reduction has been used by Baswana et al.~\cite{BaswanaKS12} to turn their decremental spanner algorithm into a fully dynamic one.
The only additional aspect we need is the lemma below on the decomposability of spectral sparsifiers.
We prove this property first and then give the reduction, which carries over almost literally from~\cite{BaswanaKS12}.

\begin{lemma}[Decomposability]\label{lem:decomposability spectral sparsifier}
Let $ G = (V, E) $ be an undirected weighted graph, let $ E_1, \dots, E_k $ be a partition of the set of edges $E$, and let, for every $ 1 \leq i \leq k $, $ H_i $ be a $ (1 \pm \epsilon) $-spectral sparsifier of $ G_i = (V, E_i) $.
Then $ H = \bigcup_{i=1}^k H_i $ is a $ (1 \pm \epsilon) $-spectral sparsifier of $ G $.
\end{lemma}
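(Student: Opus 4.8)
The plan is to exploit the additivity of the Laplacian quadratic form over an edge-disjoint decomposition. First I would observe that, because $ E_1, \dots, E_k $ partition $ E $, for every $ x \in \mathbb{R}^n $ we have $ x^{\top} \lap_G x = \sum_{(u,v) \in E} \ww_G(u,v)(x(u)-x(v))^2 = \sum_{i=1}^k \sum_{(u,v) \in E_i} \ww_G(u,v)(x(u)-x(v))^2 = \sum_{i=1}^k x^{\top} \lap_{G_i} x $, and since this holds for all $ x $, $ \lap_G = \sum_{i=1}^k \lap_{G_i} $. The same reasoning applies to $ H $: each $ H_i $ is a subgraph of $ G_i $ supported on (a subset of) $ E_i $, and the $ E_i $ are pairwise disjoint, so the $ H_i $ are pairwise edge-disjoint; hence $ H = \bigcup_{i=1}^k H_i $ is a well-defined weighted subgraph of $ G $ (no edge is assigned two different weights) and, by the same expansion of the quadratic form, $ \lap_H = \sum_{i=1}^k \lap_{H_i} $.

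Next I would invoke the defining inequality of a spectral sparsifier for each index $ i $, namely $ (1 - \epsilon) \lap_{H_i} \preceq \lap_{G_i} \preceq (1 + \epsilon) \lap_{H_i} $, and sum these $ k $ inequalities. This uses only that the Loewner order is closed under addition: if $ \mathcal{A} \preceq \mathcal{B} $ and $ \mathcal{C} \preceq \mathcal{D} $ then $ \mathcal{A} + \mathcal{C} \preceq \mathcal{B} + \mathcal{D} $, because $ x^{\top}(\mathcal{B} + \mathcal{D} - \mathcal{A} - \mathcal{C})x = x^{\top}(\mathcal{B} - \mathcal{A})x + x^{\top}(\mathcal{D} - \mathcal{C})x \geq 0 $ for every $ x $. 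Summing therefore yields $ (1 - \epsilon) \sum_{i=1}^k \lap_{H_i} \preceq \sum_{i=1}^k \lap_{G_i} \preceq (1 + \epsilon) \sum_{i=1}^k \lap_{H_i} $, which by the two identities established in the first step is exactly $ (1 - \epsilon) \lap_H \preceq \lap_G \preceq (1 + \epsilon) \lap_H $. Unfolding the Loewner order back into quadratic forms gives the inequality in the definition, so $ H $ is a $ (1 \pm \epsilon) $-spectral sparsifier of $ G $.

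I do not expect a genuine obstacle here; the proof is essentially bookkeeping. The only point deserving a word of care is verifying that $ H $ is legitimately a subgraph of $ G $ with a single well-defined weight function, which is immediate from the disjointness of the $ E_i $. I would also remark that the analogous statement for cut sparsifiers admits an identical proof, replacing Laplacian quadratic forms by cut weights $ \ww(\partial(U)) $ (which are likewise additive over the partition $ E_1, \dots, E_k $ for any fixed $ U $) and the Loewner order by ordinary inequalities between nonnegative reals; this is presumably why only the spectral version is spelled out.
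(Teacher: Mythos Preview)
Your proposal is correct and matches the paper's own proof essentially line for line: the paper also fixes an arbitrary vector $x$, writes the sparsifier inequality $(1-\epsilon)\,x^\top \lap_{H_i} x \le x^\top \lap_{G_i} x \le (1+\epsilon)\,x^\top \lap_{H_i} x$ for each $i$, and sums over $i$ using the additivity of the Laplacian over the edge partition. Your version is just a bit more explicit about why $\lap_G = \sum_i \lap_{G_i}$, $\lap_H = \sum_i \lap_{H_i}$, and why the Loewner order is preserved under addition, and your closing remark about the cut case is exactly the content of the paper's separate decomposability lemma for cut sparsifiers.
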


\begin{proof}
Because $H_i$ is a spectral sparsifier of $G_i$, for any vector $x$ and $i=1,\ldots,k$ we have
\begin{equation*}
 (1-\epsilon) x^T \lap_{H_i} x \leq x^T \lap_{G_i} x \leq (1+\epsilon) x^T \lap_{H_i} x
\end{equation*}
Summing these $k$ inequalities, we get that
\begin{equation*}
 (1-\epsilon) x^T \lap_{H} x \leq x^T \lap_{G} x \leq (1+\epsilon) x^T \lap_{H} x,
\end{equation*}
which by definition means that $H$ is a $ (1 \pm \epsilon) $-spectral sparsifier of $H$.
\end{proof}

\begin{proof}[Proof of \Cref{lem:decremental to fully dynamic}]
Set $ k = \lceil \log{(n^2)} \rceil $.
For each $ 1 \leq i \leq k $, we maintain a set $ E_i \subseteq E $ of edges and an instance $ A_i $ of the decremental algorithm running on the graph $ G_i = (V, E_i) $.
We also keep a binary counter $ C $ that counts the number of insertions modulo $ n^2 $ with the least significant bit in $ C $ being the right-most one.

A deletion of some edge $ e $ is carried out by simply deleting $ e $ from the set $ E_i $ it is contained in and propagating the deletion to instance $ A_i $ of the decremental algorithm.

An insertion of some edge $ e $ is carried out as follows.
Let $ j $ be the highest (i.e., left-most) bit that gets flipped in the counter when increasing the number of insertions.
Thus, in the updated counter the $j$-th bit is $ 1 $ and all lower bits (i.e., bits to the right of $ j $) are $ 0 $.
We first add the edge $ e $ as well as all edges in $ \bigcup_{i=1}^{j-1} E_i $ to $ E_j $.
Then we set $ E_i = \emptyset $ for all $ 1 \leq i \leq j-1 $.
Finally, we re-initialize the instance $ A_j $ on the new graph $ G_j = (V, E_j) $.

We know bound the total update time for each instance $ A_i $ of the decremental algorithm.
First, observe that the $i$-th bit of the binary counter is reset after every $ 2^i $ edge insertions.
A simple induction then shows that at any time $ E_i \leq 2^i $ for all $ 1 \leq i \leq k $.
Now consider an arbitrary sequence of updates of length $ \ell $.
The instance $ A_i $ is re-initialized after every $ 2^i $ insertions.
It will therefore be re-initialized at most $ \ell / 2^i $ times.
For every re-initialization we pay a total update time of $ |E_i| \cdot T (|E_i|, n, W) \leq 2^i T (m, n, W) $.
For the entire sequence of $ \ell $ updates, the total time spent for instance $ A_i $ is therefore $ (\ell / 2^i) \cdot 2^i T (m, n, W) = \ell \cdot T (m, n, W) $. Thus we spend total time $ O (\ell \cdot T (m, n, W) \log{n}) $ for the whole algorithm, which amounts to an amortized update time of $ O (T (m, n, W) \log{n}) $.
\end{proof}

\section{Dynamic Cut Sparsifier}\label{sec:dynamic cut sparsifier}

In this section we give an algorithm for maintaining a cut sparsifier under edge deletions and insertions with polylogarithmic worst-case update time. The main result of this section is as follows.

\begin{theorem}\label{thm:fully dynamic cut sparsifier}
There exists a fully dynamic randomized algorithm with polylogarithmic update time
for maintaining a $(1 \pm \epsilon) $-cut sparsifier $ H $
of a graph $G$, with probability at least $ 1 - n^{-c} $ for
any $0 < \epsilon \leq 1$ and $c \geq 1 $. Specifically, the algorithm either has worst-case update time
$$ O (c \epsilon^{-2} \log^2{\rho} \log^5{n} \log{W}) $$ or amortized update time
$$ O (c \epsilon^{-2} \log^2{\rho} \log^3{n} \log{W}) $$
and the size of $H$ is $$ O (c n \epsilon^{-2} \log^2{\rho} \log{n} \log{W} + m \rho^{-1}) \, ,$$
where $ 1 \leq \rho \leq m $ is a parameter of choice.
Here, $ W $ is the ratio between the largest and the smallest edge weight in $ G $.
The ratio between the largest and the smallest edge weight in $ H $ is at most $ O (n W) $.
\end{theorem}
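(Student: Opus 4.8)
The plan is to transport the spectral construction of \Cref{sec:dynamic spectral sparsifier} to the cut setting, replacing spanners by approximate maximum spanning trees. First I would introduce the notion of an $\alpha$-MST: a sparse subgraph $T$ of $G$ (not necessarily a forest) such that for every edge $e=(u,v)$ of $G\setminus T$ the graph $T$ contains a $u$-$v$ path all of whose edges have weight at least $\ww_G(e)/\alpha$. For unweighted graphs a maximum (equivalently, any) spanning forest is a $1$-MST; for weighted graphs one buckets the edges into $\log W$ classes $[2^j,2^{j+1})$ and takes a spanning forest of each class, obtaining a $2$-MST of total size $O(n\log W)$. The $t$-bundle $\alpha$-MST $B=T_1\cup\dots\cup T_t$ is then defined exactly as the $t$-bundle $\alpha$-spanner (each $T_i$ an $\alpha$-MST of $G\setminus\bigcup_{j<i}T_j$), and the analogue of \Cref{lem:stretch resistance correspondence} is that every non-bundle edge $e=(u,v)$ of $G\setminus B$ is joined, in $G\setminus B$, by the edge $e$ together with $t$ pairwise edge-disjoint $u$-$v$ paths, each using only edges of weight at least $\ww_G(e)/\alpha$; hence the local edge connectivity satisfies $\lambda_{G\setminus B}(u,v)\ge (t/\alpha)\,\ww_G(e)$. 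In other words, a $t$-bundle $\alpha$-MST certifies a uniform lower bound on the ratio $\lambda_e/\ww_e$ for all non-bundle edges, which is exactly what is needed to sample them uniformly.

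With this in hand I would mirror \textsc{Light-Spectral-Sparsify} and \textsc{Spectral-Sparsify} by routines \textsc{Light-Cut-Sparsify} and \textsc{Cut-Sparsify}: \textsc{Light-Cut-Sparsify} peels off a $t$-bundle $2$-MST $B$ with $t=\Theta(\alpha\epsilon^{-2}\log n)$, keeps $B$, and adds each edge of $G\setminus B$ to $H$ independently with a fixed constant probability, rescaling its weight by the reciprocal; \textsc{Cut-Sparsify} iterates this $k=\lceil\log\rho\rceil$ times, passing error $\epsilon/(2k)$ and feeding the kept non-bundle edges of one stage as the input of the next, stopping once the remaining graph is small. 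Correctness of one stage is the Benczúr--Karger / Fung et al.\ cut-counting argument in place of the matrix concentration bound of \Cref{thm:matrix concentration}: since every sampled edge has $\lambda_{G\setminus B}(e)\ge\Omega(\epsilon^{-2}\log n)\,\ww_e$, a Chernoff bound combined with the fact that the number of cuts whose weight is within a factor $\gamma$ of a given edge's connectivity is $n^{O(\gamma)}$ shows that every cut is preserved to within $1\pm\epsilon/(2k)$ with high probability; composing the $k$ stages multiplicatively gives a $(1\pm\epsilon)$-cut sparsifier, and a routine accounting of the bundle sizes over the $\lceil\log\rho\rceil$ iterations --- now with $\alpha=2$ and $O(n\log W)$-size forests in place of $O(n\,\polylog n)$-size spanners --- yields the stated size $O(cn\epsilon^{-2}\log^2\rho\,\log n\log W+m\rho^{-1})$ (and, tracking the weight rescalings, the $O(nW)$ bound on the weight ratio in $H$).

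For the dynamic implementation I would maintain a $t$-bundle $2$-MST by running, for each of the $\log W$ weight classes, a chain of $t$ instances of a fully dynamic spanning-forest algorithm, instance $i$ maintaining a spanning forest $T_i$ of the current graph minus $T_1\cup\dots\cup T_{i-1}$. The decisive structural fact --- and the reason the chain does not blow up, in contrast with the spanner case that forced the monotonicity detour --- is that a fully dynamic spanning-forest algorithm changes at most one tree edge per update: a deletion of a tree edge is answered by a single replacement edge, an insertion adds at most one edge. Hence one update in $G$ propagates to at most one update into each successive instance, so the per-update cost of a $t$-bundle $2$-MST is $O(t\log W)$ times the cost of one spanning-forest update; instantiating with the worst-case algorithm of Kapron, King, and Mountjoy~\cite{KapronKM13} gives the worst-case bound, and with the (amortized) algorithm of Holm, de Lichtenberg, and Thorup~\cite{HolmLT01} the amortized bound. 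Sampling the non-bundle edges by independent coin flips at initialization and simply deleting a sampled edge from $H$ when it leaves the bundle complement, chaining the $k=O(\log\rho)$ stages of \textsc{Light-Cut-Sparsify} (each stage's non-bundle complement being the input graph of the next), and outputting $H=\bigcup_j B_j\cup G_k$, are all carried out in the obvious way; unlike the spectral case this is directly a fully dynamic algorithm, so no decremental-to-fully-dynamic reduction (hence no extra $\log n$) is needed, and decomposability of cut sparsifiers (proved exactly as \Cref{lem:decomposability spectral sparsifier}) is used only at the end.

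The step I expect to be the main obstacle is reconciling a worst-case update-time guarantee with the fact that the Chernoff/cut-counting bounds only hold with high probability over a polynomial-length update sequence, which forces a periodic restart --- and a restart would change $\Theta(n\,\polylog n)$ sparsifier edges in a single step. For the amortized bound this is harmless: rebuild from scratch every $\Theta(n^2)$ updates and charge the $O(n\,\polylog n\cdot n)$ rebuild cost over the phase. For the worst-case bound the fix is to keep two independent copies of the whole algorithm running on two edge sets $E_1,E_2$ that at every moment partition $E(G)$, and to output the union of their two sparsifiers, which is a valid $(1\pm\epsilon)$-cut sparsifier by decomposability; incoming updates are routed to whichever copy currently owns the affected edge, and over each phase of $\Theta(n^2)$ updates the ownership of the edges is migrated a few edges per step from the ``old'' copy to the ``fresh'' one, so that the old copy can be torn down and rebuilt incrementally while the fresh copy carries the load, after which the roles swap. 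Designing this migration schedule so that both copies remain valid cut sparsifiers of their current edge sets at all times, while keeping the extra per-step bookkeeping within the claimed $\poly(\log n,\epsilon^{-1})$ budget, is the delicate part; everything else is a direct transcription of the spectral development with matrix concentration replaced by cut counting.
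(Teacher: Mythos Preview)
Your proposal is correct and mirrors the paper's approach essentially step for step: $\alpha$-MSTs (built from $\log W$ per-weight-class spanning forests) in place of spanners, the single-edge-change property of dynamic spanning forests to chain the $t$-bundle maintenance without blow-up, Fung et al.'s connectivity-sampling theorem for the correctness of \textsc{Light-Cut-Sparsify}, and the two-copy alternating-partition trick (plus decomposability of cut sparsifiers) to handle periodic restarts under a worst-case guarantee. Two minor calibrations to make when writing it out: the paper takes $t=\Theta(c\,\alpha\,\epsilon^{-2}\log^2 n\log W)$ rather than $\Theta(\alpha\,\epsilon^{-2}\log n)$ (Fung et al.'s bound needs $\Theta(\log^2 n/\epsilon^2)$ samples, and the correctness proof first decomposes $G$ into $\log W$ weight classes and union-bounds over them), and in the fully dynamic setting you must also flip a coin for edges that \emph{enter} $G\setminus B$, not only delete sampled edges that leave it.
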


By running the algorithm with basically $ \rho = m $ we additionally get that $ H $ has low arboricity, i.e., it can be partitioned into a polylogarithmic number of trees.
We will algorithmically exploit the low arboricity property in \Cref{sec:dynamic min cut,sec:onePlusEpsilon}.

\begin{corollary}\label{lem:additional properties of fully dynamic cut sparsifier}
There exists a fully dynamic randomized algorithm with polylogarithmic update time
for maintaining a $(1 \pm \epsilon) $-cut sparsifier $ H $
of a graph $G$, with probability at least $ 1 - n^{-c} $ for
any $0 < \epsilon \leq 1$ and $c \geq 1 $. Specifically, the algorithm either has worst-case update time
$ O (c \epsilon^{-2} \log^7{n} \log{W}) $ or amortized update time
$ O (c \epsilon^{-2} \log^5{n} \log{W}) $.
The arboricity of~$H$ is $ k = O (c \epsilon^{-2} \log^3{n} \log{W}) $.
Here, $ W $ is the ratio between the largest and the smallest edge weight in $ G $.
The ratio between the largest and the smallest edge weight in $ H $ is at most $ O (n W) $.
We can maintain a partition of $ H $ into disjoint forests $ T_1, \ldots, T_k $ such that every vertex keeps a list of its neighbors together with its degree in each forest $ T_i $.
After every update in $ G $ at most one edge is added to and at most one edge is removed from each forest~$ T_i $.
\end{corollary}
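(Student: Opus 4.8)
The plan is to derive this corollary directly from \Cref{thm:fully dynamic cut sparsifier} by instantiating it with the parameter $\rho$ chosen as large as possible, concretely $\rho = m$ (or $\rho = n^2 \ge m$ if $m$ is not known in advance), so that $\log\rho = O(\log n)$ and $m\rho^{-1} = O(1)$. With this choice the claimed worst-case and amortized update times and the bound $O(nW)$ on the weight ratio are immediate from \Cref{thm:fully dynamic cut sparsifier}, so the real work is to exhibit the forest partition and to verify the per-update change bound.

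First I would recall the structure of the maintained object. As in the cut analogue of \textsc{Spectral-Sparsify}, the sparsifier has the form $H = \bigcup_{1\le j\le i^{*}} B_j \cup G_{i^{*}}$, with $i^{*} = O(\log\rho) = O(\log n)$ peeling rounds, where the residual graph $G_{i^{*}}$ has $O(c\log n)$ edges (with $\rho = m$ the alternative $O(m/\rho)$ bound on the residual is $O(1)$), and each $B_j$ is a $t$-bundle $2$-MST of the graph $G_{j-1}$ entering round $j$. Crucially, $B_j$ is not an abstract object but is maintained as an explicit union of spanning forests: for each non-empty weight class $w$ of $G_{j-1}$ — there are $O(\log(nW))$ of these, since weights in $G_{j-1}$ span a range of width $O(nW)$ — and each level $\ell\in\{1,\dots,t\}$, the algorithm keeps a spanning forest $F_{j,w,\ell}$ produced by a fully dynamic spanning forest subroutine (\cite{KapronKM13} in the worst-case variant, \cite{HolmLT01} in the amortized variant) running on $G_{j-1}^{(w)}\setminus\bigcup_{\ell'<\ell}F_{j,w,\ell'}$. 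These forests are pairwise disjoint because levels are peeled, weight classes partition the edges, and the bundles $B_j$ are themselves peeled off in succession. I would take $T_1,\dots,T_k$ to be the list of all these $F_{j,w,\ell}$, together with $O(c\log n)$ further forests (some empty) holding the residual edges one per forest. Counting gives $k = i^{*}\cdot O(\log(nW))\cdot t + O(c\log n)$, which is $O(c\epsilon^{-2}\log^3 n\log W)$ for the value of $t$ in the construction (equivalently, the $|H|$ bound of \Cref{thm:fully dynamic cut sparsifier} divided by $n$). Maintaining, for each vertex and each $T_i$, a neighbour list and a degree counter is then trivial: whenever an edge enters or leaves $T_i$, update the two endpoints' data in $O(1)$ time.

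The heart of the argument is the one-edge-per-update guarantee. I would use the standard property of fully dynamic spanning forest algorithms that a single edge update to the input changes the maintained forest by at most one edge: an inserted edge either joins the forest or not, and a deleted tree edge is replaced by at most one other edge. This has to be pushed through the chains and across the peeling rounds. An update to an edge $e$ of $G$ lands in exactly one weight class $w$, hence is one update to the input of $F_{1,w,1}$, changing $F_{1,w,1}$ by at most one edge; a short case analysis then shows the complement $G^{(w)}\setminus F_{1,w,1}$ — the input of $F_{1,w,2}$ — also changes by at most one edge: an edge that enters the forest leaves the complement, an edge that leaves the forest enters it, and $e$ sits in exactly one of the two (for insertions, once $e$ is absorbed by some forest of the chain the propagation stops). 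Iterating down the chain of length $t$, and noting that the residual of round $j$ is precisely the input of round $j+1$, every $F_{j,w,\ell}$ and the residual $G_{i^{*}}$ change by at most one edge per update to $G$; the residual forests absorb a change by putting a new edge into a free slot, of which there always is one since $|E(G_{i^{*}})|$ stays below the $O(c\log n)$ reserved slots, and by freeing its slot on deletion. For the worst-case variant I would also handle the periodic restart forced by the polynomial validity window of the Chernoff bounds exactly as in the proof of \Cref{thm:fully dynamic cut sparsifier}: run two instances on a changing partition of $E$, keep their forests as disjoint sublists of $\{T_1,\dots,T_k\}$ (at most doubling $k$), route each update to the single instance that owns that edge, and perform the slow edge migration between the two instances within the restart budget.

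The step I expect to be the main obstacle is precisely this propagation: checking carefully that a single edge change in a spanning forest forces only a single edge change in its complement, so that the ``one in / one out'' guarantee composes with no blow-up along a length-$t$ chain and across $O(\log n)$ peeling rounds. Everything else — the counting of forests, the adjacency-list and degree maintenance, and the restart mechanism — is routine bookkeeping that mirrors arguments already in the paper.
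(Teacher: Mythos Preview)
Your proposal is correct and follows essentially the same route as the paper: instantiate \Cref{thm:fully dynamic cut sparsifier} with $\rho = m$, read off the forest decomposition from the explicit bundle structure (each $t$-bundle $2$-MST is a union of spanning forests over weight classes), treat the $O(c\log n)$ residual edges as singleton forests, and use the one-in/one-out propagation through the spanning-forest chain exactly as established in the $2$-MST maintenance section. One minor overcount: the number of weight classes per bundle is $O(\log W)$, not $O(\log(nW))$, because the uniform factor-$4$ rescaling at each sampling round preserves the ratio between largest and smallest edge weight in every $G_j$; this is precisely how the paper gets the stated arboricity $O(c\epsilon^{-2}\log^3 n\log W)$, and your own parenthetical ``size divided by $n$'' already lands on the right answer.
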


After giving an overview of our algorithm, we first explain our cut sparsification scheme in a static setting and prove its properties.
Subsequently, we show how we can dynamically maintain the edges of such a sparsifier with both amortized and worst-case update times by making this scheme dynamic.

\subsection{Algorithm Overview}

\paragraph{Our Framework.}
The algorithm is based on the observation that the spectral sparsification scheme outlined above in \Cref{sec:overview spectral sparsifier}.
becomes a cut sparsification algorithm if we simply replace spanners by maximum weight spanning trees (MSTs).
This is inspired by sampling according to edge connectivities; the role of the MSTs is to certify lower bounds on the edge connectivities.
We observe that the framework does not require us to use exact MSTs.
For our $t$-bundles we can use a relaxed, approximate concept that we call $ \alpha $-MST that. Roughly speaking, an $ \alpha $-MST guarantees a `stretch' of $ \alpha $ in the infinity norm and, as long as it is sparse, does not necessarily have to be a tree.

Similarly to before, we define a $t$-bundle $\alpha$-MST $ B $ as the union of a sequence of $\alpha$-MSTs $ T_1, \ldots T_t $ where the edges of each tree are removed from the graph before computing the next $\alpha$-MST. The role of $\alpha$-MST is to certify uniform lower bounds on the connectivity of edges; these bounds are sufficiently large to allow uniform sampling with a fixed probability. 

This process of peeling and sampling is repeated sufficiently often and our cut sparsifier then is the union of all the $t$-bundle $\alpha$-MSTs and the non-bundle edges remaining after taking out the last bundle.
Thus, the cut sparsifier consists of a polylogarithmic number of $\alpha$-MSTs and a few (polylogarithmic) additional edges.
This means that for $\alpha$-MSTs based on spanning trees, our cut sparsifiers are not only sparse, but also have polylogarithmic \emph{arboricity}, which is the minimum number of forests into which a graph can be partitioned.

\paragraph{Simple Fully Dynamic Algorithm.}
Our approach immediately yields a fully dynamic algorithm by using a fully dynamic algorithm for maintaining a spanning forest.
Here we basically have two choices.
Either we use the randomized algorithm of Kapron, King, and Mountjoy~\cite{KapronKM13} with polylogarithmic worst-case update time.
Or we use the deterministic algorithm of Holm, de Lichtenberg, and Thorup~\cite{HolmLT01} with polylogarithmic amortized update time.
The latter algorithm is slightly faster, at the cost of providing only amortized update-time guarantees.
A $t$-bundle $2$-MST can be maintained fully dynamically by running, for each of the $ \log W $ weight classes of the graph, $t$ instances of the dynamic spanning tree algorithm in a `chain'.

An important observation about the spanning forest algorithm is that with every update in the graph, at most one edge is changed in the spanning forest:
If for example an edge is deleted from the spanning forest, it is replaced by another edge, but no other changes are added to the tree.
Therefore a single update in $G$ can only cause one update for each graph $ G_{i} = G \setminus \bigcup_{j=1}^{i-1} T_j$ and $T_i$.
This means that each instance of the spanning forest algorithm creates at most one `artificial' update that the next instance has to deal with.
In this way, each dynamic spanning forest instance used for the $t$-bundle has polylogarithmic update time.
As $ t = \poly (\log n) $, the update time for maintaining a $t$-bundle is also polylogarithmic.
The remaining steps of the algorithm can be carried out dynamically in the straightforward way and overall give us polylogarithmic worst-case or amortized update time.

A technical detail of our algorithm is that the high-probability correctness achieved by the Chernoff bounds only holds for a polynomial number of updates in the graph.
We thus have to restart the algorithm periodically.
This is trivial when we are shooting for an amortized update time.
For a worst-case guarantee we can neither completely restart the algorithm nor change all edges of the sparsifier in one time step.
We therefore keep two instances of our algorithm that maintain two sparsifiers of two alternately growing and shrinking subgraphs that at any time partition the graph.
This allows us to take a blend of these two subgraph sparsifiers as our end result and take turns in periodically restarting the two instances of the algorithm.

\subsection{Definitions}

We will work with a relaxed notion of an MST, which will be useful when maintaining an exact maximum spanning tree is hard (as is the case for worst-case update time guarantees).

\begin{definition}
A subgraph $ T $ of an undirected graph $ G $ is an $ \alpha $-MST ($ \alpha \geq 1 $) if for every edge $ e = (u, v) $ of $ G $ there is a path $ \pi $ from $ u $ to $ v $ such that $ \ww_G (e) \leq \alpha \ww_G (f) $ for every edge $ f $ on $ \pi $.
\end{definition}
Note that in this definition we do not demand that $ T $ is a tree; any subgraph with these properties will be fine. A maximum spanning tree in this terminology is a $1$-MST.

\begin{definition}
A \emph{$t$-bundle $ \alpha $-MST} ($ t,\alpha \geq 1 $) of an undirected  graph $ G $ is the union $ B = \bigcup_{i=1}^k T_i $ of a sequence of graphs $ T_1, \dots, T_t $ such that, for every $ 1 \leq i \leq t $, $ T_i $ is an $ \alpha $-MST of $ G \setminus \bigcup_{j=1}^{i-1} T_j $.
\end{definition}

We can imagine such a $t$-bundle being obtained by iteratively peeling-off $ \alpha $-MSTs from~$ G $.

\subsection{A Simple Cut Sparsification Algorithm}\label{sec:cut sparsifier general}

We begin with algorithm \textsc{Light-Cut-Sparsify}
in Figure~\ref{fig:half cut sparsify}; this is
the core iteration used to compute a sparser
cut approximation with approximately half the edges.
Algorithm ~\textsc{Cut-Sparsify} in
Figure~\ref{fig:half cut sparsify}
is the full sparsification routine.

\begin{figure} [p!]
	\begin{algbox}
        \textsc{Light-Cut-Sparsify} $(G, c, \epsilon)$
		\begin{enumerate}
        \item $t \gets \tCut$
        \item Let $B$ be a $t$-bundle $\alpha$-MST of G
        \item $H:=B$
          \item \textbf{For each} edge $ e \in G \setminus B $
           \begin{enumerate}
            	\item	With probability $1/4$ add $ e $ to $ H $ with $ 4 \ww_{H} (e) \gets \ww_{G} (e) $
           \end{enumerate}
        \item \textbf{Return} $(H,B)$
		\end{enumerate}
	\end{algbox}
	\caption{\textsc{Light-Cut-Sparsify} $(G, c, \epsilon)$. We give a dynamic implementation of this algorithm in \Cref{sec:dynamization light cut sparsify}. In particular we dynamically maintain the $t$-bundle $\alpha$-MST $ B $ which results in a dynamically changing graph $ G \setminus B $.}
	\label{fig:half cut sparsify}
\end{figure}

\begin{figure} [p!]
	\begin{algbox}
          \textsc{Cut-Sparsify} $(G, c,\epsilon)$
		\begin{enumerate}
          \item $ k \gets \lceil \log{\rho} \rceil $
          \item $ G_0 \gets G $
          \item $ B_0 \gets (V,\emptyset) $
          \item \textbf{for} $i=1$ \textbf{to} $k$
          \begin{enumerate}
             \item $(H_i ,B_i) \gets \textsc{Light-Cut-Sparsify} (G, c+1 , \epsilon/(2k))$
             \item $ G_{i+1} \gets H_i \setminus B_i $
             \item \textbf{if} $G_{i+1}$ has less than $(c+2)\ln n$ edges \textbf{then} \textbf{break}  \qquad {\small (* break loop *) }
          \end{enumerate}
          \item $ H \gets \bigcup_{1 \leq j \leq i} B_j \cup G_{i+1} $
          \item \textbf{return} $( H , \{B_j\}_{j=1}^{i}, G_{i+1})$
		\end{enumerate}
	\end{algbox}
	\caption{\textsc{Cut-Sparsify} $(G, c,\epsilon)$ We give a dynamic implementation of this algorithm in \Cref{sec:dynamization cut sparsify}. In particular we dynamically maintain each $ H_i $ and $ B_i $ as the result of a dynamic implementation of \textsc{Light-Cut-Sparsify} which results in dynamically changing graphs $ G_i $.}
	\label{fig:alt cut sparsifier}
\end{figure}

The properties of these algorithm are given in the following lemmas.

\begin{lemma} \label{lem:half_cut}
The output $H$ of algorithm \textsc{Light-Cut-Sparsify} is a $(1\pm \epsilon)$-cut approximation of the input $G$, with probability $1-n^{-c}$.
\end{lemma}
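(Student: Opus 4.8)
The plan is to write the output as the disjoint union $H = B \cup S$, where the $t$-bundle $\alpha$-MST $B$ is kept in full with its original weights $\ww_G$ and $S \subseteq G \setminus B$ is the random set in which each non-bundle edge is kept independently with probability $1/4$ and, if kept, reweighted by a factor of $4$. Fix any cut $U \subseteq V$. Since $B \subseteq G$, the edges of $B$ crossing $U$ are exactly $\partial_G(U) \cap B$, and they contribute their original weight; each edge $e \in \partial_G(U) \setminus B$ contributes $4\ww_G(e)$ when kept and $0$ otherwise. Hence $\expec{\ww_H(\partial_H(U))} = \ww_G(\partial_G(U) \cap B) + \frac{1}{4} \sum_{e \in \partial_G(U)\setminus B} 4\ww_G(e) = \ww_G(\partial_G(U))$, so the estimator is unbiased for every cut. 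Unlike the spectral case, where the matrix concentration bound of \Cref{thm:matrix concentration} takes care of all test vectors simultaneously, here the remaining work is a per-cut concentration bound combined with a union bound over all (exponentially many) cuts of $G$.

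The key structural ingredient --- the cut-counting analogue of \Cref{lem:stretch resistance correspondence} --- is that the bundle certifies a lower bound on local edge connectivities. Write $B = T_1 \cup \dots \cup T_t$ with $T_i$ an $\alpha$-MST of $G_i \defeq G \setminus \bigcup_{j < i} T_j$, and fix $e = (u,v) \in G \setminus B$. Because $e$ is present in every $G_i$, the $\alpha$-MST property supplies a $u$--$v$ path $\pi_i$ inside $T_i$ all of whose edges have weight at least $\ww_G(e)/\alpha$. The paths $\pi_1, \dots, \pi_t$ lie in pairwise edge-disjoint subgraphs, so any edge cut separating $u$ from $v$ must delete at least one edge from each $\pi_i$, hence at least $t$ distinct edges each of weight $\ge \ww_G(e)/\alpha$; therefore $\lambda_G(u,v) \ge (t/\alpha)\,\ww_G(e)$, i.e.\ $\ww_G(e)/\lambda_G(u,v) \le \alpha/t$ for every non-bundle edge.

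First I would then invoke the edge-connectivity sampling theorem of Fung et al.~\cite{FungHHP11}: sampling each edge $e = (u,v)$ of a weighted graph independently with probability $p_e \ge \min\{1,\ \rho\,\ww_G(e)/\lambda_G(u,v)\}$ and rescaling the kept edges by $1/p_e$ yields a $(1\pm\epsilon)$-cut sparsifier with probability at least $1 - n^{-c}$, where $\rho = \Theta(c\,\epsilon^{-2}\log^2 n)$. Our scheme is an instance of this: the bundle edges are effectively kept with probability $1$, and for a non-bundle edge $p_e = 1/4 \ge \rho\alpha/t$ as soon as $t \ge 4\rho\alpha$, which is exactly what the choice $t = \tCut$ secures (the extra $\log W$ factor is slack, and also covers the bookkeeping of the weighted version of the argument). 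This gives that $H$ is a $(1\pm\epsilon)$-cut approximation of $G$ with the claimed probability, possibly after the standard constant-factor rescaling of $\epsilon$. The step I expect to be the real obstacle, and the reason this cannot be done by naively Chernoff-bounding each cut, is the union bound itself: each individual cut fails with only the polynomially small probability $\exp(-\Omega(\epsilon^2 t/\alpha))$, which alone cannot survive a union over the $2^{\Theta(n)}$ cuts of $G$, so one must use the Karger-style cut-counting bound to control how many cuts of each value can occur and then union-bound in a weighted fashion. I would import this counting/union-bound machinery wholesale from the Bencz\'ur--Karger~\cite{BenczurK15} and Fung et al.~\cite{FungHHP11} sparsifier analyses; the only ingredient it requires beyond their framework is the certificate $\ww_G(e)/\lambda_G(u,v) \le \alpha/t$ from the previous paragraph, which is precisely what licenses the uniform sampling probability $1/4$ for every non-bundle edge.
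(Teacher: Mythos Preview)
Your proposal is correct and follows essentially the same route as the paper: certify via the $t$-bundle $\alpha$-MST that every non-bundle edge $e=(u,v)$ has $\ww_G(e)/\lambda_G(u,v)\le \alpha/t$, then invoke the connectivity-based sampling theorem of Fung et al.\ with $p_e=1$ on bundle edges and $p_e=1/4$ elsewhere.

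The one difference worth noting is how the weights are handled. You appeal directly to a weighted form of the Fung et al.\ theorem and absorb the bookkeeping into the $\log W$ slack in $t$. The paper instead makes the reduction explicit: it states the sampling lemma only for graphs with weights in $(1/2,1]$ (their slight generalization of \cite{FungHHP11}, Theorem~1.1), buckets the non-bundle edges into $\log W$ geometric weight classes, and crucially includes a \emph{rescaled copy of the entire bundle} $B/2^{-(i+1)}$ in each bucket $G_i$ so that the connectivity certificate is present inside every bucket. It then applies the bounded-weight lemma per bucket and union-bounds. This is exactly the step you gesture at; the only thing to be careful about if you write it out is that the paths $\pi_1,\dots,\pi_t$ certifying connectivity for an edge $e$ need not lie in $e$'s weight class, which is why the paper pushes (a rescaling of) all of $B$ into each bucket rather than only the bundle edges of matching weight, and then argues that a sub-bundle $B_i'\subseteq B_i$ already suffices as a $t$-bundle $\alpha$-MST of $G_i$.
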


We will need a slight generalization of a Theorem in~\cite{FungHHP11}.

\begin{lemma} (generalization of Theorem 1.1~\cite{FungHHP11}) \label{lem:connectivity_sampling}
Let $H$ be obtained from a graph $ G $ with weights in $(1/2, 1]$
by independently sampling edge edge e with probability $p_e \geq \rho/\lambda_G (e)$,
where $\rho= C_\xi c \log^2 n/ 4\epsilon^2$, and $\lambda_G (e)$
is the local edge connectivity of edge $e$, $C_\xi$ is an explicitly
known constant.  Then $H$ is a $(1\pm \epsilon)$-cut
sparsifier, with probability at least $1-n^{-c}$.
\end{lemma}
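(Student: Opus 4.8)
The plan is to adapt the proof of Theorem 1.1 of Fung et al.~\cite{FungHHP11} rather than invoke it as a black box, using the observation that the only role played by edge weights in that argument is through (i) the value $\ww_G(\partial_G(U))$ of a cut and (ii) the local connectivities $\lambda_G(e)$, and that both quantities are perturbed by at most a factor of $2$ once all weights lie in $(1/2,1]$. First I would record the elementary estimates: writing $G'$ for the unweighted graph on the same edge set, $\tfrac12\lambda_{G'}(e)\le \lambda_G(e)\le \lambda_{G'}(e)$ for every edge $e$ and $\tfrac12|\partial_G(U)|\le \ww_G(\partial_G(U))\le |\partial_G(U)|$ for every cut $U$ (each uses only $w_e\in(1/2,1]$). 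In particular the hypothesis $p_e\ge \rho/\lambda_G(e)$ is exactly the sampling rate the unweighted analysis needs, up to the harmless replacement of $\rho$ by $\rho/2$.

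Next I would set up the importance-sampling estimator: for a cut $U$, the quantity $\ww_H(\partial_H(U))=\sum_{e\in\partial_G(U)}(\ww_G(e)/p_e)X_e$, where $X_e$ indicates that $e$ is kept and sampled edges carry weight $\ww_G(e)/p_e$, is an unbiased estimator of $\ww_G(\partial_G(U))$, and its terms are independent with $\ww_G(e)/p_e\le \lambda_G(e)/\rho$; edges with $\lambda_G(e)\le\rho$ (where $p_e$ is capped at $1$) are kept deterministically and contribute no variance. I would then follow Fung et al.'s partition of the edges into $O(\log n)$ connectivity classes $F_j=\{e:\lambda_G(e)\in[2^{j-1},2^j)\}$ and, for a fixed $U$, apply a weighted Chernoff bound to $\sum_{e\in\partial_G(U)\cap F_j}(\ww_G(e)/p_e)X_e$: the summands are bounded by $2^j/\rho$ and the mean is $\ww_G(\partial_G(U)\cap F_j)\ge \tfrac12|\partial_G(U)\cap F_j|$, so a cut whose projection onto $F_j$ has $\Theta(\alpha\cdot 2^{j-1})$ edges deviates by more than an $\epsilon$-fraction with probability at most $\exp(-\Omega(\alpha\rho\epsilon^2))=\exp(-\Omega(\alpha c\log^2 n))$ for the stated $\rho$.

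The crux is the cut-counting step: bounding the number of distinct partial cuts $\partial_G(U)\cap F_j$ of a given size, which in Fung et al.\ follows from Karger's bound on the number of near-minimum cuts together with the fact that every edge of $F_j$ has connectivity at least $2^{j-1}$. Here I would re-run that combinatorial argument directly in the weighted graph $G$: Karger's counting bound $|\{U:\ww_G(\partial_G(U))\le\alpha K\}|\le n^{O(\alpha)}$ holds verbatim for weighted graphs, and the weight-induced factor-$2$ slack only replaces $\alpha$ by $2\alpha$ and $K$ by $K/2$ in the relevant inequalities, so the number of partial cuts of size $\Theta(\alpha 2^{j-1})$ in class $F_j$ is still $n^{O(\alpha)}$. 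Union-bounding over these $n^{O(\alpha)}$ partial cuts in each of the $O(\log n)$ classes and summing the failure probabilities over the geometric scale $\alpha=1,2,4,\dots$ gives total failure probability at most $n^{-c}$, provided the absolute constant $C_\xi$ in $\rho=C_\xi c\log^2 n/(4\epsilon^2)$ is taken a fixed multiple larger than the constant of Fung et al.\ so as to absorb all the factor-$2$ losses. Finally I would combine the per-class deviation bounds into a single $(1\pm\epsilon)$ guarantee for $\ww_H(\partial_H(U))$ exactly as in~\cite{FungHHP11} (small classes contribute negligibly to the total), yielding the conclusion for all $U$ simultaneously with probability at least $1-n^{-c}$.

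I expect the main obstacle to be purely bookkeeping: verifying that every place where Fung et al.\ rely on unit edge weights degrades by only a constant factor under weights in $(1/2,1]$, so that the same $\rho$ (with a larger $C_\xi$) still makes the chained union bound converge. There is nothing conceptually new beyond this robustness check, but it does require walking through the connectivity-class decomposition and the cut-projection counting argument of~\cite{FungHHP11} in enough detail to see that the factor-$2$ perturbations are benign.
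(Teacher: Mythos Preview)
Your proposal is correct and follows essentially the same approach as the paper. The paper's proof is a brief sketch noting exactly the three points you develop in detail: (i) the parameter $c$ enters through the standard Chernoff behavior and drives the failure probability to $n^{-c}$, (ii) weights in $(1/2,1]$ rather than unit weights are absorbed by a constant-factor adjustment (the paper phrases this as using a variant of the Chernoff bound for bounded ranges, whereas you track the factor-$2$ perturbations to cut values and connectivities explicitly), and (iii) replacing $p_e=\rho/\lambda_G(e)$ by $p_e\ge\rho/\lambda_G(e)$ leaves the argument unchanged. Your write-up is considerably more explicit about the connectivity-class decomposition and cut-counting union bound, but this is elaboration rather than a different route.
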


\begin{proof}
(Sketch) The generalization lies in introducing the parameter $c$
to control the probability of failure. This reflects the standard
behavior of Chernoff bounds: increasing the number
of samples by a factor of $c$ drives down the failure
probability by a factor of $n^{-c}$. Also,
the original theorem assumes that all edges are unweighted, but
a standard variant of the Chernoff bound can absorb constant ranges,
with a corresponding constant factor increase in the number
of samples. Finally, the original theorem is stated
with $p_e = \rho/\lambda_G (e)$, but all arguments remain identical
if this is relaxed to an inequality.
\end{proof}

\begin{proof}
 Suppose without loss of generality that the maximum weight in $G$ is $1$.
 We decompose~$G$ into $\log W$ edge-disjoint graphs, where $G_i$ consists of
 the edges with weights in $(2^{-(i+1)},2^{-i}]$ plus $B_i = B/2^{-(i+1)}$,
 where $B$ is the bundle returned by the algorithm.

 By definition of the $\alpha$-MST $t$-bundle, the connectivity
 of each edge of $G_i \setminus B_i$ in $G_i$ is at least $4 \rho c$,
 for $c = d\log W$ where $\rho$ is as defined in Lemma \ref{lem:connectivity_sampling}.
 Assume for a moment that all edges in $B_i$ are also in
 $(2^{-(i+1)},2^{-i}]$. Then we can set $p_e=1$ for each $e\in B_i$
 and $p_e=1/4$ for all other edges, and  apply Lemma~\ref{lem:connectivity_sampling}.
 In this way we get that $H_i$ is $(1\pm \epsilon)$-cut sparsifier
 with probability at least $1-n^{d \log W }$.

 The assumption about $B_i$ can be removed as follows. We observe that one can find
 a subgraph $B_i'$ of $B_i$ (by splitting weights when needed, and dropping
 smaller weights), such that $B_i'$ is a $t$-bundle $\alpha$-MST of $G_i$.
 This follows by the definition of the $t$-bundle $\alpha$-MST .
 We can thus apply the lemma on $G_i' = (G_i \setminus B_i) \cup B_i'$,
 and get that the sampled graph $H_i'$ is a $(1\pm \epsilon)$-cut sparsifier.
 We then observe that $G_i = G_i' \cup (B_i \setminus B_i')$ and
 $H_i = G_i' \cup (B_i \setminus B_i')$, from which it follows that
 $H_i$ is a $(1\pm \epsilon)$-cut sparsifier of $G_i$.
\end{proof}

\textbf{Note:} The number of logarithms in \textsc{Light-Cut-Sparsify} is not optimal. One can argue that the lower bounds we compute can be used in place of the {\em strong connectivities} used in~\cite{BenczurK15} and reduce by one the number of logarithms. It is also possible to replace $\log W$ with $\log n$ by carefully re-working some of the details  in~\cite{BenczurK15}.

We finally have the following Lemmas. The proofs are identical to those for the corresponding
Lemmas in \Cref{sec:dynamic spectral sparsifier}, so we omit them.

\begin{lemma}\label{lem:cut sparsifier correctness}
The output $H$ of algorithm \textsc{Cut-Sparsify} is a $(1\pm \epsilon)$-spectral
sparsifier of the input $G$, with probability at least $ 1 - 1/n^{c+1} $.
\end{lemma}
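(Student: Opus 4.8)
The plan is to follow the proof of \Cref{lem:spectral sparsifier correctness} essentially verbatim, replacing Loewner-order inequalities between Laplacians by scalar inequalities between cut values, and replacing the per-iteration guarantee \Cref{lem:sparsifying step} by its cut analogue \Cref{lem:half_cut}. (The statement should read ``cut sparsifier'' rather than ``spectral sparsifier'': \textsc{Cut-Sparsify} uses $\alpha$-MSTs in place of spanners, which via \Cref{lem:half_cut} certify only the cut guarantee.) Concretely, I would unroll the loop of \textsc{Cut-Sparsify} with the bookkeeping of \textsc{Spectral-Sparsify}: set $G_0 := G$, and for $i = 1, \dots, k$ (with $k \le \lceil \log \rho \rceil$, or the smaller index at which the loop breaks) let $(H_i, B_i)$ be the output of \textsc{Light-Cut-Sparsify}$(G_{i-1}, c+1, \epsilon/(2k))$ and $G_i := H_i \setminus B_i$, so that $H_i$ is the edge-disjoint union $B_i \cupdot G_i$ and the algorithm's output is $H = \bigcup_{j=1}^{k} B_j \cup G_k$. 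By \Cref{lem:half_cut}, invoked with parameter $c+1$, the event that iteration $i$ succeeds has probability at least $1 - n^{-(c+1)}$, and on that event, for \emph{every} $U \subseteq V$,
\[
(1 - \tfrac{\epsilon}{2k})\, \ww_{H_i}(\partial_{H_i}(U)) \;\le\; \ww_{G_{i-1}}(\partial_{G_{i-1}}(U)) \;\le\; (1 + \tfrac{\epsilon}{2k})\, \ww_{H_i}(\partial_{H_i}(U)) .
\]
Since $B_i$ and $G_i$ partition the edges of $H_i$, cut weights are additive: $\ww_{H_i}(\partial_{H_i}(U)) = \ww_{B_i}(\partial_{B_i}(U)) + \ww_{G_i}(\partial_{G_i}(U))$.

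Next I would take a union bound over the at most $\lceil \log \rho \rceil = O(\log n)$ iterations; absorbing this logarithmic factor into the polynomial exactly as in the proof of \Cref{lem:spectral sparsifier correctness} (the Chernoff/cut-counting bound behind \Cref{lem:half_cut} and \Cref{lem:connectivity_sampling} carries the requisite slack in the exponent), all iterations succeed simultaneously with probability at least $1 - 1/n^{c+1}$. Conditioning on this, telescope: starting from $\ww_{G}(\partial_{G}(U)) = \ww_{G_0}(\partial_{G_0}(U))$ and applying the upper inequality for iteration $1$, then for iteration $2$ to the resulting $\ww_{G_1}(\partial_{G_1}(U))$ term, and so on, one gets
\[
\ww_{G}(\partial_{G}(U)) \;\le\; \Big(1 + \tfrac{\epsilon}{2k}\Big)^{k}\Big( \sum_{j=1}^{k} \ww_{B_j}(\partial_{B_j}(U)) + \ww_{G_k}(\partial_{G_k}(U)) \Big) \;=\; \Big(1 + \tfrac{\epsilon}{2k}\Big)^{k} \ww_{H}(\partial_{H}(U)),
\]
and $(1+\epsilon/(2k))^{k} \le e^{\epsilon/2} \le 1+\epsilon$ for $0 < \epsilon \le 1$. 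Symmetrically, iterating the lower inequality and using $(1-\epsilon/(2k))^{k} \ge 1 - \epsilon/2 \ge 1-\epsilon$ (Bernoulli) gives $\ww_{G}(\partial_{G}(U)) \ge (1-\epsilon)\ww_{H}(\partial_{H}(U))$. As $U$ was arbitrary, $H$ is a $(1\pm\epsilon)$-cut sparsifier of $G$; if the loop breaks at some $i < k$, replace $k$ by $i$ throughout and the error only shrinks.

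I do not expect a genuine obstacle here — this is the routine error-composition argument, which is why the authors omit it. The only two points meriting a line of care are: (i) the quantities relevant to a cut sparsifier are additive over edge-disjoint subgraphs, so the multiplicative errors of the successive \textsc{Light-Cut-Sparsify} calls compose cleanly by the telescoping above (the cut-value counterpart of summing Loewner inequalities as in \Cref{lem:decomposability spectral sparsifier}); and (ii) the error budget is split as $\epsilon/(2k)$ per iteration precisely so that the $k$-fold product $(1\pm\epsilon/(2k))^{k}$ stays inside $(1\pm\epsilon)$ — a coarser split would not close.
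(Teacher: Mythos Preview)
Your proposal is correct and follows exactly the approach the paper intends: the paper omits this proof entirely, stating that it is ``identical to those for the corresponding Lemmas in \Cref{sec:dynamic spectral sparsifier}'', i.e., to the proof of \Cref{lem:spectral sparsifier correctness}, which is precisely the telescoping/induction argument you wrote out with Laplacian inequalities replaced by cut-weight inequalities and \Cref{lem:sparsifying step} replaced by \Cref{lem:half_cut}. Your observation that the statement should read ``cut'' rather than ``spectral'' is also correct.
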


\begin{lemma}\label{lem:cut sparsifier size}
With probability at least $1-2n^{-c}$, the number of iterations before algorithm \textsc{Cut-Sparsify} terminates is
$$\min \{ \lceil \log \rho \rceil, \lceil \log m/((c+1)\log n) \rceil\}.$$
Moreover the size of $H$
is $$ O \left(\sum_{1 \leq j \leq i} | B_i | + m/\rho + c \log{n} \right) ,$$
and the size of the third output of the graph is
at most $\max\{ O(c\log n), O(m/\rho)\}$.
\end{lemma}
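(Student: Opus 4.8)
The plan is to mirror the proof of Lemma~\ref{lem:spectral sparsifier size} (the spectral analogue, whose algorithm has the identical iteration structure) and simply track the number of non-bundle edges through the $ k = \lceil \log \rho \rceil $ rounds of \textsc{Cut-Sparsify}. Write $ m_0 = m $ and, for $ i \geq 1 $, let $ m_i = |G_i| $ where $ G_i = H_i \setminus B_i $ is the graph handed to the next round. The key observation is that $ G_i $ is obtained from the non-bundle edges of $ G_{i-1} $ --- of which there are $ N_i := m_{i-1} - |B_i| \leq m_{i-1} $, since $ B_i $ is a \emph{subgraph} of $ G_{i-1} $ --- by keeping each one independently with probability $ 1/4 $. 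Hence, conditioned on $ m_{i-1} $, the count $ m_i $ is a binomial random variable with $ \expec{m_i} = N_i / 4 \leq m_{i-1}/4 $, and in particular $ m_i \leq m_{i-1} $ always.

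First I would isolate a single ``geometric decay'' event. A Chernoff bound gives that whenever $ m_{i-1} \geq 24(c+1)\ln n $ we have $ \prob{m_i > m_{i-1}/2} \leq \exp(-m_{i-1}/24) \leq n^{-(c+1)} $: when $ N_i < m_{i-1}/2 $ this holds deterministically because $ m_i \leq N_i $, and when $ N_i \geq m_{i-1}/2 $ it is the standard upper tail bound $ \prob{\mathrm{Bin}(N_i, 1/4) \geq N_i/2} \leq \exp(-N_i/12) \leq \exp(-m_{i-1}/24) $. Since the loop runs for at most $ k = \lceil \log \rho \rceil \leq \lceil \log m \rceil \leq 2\log n $ rounds, a union bound shows that with probability at least $ 1 - 2\log n \cdot n^{-(c+1)} \geq 1 - 2n^{-c} $ the event $ \mathcal{E} $ holds: \emph{in every round $ i $ with $ m_{i-1} \geq 24(c+1)\ln n $ we have $ m_i \leq m_{i-1}/2 $}. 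All remaining claims are deduced deterministically from $ \mathcal{E} $.

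On $ \mathcal{E} $: as long as no early break has occurred and every count so far exceeds the threshold, $ m_i \leq m/2^i $; once $ m/2^i < (c+2)\ln n $ --- which happens by round $ i = \lceil \log(m/((c+1)\log n)) \rceil $, the gap between the constants $ (c+1) $ and $ (c+2) $ (and between $ \log $ and $ \ln $) being absorbed into the ceiling --- the count has dropped below $ (c+2)\ln n $ and the loop exits via its \textbf{if} statement. Together with the hard cap of $ \lceil \log \rho \rceil $ rounds, the loop terminates after at most $ \min\{\lceil \log \rho \rceil, \lceil \log(m/((c+1)\log n)) \rceil\} $ rounds, which is the first claim (this upper bound on the round count is exactly what the downstream running-time and size analyses use; matching concentration from below, in the regime where the counts are still large, gives the same value up to constants). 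For the sizes, let $ i $ be the terminating index. If the loop broke early, then $ m_i = |G_i| < (c+2)\ln n = O(c\log n) $; if it ran all $ k $ rounds with the threshold never crossed, then $ m_i \leq m/2^k \leq m/\rho $ since $ 2^k \geq \rho $; and if the threshold was crossed only after entering the ``uncontrolled'' regime $ m_{i-1} < 24(c+1)\ln n $, monotonicity $ m_i \leq m_{i-1} $ gives $ |G_i| < 24(c+1)\ln n = O(c\log n) $. In every case $ |G_i| = O(m/\rho + c\log n) \leq \max\{O(c\log n), O(m/\rho)\} $, which is the bound on the third output, and since $ H = \bigcup_{1 \leq j \leq i} B_j \cup G_i $ we get $ |H| \leq \sum_{1 \leq j \leq i} |B_j| + |G_i| = O\bigl(\sum_{1 \leq j \leq i} |B_j| + m/\rho + c\log n\bigr) $.

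The only delicate point --- more bookkeeping than obstacle --- is keeping the constants consistent: the break condition is written with $ (c+2)\ln n $ while the lemma is phrased with $ (c+1)\log n $, and the per-round failure probability must be pushed below $ n^{-(c+1)} $ (rather than $ n^{-c} $) so that the union bound over the $ \leq 2\log n $ rounds still delivers the advertised $ 1 - 2n^{-c} $; choosing the Chernoff threshold constant ($ 24 $ above) generously takes care of both. No idea beyond those already used in the spectral version is required, which is why the paper simply defers to that proof.
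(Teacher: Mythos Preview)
Your proposal is correct and follows the same approach as the paper, which explicitly defers to the proof of the spectral analogue (\Cref{lem:spectral sparsifier size}): apply a Chernoff bound to show each round halves the non-bundle edge count with probability $\geq 1 - n^{-(c+1)}$, then union bound over the at most $O(\log n)$ rounds. Your bookkeeping is in fact tidier than the paper's (you handle the case $N_i < m_{i-1}/2$ explicitly and get the constants right), but there is no substantive difference in method.
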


\subsection{Dynamic Cut Sparsifier}\label{sec:dynamic cut sparsifier}

We now explain how to implement the cut sparsifier algorithm of \Cref{sec:cut sparsifier general} dynamically.
The main building block of our algorithm is a fully dynamic algorithm for maintaining a spanning forest with polylogarithmic update time.
We either use an algorithm with worst-case update time, or a slightly faster algorithm with amortized update time.
In both algorithms, an insertion might join two subtrees of the forest and after a deletion the forest is repaired by trying to find a single replacement edge.
This strongly bounds the number of changes in the forest after each update.

\begin{theorem}[\cite{KapronKM13,GibbKKT15}]\label{thm:worst case connectivity}
	There is a fully dynamic deterministic algorithm for maintaining a spanning forest $ T $ of an undirected graph $ G $ with worst-case update time $ O (\log^4{n}) $.
	Every time an edge $ e $ is inserted into $ G $, the only potential change to $ T $ is the insertion of $ e $.
	Every time an edge $ e $ is deleted from $ G $, the only potential change to $ T $ is the removal of $ e $ and possibly the addition of at most one other edge to $ T $.
	The algorithm is correct with high probability against an oblivious adversary.
\end{theorem}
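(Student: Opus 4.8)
The statement is quoted as a black box from the worst-case dynamic connectivity data structure of Kapron, King, and Mountjoy, together with the refinement of Gibb, Kapron, King, and Thorup, so rather than reprove it I would sketch why such a structure exists and, in particular, why it changes only $O(1)$ edges of the forest per update. The plan is to store the spanning forest $T$ inside an Euler-tour-tree (ETT) structure, which supports \textsc{link}, \textsc{cut}, connectivity queries, marking of tree edges, and subtree aggregate queries, each in $O(\log n)$ worst-case time, and to reduce each graph update to a single ``find one replacement edge across a cut'' primitive.

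First I would dispatch the easy cases, which already give the $O(1)$-change guarantee. On inserting $e=(u,v)$: query whether $u,v$ are connected in $T$; if not, \textsc{link} them via $e$ (the only possible change to $T$); if so, leave $T$ untouched and file $e$ as a non-tree edge. Deleting a non-tree edge changes nothing. The only nontrivial case is deletion of a tree edge $e=(u,v)$, which splits $T$ into $T_u\ni u$ and $T_v\ni v$; here we must decide whether $G-e$ reconnects the two pieces and, if so, produce exactly one replacement edge $f$ crossing the cut $(T_u,V\setminus T_u)$ and \textsc{link} along it. This removes exactly one edge and adds at most one, matching the claim; the high-probability and worst-case claims then reduce entirely to implementing the ``find one crossing edge'' primitive.

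The main obstacle — and the entire substance of the cited result — is implementing that primitive in \emph{worst-case} polylogarithmic time, bypassing the amortization inherent in the Holm--de Lichtenberg--Thorup edge-level scheme. The idea is to give each edge $(a,b)$ a random $O(\log n)$-bit name and to maintain, for each vertex $v$ and each of $O(\log n)$ independently sub-sampled copies of the edge set at geometrically decreasing rates, the XOR of the names of the incident edges, stored as a subtree-summable label in the ETT. For a subtree $S$, the XOR of these labels over $v\in S$ cancels every edge with both endpoints in $S$ and leaves the XOR of the names of the edges crossing $\partial S$; across the $O(\log n)$ sampling rates, with high probability at least one copy has exactly one surviving crossing edge, whose name can then be read off directly. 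Running this over an $O(\log n)$-level reconnection hierarchy, each level querying $O(\log n)$ ETT subtree sums of cost $O(\log n)$ each, yields the $O(\log^4 n)$ bound (the GibbKKT analysis shaves one logarithm). Since the random names and sampling choices are fixed in advance, an oblivious adversary cannot force all $O(\log n)$ sampling rates to fail on any one of the at most $\poly(n)$ updates, so a union bound gives correctness with probability $1-n^{-c}$; restarting the structure on a fixed polynomial horizon (or periodic rebuild) maintains this guarantee indefinitely.
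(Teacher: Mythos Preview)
The paper does not prove this theorem at all; it is stated purely as a cited black-box result from \cite{KapronKM13,GibbKKT15} and used as a primitive in the dynamic cut sparsifier construction. Your decision to treat it as a citation and supply only a sketch is therefore exactly in line with the paper's treatment, and the sketch you give (Euler-tour trees plus the XOR-based cut-edge sketch over geometric sampling levels) is an accurate summary of the Kapron--King--Mountjoy construction and correctly explains the $O(1)$-change-per-update property the paper relies on.
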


\begin{theorem}[\cite{HolmLT01}]\label{thm:amortized connectivity}
	There is a fully dynamic deterministic algorithm for maintaining a minimum spanning forest $ T $ of a weighted undirected graph $ G $ with amortized update time $ O (\log^2{n}) $.
	Every time an edge $ e $ is inserted into $ G $, the only potential change to $ T $ is the insertion of $ e $.
	Every time an edge $ e $ is deleted from $ G $, the only potential change to $ T $ is the removal of $ e $ and possibly the addition of at most one other edge to $ T $.
\end{theorem}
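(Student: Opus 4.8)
The statement is the classical fully dynamic minimum spanning forest data structure of Holm, de Lichtenberg, and Thorup, so the ``proof'' is really an invocation of \cite{HolmLT01}; what I would do here is recall their construction and point out why it delivers the two structural guarantees we rely on, namely that an insertion changes $T$ by at most inserting the new edge, and a deletion changes $T$ by at most removing one edge and adding one replacement.

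The plan is to store the forest $T$ in an Euler-tour-tree structure supporting link, cut, ``are $u,v$ in the same tree'', reporting the size of a subtree, and finding a marked non-tree edge inside a subtree, each in $O(\log n)$ time, and to give every edge a \emph{level} $\ell(e)\in\{0,\dots,\lfloor\log_2 n\rfloor\}$, initialized to $0$ and only ever increased. The two invariants to maintain are: (i) $T$ is a minimum spanning forest of $G$; and (ii) for each $i$, every connected component of the subgraph of level-$\ge i$ edges has at most $n/2^i$ vertices, and the level-$\ge i$ edges of $T$ form a spanning forest of that subgraph. Invariant (ii) is the amortization potential: an edge is promoted at most $\lfloor\log_2 n\rfloor$ times over the whole run.

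I would then handle updates as follows. On inserting $e=(u,v)$: if $u,v$ lie in different trees, link $e$ into $T$ — this is the only change; otherwise keep $e$ as a level-$0$ non-tree edge and change nothing. (For fully weighted MSF one would in addition swap out the heaviest edge on the $u$–$v$ tree path when $e$ is lighter; in the way this routine is actually used in the paper — one unweighted spanning-forest instance per weight class — that case never arises, so the ``only change is the insertion of $e$'' guarantee holds.) Deleting a non-tree edge changes nothing. Deleting a tree edge $e$ of level $\ell$: cut it, splitting a tree into $T_1,T_2$ with $|T_1|\le|T_2|$; then for $i=\ell$ down to $0$, scan the level-$i$ non-tree edges meeting $T_1$ — each such edge with both endpoints in $T_1$ is promoted to level $i+1$ (allowed by (ii), since $|T_1|\le n/2^i$), and the first edge found crossing from $T_1$ to $T_2$ is linked in as the single replacement; if the scan exhausts without success, the component genuinely breaks. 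Hence a deletion changes $T$ by at most removing $e$ and adding one edge, exactly as stated.

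For the cost: each update performs $O(\log n)$ structural dynamic-tree operations at $O(\log n)$ each, plus — only on a deletion — work proportional to the number of non-tree edges scanned; by invariant (ii) the number of promotions over any update sequence is $O(m\log n)$, each paid for by $O(\log n)$ of tree operations, so the amortized cost is $O(\log^2 n)$ per update. (For general weighted MSF the corresponding figure degrades to $O(\log^4 n)$; the applications in this paper only ever need the $O(\log^2 n)$ unweighted-spanning-forest bound, applied separately per weight class.) The part of the argument that actually needs care is verifying that each promotion step restores invariant (ii) and that the scan returns a minimum-weight reconnecting edge; everything else is mechanical manipulation of the Euler-tour-tree primitives.
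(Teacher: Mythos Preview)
The paper does not prove this theorem at all; it is quoted verbatim as a black-box result from \cite{HolmLT01} and used off the shelf. So there is no ``paper's proof'' to compare against.

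Your sketch is a faithful outline of the Holm--de Lichtenberg--Thorup level-based construction, and you have in fact caught a subtlety the paper glosses over: the $O(\log^2 n)$ amortized bound in \cite{HolmLT01} is for maintaining a spanning forest (or, in the decremental-only regime, a minimum spanning forest), whereas their fully dynamic \emph{minimum} spanning forest bound is $O(\log^4 n)$. Relatedly, the structural guarantee ``the only potential change on insertion is the insertion of $e$'' fails for genuine MSF maintenance, since a lighter inserted edge can trigger a swap that evicts the heaviest cycle edge. You correctly observe that none of this matters for the paper's purposes, because the theorem is only ever invoked on a single weight bucket at a time (so any spanning forest is a minimum one, and the swap case never arises). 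That is exactly the right reading.
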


We first explain how to use these algorithms in a straightforward way to maintain a $2$-MST.
Subsequently we show how to dynamically implement the procedures \textsc{Light-Cut-Sparsify} and \textsc{Cut-Sparsify}.
The overall algorithm will use multiple instances of a dynamic spanning forest algorithm, where outputs of one instance will be used as the input of the next instance.
We will do so in a strictly hierarchical manner which means that we can order the instances in a way such that the output of instance $ i $ only affects instances $ i+1 $ and above.
In this way it is guaranteed that the updates made to instance $ i $ are independent of the internal random choices of instance $ i $, which means that each instance $ i $ is running in the oblivious-adversary setting required for \Cref{thm:worst case connectivity}.

\subsubsection{Dynamic Maintenance of $2$-MST}

For every $ 0 \leq i \leq \lfloor \log W \rfloor $, let $ E_i $ be the set of edges of weight between $ 2^i $ and $ 2^{i+1} $, i.e., $ E_i = \{ e \in E \mid 2^i \leq \ww_G (e) < 2^{i+1} \} $, and run a separate instance of the dynamic spanning forest algorithm for the edges in $ E_i $.
For every $ 0 \leq i \leq \lfloor \log W \rfloor $, let $ F_i $ be the spanning forest of the edges in $ E_i $ maintained by the $i$-th instance.
We claim that the union of all these trees is a $2$-MST of $ G $.

\begin{lemma}
$ T = \bigcup_{i=0}^{\lfloor \log W \rfloor} F_i $ is a $2$-MST of $ G $.
\end{lemma}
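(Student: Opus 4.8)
The plan is to verify the defining property of a $2$-MST directly, edge by edge, and the key observation I would rely on is that a single dyadic weight class already certifies the factor $2$. Fix an arbitrary edge $e = (u,v)$ of $G$ and let $i$ be the index with $2^i \le \ww_G(e) < 2^{i+1}$, so that $e \in E_i$. Since $e$ itself connects $u$ and $v$ inside the subgraph $(V, E_i)$, these two vertices lie in a common connected component of $(V,E_i)$; because $F_i$ is a spanning forest of $(V,E_i)$, it contains a path $\pi$ from $u$ to $v$, and $\pi \subseteq F_i \subseteq T$. This path will be the witness required by the definition of an $\alpha$-MST.

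It then remains to bound the weights along $\pi$. Every edge $f$ on $\pi$ lies in $E_i$, hence $\ww_G(f) \ge 2^i$. Combining this with $\ww_G(e) < 2^{i+1} = 2\cdot 2^i$ gives $\ww_G(e) < 2\,\ww_G(f)$, and in particular $\ww_G(e) \le 2\,\ww_G(f)$ for every edge $f$ of $\pi$. Since $e$ was arbitrary, this is precisely the statement that $T$ is a $2$-MST of $G$.

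I do not expect a genuine obstacle here; the steps that need a little care are purely bookkeeping. First, one should make sure the classes $E_i$ partition $E$ (after the standard normalization that rescales weights so the smallest lies in $[1,2)$), so that the index $i$ attached to $e$ is well defined and $e$ lands in exactly one forest-instance. Second, the only conceptual point worth stating explicitly is that one need \emph{not} look for a path that hops between several forests $F_j$: staying inside the single class containing $e$ suffices, because within one dyadic band any two edge weights differ by less than a factor of $2$. (The same argument, with $\beta^i \le \ww_G(e) < \beta^{i+1}$ in place of powers of $2$, would give a $\beta$-MST; taking $\beta = 2$ is what yields the claim.)
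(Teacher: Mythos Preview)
Your proof is correct and follows exactly the same approach as the paper: pick the weight class $i$ of $e$, use the path in $F_i$ connecting $u$ to $v$, and use the dyadic bound $\ww_G(e) < 2^{i+1} \le 2\ww_G(f)$. If anything, you are slightly more careful than the paper in noting that $F_i$ is only a spanning \emph{forest} of $(V,E_i)$ and explaining why $u$ and $v$ are nonetheless connected in it (because $e\in E_i$).
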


\begin{proof}
Consider some edge $ e = (u, v) $ of $ G $ and let $ i $ be the (unique) index such that $ 2^i \leq \ww_G (e) < 2^{i+1} $.
Since $ F_i $ is spanning tree of $ G $, there is a path $ \pi $ from $ u $ to $ v $ in $ F_i $ (and thus also in $ T $).
Every edge $ f $ of $ \pi $ is in the same weight class as $ e $, i.e., $ 2^i \leq \ww_G (f) < 2^{i+1} $.
Thus, $ \ww_G (e) < 2^{i+1} \leq 2 \ww_G (f) $ as desired.
\end{proof}

Every time an edge $ e $ is inserted or deleted, we determine the weight class $ i $ of $ e $ and perform the update in the $i$-th instance of the spanning forest algorithm.
This $2$-MST of size $ O (n \log{W}) $ can thus be maintained with the same asymptotic update time as the dynamic spanning forest algorithm.

We now show how to maintain a $t$-bundle $2$-MST and consequently a $ (1 \pm \epsilon) $-cut sparsifier $ H $ according to the construction presented in \Cref{sec:cut sparsifier general}.
For the $t$-bundle $2$-MST $ B = \bigcup_{1 \leq i \leq k} T_i $ we maintain, for every $ 1 \leq i \leq t $, a $2$-MST of $ G \setminus \bigcup_{j=1}^{i-1} T_j $.
We now analyze how changes to $ G \setminus \bigcup_{j=1}^{i-1} T_j $ affect $ G \setminus \bigcup_{j=1}^i T_j $ (for every $ 1 \leq i \leq k $):
\begin{itemize}
	\item Whenever an edge $ e $ is inserted into $ G \setminus \bigcup_{j=1}^{i-1} T_j $, the $2$-MST algorithm either adds $ e $ to $ T_i $ or not.
	\begin{itemize}
		\item If $ e $ is added to $ T_i $, then $ G \setminus \bigcup_{j=1}^i T_j $ does not change.
		\item If $ e $ is not added to $ T_i $, then $ e $ is added to $ G \setminus \bigcup_{j=1}^i T_j $.
	\end{itemize}
	\item Whenever an edge $ e $ is deleted from $ G \setminus \bigcup_{j=1}^{i-1} T_j $, either $ e $ is contained in $ T_i $ or not.
	\begin{itemize}
		\item If $ e $ is contained in $ T_i $, then $ e $ is removed from $ T_i $ and some other edge $ f $ is added to $ T_i $.
		This edge $ f $ is removed from $ G \setminus \bigcup_{j=1}^i T_j $.\footnote{The edge $ e $ will not be added to $ G \setminus \bigcup_{j=1}^i T_j $ because it is removed from both $ G \setminus \bigcup_{j=1}^{i-1} T_j $ and $ T_i $.}
		\item If $ e $ is not contained in $ T_i $, then $ e $ is removed from $ G \setminus \bigcup_{j=1}^i T_j $.
	\end{itemize}
\end{itemize}
Thus, every change to $ G \setminus \bigcup_{j=1}^{i-1} T_j $ results in at most one change to $ G \setminus \bigcup_{j=1}^i T_j $.
Consequently, a single update to~$ G $ results to at most one update in each instance of the dynamic MST algorithm.
For every update in $ G $ we therefore incur an amortized update time of $ O (t \log^4{n}) $.
Thus, we can summarize the guarantees for maintaining a $t$-bundle $2$-MST as follows.

\begin{corollary}\label{lem:maintaining 2MST}
There are fully dynamic algorithms for maintaining a $t$-bundle $2$-MST $ B $ (where $ t \geq 1 $ is an integer) of size $ O (t n \log{W}) $ with worst-case update time $ O (t \log^4{n}) $ or amortized update time $ O (t \log^2{n}) $, respectively.
After every update in $ G $, the graph $ G \setminus B $ changes by at most one edge.
\end{corollary}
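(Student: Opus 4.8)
The plan is to realize the $t$-bundle $2$-MST as a chain of $t$ copies of the single-$2$-MST scheme described just above the corollary, and to show that each link of the chain forwards at most one edge change to the next, so that the whole construction costs only a factor $t$ over one dynamic spanning forest and the dependence on $W$ stays confined to the size bound.

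First I would set up, for each layer $1 \le i \le t$, a dynamic $2$-MST $T_i$ of the graph $G^{(i)} := G \setminus \bigcup_{j=1}^{i-1} T_j$ via the construction of the preceding lemma: partition the edges of $G^{(i)}$ into the $\lfloor \log W\rfloor + 1$ weight classes $E_\ell = \{ e : 2^\ell \le \ww_G(e) < 2^{\ell+1} \}$, run one instance of the dynamic spanning forest algorithm (\Cref{thm:worst case connectivity} or \Cref{thm:amortized connectivity}) on each class, and let $T_i$ be the union of the resulting forests. By that lemma $T_i$ is a $2$-MST of $G^{(i)}$, so $B = \bigcup_{i=1}^t T_i$ is a $t$-bundle $2$-MST by definition, and $G \setminus B = G^{(t+1)}$.

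The key step is the propagation bound: a single edge change to $G^{(i)}$ causes at most one edge change to $G^{(i+1)} = G^{(i)} \setminus T_i$. This is the case analysis already sketched in the text. An insertion of $e$ into $G^{(i)}$ is routed to the instance of the unique $E_\ell$ containing $e$; that instance either adds $e$ to $T_i$ (then $G^{(i+1)}$ is unchanged) or not (then $e$ enters $G^{(i+1)}$). A deletion of $e$ from $G^{(i)}$: if $e \notin T_i$ it is just removed from $G^{(i+1)}$; if $e \in T_i$, the spanning forest algorithm removes $e$ and adds at most one replacement edge $f$, which lies in the same weight class, hence is in $G^{(i)}$ and was not in $T_i$, so it was in $G^{(i+1)}$ and now leaves it, while $e$ — deleted from both $G^{(i)}$ and $T_i$ — never reappears in $G^{(i+1)}$. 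In every case at most one change is handed down. Inducting on $i$, starting from the single update to $G = G^{(1)}$, shows that each of the $t$ layers, and hence each spanning forest instance inside it, receives at most one update per update to $G$, and that $G \setminus B$ itself changes by at most one edge.

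For the running time, an update in any $G^{(i)}$ falls into exactly one weight class, so within a layer it invokes exactly one spanning forest instance, at cost $O(\log^4 n)$ worst-case or $O(\log^2 n)$ amortized; summing over the $t$ layers gives the stated bounds. For the size, each $T_i$ is a union of $\lfloor\log W\rfloor+1$ forests on $n$ vertices, i.e. $O(n \log W)$ edges, and there are $t$ layers, so $|B| = O(t n \log W)$. The one point needing care — and the only genuine obstacle — is that \Cref{thm:worst case connectivity} is guaranteed only against an oblivious adversary, whereas the update stream for instance $i$ is generated partly by other instances. This is handled by the strictly hierarchical layout: the output of layer $i$ feeds only layers $i+1,\dots,t$, so the sequence of updates seen by any given instance is a deterministic function of the original (oblivious) update sequence and of instances in strictly lower layers, none of which ever inspect that instance's random bits; hence each instance is indeed run against an oblivious adversary, and the high-probability guarantees compose as required.
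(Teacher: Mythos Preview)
Your proposal is correct and follows essentially the same approach as the paper: chain $t$ copies of the weight-class-partitioned dynamic spanning forest construction, carry out the same case analysis to show that each layer forwards at most one edge change to the next, and sum the per-layer costs. The paper's argument is laid out in the paragraphs immediately preceding the corollary, including the same oblivious-adversary justification via the strictly hierarchical layout.
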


\subsubsection{Dynamic Implementation of {\sc Light-Cut-Sparsify}}\label{sec:dynamization light cut sparsify}

For this algorithm we set $ t = (C_\xi + 2) d \alpha \epsilon^{-2} \log W \log^2 n $. 
Note that this value is slightly larger than the one proposed in \Cref{fig:half cut sparsify}.
For the sparsification proof in \Cref{sec:cut sparsifier general} we have to argue that by our choice of $ t $ certain events happen with high probability.
In the dynamic algorithm we need ensure the correctness for a polynomial number of versions of the graph, one version for each update made to the graph.
We show in \Cref{lem:arbitrarily long sequences} that it is sufficient to be correct for up to $ 4 n^2 $ updates to the graph, as then we can extend the algorithm to an arbitrarily long sequence of updates.
By making $ t $ slightly large than in the static proof of \Cref{sec:cut sparsifier general} all the desired events happen with high probability for all $ 4 n^2 $ versions of the graph by a union bound.

The first ingredient of the algorithm is to dynamically maintain a $t$-bundle $2$-MST $ B $ using the algorithm of \Cref{lem:maintaining 2MST} above.
We now explain how to maintain a graph $ H' $ -- with the intention that $ H' $ contains the sampled non-bundle edges of $ G \setminus B $ -- as follows:
After every update in $ G $ we first propagate the update to the algorithm for maintaining the $t$-bundle $2$-MST $ B $, possibly changing $ B $ to react to the update.
We then check whether the update in $ G $ and the change in $ B $ cause an update in the complement graph $ G \setminus B $.
\begin{itemize}
	\item Whenever an edge is inserted into $ G \setminus B $, it is added to $ H' $ with probability $ 1/4 $ and weight $ 4 \ww_G (e) $.
	\item Whenever an edge $ e $ is deleted from $ G \setminus B $, it is removed from $ H' $.
\end{itemize}
We now simply maintain the graph $ H $ as the union of $ B $ and $ H' $ and make it the first output of our algorithm; the second output is $ B $.

By the update rules above (and the increased value of $ t $ to accommodate for the increased number of events), this dynamic algorithm imitates the static algorithm of \Cref{fig:half cut sparsify} and for the resulting graph $ H $ we get the same guarantees as in \Cref{lem:half_cut}.
The update time of our dynamic version of \textsc{Light-Spectral-Sparsify} is $ O (t \log^4{n}) $ worst-case and $ O (t \log^2{n}) $ worst-case, as it is dominated by the time for maintaining the $t$-bundle $2$-MST $ B $.

As an additional property we get that with every update in $ G $ at most one change is performed to $ H' = H \setminus B $.
Furthermore, for all edges added to $ H' $ weights are always increased by the same factor.
Therefore the ratio between the largest and the smallest edge weight in $ H' $ will always be bounded by $ W $, which is the value of this quantity in $ G $ (before the first deletion).

\subsubsection{Dynamic Implementation of {\sc Cut-Sparsify}}\label{sec:dynamization cut sparsify}

We set $ k = \lceil \log{\min(\rho, m/((c+2) \log{n}))} \rceil $ and maintain $ k $ instances of the dynamic version of \textsc{Light-Cut-Sparsify} above, using the other parameters just like in the pseudo-code of \Cref{fig:alt cut sparsifier}.
By this choice of $ k $ we ensure that we do not have to check the breaking condition in the pseudo-code explicitly, which is more suited for a dynamic setting where the number of edges in the maintained subgraphs might grow and shrink.

We maintain the $ k $ graphs $ G_0, \ldots, G_k $, $ B_1, \ldots, B_k $, and $ H_1, \ldots, H_k $ as in the pseudocode.
For every $ 1 \leq i \leq k $ we maintain $ H_i $ and $ B_i $ as the two results of running the dynamic version of \textsc{Light-Cut-Sparsify} on $ G_{i-1} $ and maintain $ G_i $ as the graph $ H_i \setminus B_i $.

The output of our algorithm is the graph $ H = \bigcup_{i=1}^k B_i \cup G_k $.
Note that, by our choice of $ k $, $ G_k $ has at most $ \max (m / \rho, (c+2) \log{n}) $ edges.
Now by the same arguments as for the static case, $ H $ gives the same guarantees as in \Cref{lem:cut sparsifier correctness,lem:cut sparsifier size} for up to a polynomial number of updates (here at most $ 4 n^2 $) in the graph.

As argued above (for $ H' $ in \Cref{sec:dynamization light cut sparsify}), every update in $ G_{i-1} $ results in at most one change to $ G_i = H_i \setminus B_i $ for every $ 1 \leq i \leq k $.
By an inductive argument this means that every update in $ G $ results in at most one change to $ G_i $ for every $ 1 \leq i \leq k $.
As each instance of the dynamic \textsc{Light-Cut-Sparsify} algorithm has update time $ O (t \log^4{n}) $ worst-case or $ O (t \log^2{n}) $ amortized, this implies that our overall algorithm has update time $ O (k t \log^4{n}) $ or $ O (k t \log^2{n}) $, respectively.
Together with \Cref{lem:extending sparsifier to long sequence} in \Cref{sec:cut sparsifier general}, we have proved \Cref{thm:fully dynamic cut sparsifier} stated at the beginning of this section.

In \Cref{lem:additional properties of fully dynamic cut sparsifier} we additionally claim that for $ \rho = m $ we obtain a sparsifier with polylogarithmic arboricity.
This is true because the cut sparsifier $ H $ mainly consists of a collection of bundles, which in turn consists of a collection of trees.
In total, $ H $ consists of $ O (t k \log{W}) $ trees and $ O (c \log{n}) $ remaining edges in $ G_k $, each of which can be seen as a separate tree.
Furthermore we can maintain the collection of trees explicitly with appropriate data structures for storing them.

\subsection{Handling Arbitrarily Long Sequences of Updates}\label{lem:arbitrarily long sequences}

The high-probability guarantees of the algorithm above only holds for a polynomially bounded number of updates.
We now show how to extend it to an arbitrarily long sequence of updates providing the same asymptotic update time and size of the sparsifier.
We do this by concurrently running two instances of the dynamic algorithm that periodically take turns in being restarted, which is a fairly standard approach for such situations.
The only new aspect necessary for our purposes is that both instances explicitly maintain a sparsifier and when taking turns we cannot simply replace all the edges of one sparsifier with the edges of the other sparsifier as processing all these edges would violate the worst-case update time guarantee.
For this reason we exploit the decomposability of graph sparsifiers and maintain a `blend' of the two sparsifiers computed by the concurrent instances of the dynamic algorithm.
This step is not necessary for other dynamic problems such as connectivity where we only have to make sure that the query is delegated to the currently active instance. 

\begin{lemma}[Decomposability]\label{lem:decomposability cut sparsifier}
Let $ G = (V, E) $ be an undirected weighted graph, let $ E_1, \dots, E_k $ be a partition of the set of edges $E$, and let, for every $ 1 \leq i \leq k $, $ H_i $ be a $ (1 \pm \epsilon) $-cut sparsifier of $ G_i = (V, E_i) $.
Then $ H = \bigcup_{i=1}^k H_i $ is a $ (1 \pm \epsilon) $-cut sparsifier of $ G $.
\end{lemma}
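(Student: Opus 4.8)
The plan is to follow verbatim the structure of the spectral decomposability proof (\Cref{lem:decomposability spectral sparsifier}), restricted from arbitrary test vectors to cut-indicator vectors; the only ingredient needed is that the weight of a cut decomposes additively over any edge partition. Note that one cannot simply invoke \Cref{lem:decomposability spectral sparsifier} as a black box, since the $H_i$ are only assumed to be cut sparsifiers (not spectral ones), so a short self-contained argument is required — but it is of exactly the same flavor.

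First I would fix an arbitrary $U \subseteq V$. Since $E_1, \dots, E_k$ is a partition of $E$, the set $\partial_G(U)$ of edges of $G$ crossing $U$ is the disjoint union of the sets $\partial_{G_i}(U) = \partial_G(U) \cap E_i$, and hence $\ww_G(\partial_G(U)) = \sum_{i=1}^k \ww_{G_i}(\partial_{G_i}(U))$. On the sparsifier side, each $H_i$ is a subgraph of $G_i = (V, E_i)$, so the edge sets of $H_1, \dots, H_k$ are pairwise disjoint, the union $H = \bigcup_{i=1}^k H_i$ is a well-defined weighted graph on $V$, and the same reasoning gives $\ww_H(\partial_H(U)) = \sum_{i=1}^k \ww_{H_i}(\partial_{H_i}(U))$.

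Next I would apply the defining inequality of each cut sparsifier, $(1-\epsilon)\,\ww_{H_i}(\partial_{H_i}(U)) \le \ww_{G_i}(\partial_{G_i}(U)) \le (1+\epsilon)\,\ww_{H_i}(\partial_{H_i}(U))$, and sum over $i=1,\dots,k$; the two additive identities from the previous step then turn this into $(1-\epsilon)\,\ww_H(\partial_H(U)) \le \ww_G(\partial_G(U)) \le (1+\epsilon)\,\ww_H(\partial_H(U))$. Since $U$ was arbitrary, $H$ is a $(1\pm\epsilon)$-cut sparsifier of $G$. There is essentially no obstacle here: the only point requiring any care is the bookkeeping that the $H_i$ have pairwise disjoint edge sets (so that $\bigcup_i H_i$ is unambiguous) and that the crossing edges of a disjoint union of graphs sharing the vertex set $V$ are the disjoint union of the respective crossing-edge sets — both immediate from the partition hypothesis.
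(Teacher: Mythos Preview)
Your proposal is correct and follows essentially the same approach as the paper: fix a cut $U$, use the edge partition to decompose $\ww_G(\partial_G(U))$ and $\ww_H(\partial_H(U))$ additively over the $i$'s, apply the per-$i$ sparsifier inequality, and sum. The paper's proof is identical in structure, differing only in notation.
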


\begin{proof}
Let $ U $ be a cut in $ G $.
First observe that
\begin{equation*}
\ww_G (\partial_G (U)) = \ww_G (\bigcup_{i=1}^k \partial_{G_i} (U)) = \sum_{i=1}^k \ww_G (\partial_{G_i} (U)) = \sum_{i=1}^k \ww_{G_i} (\partial_{G_i} (U))
\end{equation*}
and similarly $ \ww_H (\partial_H (U)) = \sum_{i=1}^k \ww_{H_i} (\partial_{H_i} (U)) $.
Now since
\begin{equation*}
	(1 - \epsilon) \ww_{H_i} (\partial_{H_i} (U)) \leq \ww_{G_i} (\partial_{G_i} (U)) \leq (1 + \epsilon) \ww_{H_i} (\partial_{H_i} (U))
\end{equation*}
for every $ 1 \leq i \leq k $, we have
\begin{multline*}
	(1 - \epsilon) \ww_H (\partial_H (U)) = (1 - \epsilon) \sum_{i=1}^k \ww_{H_i} (\partial_{H_i} (U)) \leq  \sum_{i=1}^k \ww_{G_i} (\partial_{G_i} (U)) = \ww_G (\partial_G (U)) \\ = \dots \leq (1 + \epsilon) \ww_H (\partial_H (U)) \, .
\end{multline*}
\end{proof}

\begin{lemma}\label{lem:extending sparsifier to long sequence}
Assume there is a fully dynamic algorithm for maintaining a $ (1 \pm \epsilon) $-cut (spectral) sparsifier of size at most $ S (m, n, W) $ with worst-case update time $ T (m, n, W) $ for up to $ 4 n^2 $ updates in $ G $.
Then there also is a fully dynamic algorithm for maintaining a $ (1 \pm \epsilon) $-cut (spectral) sparsifier of size at most $ O (S (m, n, W)) $ with worst-case update time $ O (T (m, n, W)) $ for an arbitrary number of updates.
\end{lemma}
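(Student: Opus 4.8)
The plan is to run two instances $\mathcal{A}_0$ and $\mathcal{A}_1$ of the given algorithm, each responsible for a dynamically changing subgraph, such that at every point in time the two subgraphs partition $E$ and therefore (by \Cref{lem:decomposability cut sparsifier}, resp.\ \Cref{lem:decomposability spectral sparsifier}) the union of the two maintained sparsifiers is a $(1\pm\epsilon)$-sparsifier of $G$. We organize time into \emph{phases} of length $n^2$ updates. During a phase, one instance is \emph{active} and the other is being \emph{rebuilt}; their roles swap at the end of each phase. The active instance holds all edges currently assigned to it plus absorbs every update to $G$ in the usual way; the instance being rebuilt is gradually fed the edges of $G$ that currently belong to the active instance, a bounded number per update step, so that after $n^2$ steps it has been initialized from scratch on exactly the current edge set it should own, at which point the roles swap. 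Concretely, over a phase we move all edges from the retiring instance into the fresh instance spread out over the $n^2$ steps, moving $O(m/n^2 + 1) = O(1)$ edges per step if $m = O(n^2)$ (and in general $O(m/n^2)$, which is absorbed since $m$ counts the max number of edges ever present; one can also simply cap a phase length at $m$ and move $O(1)$ edges per step).

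The key steps, in order, are: (i) Maintain the invariant that the edge sets of $\mathcal{A}_0$ and $\mathcal{A}_1$ partition $E$ at all times — each edge insertion to $G$ goes to the active instance, each deletion is forwarded to whichever instance holds that edge, and each "migration" of an edge during a rebuild is a deletion from one instance followed by an insertion into the other. (ii) Observe that each instance is only ever asked to process at most $O(n^2)$ updates between consecutive restarts: a phase has $n^2$ genuine updates to $G$ plus at most $m = O(n^2)$ migrations in and $O(n^2)$ migrations out, so at most $4n^2$ updates total, which is exactly the regime where the hypothesized algorithm is correct w.h.p. (iii) Output the blend $H = H_0 \cup H_1$ where $H_b$ is the sparsifier maintained by $\mathcal{A}_b$; by decomposability this is a $(1\pm\epsilon)$-sparsifier of $G$ of size $O(S(m,n,W))$. (iv) Bound the worst-case update time: each of the $O(1)$ migrations per step plus the one genuine update costs $O(T(m,n,W))$ per instance, and we touch at most two instances, so the per-step cost is $O(T(m,n,W))$; crucially, because the hypothesized algorithm reports its sparsifier \emph{incrementally} (only the $O(1)$-per-update changes to $H_b$), swapping roles does not require rewriting all of $H$ — the stale half $H_b$ of the blend was already being maintained edge-by-edge as the rebuild progressed, so it is fully current the instant the swap happens. (v) Re-randomize: since each rebuild draws fresh randomness and the oblivious adversary fixes its update sequence in advance, independence of the coin flips from the input holds for each fresh instance, so the w.h.p.\ correctness carries over.

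The main obstacle — and the only place where genuine care beyond the standard two-instance trick is needed — is step (iv): ensuring that the transition at a phase boundary does not incur a one-step cost proportional to $|H|$. A naive implementation would, at the swap, replace the entire stale half of the output blend, which is $\Theta(S(m,n,W))$ edges and violates the worst-case bound. The resolution is exactly the one flagged in the paper's overview: because both instances explicitly and incrementally maintain their sparsifiers throughout their lifetimes (not just at query time), the half $H_b$ that becomes "the frozen contribution to the blend" at a swap has in fact been kept up to date all along during the rebuild, so at the swap instant there is nothing to recompute — we simply start forwarding genuine updates of $G$ to the newly-active instance. One must double-check that during a rebuild the instance-under-construction never produces an inconsistent intermediate sparsifier that would corrupt the blend; this is handled by only incorporating $\mathcal{A}_{1-b}$'s output into the public blend once its rebuild phase is complete, keeping $\mathcal{A}_b$'s (complete, correct) output in the blend until the swap. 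For cut sparsifiers this is purely combinatorial; for the spectral case (the analogous statement for \Cref{lem:decremental to fully dynamic}) the same argument applies verbatim using \Cref{lem:decomposability spectral sparsifier}, noting that spectral sparsifier quality is also preserved under edge-disjoint union.
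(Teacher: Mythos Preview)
Your approach matches the paper's: two instances partitioning $E$, phases of length $n^2$, migrating edges between them, and outputting the blend $H_0 \cup H_1$ via decomposability. Steps (i)--(v) are essentially correct, though you route genuine insertions to the \emph{retiring} instance while the paper routes them to the \emph{fresh} one; your variant still works in spirit but forces the migration rate above one edge per step to guarantee the retiring instance empties, which pushes the per-instance update count above the stated $4n^2$ bound (it is still $O(n^2)$, but the hypothesis fixes the constant exactly).

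The real issue is your final paragraph, which contradicts step (iv) and would break the proof if adopted. You propose withholding the rebuilding instance's output $H_{1-b}$ from the public blend until its phase completes, keeping only $H_b$ visible. But during the phase edges are being migrated out of $G_b$ into $G_{1-b}$, so $H_b$ alone is a sparsifier of $G_b \subsetneq G$, not of $G$ --- the output would be incorrect throughout the phase. Moreover, swapping in all of $H_{1-b}$ at the phase boundary costs $\Theta(S(m,n,W))$ changes to the output in a single step, exactly the worst-case blowup you set out to avoid. The paper's resolution is the one you already had in (iv): keep both $H_0$ and $H_1$ in the blend at all times. There is no ``inconsistent intermediate sparsifier'' to guard against --- the fresh instance starts empty (trivially a sparsifier of the empty graph) and after each migration or update remains a valid sparsifier of whatever edge set it currently holds, so by decomposability $H_0 \cup H_1$ is always a $(1\pm\epsilon)$-sparsifier of $G_0 \cup G_1 = G$, and the phase boundary is a no-op for the output.
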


\begin{proof}
We exploit the decomposability of cut sparsifiers.
We maintain a partition of $ G $ into two disjoint subgraphs $ G_1 $ and $ G_2 $ and run two instances $ A_1 $ and $ A_2 $ of the dynamic algorithm on $ G_1 $ and $ G_2 $, respectively.
These two algorithms maintain a $ (1 \pm \epsilon) $-sparsifier of $ H_1 $ of $ G_1 $ and a $ (1 \pm \epsilon) $-sparsifier $ H_2 $ of $ G_2 $.
By the decomposability stated in \Cref{lem:decomposability cut sparsifier,lem:decomposability spectral sparsifier}, the union $ H \defeq H_1 \cup H_2 $ is a $ (1 \pm \epsilon) $-sparsifier of $ G = G_1 \cup G_2 $.

We divide the sequence of updates into phases of length $ n^2 $ each.
In each phase of updates one of the two instances $ A_1 $, $ A_2 $ is in the state \emph{growing} and the other one is in the state \emph{shrinking}.
$ A_1 $ and $ A_2 $ switch their states at the end of each phase.
In the following we describe the algorithm's actions during one phase.
Assume without loss of generality that, in the phase we are fixing, $ A_1 $ is growing and $ A_2 $ is shrinking.

At the beginning of the phase we restart the growing instance $ A_1 $.
We will orchestrate the algorithm in such a way that at the beginning of the phase $ G_1 $ is the empty graph and $ G_2 = G $.
After every update in $ G $ we execute the following steps:
\begin{enumerate}
	\item If the update was the insertion of some edge $ e $, then $ e $ is added to the graph $ G_1 $ and this insertion is propagated to the \emph{growing} instance $ A_1 $.
	\item If the update was the deletion of some edge $ e $, then $ e $ is removed from the graph $ G_i $ it is contained in and this deletion is propagated to the corresponding instance $ A_i $.
	\item In addition to processing the update in $ G $, if $ G_2 $ is non-empty, then one arbitrary edge~$ e $ is first removed from $ G_2 $ and deleted from instance $ A_2 $ and then added to $ G_1 $ and inserted into instance $ A_1 $.
\end{enumerate}
Observe that these rules indeed guarantee that $ G_1 $ and $ G_2 $ are disjoint and together contain all edges of $ G $.
Furthermore, since the graph $ G_2 $ of the shrinking instance has at most $ n^2 $ edges at the beginning of the phase, the length of $ n^2 $ updates per phase guarantees that $ G_2 $ is empty at the end of the phase.
Thus, the growing instance always starts with an empty graph $ G_1 $.

As both $ H_1 $ and $ H_2 $ have size at most $ S (n, m, W) $, the size of $ H = H_1 \cup H_2 $ is $ O (S (n, m, W)) $.
With every update in $ G $ we perform at most $ 2 $ updates in each of $ A_1 $ and $ A_2 $.
It follows that the worst-case update time of our overall algorithm is $ O (T (m, n, W)) $.
Furthermore since each of the instances $ A_1 $ and $ A_2 $ is restarted every other phase, each instance of the dynamic algorithm sees at most $ 4 n^2 $ updates before it is restarted.
\end{proof}

\section{Application of Dynamic Cut Sparsifier: Undirected Bipartite Min-Cut}
\label{sec:dynamic min cut}

We now utilize our sparsifier data structure to  maintain a
$(2 + \epsilon)$-approximate $st$-min-cut in amortized
$O(\poly(\log{n}, \epsilon^{-1}))$ time per update.
In this section, we will define several tools that are crucial
for the better analyses in Sections~\ref{sec:vertSparsify}
and~\ref{sec:onePlusEpsilon}.

This result is a weaker form of Theorem~\ref{thm:mainBipartite}
with an approximation factor of $2 + \epsilon$ instead of $1 + \epsilon$.
The main result that we will show in this section is:

\begin{theorem}\label{thm:dynamic min cut approximation}
For every $ 0 < \epsilon \leq 1 $, there is a fully dynamic algorithm for
maintaining a $(2+\epsilon)$-approximate minimum cut in an unweighted
undirected graph that's a bipartite graph with source/sink $s$ and $t$ attached
to each of the partitions with amortized update time $ O(\poly(\log{n},\epsilon^{-1})) $.
\end{theorem}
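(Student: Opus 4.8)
The plan is to combine three dynamically maintained structures — the low-arboricity cut sparsifier of \Cref{lem:additional properties of fully dynamic cut sparsifier}, a small vertex cover, and a ``vertex-eliminated'' graph supported on that cover — and then to recompute the actual cut only rarely, amortizing its cost against the many updates over which a fixed cut stays near-optimal.

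First, run the fully dynamic cut sparsifier of \Cref{lem:additional properties of fully dynamic cut sparsifier} on $G$, obtaining a $(1\pm\epsilon/8)$-cut sparsifier $H$ of arboricity $k = \poly(\log n,\epsilon^{-1})$, maintained as a union of $k$ forests with the guarantee that each update to $G$ changes at most one edge per forest. Using the low arboricity of $H$ — in the style of the arboricity-sensitive matching data structures of \cite{PelegS16} — maintain a vertex cover $\VC \supseteq \{s,t\}$ of $G$ together with the property that every vertex of $X \defeq V\setminus\VC$ has degree $\poly(\log n,\epsilon^{-1})$; by \Cref{lem:mvcOPT} the minimum vertex cover of $G$ has size at most $\OPT+2$, so $\abs{\VC} = O(\OPT\cdot\poly(\log n,\epsilon^{-1}))$. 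Throughout we keep a constant-factor running estimate of $\OPT$ (e.g.\ from the value of the currently stored approximate cut), refreshed at each recomputation below, and we fall back to recomputing from scratch — which is cheap — when $\OPT$ is tiny.

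Second, form the eliminated graph $G_{\VC}$. Since $\VC$ is a vertex cover, $X$ is independent and every $v\in X$ has all its neighbors in $\VC$ (recall $s,t\in\VC$), so all vertices of $X$ can be eliminated in parallel: replace each $v\in X$ by a small gadget on its $O(\poly(\log n,\epsilon^{-1}))$ neighbors — the analogue of eliminating a vertex when solving Laplacian systems \cite{KyngLPSS16} — so that $G_{\VC}$ is supported on $\{s,t\}\cup\VC$, has $O(\OPT\cdot\poly(\log n,\epsilon^{-1}))$ edges, its $st$-min-cut is within a factor $2$ of that of $G$, and from any cut of $G_{\VC}$ one recovers an $st$-cut of $G$ of no larger value by placing each eliminated $v$ on the side of $\VC$ to which it is more strongly connected, the reduction's extra distortion being absorbed into the same factor $2$. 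Because each update to $G$ changes $O(k)$ edges of $H$ and hence $O(\poly(\log n,\epsilon^{-1}))$ gadget edges of $G_{\VC}$, we can run the fully dynamic cut sparsifier of \Cref{thm:fully dynamic cut sparsifier} on $G_{\VC}$ itself, obtaining a $(1\pm\epsilon/8)$-cut sparsifier $H_{\VC}$ of $G_{\VC}$ with $O(\OPT\cdot\poly(\log n,\epsilon^{-1}))$ edges, all in $\poly(\log n,\epsilon^{-1})$ amortized time per update to $G$. Chaining the two $(1\pm\epsilon/8)$ errors with the factor-$2$ reduction, the $st$-min-cut value of $H_{\VC}$ is a $(2+\epsilon/2)$-approximation of $\OPT$ for suitable constants.

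Third, recompute the cut only every $\tau\defeq\lfloor\tfrac{\epsilon}{2}\OPT\rfloor$ updates, rebuilding $\VC$, $G_{\VC}$, and $H_{\VC}$ at each recomputation so they stay consistent. At a recomputation we run a static $(1+\tfrac{\epsilon}{8})$-approximate maximum-flow/minimum-cut algorithm on $H_{\VC}$, which has $O(\OPT\cdot\poly(\log n,\epsilon^{-1}))$ edges and hence takes $O(\OPT\cdot\poly(\log n,\epsilon^{-1}))$ time, and lift the resulting cut back to an $st$-cut of $G$ of value at most $(2+\tfrac{\epsilon}{2})\OPT$. Since inserting or deleting a single edge changes the value of any fixed $st$-cut by at most $1$ and changes $\OPT$ by at most $1$, after $\tau$ further updates the stored cut still has value at most $(2+\tfrac{\epsilon}{2})\OPT + \tau \le (2+\epsilon)\OPT$, so it remains a $(2+\epsilon)$-approximation until the next recomputation. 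Recomputation contributes amortized $O(\OPT\cdot\poly(\log n,\epsilon^{-1}))/\tau = O(\poly(\log n,\epsilon^{-1}))$ per update, and adding the $\poly(\log n,\epsilon^{-1})$ per-update cost of maintaining $H$, $\VC$, $G_{\VC}$, and $H_{\VC}$ gives the claimed amortized bound; the case $\OPT=0$ is excluded by assumption, and very small $\OPT$ (where $\tau=0$) is handled by recomputing every step at cost $O(\poly(\log n,\epsilon^{-1}))$.

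I expect the main obstacle to be the vertex-elimination step of the second paragraph: designing a gadget that (i) keeps $G_{\VC}$ of size $O(\abs{\VC}\cdot\poly(\log n,\epsilon^{-1}))$ — which is precisely why one first pre-processes $G$ so that every non-cover vertex has polylogarithmic degree, in the spirit of the weighted-expander constructions of \cite{KyngLPSS16} — (ii) changes by only $\poly(\log n,\epsilon^{-1})$ edges per update to $G$, and (iii) provably distorts the $st$-min-cut by at most a factor $2$ in \emph{both} directions (contraction of $G$ to $G_{\VC}$, and lifting a cut back). A secondary technical point is keeping $\VC$ and its degree property coherent as $\abs{\VC}$ and the target size parameters fluctuate with $\OPT$; this is sidestepped by rebuilding $\VC$, $G_{\VC}$, and $H_{\VC}$ at each periodic recomputation and by maintaining the constant-factor estimate of $\OPT$ needed to set $\tau$ and all size parameters correctly.
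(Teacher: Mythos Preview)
Your approach matches the paper's: sparsify $G$ to a low-arboricity $\tilde G$, take a small vertex cover $\VC$ of $\tilde G$ whose complement has bounded degree, replace each non-cover vertex $x$ by a gadget on $N(x)$ to get $\tilde G_{\VC}$, sparsify that down to $O(\lvert\VC\rvert\cdot\poly(\log n,\epsilon^{-1}))$ edges, and recompute a static approximate min-cut every $\Theta(\epsilon\,\OPT)$ updates. The gadget you flag as the main obstacle is exactly the Schur-complement clique $K_x$ of \Cref{def:kX}, and its two-sided factor-$2$ guarantee is \Cref{thm:schurComplement}.

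There is one genuine gap. You propose to ``sidestep'' the maintenance of $\VC$ (and of $G_{\VC}$, $H_{\VC}$) by rebuilding them at each recomputation. That does not amortize: $\tilde G$ has $\Theta(n\cdot\poly(\log n,\epsilon^{-1}))$ edges (and $\tilde G_{\VC}$ has $\Theta(n\cdot\poly(\log n,\epsilon^{-1}))$ edges as well---not $O(\OPT\cdot\poly)$ as you wrote; only after the second sparsification do you get down to $O(\lvert\VC\rvert\cdot\poly)$), so any rebuild costs $\Theta(n\cdot\poly)$, and dividing by $\tau=\Theta(\epsilon\,\OPT)$ gives $\Theta(n/\OPT)$ amortized, which is not polylogarithmic. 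The paper does not rebuild; it maintains $\VC$ and $\tilde G_{\VC}$ fully dynamically. The specific choice of $\VC$ that makes this work is the \emph{branch vertex cover}: the set of all non-leaf vertices in the $k$ forests composing $\tilde G$. Each update to $G$ changes at most one edge per forest, hence moves $O(1)$ vertices per forest in or out of $\VC$; crucially, any vertex that leaves (or is outside) $\VC$ has degree at most $k$ in $\tilde G$, so inserting/removing its clique $K_x$ touches only $O(k^2)$ edges. This is the content of \Cref{lem:insertVC}, \Cref{lem:removeVC}, and \Cref{thm:dyn2Maintenance}, and it is what your paragraph~2 implicitly assumes but your last paragraph retracts.
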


To add motivation for solving this problem, we would like to point out that there are examples in which the maximum $s-t$ flow is much larger than the minimum vertex cover, and we cannot simply consider the problem as finding a maximum matching in $G_{A,B}$. Specifically, let $A = A_{k^2} \cup A_k$ and $B = B_{k^2} \cup B_k$, where $|A_k|,|B_k| = k$ and $|A_{k^2}|,|B_{k^2}| = k^2$, then construct a complete bipartite graph on $(A_{k^2},B_k), (A_k,B_k), (A_k,B_{k^2})$, while having no edges between $A_{k^2}$ and $B_{k^2}$. A vertex cover would be $A_k \cup B_k \cup \{s,t\}$, but we can achieve a max-flow in $G$ of $\Omega(k^2)$.

Accordingly, the objective cannot be approximated using matching routines even in the static case.
However,  the solution can still be approximated using recent developments in flow algorithms~\cite{Sherman13,KelnerLOS14,Peng16}.
Below we will show that these routines can be sped up on dynamic graphs using multiple layers of sparsification.
Specifically, the cut sparsifiers from Section~\ref{sec:dynamic cut sparsifier} allow us to
dynamically maintain a $(1 + \epsilon)$-approximation of the solution value,
as well as some form of query access to the minimum cut, in $O(\poly(\log{n},\epsilon^{-1}))$ per update.

The section is organized as follows.
Section~\ref{subsec:critcalCut} will give some of the high level ideas and critical observations on which our dynamic algorithm will hinge. Section~\ref{subsec:dyn2Algo} will present the dynamic algorithm for maintaining a $(2+\epsilon)$-approximate minimum $s-t$ cut, prove that the approximation factor is correct, and show that the dynamic update time is $ O(\poly(\log{n},\epsilon^{-1})) $ if we can dynamically update all data structures necessary for the algorithm in $O(\poly(\log{n},\epsilon^{-1}))$ time. Finally, Section~\ref{subsec:dataStructures} will present all of the necessary data structures and show how we can dynamically maintain them in $O(\poly(\log{n},\epsilon^{-1}))$ time.

\subsection{Key Observations and Definitions}
\label{subsec:critcalCut}

Our starting point is the observation that a small solution value
implies a small vertex cover.

\smallCover*

\begin{proof}
	Denote the minimum vertex cover as $\MVC$, and the minimum $s-t$ cut in $G$ as $(S,\bar{S})$ where $S = \{s\} \cup A_s \cup B_s$ and $\bar{S} = \{t\} \cup A_t \cup B_t$. Hence, we must have $\OPT \geq |A_t| + |B_s| + |E(A_s,B_t)|$ where $E(A_s,B_t)$ are all of the edges between $A_s$ and $B_t$.
	
	Let $V_A(A_s,B_t)$ denote all of the vertices in $A$ that are incident to an edge in $E(A_s,B_t)$, so $|V_A(A_s,B_t)| \leq |E(A_s,B_t)|$. We know $G_{A,B}$ is bipartite, so $A_t \cup B_s \cup V_A(A_s,B_t)$ must be a vertex cover in $G_{A,B}$, which implies $\abs{\MVC} \leq \OPT + 2$ by adding $s$ and $t$ to the cover.
	
\end{proof}

Our goal, for the rest of this section, is to show ways of
reducing the graph onto a small vertex cover, while
preserving the flow value.
The first issue that we encounter is that the minimum vertex
cover can also change during the updates.
However, in our case, the low arboricity property of the sparsifier
given in Corollary~\ref{lem:additional properties of fully dynamic cut sparsifier}
gives a more direct way of obtaining a small cover:

\begin{lemma}
	\label{lem:treeApprox}
	For any tree $T$, the vertex cover of all vertices other than the leaves is within a $2$-approximation of the minimum vertex cover.
	
\end{lemma}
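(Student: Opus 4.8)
Write $L$ for the set of leaves of $T$ and $C := V(T)\setminus L$ for the set of internal vertices; the goal is to show that $C$ is a vertex cover of $T$ and that $|C|\le 2\tau$, where $\tau$ is the size of a minimum vertex cover of $T$. I would first dispose of the degenerate cases: if $|V(T)|\le 2$ the statement is trivial (an isolated vertex needs no cover, and a single edge is covered by either of its endpoints), so assume $|V(T)|\ge 3$. In that case no two adjacent vertices can both have degree $1$ (they would form a connected component of size $2$), so every edge of $T$ has an internal endpoint, which already shows that $C$ is a vertex cover.

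For the size bound the plan is to compare $C$ with a carefully chosen minimum vertex cover $C^{\star}$. The first step is to argue that we may assume $C^{\star}$ contains no leaf: as long as some leaf $\ell\in C^{\star}$ has unique neighbour $v$, replacing $\ell$ by $v$ keeps a vertex cover of no larger size, and since $|V(T)|\ge 3$ the vertex $v$ is not a leaf, so this exchange strictly decreases the number of leaves in the cover and the process terminates. Hence we may take $C^{\star}\subseteq C$, and we set $D := C\setminus C^{\star}$, the internal vertices that $C^{\star}$ misses.

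The heart of the argument is then a one-line double count of the edges running between $D$ and $C^{\star}$. Since $C^{\star}$ is a vertex cover and $D$ avoids it, $D$ is an independent set and every $T$-neighbour of a vertex of $D$ lies in $C^{\star}$; moreover each $d\in D$ is internal, hence $\deg_T(d)\ge 2$. Therefore the bipartite subgraph of $T$ on the disjoint parts $D$ and $C^{\star}$ (with all $T$-edges between them) has at least $2|D|$ edges, while being a subforest of the tree $T$ it has at most $|D|+|C^{\star}|-1$ edges; comparing the two bounds gives $|D|\le |C^{\star}|-1$, and so $|C| = |C^{\star}|+|D| \le 2|C^{\star}|-1 < 2\tau$. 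I expect the only genuinely fiddly point to be the leaf-exchange reduction of the first step — checking that it is valid and terminates, and that it together with the small-tree boundary cases is handled cleanly; once $C^{\star}$ has been made leaf-free, the counting step is immediate.
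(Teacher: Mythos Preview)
Your argument is correct and actually cleaner than the paper's. The paper proceeds by invoking the known fact (cited from \cite{GuptaS09,DabneyDH09}) that iteratively taking the parents of the current leaves and deleting both the leaves and their parents yields an \emph{optimal} vertex cover $\bigcup_i p(T_i)$; it then observes that the set of internal vertices equals $p(T_d)\cup\bigcup_i\bigl(p(T_i)\cup l(T_{i+1})\bigr)$ and proves $|l(T_{i+1})|\le |p(T_i)|$ layer by layer to get the factor~$2$. Your route avoids the external optimality result entirely: after the leaf-exchange reduction you get $C^\star\subseteq C$ directly, and the single inequality $2|D|\le |D|+|C^\star|-1$ coming from ``minimum degree $\ge 2$ on one side'' versus ``subforest edge count'' finishes it. This is more self-contained, and it even yields the slightly sharper $|C|\le 2\tau-1$ for $|V(T)|\ge 3$.

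One small remark: your treatment of the $|V(T)|=2$ boundary case is not quite right as written --- for a single edge both endpoints are leaves, so $C=\emptyset$ is \emph{not} a vertex cover, and saying ``a single edge is covered by either of its endpoints'' does not rescue the claim about $C$. The paper's statement has the same wrinkle; in the application (see the \textsc{UpdateADJ} routine, steps~3--5) isolated edges are special-cased by putting both endpoints into the cover. You should either note that the lemma is meant for trees with $|V(T)|\ge 3$, or adopt the same convention.
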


This is proven in Appendix~\ref{sec:min_cut_proofs}.
We suspect that this is a folklore result, but it was difficult to find
a citation of it, as there exist far better algorithms for maintaining
vertex covers on dynamic trees~\cite{GuptaS09}.
Since there are at most $O(\poly(\log{n},\epsilon^{-1}))$ trees, and the overall
vertex cover needs to be at least the size of any cover in one of the trees,
we can set the cover as the set of all non-leaf vertices in the trees.

\begin{definition}
	Given a set of disjoint spanning forests $F = F_1 \cupdot \ldots \cupdot F_{K}$, we say that $\VC = \bigcup_{i \in [K]} \VC_i$ is a $\textbf{branch vertex cover}$ of $F$, if each $\VC_i$ is the set of all vertices other than the leaves in $F_i$
\end{definition}

\begin{corollary}
	\label{cor:lowdegInd}
	For any graph $G = (V,E)$ and corresponding sparsified graph $\tilde{G}= F_1 \cupdot \ldots \cupdot F_{K}$. If $\VC$ is a $\textbf{branch vertex cover}$ of $\tilde{G}$, then, $\VC$ is a $2K$-approximate vertex cover of $\tilde{G}$.
	Furthermore, any $x \in V \setminus \VC$ has degree at most $K$ in $\tilde{G}$
\end{corollary}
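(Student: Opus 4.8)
The plan is to prove the two assertions separately: first that $\VC$ is a genuine vertex cover of $\tilde G$ whose size is at most $2K$ times optimal, and then that every vertex outside $\VC$ has degree at most $K$ in $\tilde G$. Both are purely combinatorial facts about edge-disjoint forest decompositions, with \Cref{lem:treeApprox} supplying the per-tree approximation guarantee; the overall structure mirrors the informal remark preceding the corollary (``the overall vertex cover needs to be at least the size of any cover in one of the trees'').

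For the first assertion I would argue componentwise. Since $\tilde G = F_1 \cupdot \dots \cupdot F_K$ is an \emph{edge-disjoint} union, each edge $e$ of $\tilde G$ lies in exactly one forest $F_i$, hence in a unique tree component $T$ of $F_i$. By \Cref{lem:treeApprox} the set of non-leaf vertices of $T$ is a vertex cover of $T$, so $e$ is incident to a vertex of $\VC_i \subseteq \VC$; therefore $\VC$ is a vertex cover of $\tilde G$. For the ratio, let $\VC^\star$ be a minimum vertex cover of $\tilde G$ and write $\mathrm{mvc}(\cdot)$ for the minimum vertex cover size of a graph. Since $E(F_i) \subseteq E(\tilde G)$, the set $\VC^\star$ also covers $F_i$, so $\mathrm{mvc}(F_i) \le |\VC^\star|$. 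Applying \Cref{lem:treeApprox} to each tree component of $F_i$ and summing over components gives $|\VC_i| \le 2\,\mathrm{mvc}(F_i) \le 2|\VC^\star|$. Summing over $i \in [K]$ yields $|\VC| \le \sum_{i\in[K]} |\VC_i| \le 2K\,|\VC^\star|$, i.e.\ $\VC$ is a $2K$-approximate vertex cover of $\tilde G$.

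For the second assertion, fix $x \in V \setminus \VC$. Then $x \notin \VC_i$ for every $i \in [K]$, which means $x$ is a leaf (or isolated) in $F_i$, so $\deg_{F_i}(x) \le 1$. Because the forests $F_1, \dots, F_K$ are pairwise edge-disjoint with union $\tilde G$, we get $\deg_{\tilde G}(x) = \sum_{i=1}^{K} \deg_{F_i}(x) \le K$, as claimed.

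There is no real obstacle here; the only care needed is the treatment of trivial components: a single-edge (or single-vertex) component of some $F_i$ has no non-leaf vertex, so strictly speaking $\VC_i$ would miss that edge. I would handle this exactly as in the proof of \Cref{lem:treeApprox} by designating one endpoint of each such edge as a (degenerate) branch vertex. This changes $|\VC_i|$ by at most $\mathrm{mvc}(F_i)$, so the $2K$ factor is unaffected, and it leaves the degree bound in the second assertion intact, since the remaining endpoint of such a component still has degree at most $1$ in $F_i$. Apart from this bookkeeping the argument is immediate from \Cref{lem:treeApprox} and edge-disjointness of the forest decomposition.
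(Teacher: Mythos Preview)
Your proposal is correct and follows essentially the same approach as the paper: both bound $|\VC_i|$ by $2\,\mathrm{mvc}(F_i)$ via \Cref{lem:treeApprox}, use that $\mathrm{mvc}(F_i)\le\mathrm{mvc}(\tilde G)$ since $F_i$ is a subgraph, sum over $i$, and obtain the degree bound from each non-cover vertex being a leaf in every $F_i$. Your treatment is in fact more careful than the paper's terse proof --- you explicitly verify that $\VC$ is a vertex cover and flag the single-edge-component degeneracy, which the paper handles only later in its data-structure description (see the discussion around \textsc{UpdateADJ}).
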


\begin{proof}
	Since the size of a minimum vertex cover in subgraph can only be smaller,
	we have
	\[
	|\MVC_{F_i}| \leq |\MVC_G|.
	\]
	Coupling this the choice of $|\VC|$ gives $|\VC_i| \leq 2 |\MVC_{F_i}|$,
	and summing over all $K$ trees gives the bound.
	The bound on the degree of $x$ follows from all leaves having degree $1$.
\end{proof}

We will ensure that $s$ and $t$ are placed in the cover,
and use $X$ to denote the non-cover vertices.
If we let the neighborhood of $x$ be $N(x)$, its
interaction with various partitions of $S$ can be described as:
\begin{definition}
	For a cut on $\VC$, $S \subseteq \VC$, and a non-cover vertex
	$x \in X$ with neighborhood $N(x)$, let
	\begin{enumerate}
		\item $\ww(x,S) \defeq \sum_{u \in S \cap N(x)} \ww (x,u)$,
		\item $\ww^{(x)}(S) \defeq \min\{\ww(x,S),\ww(x,\VC\setminus S)\}$.
	\end{enumerate}
\end{definition}

\begin{definition}
	Given a graph $G= (V,E)$ and some $\widehat{V} \subseteq V$ such that $\widehat{V}$ is a vertex cover of $G$, and $X = V \setminus \widehat{V}$
	
	\begin{enumerate}
		\item For any $S \subset V$, let $\Delta_G(S)$ be the weight of cut $S$ on $G$
		
		\item For any $S_{\widehat{V}} \subset \widehat{V}$, let the weight of a cut that is minimally extended from $S_{\widehat{V}}$ then be given by 
		
		\[
		\Delta_{G}(S_{\widehat{V}}) \defeq \Delta_{G \setminus X}(S_{\widehat{V}}) + \sum_{x \in X} \ww^{(x)}(S_{\widehat{V}}), 
		\]
		
	\end{enumerate}
	
\end{definition}

\begin{definition}
	Given $G= (V_G,E_G)$ and $H=(V_H,E_H)$ such that $V_H \subseteq V_G$ and $\widehat{V}$ is a vertex cover of both graphs
	\begin{enumerate}
		\item If $V_H = V_G$, then we say $H \approx_{\epsilon} G$ if for any $S \subset V_G$
		
		\[(1-\epsilon)\Delta_H(S) \leq \Delta_G(S) \leq (1 + \epsilon)\Delta_H(S) \]
		
		\item If $V_H \subset V_G$, then we say $H \approx_{\epsilon}^{\widehat{V}} G$, if for any $S_{\widehat{V}} \subset \widehat{V}$
		
		\[(1-\epsilon)\Delta_H(S_{\widehat{V}}) \leq \Delta_G(S_{\widehat{V}}) \leq (1 + \epsilon)\Delta_H(S_{\widehat{V}}) \]
		
	\end{enumerate}
	
\end{definition}

Note that if some $x \in X$ has degree $1$, it will always belong
to the same side as its neighbor in a minimum $s-t$ cut;
while if $x$ is incident to two neighbors $u$ and $v$, it will
always go with the neighbor with smaller weight.
That means that if $\ww(x, u) \leq \ww(x, v)$, then this is equivalent to 
an edge of weight $\ww(x, u)$ between $u$ and $v$.
This suggests that we can reduce the star out of $x$, $N_x$,
to a set of edges on its neighborhood.
We formalize the construction of this graph, $K_x$, as well
as the resulting graph by removing all of $X$ below:

\begin{definition}
	\label{def:kX}
	Given a weighted graph $G=(V,E)$ and $\ww (u,v) \rightarrow \mathbb{R}_{+}$,
	and any $S$, let:
	$K_x$ be the clique generated by running $\textsc{VertexElimination}$:
		for any two neighbors $u$ and $v$ of $x$, the edge weight of $(u,v)_x$ is 
		\[
		\frac{\ww (x,v) \ww (x,u)}{\sum_{i \in N(x)} \ww (x,i)}.
		\]		
	For some vertex cover $\VC$ and independent set $X = V \setminus \VC$, we let $G_{\VC} = \left( G \setminus X \right) \cup \bigcup_{x \in X} K_x$

\end{definition}

Note that we're using a subscript $_x$ to denote the origin of the edge.
Specifically, an edge $e_x \in G_{\VC}$ implies that $e_x \in K_x$,
and an edge $e_{\emptyset} \in G_{\VC}$ means it's from $\VC$, i.e.
$e_{\emptyset} \in G \setminus X$.
Note that $G_{\VC}$ also defines a weight for each cut $S_{\VC} \subset \VC$,
	where $\Delta_{G_{\VC}}(S_{\VC})$.
The crucial property of Definition~\ref{def:kX} is that it preserves
the values all cuts within a factor of $2$.
We prove the following in Appendix~\ref{sec:min_cut_proofs}.

\begin{theorem}
	\label{thm:schurComplement}
	Given a weighted graph $G=(V,E)$ and $\ww (u,v) \rightarrow \mathbb{R}_{+}$, with some vertex cover $\VC$ and independent set $X = V \setminus \VC$. For any $S_{VC} \subset \VC$
	
	\[\frac{1}{2}\Delta_{G}(S_{VC}) \leq \Delta_{G_{\VC}}(S_{VC}) \leq \Delta_{G}(S_{VC})\]
\end{theorem}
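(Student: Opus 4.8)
The plan is to split both $\Delta_G(S_{VC})$ and $\Delta_{G_{\VC}}(S_{VC})$ into the same ``internal'' part, namely the cut induced inside $\VC$, plus one summand per non-cover vertex, and then reduce the whole statement to an elementary inequality about a single star. First I would use that, since $\VC$ is a vertex cover, $X = V\setminus \VC$ is independent, so every neighbour of $x\in X$ lies in $\VC$; hence $\ww(x,S_{VC}) + \ww(x,\VC\setminus S_{VC}) = \ww(x,N(x)) =: W_x$. By the definition of the minimally extended cut,
\[
\Delta_G(S_{VC}) = \Delta_{G\setminus X}(S_{VC}) + \sum_{x\in X}\min\{\ww(x,S_{VC}),\,\ww(x,\VC\setminus S_{VC})\}.
\]

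Next I would compute $\Delta_{G_{\VC}}(S_{VC})$ directly from $G_{\VC} = (G\setminus X)\cup\bigcup_{x\in X}K_x$. The edges of $G\setminus X$ contribute exactly $\Delta_{G\setminus X}(S_{VC})$ to this cut, and an edge $(u,v)_x$ of $K_x$ crosses $S_{VC}$ precisely when exactly one of $u,v$ lies in $S_{VC}$. Summing the clique weights $\ww(x,u)\ww(x,v)/W_x$ over all such pairs factors as a product over the bipartition of $N(x)$ induced by $S_{VC}$, giving
\[
\Delta_{G_{\VC}}(S_{VC}) = \Delta_{G\setminus X}(S_{VC}) + \sum_{x\in X}\frac{\ww(x,S_{VC})\,\ww(x,\VC\setminus S_{VC})}{W_x},
\]
where the $x$-summand is read as $0$ when $x$ is isolated. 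Since $\Delta_{G\setminus X}(S_{VC})$ appears unchanged in both expressions, it suffices to compare the two remaining sums term by term.

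Writing $a = \ww(x,S_{VC})$ and $b = \ww(x,\VC\setminus S_{VC})$, so that $a+b = W_x$, the only thing left is the bound $\tfrac12\min\{a,b\} \le \frac{ab}{a+b} \le \min\{a,b\}$, which (assuming without loss of generality $a\le b$) is immediate from $b \le a+b \le 2b$. Summing these per-vertex inequalities over $x\in X$ and adding back $\Delta_{G\setminus X}(S_{VC})$ to all three sides yields $\tfrac12\Delta_G(S_{VC}) \le \Delta_{G_{\VC}}(S_{VC}) \le \Delta_G(S_{VC})$. I do not expect a genuine obstacle here; the one point that needs care is the bookkeeping in the second display, i.e.\ verifying that the edges of $K_x$ crossing $S_{VC}$ are exactly the pairs with one endpoint on each side of $N(x)$ so that their total weight telescopes to $ab/(a+b)$, together with the degenerate cases (a star with all its leaves on one side, or an isolated $x$), where $\min\{a,b\}=0$ matches the empty clique cut.
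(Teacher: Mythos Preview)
Your proposal is correct and follows essentially the same approach as the paper: both decompose each side as $\Delta_{G\setminus X}(S_{VC})$ plus a sum of per-vertex contributions, and then reduce to the single-star inequality $\tfrac12\min\{a,b\}\le \frac{ab}{a+b}\le \min\{a,b\}$ with $a=\ww(x,S_{VC})$ and $b=\ww(x,\VC\setminus S_{VC})$. Your handling of the degenerate cases and the bookkeeping for the $K_x$ cut weight are fine.
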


\begin{lemma}
	\label{lem:elimWeight}
	Given $G=(V,E)$ with all weights in $[\gamma,\gamma U]$, along with vertex cover $\VC$ and independent set $X$, such that any $x \in X$ has degree at most $d$. Then the weight of any edge in $G_{\VC}$ is in $[\gamma(dU)^{-1}, \gamma U]$
\end{lemma}

\subsection{Dynamic Algorithm for Maintaining a Minimum $s-t$ Cut on Bipartite Graphs}
\label{subsec:dyn2Algo}

Our algorithm can then be viewed as dynamically maintaining this cover
using two layers of dynamic graph sparsifiers intermixed with elimination routines.
Its main steps are shown in Figure~\ref{fig:dynamic2approx}.

\begin{figure}

\begin{algbox}

\begin{enumerate}
	\item Dynamically maintain a sparsified $G$, which we will denote $\tilde{G}$ 
	
	\item Dynamically maintain a $\textit{branch vertex cover}$, $\VC$, of $\tilde{G}$, where we ensure $s,t \in \VC$
	
	\item Dynamically maintain multi-graph $\tilde{G}_{\VC}$
	
	\item Dynamically maintain a sparsified $\tilde{G}_{\VC}$, which we will denote as $H$ with vertex set $V$
	
	\item Every $\frac{\epsilon}{2}\Delta_H(\widehat{S}_{\VC})$ dynamic steps, recompute $\widehat{S}_{\VC} \subset \VC$, an approximate minimum $s-t$ cut on $H$, ignoring all degree zero vertices \label{step:compute}
	
\end{enumerate}

\end{algbox}

	\caption{Dynamic $(2 + \epsilon)$-approximate Minimum $s-t$ Cut}
	\label{fig:dynamic2approx}

\end{figure}

One issue with maintaining a cut is that its two sides could have size $O(n)$,
which cannot be returned in amortized $O(\poly(\log{n},\epsilon^{-1}))$ time.
Instead, we will maintain the cut $\widehat{S}_{\VC} \subset \VC$ with $s \in \widehat{S}_{\VC}$, and allow querying of any vertex. For a vertex $ v \in \VC$, return $v$ is with $s$ iff $v \in \widehat{S}_{\VC}$, which takes $O(1)$ time. For a vertex $x \notin \VC$, return that $x$ is with $s$ iff $\ww(x,\widehat{S}_{\VC}) = \ww^{(x)}(\widehat{S}_{\VC})$ in $\tilde{G}$, taking $O(\poly(\log{n},\epsilon^{-1}))$ time to compute $\ww(x,\widehat{S}_{\VC})$ and $\ww(x,\VC \setminus \widehat{S}_{\VC})$. Specifically, the cut will be 	\[
\widehat{S} = \widehat{S}_{VC} \cup
\{x \in V \setminus \VC: \ww(x,\widehat{S}_{VC}) = \ww^{(x)}(\widehat{S}_{VC}) \},
\] the extension of $\widehat{S}_{\VC}$ on $\tilde{G}$ which allows for the $O(\poly(\log{n},\epsilon^{-1}))$ query computation by Corollary~\ref{lem:additional properties of fully dynamic cut sparsifier} and Corollary~\ref{cor:lowdegInd}.

We first establish the quality of this cut on $H$ that we maintain:

\newcommand{\epsilonhat}{\widehat{\epsilon}}

\begin{theorem}
	\label{thm:approxDyn2} Computing a $(1 + \epsilonhat)$-approximate
	minimum $s-t$ cut in $H$ as in Step~\ref{step:compute} of
	Figure~\ref{fig:dynamic2approx} takes $O(\OPT\cdot \poly(\log{n},\epsilon^{-1}))$ time for $\widehat{\epsilon} = \frac{\epsilon}{O(1)}$,
	and cut $\widehat{S}_{\VC} \subset \VC$ can be extended to $\widehat{S}$ a $2(1 + \epsilonhat)^5$-approximate
	minimum $s-t$ cut in $G$ with high probability
	
\end{theorem}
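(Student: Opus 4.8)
The plan is to bound the weight $\Delta_G(\widehat{S})$ of the returned cut by chaining the approximation guarantees along the sequence of graphs $G,\ \tilde{G},\ \tilde{G}_{\VC},\ H$ built in Steps~1--4 of \Cref{fig:dynamic2approx}: $\tilde{G}$ is a $(1\pm\widehat\epsilon)$-cut sparsifier of $G$ (\Cref{thm:fully dynamic cut sparsifier}), $\tilde{G}_{\VC}$ preserves the minimal-extension value of every cut $S_{\VC}\subseteq\VC$ up to a \emph{one-sided} factor $2$ (\Cref{thm:schurComplement}), and $H$ is a $(1\pm\widehat\epsilon)$-cut sparsifier of $\tilde{G}_{\VC}$. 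Concretely I would (i) show $|H|=O(\OPT\cdot\poly(\log n,\epsilon^{-1}))$ to get the running time, (ii) upper-bound $\Delta_G(\widehat{S})$ in terms of $\OPT_H$, the minimum $s-t$ cut value of $H$, by walking the chain forwards, and (iii) lower-bound $\OPT_H$ in terms of $\OPT$ by walking it backwards. The key point — and the reason the final ratio is $2(1+\widehat\epsilon)^5$ and not $4(1+\widehat\epsilon)^5$ — is that the factor-$2$ loss of \Cref{thm:schurComplement} is asymmetric: $\Delta_{\tilde{G}}(S_{\VC})\le 2\,\Delta_{\tilde{G}_{\VC}}(S_{\VC})$ while $\Delta_{\tilde{G}_{\VC}}(S_{\VC})\le\Delta_{\tilde{G}}(S_{\VC})$, so I would use the lossy inequality only in (ii) and the free one in (iii).

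For the running time: by \Cref{lem:mvcOPT} the minimum vertex cover of $G$ has size at most $\OPT+2$, and since $\tilde{G}$ is a reweighted subgraph of $G$ its minimum vertex cover is no larger. By \Cref{cor:lowdegInd} the branch vertex cover $\VC$ of $\tilde{G}$ is a $2K$-approximate vertex cover, where $K=\poly(\log n,\epsilon^{-1})$ is the arboricity from \Cref{lem:additional properties of fully dynamic cut sparsifier}, so $|\VC|=O(K(\OPT+2))=O(\OPT\cdot\poly(\log n,\epsilon^{-1}))$ using $\OPT\ge 1$, and $X=V\setminus\VC$ is independent in $\tilde{G}$. Hence every clique $K_x$ lies inside $\VC$, so $\tilde{G}_{\VC}$ is supported on $\VC$; its weight ratio is $\poly(n)$ by \Cref{lem:elimWeight} together with the weight-ratio bound of \Cref{thm:fully dynamic cut sparsifier}, and therefore its $(1\pm\widehat\epsilon)$-cut sparsifier $H$ has $O(|\VC|\cdot\poly(\log n,\epsilon^{-1}))=O(\OPT\cdot\poly(\log n,\epsilon^{-1}))$ edges. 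Restricting to the non-isolated vertices of $H$ and invoking a near-linear-time $(1\pm\widehat\epsilon)$-approximate maximum-flow / minimum-cut algorithm~\cite{Sherman13,KelnerLOS14,Peng16} then yields $\widehat{S}_{\VC}$ in $O(\OPT\cdot\poly(\log n,\epsilon^{-1}))$ time.

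For the approximation: $\widehat{S}$ is by construction the minimum-weight extension of $\widehat{S}_{\VC}$ in $\tilde{G}$, so $\Delta_{\tilde{G}}(\widehat{S})=\Delta_{\tilde{G}}(\widehat{S}_{\VC})$, and then
\begin{align*}
\Delta_G(\widehat{S})
&\le (1+\widehat\epsilon)\,\Delta_{\tilde{G}}(\widehat{S}_{\VC})
\le 2(1+\widehat\epsilon)\,\Delta_{\tilde{G}_{\VC}}(\widehat{S}_{\VC}) \\
&\le 2(1+\widehat\epsilon)^2\,\Delta_H(\widehat{S}_{\VC})
\le 2(1+\widehat\epsilon)^3\,\OPT_H ,
\end{align*}
using $\tilde{G}\approx_{\widehat\epsilon}G$, the lossy direction of \Cref{thm:schurComplement}, $H\approx_{\widehat\epsilon}\tilde{G}_{\VC}$, and that $\widehat{S}_{\VC}$ is a $(1+\widehat\epsilon)$-approximate minimum $s-t$ cut of $H$. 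Conversely, I would take a minimum $s-t$ cut $S^*$ of $G$ and set $S^*_{\VC}:=S^*\cap\VC$, which is a valid $s-t$ cut since $s,t\in\VC$; since $S^*$ is one extension of $S^*_{\VC}$,
\begin{align*}
\OPT_H
&\le \Delta_H(S^*_{\VC})
\le \frac{1}{1-\widehat\epsilon}\,\Delta_{\tilde{G}_{\VC}}(S^*_{\VC})
\le \frac{1}{1-\widehat\epsilon}\,\Delta_{\tilde{G}}(S^*_{\VC}) \\
&\le \frac{1}{1-\widehat\epsilon}\,\Delta_{\tilde{G}}(S^*)
\le \frac{1}{(1-\widehat\epsilon)^2}\,\OPT ,
\end{align*}
where the third step is the free direction of \Cref{thm:schurComplement} and the last two use $\tilde{G}\approx_{\widehat\epsilon}G$ and minimality of the extension. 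Combining gives $\Delta_G(\widehat{S})\le \frac{2(1+\widehat\epsilon)^3}{(1-\widehat\epsilon)^2}\,\OPT$, which for $\widehat\epsilon$ below a fixed constant is at most $2(1+\widehat\epsilon)^5\,\OPT$ after rescaling $\widehat\epsilon$ by a constant factor — this is the $\widehat\epsilon=\epsilon/O(1)$ in the statement. Since the cut-sparsifier guarantees hold with high probability, so does the whole chain.

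The book-keeping above is routine; the points that need genuine care are the asymmetric use of \Cref{thm:schurComplement} (to avoid a factor $4$) and the bound $|\VC|=O(\OPT\cdot\poly(\log n,\epsilon^{-1}))$, which relies on \Cref{lem:mvcOPT}, the polylogarithmic arboricity of the cut sparsifier, and the observation that passing to a reweighted subgraph never enlarges the minimum vertex cover. The staleness of $\widehat{S}_{\VC}$ over the $\tfrac{\epsilon}{2}\Delta_H(\widehat{S}_{\VC})$ updates before the next recomputation — during which $\OPT$ changes by at most $1$ per update, hence by a $(1\pm O(\epsilon))$ factor overall — is what gets folded into the proof of \Cref{thm:dynamic min cut approximation}.
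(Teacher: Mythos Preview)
Your proposal is correct and follows essentially the same argument as the paper's proof: the same chain $G\to\tilde{G}\to\tilde{G}_{\VC}\to H$, the same asymmetric use of \Cref{thm:schurComplement} (lossy direction only forwards), and the same size bound $|\VC|=O(\OPT\cdot\poly(\log n,\epsilon^{-1}))$ via \Cref{lem:mvcOPT} and \Cref{cor:lowdegInd}. The only cosmetic difference is that you route through $\OPT_H$ explicitly and end up with $\frac{2(1+\widehat\epsilon)^3}{(1-\widehat\epsilon)^2}$, whereas the paper compares $\Delta_H(\widehat{S}_{\VC})$ directly to $\Delta_H(\overline{S}_{\VC})$ and gets $\frac{2(1+\widehat\epsilon)^2}{(1-\widehat\epsilon)^3}$; both are $\le 2(1+\widehat\epsilon)^5$ after the $\widehat\epsilon=\epsilon/O(1)$ rescaling.
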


\begin{proof}
	$\tilde{G} = F_1 \cupdot \ldots \cupdot F_K$ for some $K = O(\poly(\log{n},\epsilon^{-1}))$ by Corollary~\ref{lem:additional properties of fully dynamic cut sparsifier}, so from Lemma~\ref{lem:mvcOPT} and Corollary~\ref{cor:lowdegInd}, we know $|\VC| = O(\OPT\cdot \poly(\log{n},\epsilon^{-1}))$.
	From Corollary~\ref{lem:additional properties of fully dynamic cut sparsifier}, the weights of $\tilde{G}$ are in $[1,O(n)]$, and Lemma~\ref{lem:elimWeight} implies that the weights of $\tilde{G}_{\VC}$ are in $[O(n^{-1}\poly(\log{n},\epsilon^{-1}))^{-1},O(n)]$.
	Further, each $K_x$ of $\tilde{G}_{\VC}$ has at most $K^2 = O(\poly(\log{n},\epsilon^{-1}))$ edges, so $\tilde{G}_{\VC}$ has $O(n \cdot \poly(\log{n},\epsilon^{-1}))$ edges.
	Corollary~\ref{lem:additional properties of fully dynamic cut sparsifier} then tells us that $H$ has $O(|\VC|\cdot \poly(\log{n}, \epsilon^{-1})) = O(\OPT\cdot \poly(\log{n}, \epsilon^{-1}))$ edges, and that we can find
	a $(1 + \epsilonhat)$ approximate minimum $s-t$ cut in $H$, $\widehat{S}_{VC}$
	in $O(\OPT\cdot \poly(\log{n}, \epsilon^{-1}))$ time.

	From Corollary~\ref{lem:additional properties of fully dynamic cut sparsifier}, we assume that $H \approx_{\epsilonhat} \tilde{G}_{\VC}$ and $G \approx_{\epsilonhat} \tilde{G}$ with high probability.
	
	
	Suppose $\widehat{S}_{VC} \subset \VC$ is returned as a $(1 + \epsilonhat)$-approximate minimum $s-t$ cut in $H$, and let
	\[
	\widehat{S} = \widehat{S}_{VC} \cup
		\{x \in V \setminus \VC: \ww(x,\widehat{S}_{VC}) = \ww^{(x)}(\widehat{S}_{VC}) \}
	\]
	be its extension onto $\tilde{G}$.
	The left-hand side of Theorem~\ref{thm:schurComplement} implies
	\[
		\Delta_{\tilde{G}}(\widehat{S}_{VC})
			\leq 2 \Delta_{\tilde{G}_{VC}}(\widehat{S}_{VC}),
	\]
	which along with the approximations $G \approx_{\epsilonhat} \tilde{G}$
	and $\tilde{G}_{VC} \approx_{\epsilonhat} H$ 	gives
	\[
	\Delta_{G}(\widehat{S}) \leq (1 + \epsilonhat )\Delta_{\tilde{G}}(\widehat{S})
		\leq 2  (1 + \epsilonhat) \Delta_{\tilde{G}_{\VC}}(\widehat{S}_{VC})
		\leq 2 (1 + \epsilonhat )^2 \Delta_{H}(\widehat{S}_{VC}).
	\]

	On the other hand, let $\overline{S} \subset V$ be the minimum
	$s-t$ cut in $G$, and $\overline{S}_{VC} \subset \VC$
	be its restriction to $\VC$.
	Since right-hand side of Theorem~\ref{thm:schurComplement} is over optimum choices
	of $V \setminus S_{VC}$, we have
	\[
		\Delta_{\tilde{G}}(\overline{S})
			\geq \Delta_{\tilde{G}}(\overline{S}_{VC})
				\geq \Delta_{\tilde{G}_{VC}}(\overline{S}_{VC}),
	\]
	which when combined with the approximations $G \approx_{\epsilonhat} \tilde{G}$
	and $\tilde{G}_{VC} \approx_{\epsilonhat} H$ 	gives
	\[
		\Delta_{G}(\overline{S}) \geq (1 - \epsilonhat ) \Delta_{\tilde{G}}(\overline{S})
		\geq (1 - \epsilonhat ) \Delta_{\tilde{G}_{\VC}}(\overline{S}_{VC})
		\geq (1 - \epsilonhat )^2 \Delta_{H}(\overline{S}_{VC}).
	\]	
	The result then follows from the near-optimality of $\widehat{S}_{VC}$
	on $H$, $ \Delta_{H}(\overline{S}_{VC}) \geq
		(1 - \epsilonhat) \Delta_{H}(\widehat{S}_{VC})$.
		
%
%
%
%
%
%
%

\end{proof}

\begin{corollary}
	\label{cor:dyn2Correct}
	The dynamic algorithm maintains a $(2 + \epsilon)$-approximate minimum $s-t$ cut in $G$, and will only compute an approximate minimum $s-t$ cut on $H$ every $O({\epsilon}\OPT)$ dynamic steps.
	
\end{corollary}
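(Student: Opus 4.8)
The plan is to split the statement into its two assertions — that the algorithm recomputes a cut on $H$ only once every $\Theta(\epsilon\OPT)$ updates, and that the cut it maintains is $(2+\epsilon)$-approximate at all times — and to reduce both to \Cref{thm:approxDyn2} together with the elementary observation that a single edge insertion or deletion (which is always an $A$--$B$ edge) changes $\OPT$, as well as the value of any fixed $s$--$t$ cut, by at most one.

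For the recomputation frequency I would first extract from the proof of \Cref{thm:approxDyn2} that the quantity $\Delta_H(\widehat{S}_{\VC})$ which sets the interval in Step~\ref{step:compute} is itself within a constant factor of $\OPT$ at the moment of recomputation. Indeed, the inequality $\Delta_G(\widehat{S})\le 2(1+\epsilonhat)^2\,\Delta_H(\widehat{S}_{\VC})$ derived there, combined with $\Delta_G(\widehat{S})\ge\OPT$ (since $\widehat{S}$ is a genuine $s$--$t$ cut), gives $\Delta_H(\widehat{S}_{\VC})\ge\OPT/(2(1+\epsilonhat)^2)$; and chaining $\OPT=\Delta_G(\overline{S})\ge(1-\epsilonhat)^2\Delta_H(\overline{S}_{\VC})\ge(1-\epsilonhat)^3\Delta_H(\widehat{S}_{\VC})$, where the last step is the near-optimality of $\widehat{S}_{\VC}$ on $H$, gives $\Delta_H(\widehat{S}_{\VC})\le\OPT/(1-\epsilonhat)^3$. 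Hence the interval $\tfrac{\epsilon}{2}\Delta_H(\widehat{S}_{\VC})$ is $\Theta(\epsilon\OPT)$. Moreover, since over an interval of that length $\OPT$ moves by at most $O(\epsilon\OPT)$, which is at most $\OPT/4$ once $\epsilonhat$ is a small enough constant multiple of $\epsilon$, the value of $\OPT$ stays within a constant factor throughout the interval, so the $\Theta(\epsilon\OPT)$ bound is unambiguous.

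For the approximation guarantee I would fix a recomputation, let $\OPT_0$ be the minimum cut value at that instant, and invoke \Cref{thm:approxDyn2} with $\epsilonhat$ chosen so that $2(1+\epsilonhat)^5\le 2+\epsilon/2$: immediately after recomputing, the extended cut $\widehat{S}$ has value at most $(2+\epsilon/2)\OPT_0$ in the current graph. Now run for the next $t\le\tfrac{\epsilon}{2}\Delta_H(\widehat{S}_{\VC})$ updates. By the previous paragraph $t=O(\epsilon\OPT_0)$, and each of these updates changes the value of a fixed $s$--$t$ cut by at most one and $\OPT$ by at most one; therefore the maintained cut has value at most $(2+\epsilon/2)\OPT_0+O(\epsilon\OPT_0)$ while $\OPT\ge\OPT_0-O(\epsilon\OPT_0)$, so the ratio is $2+O(\epsilon)$. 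Running the whole construction with $\epsilon$ replaced by $\epsilon/c$ for a suitable absolute constant $c$ then yields exactly $2+\epsilon$, and a union bound over the polynomially many recomputations preserves the high-probability guarantee of \Cref{thm:approxDyn2}.

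The step that needs the most care — and the main obstacle — is ``the maintained cut's value changes by at most one per update'', since the reported cut $\widehat{S}$ is not a literally fixed vertex set over the interval: its restriction to $V\setminus\VC$ is re-derived at query time from the \emph{current} sparsifier $\tilde{G}$ and branch vertex cover $\VC$, both of which can change by $\poly(\log n,\epsilon^{-1})$ edges per update by \Cref{lem:additional properties of fully dynamic cut sparsifier}. I would handle this by tracking the minimally-extended value $\Delta_{\tilde{G}}(\widehat{S}_{\VC})=\Delta_{\tilde{G}\setminus X}(\widehat{S}_{\VC})+\sum_{x}\ww^{(x)}(\widehat{S}_{\VC})$ directly: because $X=V\setminus\VC$ is independent, every edge of $\tilde{G}$ lies inside $\VC$ or is incident to exactly one vertex of $X$, so a single-edge change in $\tilde{G}$ changes this value by at most the weight of that edge, which is bounded; one then traces how one update to $G$ propagates through the two sparsification layers and the vertex-elimination step and checks that, after the rescaling already built into the argument, the net change remains $O(1)$. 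Alternatively, one re-derives the $(2+\epsilon/2)$-type bound of \Cref{thm:approxDyn2} afresh at each query time and only needs to bound the drift in the near-optimality of $\widehat{S}_{\VC}$ on the dynamically maintained $H$. Either route is bookkeeping; all the conceptual content is already in \Cref{thm:approxDyn2}.
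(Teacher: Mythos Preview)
Your proposal is correct and follows the same line as the paper: invoke \Cref{thm:approxDyn2} with $\epsilonhat=\epsilon/O(1)$ to get a $(2+\epsilon/2)$-approximation at recomputation, bound the interval $\tfrac{\epsilon}{2}\Delta_H(\widehat{S}_{\VC})$ by $\Theta(\epsilon\,\OPT)$ via the inequalities already derived in the proof of \Cref{thm:approxDyn2}, and rely on the unit-drift argument from the overview (Section~\ref{subsec:overviewApproxFlow}) for the time between recomputations. You are in fact more careful than the paper's two-sentence proof: you establish both directions of $\Delta_H(\widehat{S}_{\VC})=\Theta(\OPT)$ (the paper's chain actually yields the $\Omega$ direction while writing ``$=O(\OPT)$''), and you correctly flag that the queried cut $\widehat{S}$ is not a literally fixed vertex set between recomputations --- a genuine subtlety that the paper's proof leaves implicit and that your suggested workarounds handle.
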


\begin{proof}
	Choosing $\epsilonhat = \frac{\epsilon}{O(1)}$ in Theorem~\ref{thm:approxDyn2} can give a $(2 + \frac{\epsilon}{2})$-approximate minimum $s-t$ cut in $G$. Borrowing notation from the proof of Theorem~\ref{thm:approxDyn2}, an approximate minimum $s-t$ cut on $H$ will be re-computed in $\frac{\epsilon}{2}\Delta_H(\widehat{S}_{\VC})$ dynamic steps. $\OPT = \Delta_G(\bar{S}) \leq \Delta_G(\widehat{S}) \leq 2(1 + \epsilonhat)^2\Delta_H(\widehat{S}_{\VC})$, so $\Delta_H(\widehat{S}_{\VC}) = O(\OPT)$
	
\end{proof}

\subsection{Dynamically Updating Data Structures}
	\label{subsec:dataStructures}

As was shown in Corollary~\ref{cor:dyn2Correct}, the dynamic algorithm maintains a $(2 + \epsilon)$-approximate minimum $s-t$ cut of $G$, an approximate minimum $s-t$ cut of $H$ is computed every $O(\epsilon \OPT)$, and that computation takes $O(\OPT\cdot \poly(\log{n}, \epsilon^{-1}))$ time from Theorem~\ref{thm:approxDyn2}. Therefore, in order to establish that the amortized dynamic update time is $O( \poly(\log{n}, \epsilon^{-1}))$, it suffices to show that all data structures can be maintained in $O(\poly(\log{n},\epsilon^{-1}))$ time per dynamic update, thereby finishing the proof of Theorem~\ref{thm:dynamic min cut approximation}. As a result of Corollary~\ref{lem:additional properties of fully dynamic cut sparsifier}, it suffices to show the following

\begin{theorem}
	\label{thm:dyn2Maintenance}
	For each addition/deletion of an edge in $\tilde{G}$, data structures for $\tilde{G}$, $\VC$, $\tilde{G}_{\VC}$, and $H$ can be maintained in $O(\poly(\log{n},\epsilon^{-1}))$ time.
	
\end{theorem}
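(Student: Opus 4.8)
The plan is to follow the chain of constructions in \Cref{fig:dynamic2approx} and show that a single edge update in $G$ induces only $\poly(\log n, \epsilon^{-1})$ changes at every stage, so that each of $\tilde{G}$, $\VC$, $\tilde{G}_{\VC}$ and $H$ is maintained in $\poly(\log n, \epsilon^{-1})$ time. First, $\tilde{G}$ is produced by running the fully dynamic cut sparsifier of \Cref{lem:additional properties of fully dynamic cut sparsifier} with $\rho = m$ on the unweighted graph $G$: this costs $\poly(\log n, \epsilon^{-1})$ per update in $G$, exposes the edge-disjoint forests $\tilde{G} = F_1 \cupdot \dots \cupdot F_K$ with $K = \poly(\log n, \epsilon^{-1})$ (and each vertex's degree and neighbour list in every $F_i$), and changes at most one edge in each $F_i$ per update in $G$, hence at most $2K$ edges of $\tilde{G}$. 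Next, since $v \in \VC_i$ iff $v$ is not a leaf of $F_i$, membership of $v$ in $\VC_i$ is decided solely by $\deg_{F_i}(v)$; an edge change in $F_i$ alters $\deg_{F_i}$ only for its two endpoints, so it flips the $\VC_i$-membership of at most two vertices, and keeping for each $v$ a counter of the forests in which it is a non-leaf lets us decide $v \in \VC$ (with $s,t$ forced in) in $O(1)$. Thus a single update in $G$ causes $O(K)$ membership changes in $\VC$, each handled with $O(\log n)$ bookkeeping, and there is no cascade, since $\VC$-membership is a purely local function of the forest degrees.

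For $\tilde{G}_{\VC} = (\tilde{G} \setminus X) \cup \bigcup_{x \in X} K_x$, there are three kinds of induced changes per update in $G$, all charged to the $O(K)$ edge changes of $\tilde{G}$ and the $O(K)$ $\VC$-status changes. $(i)$ An edge of $\tilde{G}$ with both endpoints in $\VC$ appears or disappears as a single edge $e_\emptyset$ of $\tilde{G}_{\VC}$: $O(1)$ each. $(ii)$ An edge incident to some $x \in X$ changes; since every clique weight $\ww(x,u)\ww(x,v)/\sum_{i \in N(x)}\ww(x,i)$ depends on the whole star at $x$, we rebuild $K_x$ entirely, at cost $O(\deg_{\tilde{G}}(x)^2) = O(K^2)$ using the degree bound of \Cref{cor:lowdegInd}. $(iii)$ A vertex $x$ enters or leaves $\VC$: we swap its $O(K)$ incident $e_\emptyset$ edges for the clique $K_x$ (or vice versa), again $O(K^2)$; here we use that $X$ is an independent set, so the neighbours of $x$ keep their $\VC$-status and no other clique is affected. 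Altogether a single update in $G$ produces $O(K^3) = \poly(\log n, \epsilon^{-1})$ changes to $\tilde{G}_{\VC}$ in $\poly(\log n, \epsilon^{-1})$ time; moreover, combining the weight range $[1, O(n)]$ of $\tilde{G}$ from \Cref{lem:additional properties of fully dynamic cut sparsifier} with \Cref{lem:elimWeight} and $d \le K$, the weights of $\tilde{G}_{\VC}$ stay polynomially bounded, and $\tilde{G}_{\VC}$ has $O(n \cdot \poly(\log n, \epsilon^{-1}))$ edges.

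Finally, $H$ is maintained by a second instance of \Cref{lem:additional properties of fully dynamic cut sparsifier} with $\rho = m$ on $\tilde{G}_{\VC}$ (legitimate, since its weight ratio is $\poly(n)$), so each of the $O(K^3)$ updates to $\tilde{G}_{\VC}$ costs $\poly(\log n, \epsilon^{-1})$, i.e. $\poly(\log n, \epsilon^{-1})$ per update in $G$. The pipeline is strictly hierarchical: the updates presented to the $\tilde{G}_{\VC}$-sparsifier are a deterministic function of $G$'s (oblivious) update sequence and of the output of the independently randomized $\tilde{G}$-sparsifier, so the oblivious-adversary hypothesis of \Cref{lem:additional properties of fully dynamic cut sparsifier} holds at both stages; the restriction to a polynomial number of updates is lifted exactly as in \Cref{lem:extending sparsifier to long sequence}. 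This gives $\poly(\log n, \epsilon^{-1})$ update time for all four data structures, which together with \Cref{thm:approxDyn2} and \Cref{cor:dyn2Correct} completes the proof of \Cref{thm:dynamic min cut approximation}.

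The main obstacle is the third step: bounding the amplification of one $G$-update into changes of $\tilde{G}_{\VC}$ without a super-polylogarithmic blow-up. The two facts that keep this under control are $(a)$ $\VC$-membership is a local function of the forest degrees, so an edge change flips only $O(1)$ statuses and never cascades, and $(b)$ every non-cover vertex has degree $O(K)$ in $\tilde{G}$ by \Cref{cor:lowdegInd}, so eliminating or reinstating it rewrites only $O(K^2)$ clique edges.
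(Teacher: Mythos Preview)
Your proposal is correct and follows essentially the same approach as the paper: both arguments hinge on exactly the two facts you isolate at the end, namely that (a) branch-cover membership is a purely local function of the forest degrees so a single forest-edge change flips $O(1)$ statuses without cascading, and (b) every vertex outside $\VC$ has degree $O(K)$, so eliminating or reinstating it touches only $O(K^2)$ clique edges. The paper packages these into explicit subroutines $\textsc{InsertVC}$, $\textsc{RemoveVC}$, and $\textsc{UpdateADJ}$ (temporarily placing both endpoints in $\VC$, performing the edge update, then restoring), and also handles an isolated-edge corner case in $\textsc{UpdateADJ}$ to keep each $\VC_i$ a valid cover of $F_i$; your higher-level counting argument achieves the same bound and is arguably more transparent, and your explicit treatment of the oblivious-adversary hypothesis for the second sparsifier is a point the paper leaves implicit.
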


Bounds on the dynamic update time of each data structure will all ultimately follow from the $O(\poly(\log{n},\epsilon^{-1}))$ degree bound for $\tilde{G}$ of all vertices not in the $\textit{branch vertex cover}, \ \VC$. This is a direct result of the $O(\poly(\log{n},\epsilon^{-1}))$ arboricity of $\tilde{G}$ from Corollary~\ref{lem:additional properties of fully dynamic cut sparsifier}, and the properties of a $\textit{branch vertex cover}$ of $\tilde{G}$ in Corollary~\ref{cor:lowdegInd}.

\

\noindent\textbf{Data structure for $\tilde{G}$}: A list of $O(\poly(\log{n},\epsilon^{-1}))$ spanning forests, which we will denote $\textsc{SPANNERS}_{G}$.

\noindent\textbf{Data structure for adjacency lists of $\tilde{G}$}:, We will denote it as $\textsc{ADJ-LIST}_{\tilde{G}}$, and it will have, for each vertex $v$, two lists $\textsc{LEAF}_v$ and $\textsc{BRANCH}_v$:
\begin{itemize}
\item The list $\textsc{LEAF}_v$ will have the adjacency list of $v$ for each spanning forest in $\textsc{SPANNERS}_G$ in which $v$ is a leaf.
\item Similarly, the list $\textsc{BRANCH}_v$ will have the adjacency list of $v$ edge for each spanning forest in $\textsc{SPANNERS}_G$ in which $v$ is not a leaf. 
\end{itemize}

\

\noindent\textbf{Data structure for $\VC$}: We will denote it as $\textsc{VC}_{\tilde{G}}$, which will be a list of all vertices $v$ whose list $\textsc{BRANCH}_v$ is non-empty.

\

\noindent\textbf{Data structure for $\tilde{G}_{\VC}$}: We will denote it as $\textsc{GRAPH}_{\VC}$, and it will contain an adjacency list, $\textsc{ADJ}_v$, for each vertex $v \in \VC$. Assume that each $\textsc{ADJ}_v$ has a data structure such that deletion and insertion of any edge takes $O(\log{n})$ time.

\

\noindent\textbf{Data structure for $H$}: We will denote it as $\textsc{SPARSE}_{\VC}$, and it will be the sparsified multi-graph.

\

We first show that moving a vertex in / out of the vertex cover
can be done in $O(\poly(\log{n},\epsilon^{-1}))$ time, assuming that the degree
of the vertex added/removed is small. Note that the small number of forests in $\tilde{G}$ and the choice of $ \VC $ allow us to meet this requirement.

\begin{figure}
\begin{algbox}
$\textsc{InsertVC}(G, \VC, v)$
\begin{enumerate}
\item Delete all edges $e_v \in K_v$ in $\textsc{GRAPH}_{\VC}$.
\item For all edges $e$ adjacent to $v$ in $\textsc{ADJ-LIST}_{\tilde{G}}$, insert $e_{\emptyset}$ into $\textsc{GRAPH}_{\VC}$.
\end{enumerate}
\end{algbox}

\caption{Moving a Vertex into $\VC$}
\label{fig:insertVC}

\end{figure}

\begin{lemma}
\label{lem:insertVC}
	If $v$ is not in $\textsc{VC}_{\tilde{G}}$, then running $\textsc{INSERT}_{\VC}(v)$ on $\textsc{GRAPH}_{\VC}$, using $\textsc{ADJ-LIST}_{\tilde{G}}$, will output $\textsc{GRAPH}_{\VC}$ equivalent to ${\tilde{G}}_{\VC \cup v}$ of $\textsc{ADJ-LIST}_{\tilde{G}}$ in $O(\poly(\log{n},\epsilon^{-1}))$ time.	
\end{lemma}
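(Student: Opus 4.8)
The plan is to establish correctness and the running-time bound separately, both resting on the fact that a vertex outside a branch vertex cover has small degree.

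I would start with the degree bound. Since $v$ is not in $\textsc{VC}_{\tilde{G}}$, the list $\textsc{BRANCH}_v$ is empty, i.e.\ $v$ is a leaf in each of the $K$ forests $F_1, \dots, F_K$ making up $\tilde{G}$, so $\deg_{\tilde{G}}(v) \le K = O(\poly(\log{n},\epsilon^{-1}))$ by \Cref{cor:lowdegInd}. Since $\VC$ is a vertex cover of $\tilde{G}$, the set $X = V \setminus \VC$ is independent, hence $N(v) \subseteq \VC$ and $v \notin N(x)$ for every $x \in X$.

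Next I would verify correctness. Put $X' = X \setminus \{v\} = V \setminus (\VC \cup \{v\})$; adding $v$ to a vertex cover leaves it a vertex cover, so $X'$ is independent. By \Cref{def:kX},
\[
\tilde{G}_{\VC} = (\tilde{G} \setminus X) \cup \bigcup_{x \in X} K_x,
\qquad
\tilde{G}_{\VC \cup v} = (\tilde{G} \setminus X') \cup \bigcup_{x \in X'} K_x .
\]
The graph $\tilde{G} \setminus X'$ differs from $\tilde{G} \setminus X$ exactly in that it contains $v$ together with its incident edges, which — because $N(v) \subseteq \VC$ — are precisely the original edges $\{(v,u)_\emptyset : u \in N(v)\}$ carrying their $\tilde{G}$-weights. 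The union $\bigcup_{x \in X'} K_x$ is obtained from $\bigcup_{x \in X} K_x$ by deleting exactly $K_v$; each remaining clique $K_x$ is unchanged because, by \Cref{def:kX}, its edges and weights depend only on $N(x)$ and the weights $\ww(x, \cdot)$, and $v \notin N(x)$. Hence $\tilde{G}_{\VC \cup v}$ is obtained from $\tilde{G}_{\VC}$ by removing the edges of $K_v$ and adding the star $\{(v,u)_\emptyset : u \in N(v)\}$, which is exactly what the procedure of \Cref{fig:insertVC} performs on $\textsc{GRAPH}_{\VC}$ using $\textsc{ADJ-LIST}_{\tilde{G}}$; so the updated $\textsc{GRAPH}_{\VC}$ equals $\tilde{G}_{\VC \cup v}$.

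For the running time: in Step~1 we read $N(v)$ from $\textsc{LEAF}_v$ (which stores $v$'s adjacency in every forest, as $v$ is a leaf in all of them) and, for each ordered pair in $N(v)\times N(v)$, delete the edge tagged with origin $v$ from the two lists $\textsc{ADJ}_u$; there are at most $\deg_{\tilde{G}}(v)^2 \le K^2$ such edges and each deletion costs $O(\log{n})$. In Step~2 we scan the at most $K$ edges incident to $v$ in $\textsc{ADJ-LIST}_{\tilde{G}}$ and insert each as an $e_\emptyset$ edge into $\textsc{GRAPH}_{\VC}$ at cost $O(\log{n})$ each. The total is $O(K^2 \log{n}) = O(\poly(\log{n},\epsilon^{-1}))$. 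The step I expect to require the most care is showing that no clique $K_x$ with $x \ne v$ changes, which is exactly where independence of $X$ is used; everything else is bookkeeping against \Cref{def:kX} and the assumed $O(\log n)$ cost of the $\textsc{ADJ}_v$ operations.
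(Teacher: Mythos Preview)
Your proposal is correct and follows essentially the same approach as the paper: both use that $v\notin\textsc{VC}_{\tilde G}$ forces $\textsc{BRANCH}_v$ empty and hence $\deg_{\tilde G}(v)=O(\poly(\log n,\epsilon^{-1}))$, derive correctness from $N(v)\subseteq \VC$, and bound the time by the $O(K^2)$ clique edges plus $O(K)$ star edges. Your version is in fact a bit more careful than the paper's, since you explicitly verify that the other cliques $K_x$ with $x\neq v$ are unaffected (using $v\notin N(x)$), a point the paper leaves implicit.
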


\begin{proof}
	Costs of the two steps are:
	\begin{enumerate}
		\item  Delete all edges $e_v \in K_v$ in $\textsc{GRAPH}_{\VC}$. This requires finding all incident vertices to $v$ in $\textsc{LEAF}_v$ and $\textsc{BRANCH}_v$, which is at most $O(\poly(\log{n},\epsilon^{-1}))$ because $\textsc{BRANCH}_v$ is empty due to $v$ not in $\textsc{VC}_{\tilde{G}}$. Every pair of vertices has a corresponding edge $e_v$ in $\textsc{GRAPH}_{\VC}$, so this takes $O(\poly(\log{n},\epsilon^{-1}))$ time.
		
		\item There are at most $O(\poly(\log{n},\epsilon^{-1}))$ edges adjacent to $v$ in $\textsc{ADJ-LIST}_{\tilde{G}}$, so adding all these edges into $\textsc{GRAPH}_{\VC}$ takes $O(\poly(\log{n},\epsilon^{-1}))$ time.
	\end{enumerate}
	
	If $v$ is not in $\textsc{VC}_G$, then $v$ must only be incident to $\VC$ in $\textsc{ADJ-LIST}_{\tilde{G}}$. Therefore in ${\tilde{G}}_{\VC \cup v}$, $v$ will only be incident to edges $e_{\emptyset}$ for each $e$ incident to $v$ in $\textsc{ADJ-LIST}_{\tilde{G}}$, and no edges $e_v$ will be in ${\tilde{G}}_{\VC \cup v}$. $\textsc{INSERT}_{\VC}(v)$ will perform exactly these operations on $\textsc{GRAPH}_{\VC}$.
	
\end{proof}

\begin{figure}
	\begin{algbox}
		$\textsc{RemoveVC}(G, \VC, v)$
		\begin{enumerate}
			\item For all edges $e$ adjacent to $v$ in $\textsc{ADJ-LIST}_{\tilde{G}}$, delete $e_{\emptyset}$ from $\textsc{GRAPH}_{\VC}$.
			\item Use all incident edges to compute $K_v$ and insert all $e_v \in K_v$ into $\textsc{GRAPH}_{\VC}$
		\end{enumerate}
	\end{algbox}
	
	\caption{Removing a Vertex from $\VC$}
	\label{fig:removeVC}
	
\end{figure}

\begin{lemma}
\label{lem:removeVC}
	If $\textsc{BRANCH}_v$ is empty, then running $\textsc{REMOVE}_{\VC}(v)$ on $\textsc{GRAPH}_{\VC}$, using $\textsc{ADJ-LIST}_{\tilde{G}}$, will output $\textsc{GRAPH}_{\VC}$ equivalent to ${\tilde{G}}_{\VC \setminus v}$ of $\textsc{ADJ-LIST}_{\tilde{G}}$ in $O(\poly(\log{n},\epsilon^{-1}))$ time.
	
\end{lemma}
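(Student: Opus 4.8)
The plan is to prove this exactly as the mirror image of \Cref{lem:insertVC}: first bound the running time of the two steps of $\textsc{RemoveVC}(G,\VC,v)$, and then check against \Cref{def:kX} that the resulting $\textsc{GRAPH}_{\VC}$ represents $\tilde{G}_{\VC\setminus v}$.

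For the running time, the hypothesis that $\textsc{BRANCH}_v$ is empty says that $v$ is a leaf in every one of the $K=O(\poly(\log{n},\epsilon^{-1}))$ spanning forests stored in $\textsc{SPANNERS}_G$, so (as in \Cref{cor:lowdegInd}) the degree of $v$ in $\tilde{G}$ is at most $K=O(\poly(\log{n},\epsilon^{-1}))$. Hence Step~1 deletes at most $K$ edges $e_\emptyset$ from $\textsc{GRAPH}_{\VC}$, each in $O(\log{n})$ time. For Step~2, running $\textsc{VertexElimination}$ at $v$ first forms $\sum_{i\in N(v)}\ww(v,i)$ in $O(K)$ time and then assigns to each of the at most $\binom{K}{2}$ edges $(a,b)_v$ of $K_v$ the weight $\ww(v,a)\ww(v,b)/\sum_{i\in N(v)}\ww(v,i)$ in $O(1)$ arithmetic, inserting each into $\textsc{GRAPH}_{\VC}$ in $O(\log{n})$ time. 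The total is $O(\poly(\log{n},\epsilon^{-1}))$.

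For correctness, observe that $\VC\setminus\{v\}$ is still a vertex cover of $\tilde{G}$: since $v$ is a leaf in every forest it is not a branch vertex, so every neighbor of $v$ in $\tilde{G}$ must lie in $\VC\setminus\{v\}$ — this is the exact structural fact invoked in \Cref{lem:insertVC}. Consequently $X\cup\{v\}$ is independent and $\tilde{G}_{\VC\setminus v}=\bigl(\tilde{G}\setminus(X\cup\{v\})\bigr)\cup\bigcup_{x\in X\cup\{v\}}K_x$ is well defined by \Cref{def:kX}. Comparing this with $\tilde{G}_{\VC}=(\tilde{G}\setminus X)\cup\bigcup_{x\in X}K_x$, currently stored in $\textsc{GRAPH}_{\VC}$: the only edges of $\tilde{G}\setminus X$ absent from $\tilde{G}\setminus(X\cup\{v\})$ are the $e_\emptyset$ edges incident to $v$, which Step~1 removes; the only newly appearing clique is $K_v$, which Step~2 inserts; and every $K_x$ with $x\in X$ is untouched because $v\notin N(x)$ by independence of $X\cup\{v\}$. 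Hence after $\textsc{REMOVE}_{\VC}(v)$ the structure $\textsc{GRAPH}_{\VC}$ equals $\tilde{G}_{\VC\setminus v}$, as claimed.

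I expect the only nonroutine point to be justifying that emptiness of $\textsc{BRANCH}_v$ forces all of $v$'s neighbors into $\VC\setminus\{v\}$ — this is what makes $\tilde{G}_{\VC\setminus v}$ well defined, keeps the untouched cliques $K_x$ correct, and also supplies the degree bound used in the time analysis. Everything else is a symmetric rewrite of the insertion case.
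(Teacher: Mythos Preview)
Your proposal is correct and follows essentially the same approach as the paper: bound the degree of $v$ via the emptiness of $\textsc{BRANCH}_v$, use this to bound the work in each of the two steps of $\textsc{RemoveVC}$, and then verify correctness by noting that all neighbors of $v$ lie in $\VC\setminus\{v\}$ so that $\tilde{G}_{\VC\setminus v}$ differs from $\tilde{G}_{\VC}$ only by the removed $e_\emptyset$ edges at $v$ and the inserted clique $K_v$. Your write-up is in fact slightly more careful than the paper's (you explicitly observe that each $K_x$ with $x\in X$ is unaffected because $v\notin N(x)$).
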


\begin{proof} Costs of the two steps are:
	\begin{enumerate}
		\item At most $O(\poly(\log{n},\epsilon^{-1}))$ edges are adjacent to $v$ in $\textsc{ADJ-LIST}_{\tilde{G}}$, so deleting all these edges from $\textsc{GRAPH}_{\VC}$ takes $O(\poly(\log{n},\epsilon^{-1}))$ time.
		
		\item $v \notin \VC$, so $v$ has $O(\poly(\log{n},\epsilon^{-1}))$ neighbors, and using all incident edges to compute each $e_v \in K_v$ and insert $e_v$ into $\textsc{GRAPH}_{\VC}$ takes $O(\poly(\log{n},\epsilon^{-1}))$ time.
	\end{enumerate}

	If $\textsc{BRANCH}_v$ is empty, then $v$ must only be incident to $\VC$ in $\textsc{ADJ-LIST}_{\tilde{G}}$. Therefore in ${\tilde{G}}_{\VC \setminus v}$, $v$ will never be incident to any edges $e_{\emptyset}$, and for any of its neighbors $w$ and $z$, $(w,z)_v$ will be in $\textsc{ADJ-LIST}_{\tilde{G}}$. $\textsc{INSERT}_{\VC}(v)$ will perform exactly these operations on $\textsc{GRAPH}_{\VC}$
	
\end{proof}

We now consider updating $\textsc{ADJ-LIST}_{\tilde{G}}$ given the addition/deletion of some edge. This process is simple in terms of time complexity, but has a small wrinkle in maintaining the correct $\textsc{LEAF}$ and $\textsc{BRANCH}$ structure. Specifically, for each forest, we can consider all of the degree one vertices to be leaves, except for when there is a disjoint edge in the forest. Accordingly, steps 3, 4, and 5 of the algorithm in Figure~\ref{fig:updateADJ} will take care of this edge case.

\begin{figure}
	\begin{algbox}
		$\textsc{UpdateADJ}(G, \VC, e)$
		\begin{enumerate}
			\item If $e$ has been added/deleted, then add/delete $e$ from the adjacency list of $u$ and $v$ for $F_i$ in $\textsc{ADJ-LIST}_{\tilde{G}}$, which will be denoted $L_{u,i}$ and $L_{v,i}$, respectively.
			\item For $u$ and $v$, if $L_{v,i}$ has at most one adjacent vertex, place it in $\textsc{LEAF}_v$, otherwise place it in $\textsc{BRANCH}_v$.
			\item If the degree of $u$ and $v$ in $F_i$ is zero before adding $e$, then place $L_{v,i}$ in $\textsc{BRANCH}_v$ and $L_{u,i}$ in $\textsc{BRANCH}_u$
			\item For $u$ and $v$, if degree of $v$ is two before deleting $e$, check the other vertex incident to $v$, say it is $w$, and if $w$ has degree one in $F_i$ then move $L_{v,i}$ to $\textsc{BRANCH}_v$ and $L_{w,i}$ to $\textsc{BRANCH}_w$.
			\item For $u$ and $v$, if degree of $v$ is one before adding $e$, check the other vertex incident to $v$, say it is $w$, and if $w$ has degree one in $F_i$ then move $L_{w,i}$ to $\textsc{LEAF}_w$.
		\end{enumerate}
	\end{algbox}
	
	\caption{Update $\textsc{ADJ-LIST}_{\tilde{G}}$}
	\label{fig:updateADJ}
	
\end{figure}

\begin{lemma}
	\label{lem:updateADJ}
	$\textsc{UpdateADJ}(G, \VC, e)$ takes $O(\log{n})$ time and all vertices $v$ such that $L_{v,i}$ are in $\textsc{BRANCH}_v$, maintain a 2-approximate vertex cover of $F_i$.
\end{lemma}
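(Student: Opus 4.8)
I would prove Lemma~\ref{lem:updateADJ} in two parts: the $O(\log n)$ running time bound, and the correctness of the branch/leaf classification (which in turn gives the $2$-approximate vertex cover via Lemma~\ref{lem:treeApprox}). For the running time, the key point is that $\textsc{UpdateADJ}$ only touches a constant number of vertices — the two endpoints $u,v$ of the edge $e$ and, in steps~4 and~5, at most one additional neighbor $w$ of each of them — and for each such vertex it performs $O(1)$ list operations plus a constant number of degree queries. The only genuine cost is manipulating the adjacency-list data structures $L_{v,i}$ (and moving a list reference between $\textsc{LEAF}_v$ and $\textsc{BRANCH}_v$), which by assumption supports insertion/deletion in $O(\log n)$ time; hence the total is $O(\log n)$. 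I would also note that determining whether a vertex has degree $0,1,$ or $2$ in forest $F_i$ can be read off from the length of its adjacency list in $O(1)$ (or $O(\log n)$) time, so no global recomputation is needed.

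**Correctness of the classification.** The substantive claim is that after $\textsc{UpdateADJ}$, a vertex $v$ has $L_{v,i} \in \textsc{BRANCH}_v$ precisely when $v$ should be a ``branch'' vertex of $F_i$, where the intended invariant is: $v$ is a branch vertex iff $v$ is a non-leaf of $F_i$, \emph{except} that when $F_i$ contains an isolated edge $\{v,w\}$, we declare \emph{both} $v$ and $w$ to be branch vertices (so that the vertex cover "all non-leaves" still covers that edge — one of the two endpoints must be in the cover, and we conservatively put both). I would argue this by a careful case analysis mirroring the five steps of the algorithm: (i) inserting an edge that raises $\deg_{F_i}(v)$ from $1$ to $2$ turns $v$ from a leaf into a branch vertex (step~2); (ii) inserting an edge that creates a new component $\{u,v\}$ from two isolated vertices must put both into $\textsc{BRANCH}$ (step~3); (iii) deleting an edge that drops $\deg_{F_i}(v)$ from $2$ to $1$ makes $v$ a leaf, \emph{unless} its remaining neighbor $w$ now has degree $1$, in which case $\{v,w\}$ has become an isolated edge and both must be branch vertices (step~4); (iv) symmetrically, inserting an edge at a degree-$1$ vertex $v$ whose neighbor $w$ also had degree $1$ destroys the isolated edge $\{v,w\}$, so $w$ reverts to a leaf (step~5). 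In each case I would check that the invariant, assumed to hold before the update, is restored afterwards, using that the update changes the degree of only $u$ and $v$ by exactly one and thus can only affect the leaf/branch status of $u$, $v$, and their at-most-one ``partner'' in an isolated edge.

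**Concluding the approximation bound.** Once the invariant is established, fix a forest $F_i$ and let $\VC_i$ be the set of $v$ with $L_{v,i}\in\textsc{BRANCH}_v$. For every non-trivial tree component of $F_i$ with at least one edge, $\VC_i$ contains all its non-leaf vertices, and by Lemma~\ref{lem:treeApprox} this is a $2$-approximate vertex cover of that component; for an isolated-edge component $\{v,w\}$, $\VC_i$ contains both endpoints, which is at worst a $2$-approximation of the minimum cover (of size $1$) of that edge; isolated vertices contribute nothing and need not be covered. Summing over the (disjoint) components, $\VC_i$ is a $2$-approximate vertex cover of $F_i$, which is exactly the claim.

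**Main obstacle.** The running time is routine; the real care is in the classification argument, specifically the isolated-edge corner case. The subtlety is that a "leaf'' in the intended sense is a degree-$1$ vertex whose component has $\ge 2$ edges, so a component that is a single edge has two degree-$1$ vertices but no leaves, and the cover must include one of them — the algorithm conservatively includes both. I expect the bookkeeping in steps~4 and~5 (correctly detecting the creation/destruction of an isolated edge by inspecting the unique remaining neighbor's degree, and not over- or under-moving list references) to be the delicate part, and I would want to state the leaf/branch invariant explicitly and verify it is preserved in each of the five steps rather than argue informally.
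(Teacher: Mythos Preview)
Your proposal is correct and follows essentially the same approach as the paper's own proof: bound the running time by observing that only a constant number of vertices are touched and the $O(\log n)$ cost comes from locating/manipulating the adjacency lists; then argue that steps~2--5 maintain the invariant that the $\textsc{BRANCH}$ vertices are exactly the non-leaves plus both endpoints of any isolated-edge component, and conclude the $2$-approximation via Lemma~\ref{lem:treeApprox}. The paper's proof is in fact considerably terser than your outline---it simply asserts the $O(\log n)$ lookup cost, cites the tree vertex-cover lemma for components with $\ge 2$ edges, and says steps~3--5 handle the single-edge corner case---so your explicit statement of the invariant and step-by-step case analysis is if anything more careful than what appears in the paper.
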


\begin{proof}
	Finding the adjacency list of $u$ and $v$ for $F_i$ in $\textsc{ADJ-LIST}_{\tilde{G}}$ takes $O(\log{n})$ time. The rest of the steps all take $O(1)$ time, as they are just there to ensure we maintain the 2-approximate vertex cover of $F_i$.
	
	For all trees, other than a single edge, it suffices to put all vertices with degree $\geq 2$ in the vertex cover, and 2-approx tree theorem tells us that this is a 2-approximate vertex cover. Step 3 and 4 of Update $\textsc{ADJ-LIST}_{\tilde{G}}$ ensure that in the single edge case, $e = (u,v)$ that $L_{v,i}$ is in $\textsc{BRANCH}_v$ and $L_{u,i}$ is in $\textsc{BRANCH}_u$, which is still a 2-approximate vertex cover. Further, step 5 ensures that anytime an edge is added to a tree that just contains a single edge, all vertices of degree one have their adjacency list moved to the $\textsc{LEAF}$ list.
\end{proof}

\subsubsection{Full Dynamic Update Process}

Finally, we consider the addition/deletion of an edge in $\textsc{SPANNERS}_{G}$. Specifically, let the edge $e = (u,v)$ be added/deleted from forest $F_i$. The above two operations allow us to reduce it to the simpler
case of both $u$ and $v$ being in $\VC$.
The update process will occur as follows:

\begin{enumerate}	
	\item For $u$ and $v$, if $v \notin \textsc{\VC}_{\tilde{G}}$, then run $\textsc{InsertVC}$ on $\textsc{GRAPH}_{\VC}$, $\textsc{VC}_{\tilde{G}}$, and $v$
	
	\item Update $\textsc{ADJ-LIST}_{\tilde{G}}$
	
	\item If $e$ was added/deleted from $\tilde{G}$, insert/delete edge $e_{\emptyset}$ from $\textsc{GRAPH}_{\VC}$ and insert $u$ and $v$ into $\textsc{VC}_{\tilde{G}}$
	
	\item For $u$ and $v$, if $\textsc{BRANCH}_v$ is empty, then run $\textsc{RemoveVC}$ on $\textsc{GRAPH}_{\VC}$, $\textsc{VC}_{\tilde{G}}$, and $v$, and delete $v$ from $\textsc{VC}_{\tilde{G}}$
\end{enumerate}

By Lemma~\ref{lem:insertVC}, $\textsc{GRAPH}_{\VC}$ is equivalent to ${\tilde{G}}_{\VC \cup \{u,v\}}$ on updated $\textsc{ADJ-LIST}_{\tilde{G}}$ after step 3 because $u$ and $v$ are in $\VC$. Similarly, the moving of $u$ and $v$ outside of $\VC$ ensures our final state is good.

\paragraph{Proof of Theorem~\ref{thm:dyn2Maintenance}}: The full update process for $\textsc{ADJ-LIST}_{\tilde{G}}$, $\textsc{VC}_{\tilde{G}}$, and $\textsc{GRAPH}_{\VC}$ only calls $\textsc{InsertVC}$, $\textsc{RemoveVC}$, and $\textsc{UpdateADJ}$ a constant number of times. Therefore, by Lemma~\ref{lem:insertVC}, Lemma~\ref{lem:removeVC}, and Lemma~\ref{lem:updateADJ} this process takes $O(\poly(\log{n},\epsilon^{-1}))$ time. This also implies that at most $O(\poly(\log{n},\epsilon^{-1}))$ edges can be added/deleted from ${\tilde{G}}_{VC}$, and by Corollary~\ref{lem:additional properties of fully dynamic cut sparsifier} maintaining $H$ will take at most $O(\poly(\log{n},\epsilon^{-1}))$ time.

\section{Vertex Sampling in Bipartite Graphs}
\label{sec:vertSparsify}

We now design an improved method for
reducing a graph onto one whose vertex size is
$O(|\VC|\poly(\log{n}, \epsilon^{-1})) + |X|/2$.
Instead of sampling edges of $G_{\VC}$, it samples vertices
in $X = V \setminus \VC$ using $G_{\VC}$ as a guide.
This question that we're addressing, and the vertex sampling
scheme, is identical to the terminal cut sparsifier question
addressed in~\cite{AndoniGK14}.
In the next section we will apply this sampling scheme to obtain
a vertex sparsification routine that will reduce onto a graph of
size proportional to $O(|\VC|\poly(\log{n}, \epsilon^{-1}))$
without losing a factor of 2 approximation.

We will reuse the notation from Section~\ref{subsec:critcalCut}, and we encourage the reader to revisit the definitions in that subsection. For this section, we will exclusively be dealing with subsets of $\VC$, and we will drop the $\VC$ subscript from each $S_{\VC}$. So, formally our goal is to find $H$ so that for all
$S \subset \VC$,
\[
(1-\epsilon)\Delta_{G}(S) \leq \Delta_{H}(S) \leq (1-\epsilon)\Delta_{G}(S).
\]

This sampling scheme allows us to keep expectation of the cuts
on $\VC$ to be exactly the same, instead of having a factor
$2$ error from the conversion from $G$ to $G_{\VC}$.
The connection to $G_{\VC}$ on the other hand allows
us to bound the variance of this sampling process as before.

In our application of this sampling routine to vertex sparsification, we will consider sparsifying $G \setminus X$ separately, so for simplicity, we assume here that $(VC,X)$ is a bipartition and 

\[G = \bigcup_{x \in X}N_x \]

Further, we first focus on the case where
all vertices in $X$ have degree $d$, and all edge weights
in $X$ are within a factor of $U$ from each other.
We will show reductions from general cases to ones
meeting these assumptions in Subsection~\ref{subsec:reduction}.

As before, let $G_{\VC}$ be the multigraph generated
by the clique edges from Theorem~\ref{thm:schurComplement}:
\[
G_{\VC} = \bigcup_{x \in X} K_x.
\]
Lemma~\ref{lem:elimWeight} implies that the weights of every (multi) edge $e_x \in G_{\VC}$
are within a factor of $O(U^2 d)$ from each other.

As mentioned, we ultimately want to obtain a vertex sparsification scheme that reduces to size $O(|\VC|\poly(\log{n}, \epsilon^{-1}))$ for further application.
As a result, instead of doing a direct union bound over
all $2^{|\VC|}$ cuts to get a size of $\poly(|\VC|)$ as in~\cite{AndoniGK14},
we need to invoke cut counting as with cut sparsifier constructions.
This necessitates the use of objects similar to $t$-bundles
to identify edges with small connectivity.

Our proof will use a similar structure to that of Fung et al.~\cite{FungHHP11}, particularly the cut-counting based analysis
of cut sparsifiers.
We will follow their definitions, which are in turn
based on the definition of edge strength by
Benczur and Karger~\cite{BenczurK15}.

\begin{definition}
\label{dfn:heavy}
In a graph $G$, an edge is $e$ $k-heavy$ if the connectivity of its endpoints is at least $k$ in $G$.
Furthermore, for a cut $S$, its $k-projection$ is the set of $k-heavy$ edges in the edges cut, $\partial(S)$.
\end{definition}

We will refer to edges that we cannot certify to be heavy
as light.
These edges are analogous to the bundle edges from
the cut sparsifier routine from Section~\ref{sec:dynamic cut sparsifier}.

Before we continue, we remark that these definitions of
heavy/strong edges in~\cite{FungHHP11, BenczurK15} is almost
the opposite of definitions in spectral sparsification.
In spectral sparsification, the edges with high leverage scores
are kept, and the low leverage score ones are sampled.
This issue can also be reflected in the robustness of this definition
in the presence of weights:
a natural way of generalizing heaviness is to divide the connectivity
of $uv$ by the weight $w(u, v)$.
This leads to a situation where halving the weight of an edge actually
makes it heavier.
In fact, these definitions of heaviness / strength are measuring
the connectivity in the graph between the endpoints of $e$,
instead of the strength of $e$ itself.
As our routines are in the cut-sparsification setting, we will
use these definitions in this version in order to be
consistent with previous works~\cite{FungHHP11, BenczurK15}, 
but may switch to a different set of notations in a future edit.

The main result of~\cite{FungHHP11}, when restricted to
graphs with bounded edge weights, states that we can
sample the $O(\log{n} \epsilon^{-2})$-heavy edges by a factor of $2$.
Our goal is to prove the analogous statement for sampling
heavy vertices, which we define as follows:

\begin{definition}
	\label{def:heavyVertex}
	A subset of $X$, $X^{heavy}$ is a $k$-heavy subset
	if every pair of vertices $u, v$ in some $N_x$ for some
	$x \in X^{heavy}$ is $k$-connected in the graph
	\[
		G^{light}_{\VC} = \cup_{x \notin X^{heavy}} K_x.
	\]
\end{definition}

We will show in Section~\ref{sec:onePlusEpsilon}, these heavy/light
subsets can be found by taking pre-images of more restricted
versions of $t$-bundles on $G_{\VC}$.
Our main structural result is that a heavy subset can be sampled
uniformly while incurring $\epsilon$-distortion.

\begin{figure}
	
	\begin{algbox}
		$\textsc{Sample}(G, \VC, X^{heavy})$
		
		\textbf{Input:} Bipartite graph $G$ with one bipartition $\VC$, heavy subset $X^{heavy}$ of the other bipartition.
		
		\textbf{Output:} Bipartite graph $H$ with bipartition $(\VC, XH)$.
		
		\begin{enumerate}
			\item Initialize $H \leftarrow \emptyset$, $XH = \emptyset$.
			\item For every $x \in X^{heavy}$, flip fair coin with probability $1/2$, if returns heads:
				\begin{enumerate}
					\item $H \leftarrow H + 2 N_{x}$.
					\item $XH \leftarrow XH \cup \{ x \}$
				\end{enumerate}
			\item Return $(H, XH)$.
		\end{enumerate}
	\end{algbox}
	
	\caption{Sampling Heavy Vertices}
	\label{fig:sampleHeavy}
	
\end{figure}

\begin{lemma}
	\label{lem:sampleHeavy}
	Given a bipartite graph $G$ between $\VC$ and $X$ such that
	$X$ has maximum degree $d$ and
	all edge weights are in some range $[\gamma, U \gamma]$, with $U = O(\poly(n))$ and any non-negative $\gamma$. 
	For any $\epsilon$, there is a parameter
	$t_{\min} = O(d U \log{n} \epsilon^{-2})$ such that if we're given
	a subset $X^{light}$ of $X$ so that $X^{heavy} = X \setminus X^{light}$
	is  $\gamma (dU) t$-heavy with $t \geq t_{\min} $ then the graph consisting of the light vertices and sampled heavy vertices,
	\[
		H = N(X^{light}) \cup \textsc{Sample}(G, \VC, X^{heavy})
	\]
	meets the condition:
	\[
	\abs{\Delta_G(S) - \Delta_{H}(S)} \leq \epsilon \Delta_G(S) 
	\]
	for all subsets $S \subseteq \VC$ w.h.p.
	Here the constants in $t_{\min}$ depends on the failure probability
	in the w.h.p.
\end{lemma}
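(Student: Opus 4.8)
The plan is to exploit a mismatch that is actually in our favor: although passing from $G$ to the clique graph $G_{\VC}=\bigcup_{x\in X}K_x$ of Theorem~\ref{thm:schurComplement} distorts cut values by a factor $2$, the vertex‑sampling routine of Figure~\ref{fig:sampleHeavy} preserves the relevant quantities \emph{in expectation exactly}. Since $G=\bigcup_{x\in X}N_x$ has no edges inside $\VC$, for every $S\subseteq\VC$ we have $\Delta_G(S)=\sum_{x\in X}\ww^{(x)}(S)$, and keeping a heavy vertex $x$ with probability $1/2$ while doubling $N_x$ turns its contribution into $2\ww^{(x)}(S)$ with probability $1/2$ and $0$ otherwise. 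By linearity, $\mathbb{E}[\Delta_H(S)]=\sum_{x\in X^{light}}\ww^{(x)}(S)+\sum_{x\in X^{heavy}}\ww^{(x)}(S)=\Delta_G(S)$, and writing $\eta_x\in\{-1,+1\}$ for the centered coins the error is the Rademacher sum $\Delta_H(S)-\Delta_G(S)=\sum_{x\in X^{heavy}}\eta_x\,\ww^{(x)}(S)$.

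The second ingredient is that the clique graph, used only as an analytic device, controls both the size of each term and the scale against which we concentrate. From $\Delta_{K_x}(S)=\ww(x,S)\,\ww(x,\VC\setminus S)/\ww(x,N(x))$ one checks $\Delta_{K_x}(S)\le\ww^{(x)}(S)\le 2\Delta_{K_x}(S)$, so the $\ww^{(x)}$ are sandwiched by clique‑cut values; moreover, since $x$ has degree at most $d$ and all star weights lie in $[\gamma,U\gamma]$, every nonzero $\ww^{(x)}(S)$ lies in $[\gamma,dU\gamma]$ (cf.\ Lemma~\ref{lem:elimWeight}). Finally, the \emph{vertex}‑heaviness of Definition~\ref{def:heavyVertex} plays the role of the edge‑heaviness of Definition~\ref{dfn:heavy}: if $S$ cuts some $K_x$ with $x\in X^{heavy}$, then $S$ separates two vertices of $N(x)$ that are $\gamma(dU)t$‑connected in $G^{light}_{\VC}$, hence $\Delta_{G^{light}_{\VC}}(S)\ge\gamma(dU)t$, and therefore $\Delta_G(S)\ge\Delta_{G_{\VC}}(S)\ge\Delta_{G^{light}_{\VC}}(S)\ge\gamma(dU)t$. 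In particular only cuts with $\Delta_G(S)\ge\gamma(dU)t$ are affected by the randomness at all.

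With these in hand I would finish by a cut‑counting union bound in the style of Fung et al.~\cite{FungHHP11}, exactly as in the proofs of Lemma~\ref{lem:half_cut} and Lemma~\ref{lem:connectivity_sampling}. For a fixed affected $S$, Hoeffding's inequality applied to $\sum_{x\in X^{heavy}}\eta_x\ww^{(x)}(S)$, together with $\sum_x\ww^{(x)}(S)^2\le(dU\gamma)\sum_x\ww^{(x)}(S)\le dU\gamma\cdot\Delta_G(S)$, yields $\Pr[\,|\Delta_H(S)-\Delta_G(S)|>\epsilon\Delta_G(S)\,]\le 2\exp(-\Omega(\epsilon^2\Delta_G(S)/(dU\gamma)))$, which for $\Delta_G(S)\ge\gamma(dU)t$ is $2\exp(-\Omega(\epsilon^2 t))$. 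Grouping affected cuts by the dyadic scale of $\Delta_G(S)$ (equivalently of $\Delta_{G^{light}_{\VC}}(S)$, whose minimum over affected cuts is $\ge\gamma(dU)t$), a Karger/Fung‑et‑al.\ projection‑counting estimate bounds the number of cuts with $\Delta_{G^{light}_{\VC}}(S)<j\gamma(dU)t$ by $n^{O(j)}$, so $\sum_{j\ge1}n^{O(j)}\cdot 2\exp(-\Omega(j\epsilon^2 t))\le n^{-c}$ once $t=\Omega(c\log n\,\epsilon^{-2})$. The larger stated threshold $t_{\min}=O(dU\log n\,\epsilon^{-2})$ comes from the standard reduction to the bounded‑weight hypothesis of Lemma~\ref{lem:connectivity_sampling}: the edges of $G_{\VC}$ span a weight range of $\mathrm{poly}(dU)$ by Lemma~\ref{lem:elimWeight}, so one splits $G_{\VC}$ into $O(\log n)$ weight‑scale classes and applies the sampling bound class by class, which costs an extra $dU$ factor in the per‑class connectivity required.

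I expect the last step to be the main obstacle, for two reasons. First, all edges of a single clique $K_x$ are sampled with one shared coin, so Lemma~\ref{lem:connectivity_sampling} (stated for independent edge sampling) cannot be cited verbatim; one must either check that the cut‑counting argument of Fung et al.\ tolerates per‑clique correlation, or — as sketched above — run the Chernoff‑plus‑cut‑counting argument directly at the level of the vertices $x\in X^{heavy}$, with Definition~\ref{def:heavyVertex} in place of edge heaviness. Second, the bounded‑weight reduction is delicate, since a heavy pair's $\gamma(dU)t$ units of connectivity in $G^{light}_{\VC}$ need not survive restriction to a single weight‑scale class; making this accounting go through, and pinning down the exact power of $dU$ in $t_{\min}$, is where the real work lies.
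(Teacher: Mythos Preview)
Your skeleton—exact expectation under vertex sampling, concentration of the Rademacher sum $\sum_{x\in X^{heavy}}\eta_x\,\ww^{(x)}(S)$, then a union bound via cut counting on $G_{\VC}$—is the paper's strategy. The gap is in the union-bound step. You assert that a projection-counting estimate ``bounds the number of cuts with $\Delta_{G^{light}_{\VC}}(S)<j\gamma(dU)t$ by $n^{O(j)}$''; but projection counting bounds the number of distinct $k$-\emph{projections}, not the number of subsets $S\subseteq\VC$, and exponentially many $S$'s can share a projection. The paper closes this by an explicit equivalence relation: declare $S_1\equiv S_2$ whenever $N_x(S_1)=N_x(S_2)$ for every heavy $x$, take one minimum-weight representative per class ($\mathcal{S}^{rep}$), observe that all cuts in a class have the \emph{same} random error so it suffices to bound failure on the representative (Lemma~\ref{lem:repGood}), and then show this equivalence on $G$ coincides with equality of heavy-clique projections in $G_{\VC}$ (Lemma~\ref{lem:equivEquiv}). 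Only after this does Fung et al.'s projection count legitimately bound $|\mathcal{S}^{rep}|$ at each scale. Your last paragraph gestures at ``running the argument directly at the level of the vertices''—that is precisely this equivalence step—but the sketch does not supply it, and without it the union bound does not close.

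Your second worry, a weight-class decomposition of $G_{\VC}$ to invoke Lemma~\ref{lem:connectivity_sampling} per class, is a red herring: the paper performs no such splitting inside this lemma. The $dU$ in $t_{\min}$ is not a bucketing artifact; it falls out directly from the mismatch between the Chernoff denominator (governed by $\ww^{\max}(S)$, as in Lemma~\ref{lem:singelCut}) and the projection-counting threshold, which via Lemma~\ref{lem:elimWeight} is $\gamma(dU)^{-1}t$ rather than $\gamma(dU)t$. (The exponent in the lemma statement is a typo; the proof and the downstream Lemma~\ref{lem:lightVertex} consistently use $(dU)^{-1}$.) So your diagnosis of where the $dU$ lives is off, and the ``delicate'' per-class connectivity accounting you anticipate is never needed.
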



The cut-counting proof of cut-sparsifiers from~\cite{FungHHP11}
essentially performs a union bound over distinct sets of $k$-heavy
projections over all cuts.
We will perform the same here, but over distinct partitions of $N_x$
over all $x$ in $X^{heavy}$.
We can first define the partition of a single vertex by a cut $S \subseteq \VC$ as:
\[
N_x(S) = \left\{S \cap N(x), N(x) \setminus S \right\}.
\]
Then we can define an equivalence relation on cuts as:
\begin{definition}
\label{def:equiv}
$S_1 \equiv_G S_2$ if for any $x \in X^{heavy}$,
$N_x(S_1) = N_x(S_2)$ 
\end{definition}

Note that this equivalence ignores the presence of edges
in $X^{light}$.
So we need to further take representatives of each equivalence class:

\begin{definition}
\label{def:sRep}
	Define $\mathcal{S}^{rep}$ to be the set of subsets $S \subset \VC$ such that
	\begin{enumerate}
		\item For every $S \in \mathcal{S}^{rep}$, there is some $x \in X^{heavy}$ s.t. $N_x(S) \neq \left\{ N_x, \emptyset \right\}$, i.e.\ $N_x$ is not entirely on one side of the cut.
		\item For any $S_1,S_2 \in \mathcal{S}^{rep}$, $S_1 \not\equiv_G S_2$
		\item For any $S \subset \VC$ such that $S \notin \mathcal{S}^{rep}$, there exists $\overline{S} \in \mathcal{S}^{rep}$ such that
		\begin{itemize}
		\item $S \equiv_{G} \overline{S}$, and
		\item $\Delta_G(\overline{S}) \leq \Delta_{G}(S)$.
		\end{itemize}
\end{enumerate} 	
\end{definition}

An immediate consequence of condition 1 is that for any $S \in \mathcal{S}^{rep}$ we have $\Delta_{G}(S) > \gamma t (dU)^{-1}$.
This set plays the same role as the unique
$k$-projections in cut sparsifiers.

\begin{lemma}
\label{lem:repGood}
Let $H$ be obtained from $G$ by sampling on $X^{heavy}$, then
for any element of $\mathcal{S}^{rep}$, $\overline{S}$ we have:
\[
\prob[H]{\bigcup_{S, S \equiv_{G} \overline{S}}
	\abs{\Delta_G(S) - \Delta_{H}(S)} > \epsilon \Delta_G(S) }
= \prob[H]{\abs{\Delta_G(\overline{S}) - \Delta_{H}(\overline{S})}
	> \epsilon \Delta_G(\overline{S})}	.
\]	
\end{lemma}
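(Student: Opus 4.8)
The plan is to isolate the random part of the cut values and show that it is constant on each $\equiv_G$-equivalence class, while the light vertices only move the threshold $\epsilon\Delta_G(\cdot)$, with $\overline S$ being by construction the class member with the smallest threshold. First I would record the structure of the two cut functionals. Since $G=\bigcup_{x\in X}N_x$ is bipartite between $\VC$ and $X$, for every $S\subseteq\VC$ we have $\Delta_G(S)=\sum_{x\in X}\ww^{(x)}(S)$. Writing $b_x=2$ when the coin for $x\in X^{heavy}$ in $\textsc{Sample}$ (Figure~\ref{fig:sampleHeavy}) comes up heads, so that $x$ is kept with the doubled star $2N_x$, and $b_x=0$ otherwise, the output $H$ together with the untouched light stars satisfies
\[
\Delta_H(S) = \sum_{x\in X^{light}}\ww^{(x)}(S) \;+\; \sum_{x\in X^{heavy}} b_x\,\ww^{(x)}(S),
\]
so the light contributions cancel in the difference and
\[
\Delta_H(S)-\Delta_G(S) \;=\; \sum_{x\in X^{heavy}}(b_x-1)\,\ww^{(x)}(S)\;=:\;D_H(S),
\]
a random variable depending only on the coins on $X^{heavy}$.

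Next I would use the fact that $\ww^{(x)}(S)=\min\{\ww(x,S),\ww(x,\VC\setminus S)\}$ depends on $S$ only through the unordered partition $N_x(S)=\{S\cap N(x),\,N(x)\setminus S\}$. Hence if $S\equiv_G\overline S$ (Definition~\ref{def:equiv}), then $\ww^{(x)}(S)=\ww^{(x)}(\overline S)$ for every $x\in X^{heavy}$, and therefore $D_H(S)=D_H(\overline S)$ \emph{as functions on the sample space}. In particular the absolute deviation $\abs{\Delta_G(S)-\Delta_H(S)}=\abs{D_H(S)}$ takes the same value, outcome by outcome, for every $S$ in the equivalence class of $\overline S$; denote this common random value $\xi$.

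It then remains to compare the thresholds $\epsilon\Delta_G(S)$. For any $S$ with $S\equiv_G\overline S$: if $S\notin\mathcal S^{rep}$ then property~3 of Definition~\ref{def:sRep} gives $\Delta_G(\overline S)\le\Delta_G(S)$, while if $S\in\mathcal S^{rep}$ then property~2 forces $S=\overline S$ and the inequality is trivial. Consequently $\{\xi>\epsilon\Delta_G(S)\}\subseteq\{\xi>\epsilon\Delta_G(\overline S)\}$ for every such $S$, so taking the union over the class,
\[
\bigcup_{S:\,S\equiv_G\overline S}\bigl\{\abs{\Delta_G(S)-\Delta_H(S)}>\epsilon\Delta_G(S)\bigr\}\;\subseteq\;\bigl\{\abs{\Delta_G(\overline S)-\Delta_H(\overline S)}>\epsilon\Delta_G(\overline S)\bigr\}.
\]
The reverse inclusion is immediate since $\overline S$ lies in its own class, so the two events coincide and the claimed equality of probabilities follows.

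I expect the only delicate points to be bookkeeping rather than mathematics: one must verify that the light stars contribute identically to $\Delta_G$ and $\Delta_H$ (true because $\textsc{Sample}$ flips coins only on $X^{heavy}$) so that they cancel in $D_H$, and that $\equiv_G$ controls the heavy part of \emph{both} $\Delta_G$ and $\Delta_H$ — which is exactly the content of $\ww^{(x)}$ being a function of the unordered bipartition $N_x(S)$, together with the fact that doubling a star scales $\ww^{(x)}$ by~$2$. No concentration inequality is needed here; this lemma is a purely structural reduction, and the actual tail bound will be applied afterwards to the single representative $\overline S$.
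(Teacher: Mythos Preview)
Your proposal is correct and follows essentially the same approach as the paper's proof: you both observe that the light-star contributions cancel in $\Delta_G(S)-\Delta_H(S)$, that the remaining heavy part depends only on the equivalence class, and that $\overline S$ has the smallest threshold $\epsilon\Delta_G(\cdot)$ in its class. Your version is slightly more explicit in writing out the Bernoulli coins $b_x$ and in invoking properties~2 and~3 of Definition~\ref{def:sRep} to justify $\Delta_G(\overline S)\le\Delta_G(S)$, but the argument is the same.
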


\begin{proof}
Let $G_{sample} = G \setminus \bigcup_{x \in X^{light}} N_x$
and $H_{sample} = H \setminus \bigcup_{x \in X^{light}} N_x$
be the graphs being sampled.

By construction of $H$, for any $S \subset \VC$,
\[
\abs{\Delta_G(S) - \Delta_{H}(S)}
= \abs{\Delta_{G_{sample}}(S) - \Delta_{H_{sample}}(S)}.
\]
By construction of our equivalence relation, if $S \equiv_G 
\overline{S}$,
\[
\abs{\Delta_{G_{sample}}(S) - \Delta_{H_{sample}}(S)}
= \abs{\Delta_{G_{sample}}(\overline{S}) - \Delta_{H_{sample}}(\overline{S})}.
\]
due to them having the same part that's not in $H$.
Therefore, the failure probability is limited by the element in
the equivalence class with the smallest $\Delta_{G}(S)$,
i.e.\ $\overline{S}$.
\end{proof}

\begin{corollary}
	\[ \prob[H]{\bigcup_{S \subset \VC}\abs{\Delta_G(S) - \Delta_{H}(S)} > \epsilon \Delta_G(S)} =  \prob[H]{\bigcup_{S \in \mathcal{S}^{rep}}\abs{\Delta_G(S) - \Delta_{H}(S)} > \epsilon \Delta_G(S)} \]
\end{corollary}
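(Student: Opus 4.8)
The plan is to derive this corollary directly from Lemma~\ref{lem:repGood} together with the three defining properties of $\mathcal{S}^{rep}$ in Definition~\ref{def:sRep}. The key observation, already extracted inside the proof of Lemma~\ref{lem:repGood}, is that the failure event for a cut $S$ — namely $\abs{\Delta_G(S) - \Delta_{H}(S)} > \epsilon \Delta_G(S)$ — depends on $S$ only through how $S$ partitions the heavy stars $N_x$, $x \in X^{heavy}$. So I would first record the class-invariance claim: for any $S \subseteq \VC$, $\abs{\Delta_G(S) - \Delta_{H}(S)} = \abs{\Delta_{G_{sample}}(S) - \Delta_{H_{sample}}(S)}$, and this random quantity is a function of the $\equiv_G$-class of $S$ only (this is exactly the first two displays in the proof of Lemma~\ref{lem:repGood}).

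Next I would dispose of the cuts that never contribute. If $S$ splits no heavy star, i.e.\ $N_x(S) = \{N_x, \emptyset\}$ for every $x \in X^{heavy}$, then $\Delta_{G_{sample}}(S) = \Delta_{H_{sample}}(S) = 0$, so $\abs{\Delta_G(S) - \Delta_{H}(S)} = 0$ and the failure event for $S$ is the empty event. Since whether $S$ splits some $N_x$ is an invariant of the $\equiv_G$-class of $S$, the union on the left-hand side of the corollary is unchanged if we restrict it to those $S$ that split at least one heavy star.

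Then I would group the surviving cuts into $\equiv_G$-classes and identify one representative per class inside $\mathcal{S}^{rep}$. Property~3 of Definition~\ref{def:sRep} says every such $S$ is $\equiv_G$-equivalent to some $\overline S \in \mathcal{S}^{rep}$ with $\Delta_G(\overline S) \le \Delta_G(S)$; property~1 says each $\overline S \in \mathcal{S}^{rep}$ itself splits some heavy star, so the classes of the representatives are precisely the classes of the splitting cuts; property~2 ensures these representatives lie in distinct classes. Within one class, the common random value $D := \abs{\Delta_G(S) - \Delta_{H}(S)}$ and the minimality $\Delta_G(\overline S) = \min_{S \equiv_G \overline S} \Delta_G(S)$ give the event identity $\bigcup_{S \equiv_G \overline S}\{D > \epsilon \Delta_G(S)\} = \{D > \epsilon \Delta_G(\overline S)\}$, which is the event-level form of Lemma~\ref{lem:repGood}. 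Taking the union over $\overline S \in \mathcal{S}^{rep}$ and then passing to probabilities yields the stated equality.

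The only genuinely delicate point — and the step I would write most carefully — is the bookkeeping that ``splits some heavy $N_x$'' is a class invariant and that $\mathcal{S}^{rep}$ meets every relevant $\equiv_G$-class exactly once, so that the union over representatives neither omits nor double-counts any contributing cut; the rest is an unpacking of the definitions and of the argument already given for Lemma~\ref{lem:repGood}.
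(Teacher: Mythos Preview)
Your proposal is correct and is precisely the argument the paper has in mind: the paper states this corollary without proof, treating it as an immediate consequence of Lemma~\ref{lem:repGood} and Definition~\ref{def:sRep}, and your write-up is a careful unpacking of exactly that implication (dispose of the non-splitting cuts as empty events, then reduce each remaining $\equiv_G$-class to its representative via the event identity inside the proof of Lemma~\ref{lem:repGood}).
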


The key observation is that the sizes of subsets of $\mathcal{S}^{rep}$
of certain sizes can be bounded using cut-counting on $G_{\VC}$.
For any $S \subset \VC$, define
\[
K_x(S) = E_{K_x}(S \cap N(x),N(x)\setminus S),
\]
which are the edges in $K_x$ crossing $S$.
Similar to $S_1 \equiv_{G} S_2$, we can define
$S_1 \equiv_{G_{\VC}} S_2$ if for any
$x \in X^{heavy}$, $K_x(S_1) = K_x(S_2)$

\begin{lemma}
\label{lem:equivEquiv}
	For any $S_1,S_2 \subseteq \VC$, $N_x(S_1) = N_x(S_2)$ iff $K_x(S_1) = K_x(S_2)$.
	Therefore $S_1,S_2 \subseteq \VC$, $S_1 \equiv_G S_2$ iff $S_1 \equiv_{G_{\VC}} S_2$.
\end{lemma}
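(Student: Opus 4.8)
The plan is to establish the per-vertex statement ``$N_x(S_1)=N_x(S_2)$ iff $K_x(S_1)=K_x(S_2)$'' for a single fixed $x$, and then obtain the ``therefore'' clause by quantifying over all $x\in X^{heavy}$. The one point to keep in mind throughout is that, since the weight function maps into $\mathbb{R}_+$, every clique edge $(u,v)_x$ receives the strictly positive weight $\tfrac{\ww(x,u)\,\ww(x,v)}{\sum_{i\in N(x)}\ww(x,i)}$, so $K_x$ is a \emph{complete} graph on the vertex set $N(x)$ (with no edges at all in the degenerate case $|N(x)|\le 1$).

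For the forward direction, suppose $N_x(S_1)=N_x(S_2)$, i.e. the unordered partitions $\{S_1\cap N(x),\,N(x)\setminus S_1\}$ and $\{S_2\cap N(x),\,N(x)\setminus S_2\}$ coincide. By definition $K_x(S)=E_{K_x}(S\cap N(x),\,N(x)\setminus S)$ is the set of clique edges with one endpoint in each part of $N_x(S)$; since $K_x$ is complete this is exactly the complete bipartite graph between the two parts, hence is determined by the unordered partition alone. Equal partitions therefore force $K_x(S_1)=K_x(S_2)$.

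For the reverse direction I would recover the partition from the crossing edge set. Because $K_x$ is complete on $N(x)$, two vertices $u,v\in N(x)$ lie on the same side of a cut $S$ if and only if $(u,v)_x\notin K_x(S)$. Thus $K_x(S_1)=K_x(S_2)$ implies that the ``same side of the cut'' relation on $N(x)$ induced by $S_1$ equals the one induced by $S_2$; this relation is precisely the partition $N_x(S)$ with empty parts deleted, so $N_x(S_1)=N_x(S_2)$ as unordered partitions. The degenerate cases are immediate: if $|N(x)|\le 1$ or if $N(x)$ lies entirely on one side of one of the cuts, then the corresponding $K_x(\cdot)$ is empty, which forces the other to be empty as well, and in all these cases $N_x(\cdot)=\{N(x),\emptyset\}$. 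Finally, by Definition~\ref{def:equiv} and the definition of $\equiv_{G_{\VC}}$, $S_1\equiv_G S_2$ holds iff $N_x(S_1)=N_x(S_2)$ for every $x\in X^{heavy}$, while $S_1\equiv_{G_{\VC}} S_2$ holds iff $K_x(S_1)=K_x(S_2)$ for every $x\in X^{heavy}$; the per-vertex equivalence just proved shows these two conditions are literally the same, giving the claim.

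I do not anticipate a genuine obstacle here: the argument is purely combinatorial and the only thing that must not be overlooked is the reliance on $K_x$ being a complete graph on $N(x)$ (which is exactly why positivity of the weights, and hence of the Schur-complement clique weights, matters), together with a careful but routine treatment of the trivial cases $|N(x)|\le 1$ and ``$N(x)$ not split by the cut''.
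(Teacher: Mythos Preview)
Your proposal is correct and matches the paper's approach exactly: the paper's proof is a one-line observation that $K_x$ is a clique on $N(x)$, so $K_x(S_1)=K_x(S_2)$ iff $S_1\cap N(x)=S_2\cap N(x)$ or $S_1\cap N(x)=N(x)\setminus S_2$, which is precisely your statement that the crossing-edge set of a complete graph determines the unordered bipartition. Your write-up is simply a more careful unpacking of the same argument, including the degenerate cases and the positivity point the paper leaves implicit.
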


\begin{proof}
We construct $K_x$ as a clique, so $K_x(S_1) = K_x(S_2)$ iff $S_1 \cap N(x) = S_2 \cap N(x)$ or $S_1 \cap N(x) = N(x) \setminus S_2$
\end{proof}

\begin{lemma}
	$|\{S\in \mathcal{S}^{rep}| \Delta_{G_{\VC}}(S) \leq K\}|$ is less than or equal to the number of distinct $\gamma(dU)^{-1} t$-projections in cuts of weight at most $K$ 
\end{lemma}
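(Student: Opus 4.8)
I need to prove that $|\{S \in \mathcal{S}^{rep} : \Delta_{G_{\VC}}(S) \leq K\}|$ is at most the number of distinct $\gamma(dU)^{-1}t$-projections in cuts of weight at most $K$. The natural strategy is to exhibit an injection from the left-hand set into the set of distinct projections. I would proceed as follows.

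First I would recall that each $S \in \mathcal{S}^{rep}$ induces, for each $x \in X^{heavy}$, a partition $K_x(S)$ of the clique $K_x$, and by Lemma~\ref{lem:equivEquiv} these partitions together with the equivalence relation $\equiv_{G_{\VC}}$ exactly refine $\mathcal{S}^{rep}$: distinct elements of $\mathcal{S}^{rep}$ are inequivalent under $\equiv_{G_{\VC}}$, hence they differ in $K_x(\cdot)$ for some $x \in X^{heavy}$. So the map $S \mapsto \bigcup_{x \in X^{heavy}} K_x(S)$ — i.e.\ the set of edges of $G^{light}_{\VC}$'s complement (the heavy clique edges) crossing $S$ — is injective on $\mathcal{S}^{rep}$.

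**The key step.** The heart of the argument is identifying this crossing-edge set with a $\gamma(dU)^{-1}t$-projection of a cut of weight at most $K$. Every (multi)edge $e_x \in K_x$ for $x \in X^{heavy}$ has, by the $\gamma(dU)t$-heaviness hypothesis in Definition~\ref{def:heavyVertex}, its endpoints $\gamma(dU)t$-connected in $G^{light}_{\VC}$. Since each such edge has weight at least $\gamma(dU)^{-1}$ (by Lemma~\ref{lem:elimWeight}), adding these edges back only increases connectivity, so in $G_{\VC}$ itself the endpoints of $e_x$ are still at least $\gamma(dU)^{-1}\cdot (dU) t = \gamma t \cdot (\text{appropriate normalization})$-connected — I need to be a bit careful here about whether ``heavy'' in Definition~\ref{dfn:heavy} is measured in absolute connectivity or weight-normalized; the statement says $\gamma(dU)^{-1}t$-projection, which suggests the threshold $k = \gamma(dU)^{-1}t$ in the absolute-connectivity sense of Definition~\ref{dfn:heavy} applied to $G_{\VC}$. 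So each heavy clique edge $e_x$ is $k$-heavy in $G_{\VC}$ with $k = \gamma(dU)^{-1}t$, hence lies in the $k$-projection of any cut it crosses. Thus $\bigcup_{x \in X^{heavy}} K_x(S)$ is exactly (a subset of, and in fact equal to the heavy part of) the $k$-projection of the cut $S$ in $G_{\VC}$. Finally, $\Delta_{G_{\VC}}(S) \leq K$ means this is the $k$-projection of a cut of weight at most $K$, and distinct $S$'s give distinct such projections by the injectivity established above. Counting these gives the bound.

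**Main obstacle.** The step I expect to need the most care is reconciling the exact threshold: moving from ``$\gamma(dU)t$-heavy with respect to $G^{light}_{\VC}$'' (the hypothesis under which $X^{heavy}$ was defined) to ``$\gamma(dU)^{-1}t$-heavy in $G_{\VC}$ in the sense of Definition~\ref{dfn:heavy}'', keeping track of the weight rescalings from Lemma~\ref{lem:elimWeight} (edge weights in $G_{\VC}$ lie in $[\gamma(dU)^{-1}, U\gamma]$) and whether connectivity is counted with or without weights. I would handle this by first normalizing: divide all weights in $G_{\VC}$ by $\gamma(dU)^{-1}$ so that every heavy clique edge has weight $\geq 1$ and the connectivity threshold becomes the clean quantity $t$, verify the heaviness claim there, then un-normalize. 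Once the normalization bookkeeping is pinned down, the injection-and-count argument is routine. I would also need to confirm that the ``smallest-weight representative'' condition (3) in Definition~\ref{def:sRep} plays no role here — it does not, since we only need an injection, not a bound relating weights across equivalence classes — so the proof is essentially: injectivity from Lemma~\ref{lem:equivEquiv}, plus ``heavy clique edge $\Rightarrow$ in the $k$-projection,'' plus counting.
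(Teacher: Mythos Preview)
Your approach is essentially the paper's: use Lemma~\ref{lem:equivEquiv} to get injectivity of $S \mapsto \bigcup_{x\in X^{heavy}} K_x(S)$ on $\mathcal{S}^{rep}$, show each such union sits inside the $\gamma(dU)^{-1}t$-projection of the cut $S$ in $G_{\VC}$, and count.

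Two remarks. First, your ``main obstacle'' is much simpler than you fear. No normalization is needed: the hypothesis says the endpoints of every heavy-clique edge are $k$-connected in $G^{light}_{\VC}$, and since $G^{light}_{\VC}\subseteq G_{\VC}$, connectivity can only go up in $G_{\VC}$; so those edges are $k$-heavy in $G_{\VC}$ with $k=\gamma(dU)^{-1}t$ directly. The paper simply asserts this without comment. Your attempted multiplication $\gamma(dU)^{-1}\cdot(dU)t$ is a red herring.

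Second, there is one step you (and the paper) gloss over: you establish that the \emph{unions} $\bigcup_{x\in X^{heavy}}K_x(S)$ are distinct, but you need the \emph{projections} $E_{heavy}(S)$ to be distinct. This follows because the union is recoverable from the projection: letting $\mathcal{H}=\bigcup_{x\in X^{heavy}}K_x$ (a fixed edge set), one has $\bigcup_{x\in X^{heavy}}K_x(S)=\partial(S)\cap\mathcal{H}=E_{heavy}(S)\cap\mathcal{H}$, the last equality holding because $\mathcal{H}$ consists entirely of $k$-heavy edges. Hence equal projections would force equal unions. Make this explicit and the proof is complete.
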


\begin{proof}
	 Lemma~\ref{lem:equivEquiv} gives that $\mathcal{S}^{rep}$
	 has the following properties for $G_{\VC}$
	 \begin{enumerate}
	 	\item For every $S \in \mathcal{S}^{rep}$, there is some $x \in X^{heavy}$ s.t.  $K_x(S) \neq \emptyset$.
	 	\item For any $S_1,S_2 \in \mathcal{S}^{rep}$, $S_1 \not\equiv_{G_{\VC}} S_2$
	 	
	 \end{enumerate} 
	 
	For any $S\in \mathcal{S}^{rep}$, let $E_{heavy}(S)$ denote all the
	$\gamma(dU)^{-1} t$-heavy edges crossing $S$ in $G_{\VC}$.
	The property above gives:
	\[
	\bigcup_{x \in X^{heavy}}K_x(S)
	\]
	 is a non-empty subset of $E_{heavy}(S)$, and
	 \[
	 \bigcup_{x \in X^{heavy}}K_x(S_1) \neq \bigcup_{x \in X^{heavy}}K_x(S_2) \qquad \forall S_1,S_2 \in \mathcal{S}^{rep}.
	 \]
	 Therefore, each $S\in \mathcal{S}^{rep}$ such that $\Delta_{G_{\VC}}(S) \leq K$, must be a distinct $\gamma(dU)^{-1}t-projection$ of weight at most $K$
	
\end{proof}

It remains to combine this correspondence with
cut counting to show the overall success probability
of the vertex sampling routine.

Proving this requires using Chernoff bounds.
The bound that we will use is below, it can be viewed
as a scalar version of Theorem 1 of~\cite{Tropp12}.

\begin{lemma}
\label{lem:chernoff}
	Let $Y_1 \ldots Y_n$ be random variables s.t.
	\begin{enumerate}
		\item $0 \leq Y_i \leq 1$.
		\item $\mu_i = \expec[Y_{i}]{Y_{i}} $
		\item $\mu = \sum_{i} \mu_i$
		
	\end{enumerate}
	Then for any $\epsilon \geq 0$
	\[
	\prob[Y_1 \ldots Y_n]{{\sum_{i} Y_i} > (1 +\epsilon) \mu }
	\leq \exp \left( -\frac{\epsilon^2 \mu}{2} \right).
	\]
	
	\[
	\prob[Y_1 \ldots Y_n]{{\sum_{i} Y_i} < (1  - \epsilon) \mu }
	\leq \exp \left( -\frac{\epsilon^2 \mu}{2} \right).	\]
\end{lemma}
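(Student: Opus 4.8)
The plan is to prove Lemma~\ref{lem:chernoff} by the classical exponential-moment (Chernoff) argument, using that $Y_1,\dots,Y_n$ are independent -- a hypothesis that is implicit in the statement and is needed, since the lemma is the scalar specialization of the matrix Chernoff bound of~\cite{Tropp12}. The two tails are proved by the same computation with opposite signs of the tilt parameter, so I will carry out the upper tail in detail and then indicate the symmetric change for the lower tail.

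Write $S=\sum_i Y_i$. For the upper tail, fix any $\lambda>0$; Markov's inequality applied to the nonnegative random variable $e^{\lambda S}$ gives $\prob{S>(1+\epsilon)\mu}\le e^{-\lambda(1+\epsilon)\mu}\,\expec{e^{\lambda S}}$, and by independence $\expec{e^{\lambda S}}=\prod_i\expec{e^{\lambda Y_i}}$. Since $y\mapsto e^{\lambda y}$ is convex, on $[0,1]$ it lies below its chord, so $e^{\lambda Y_i}\le 1+(e^{\lambda}-1)Y_i$; taking expectations and using $1+x\le e^{x}$ yields $\expec{e^{\lambda Y_i}}\le e^{(e^{\lambda}-1)\mu_i}$, hence $\expec{e^{\lambda S}}\le e^{(e^{\lambda}-1)\mu}$. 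Combining,
\[
\prob{S>(1+\epsilon)\mu} \le \exp\!\Big(\mu\big(e^{\lambda}-1-\lambda(1+\epsilon)\big)\Big),
\]
and choosing $\lambda=\ln(1+\epsilon)$, which minimizes the exponent, gives $\exp(-\mu\,\varphi(\epsilon))$ with $\varphi(\epsilon)=(1+\epsilon)\ln(1+\epsilon)-\epsilon$. One then finishes with the elementary estimate bounding $\varphi(\epsilon)$ below by a constant times $\epsilon^{2}$ on the relevant range of $\epsilon$ (a Taylor expansion gives $\varphi(\epsilon)=\tfrac{\epsilon^{2}}{2}-\tfrac{\epsilon^{3}}{6}+\cdots$, so $\varphi(\epsilon)\ge \epsilon^{2}/3$ for $\epsilon\in[0,1]$; the clean constant $1/2$ holds for $\epsilon$ bounded away from $1$, and in any case all later union bounds in Sections~\ref{sec:vertSparsify} and~\ref{sec:onePlusEpsilon} only use that the tail decays like $\exp(-\Omega(\epsilon^{2}\mu))$, so the precise constant is immaterial).

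For the lower tail, the same three steps applied to $e^{-\lambda S}$ (with $\lambda>0$) give $\prob{S<(1-\epsilon)\mu}\le\exp\!\big(\mu(e^{-\lambda}-1+\lambda(1-\epsilon))\big)$; optimizing at $\lambda=-\ln(1-\epsilon)$ yields $\exp(-\mu\,\psi(\epsilon))$ with $\psi(\epsilon)=(1-\epsilon)\ln(1-\epsilon)+\epsilon$, and here the estimate $\psi(\epsilon)\ge \epsilon^{2}/2$ on $[0,1]$ is tight, while the statement is vacuous for $\epsilon\ge 1$ since $S\ge 0$. I do not expect any real obstacle here: this is a textbook computation, and the only points that need (routine) attention are making the independence hypothesis explicit, the one-line convexity bound on $\expec{e^{\pm\lambda Y_i}}$, and the two scalar calculus inequalities for $\varphi$ and $\psi$; alternatively, one can simply cite the scalar case of~\cite{Tropp12} and omit the computation altogether.
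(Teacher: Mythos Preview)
The paper does not actually prove this lemma: it simply states it as a known fact, remarking that ``it can be viewed as a scalar version of Theorem~1 of~\cite{Tropp12}.'' Your proposal therefore supplies strictly more than the paper does, and the exponential-moment argument you give is the standard one and is correct; your observations that independence must be assumed and that the upper-tail constant $1/2$ is slightly too strong (one gets $\varphi(\epsilon)\ge\epsilon^2/3$ on $[0,1]$, while the lower tail genuinely has $\psi(\epsilon)\ge\epsilon^2/2$) are both accurate and worth noting, though as you say the applications in Sections~\ref{sec:vertSparsify} and~\ref{sec:onePlusEpsilon} only need $\exp(-\Omega(\epsilon^2\mu))$.
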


This bound can be invoked in our setting on a single
cut $S$ as follows:

\begin{lemma}
\label{lem:singelCut}
For each cut $S$, we have
\[ 
\prob[H]{\abs{\Delta_H(S) - \Delta_G(S)}  > \epsilon \Delta_G(S)}
\leq 2 \exp \left(- \frac{\epsilon^2 \Delta_G(S) }{4\gamma} \right).
\]
\end{lemma}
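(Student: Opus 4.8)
The plan is to prove this by a single application of the Chernoff bound of Lemma~\ref{lem:chernoff} to the per-star contributions of the heavy vertices, exploiting that $\textsc{Sample}$ either doubles or drops an entire star $N_x$ and therefore leaves the expectation of every cut on $\VC$ exactly unchanged. First, I would observe that since every edge of $G$ and of $H$ joins $X$ with $\VC$, the graphs $G \setminus X$ and $H \setminus X$ are edgeless, so by the definition of the minimally-extended cut weight, $\Delta_G(S) = \sum_{x \in X} \ww^{(x)}(S)$, and, writing $\xi_x \in \{0,1\}$ for the fair coin that $\textsc{Sample}$ flips at $x \in X^{heavy}$,
\[
\Delta_H(S) \;=\; \sum_{x \in X^{light}} \ww^{(x)}(S) \;+\; \sum_{x \in X^{heavy}} 2\,\xi_x\,\ww^{(x)}(S).
\]
The light part agrees with the corresponding part of $\Delta_G(S)$, and each heavy term has mean $\ww^{(x)}(S)$, so $\expec[H]{\Delta_H(S)} = \Delta_G(S)$ and
\[
\Delta_H(S) - \Delta_G(S) \;=\; \sum_{x \in X^{heavy}} (2\xi_x - 1)\,\ww^{(x)}(S),
\]
a sum of independent, mean-zero random variables indexed by the heavy vertices. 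If no heavy vertex is split by $S$ then every summand vanishes and the bound is trivial, so from now on let $\mu := \sum_{x \in X^{heavy}} \ww^{(x)}(S) > 0$; note $\mu \le \Delta_G(S)$.

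Second, I would rescale to land in the hypotheses of Lemma~\ref{lem:chernoff}. Since each $x \in X^{heavy}$ has degree at most $d$ and incident weights in $[\gamma, U\gamma]$, and $\ww^{(x)}(S)$ is the minimum of the weights on the two sides of the cut, we have $2\,\ww^{(x)}(S) \le M$ for a bound $M = O(\gamma)$ after the degree- and weight-range reductions of Section~\ref{subsec:reduction} (in general $M \le dU\gamma$). Put $Y_x := 2\,\xi_x\,\ww^{(x)}(S)/M \in [0,1]$ for the heavy $x$ with $\ww^{(x)}(S) > 0$; then $\expec{\sum_x Y_x} = \mu/M$, and the event $|\Delta_H(S) - \Delta_G(S)| > \epsilon \Delta_G(S)$ is exactly $|\sum_x Y_x - \mu/M| > \epsilon' \cdot \mu/M$ with $\epsilon' := \epsilon\,\Delta_G(S)/\mu \ge \epsilon$. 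Applying both tails of Lemma~\ref{lem:chernoff} (the lower tail being vacuous when $\epsilon' > 1$) bounds this by $2\exp(-\epsilon'^2 \mu / (2M))$; finally $\epsilon'^2 \mu = \epsilon^2 \Delta_G(S)^2 / \mu \ge \epsilon^2 \Delta_G(S)$ since $\mu \le \Delta_G(S)$, giving $2\exp(-\epsilon^2 \Delta_G(S)/(2M))$, which is the asserted bound once $M$ is taken of order $\gamma$ (the $4$ in $4\gamma$ being slack left for these constants).

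I expect the one delicate point to be the normalization $M = O(\gamma)$: because $\textsc{Sample}$ uses a single coin for a whole star, the atomic unit of randomness is an entire $N_x$, which can carry weight far above $\gamma$, so a naive bound only gives $M \le dU\gamma$. This is precisely why the reductions of Section~\ref{subsec:reduction} bring $d$ and $U$ down to constants before invoking the lemma, and it is also where the heaviness hypothesis of Lemma~\ref{lem:sampleHeavy} pays off in the surrounding argument: if some heavy vertex is split by $S$, then via Theorem~\ref{thm:schurComplement} (comparing $G^{light}_{\VC} = \bigcup_{x \notin X^{heavy}} K_x$ with the light part of $G$) together with the $\gamma(dU)t$-connectivity of $N_x$ in $G^{light}_{\VC}$, we get $\Delta_G(S) = \Omega(\gamma (dU) t)$, so with $t \ge t_{\min}$ the individual star contributions are genuinely negligible compared to $\Delta_G(S)$. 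Everything else — unbiasedness, the degenerate empty-heavy-part case, and the $\epsilon' \ge \epsilon$ rescaling that converts a multiplicative deviation in $\Delta_G(S)$ into one in $\mu$ — is routine.
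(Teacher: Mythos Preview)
Your approach is essentially the paper's: apply Lemma~\ref{lem:chernoff} to per-star random variables normalized to lie in $[0,1]$. The one packaging difference is that the paper keeps the light vertices in the sum as degenerate (constant) terms $Y_x(S) = \ww^{(x)}(S)/(2\ww^{\max}(S))$ and normalizes everything by $\ww^{\max}(S) := \max_{x\in X}\ww^{(x)}(S)$; this makes the mean equal to $\Delta_G(S)/(2\ww^{\max}(S))$ directly, so Chernoff can be invoked with the original $\epsilon$ and no rescaling to $\epsilon' = \epsilon\,\Delta_G(S)/\mu$ is needed. Your route reaches the same exponent (given Lemma~\ref{lem:chernoff} as stated for all $\epsilon\ge 0$), but the paper's packaging sidesteps any concern about applying the bound with $\epsilon'$ potentially much larger than~$1$. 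On the normalization to $4\gamma$: your caveat is well placed --- neither your $M=O(\gamma)$ nor the paper's observation $\ww^{\max}(S)\ge\gamma$ literally delivers that denominator (the latter inequality points the wrong way). What the argument actually yields is $2\exp(-\epsilon^2\Delta_G(S)/(4\ww^{\max}(S)))$ with $\ww^{\max}(S)\le dU\gamma$, and this extra $dU$ factor is absorbed into $t_{\min}$ in Lemma~\ref{lem:sampleHeavy}; note, though, that the reductions of Section~\ref{subsec:reduction} only bring $U$ down to $O(d/\epsilon)$, not $d$ and $U$ to constants as you wrote.
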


\begin{proof}
Let
\[
\ww^{\max} (S) = \max_{x \in X} \{\ww^{(x)}(S) \}.
\]
We will only consider $S \in \mathcal{S}^{rep}$, so we know $\ww^{\max}(S) > 0$, which implies $\ww^{\max} (S) \geq \gamma $. For each $S \subseteq \mathcal{S}^{rep}$ and for all $x \in X$, let $Y_{x}(S)$ be the random variable such that either 
\begin{enumerate}
	\item $Y_{x}(S) = \frac{\ww^{(x)}(S)}{2\ww^{\max}(S)}$ if $x \in X^{light}$
	
	\item $Y_{x}(S)$ equals $\frac{\ww^{(x)}(S)}{\ww^{\max}(S)}$ w.p. $1/2$, and $0$ w.p. $1/2$.
\end{enumerate}

Accordingly, we have $\sum_{x\in X} \expec[Y_{x}(S)]{Y_{x}(S)} = \frac{1}{2\ww^{\max}(S)} \sum_{x \in X} \ww^{(x)}(S) = \frac{1}{2\ww^{\max}(S)} \Delta_{G}(S)$.
The bound then follows from invoking Lemma~\ref{lem:chernoff}.
\end{proof}

\begin{proof}(Of Lemma~\ref{lem:sampleHeavy})

Let $\Delta_{G_{\VC}}(S)$ be the weight of cutting $S \subset \VC$ in $G_{\VC}$. From Theorem~\ref{thm:schurComplement} for any $S \in \mathcal{S}^{rep}$ that $\Delta_G(S) \geq \Delta_{G_{\VC}}(S)$.
Therefore,
\[\sum_{S\subseteq \mathcal{S}^{rep}} 2 \exp \left(- \frac{\epsilon^2 \Delta_G(S)}{4\gamma}  \right)
\leq \sum_{S\subseteq \mathcal{S}^{rep}} 2 \exp \left(- \frac{\epsilon^2 \Delta_{G_{\VC}}(S) }{4\gamma} \right)
\]

The main cut-counting bound follows from Theorem 1.6~\cite{FungHHP11}
on multi-graphs, and by our construction of $\mathcal{S}^{rep}$ gives:
\[
|\{S\in \mathcal{S}^{rep}| \Delta_{G_{\VC}}(S) \leq K\}| \leq
\begin{cases}
n^{2 K d U (\gamma t)^{-1}} & \qquad \text{if } K \geq {\gamma (dU)^{-1}t},\\
0 & \qquad \text{otherwise}.
\end{cases}
\]

Each vertex adds weight at most $\gamma  d U $ for any cut, so we can upper bound $K$ by $n^2 \gamma U$ because $d \leq n$.
Invoking cut counting for intervals of length $\gamma$ from $K \geq \gamma (dU)^{-1}t$ to $K \leq n^2 \gamma  U$
allows us to bound the overall failure probability by:

\begin{multline}
\leq \sum_{i = (dU)^{-1} t}^{n^2U}  \left( \sum_{\substack{S \in \mathcal{S}^{rep} \\ \gamma i \leq \Delta_{G_{\VC}}(S) \leq \gamma(i+1) }}
2 \exp \left(-\frac{\epsilon^2 \Delta_{G_{\VC}}(S)}{4\gamma} \right) \right)\\
\leq \sum_{i = (dU)^{-1} t}^{n^2U} 2 n^{2 (i+1) d U t^{-1}} \exp \left( - \frac{\epsilon^{2} i}{4 } \right)
= \sum_{i = (dU)^{-1} t}^{n^2U} 2n^{2 (i+1) d U t^{-1} - \frac{\epsilon^{2} i}{4 \log{n} }}.
\end{multline}
Note that we're free to choose $t$, and it can be checked
that for $U \leq n^{c_1}$, setting $t \geq (28 + 4c_1)c_2 d U \log{n} \epsilon^{-2}$ bounds this by $n^{-c_2}$ for any $c_2 \geq 1$. Note that if $U$ is larger than $O(\poly(n))$, we could set $t = O(d U^2 \log{n} \epsilon^{-2})$ and still achieve w.h.p., but for our practical purposes assuming $U = O(\poly(n))$ is more than sufficient because $U$ will always be $O(\poly(\log{n}, \epsilon^{-1}))$.

\end{proof}

\section{Maintaining $(1 + \epsilon)$-Approximate Undirected Bipartite Min-Cut}
\label{sec:onePlusEpsilon}

In this section, we will again consider the bipartite minimum $s-t$ cut problem of Section~\ref{sec:dynamic min cut}, and will improve the approximation guarantee to $(1 + \epsilon)$. This improvement will require many of the techniques from Section~\ref{sec:dynamic min cut}, but we will bypass the loss of a factor 2 approximation by utilizing the vertex sampling scheme presented in Section~\ref{sec:vertSparsify}.
A high level overview of these techniques is in Section~\ref{subsec:overviewApproxFlow}.
The dynamic algorithm given in this section will rely heavily on the definitions and observations of Subsection~\ref{subsec:critcalCut}, which we encourage the reader to revisit.

Lemma~\ref{lem:sampleHeavy}, along with the framework
from Section~\ref{sec:dynamic min cut} allow us  sample a
large set of vertices if the optimal minimum $s-t$ cut is small,
and will guarantee that the sampled vertices have $O(\poly(\log{n},\epsilon^{-1}))$ degree.
However, Lemma~\ref{lem:sampleHeavy} as stated
require incident edges of all sampled vertices to have weight
within factor $O(\poly(\log{n},\epsilon^{-1}))$ of one another.
In this section, we integrate this subroutine into the data structure
framework, leading to our main result for approximating undirected
bipartite maximum flows:

\mainBipartite*


Section~\ref{subsec:vertexSpars} will show how the vertex sampling scheme given in Section~\ref{sec:vertSparsify} can be iteratively applied, reducing to a graph with $O(|VC|\poly(\log{n},\epsilon^{-1}))$ vertices and $O(|VC|\poly(\log{n},\epsilon^{-1}))$ edges. This section will first present the full vertex sparsification scheme, and then examine the two primary components of this scheme. Section~\ref{subsec:reduction} will show how we can pre-process a graph to ensure that all edge weights of each sampled vertex are close to each other, which will be necessary for bucketing sampled vertices. Section~\ref{subsec:boundedSparsify} will utilize these bounded properties and the vertex sampling of Section~\ref{sec:vertSparsify} to give a vertex sparsification scheme for each bucket, culminating in a proof of correctness for the full scheme in terms of approximation guarantees and bounds on the number of edges and vertices. Section~\ref{subsec:generalGraph} will extend vertex sparsification to general graphs without bounds on degree for the static case, proving Corollary~\ref{cor:terminalSparsify}.

Section~\ref{subsec:dynMinCut1} will then use this vertex sparsification scheme along with many of the components from Section~\ref{sec:dynamic min cut} to give a fully dynamic algorithm for maintaining a minimum $s-t$ cut on a bipartite graph. The correctness of this algorithm will follow from the correctness of the dynamic algorithm in Section~\ref{sec:dynamic min cut} and the correctness of vertex sparsification. Accordingly, it will then only be necessary to establish that we can dynamically update all necessary data structures in $O(\poly(\log{n}, \epsilon^{-1}))$ time.

\subsection{Vertex Sparsification in Quasi-Bipartite Graphs}
\label{subsec:vertexSpars}

The general framework of the routine is shown in Figure~\ref{fig:vertexSparsify}.

\begin{figure}
	
	\begin{algbox}
		$\textsc{VertexSparsify}(G, \VC, XG, d, \epsilon)$
		
		\textbf{Input:} Graph $G$ with vertex cover $\VC$ and $XG = V \setminus \VC$, such that the degree of each vertex in $XG$ is bounded by $d$.
		
		\begin{enumerate}
			\item Build $\widehat{G}$ on the same vertex set as $G$ s.t. $G \approx_{\epsilon/2} \widehat{G}$ and for each $x$ in $X\widehat{G}$, the weights are within a factor of $O( d / \epsilon)$ of each other.
			\item Bucket $\widehat{G}$ by maximum edge weights in each $N_x$
			into $\widehat{G}_1 \ldots \widehat{G}_{L}$, along with $\widehat{G} \setminus XG$
			\item Set $t = O(d^2\log^3{n}\epsilon^{-3})$, initialize $H = (\VC, \emptyset)$.
			\item With error $\epsilon / 2$, sparsify $\widehat{G}\setminus XG$ and $\textsc{BoundedVertexSparsify}$ each $\widehat{G}_{i}$, giving $H_i$
			\item Return the union of each sparsified graph, $H = H \setminus XG \cupdot H_1 \cupdot \ldots \cupdot H_L$.
		\end{enumerate}
	\end{algbox}
	
	\caption{Vertex Sampling in $G$}
	\label{fig:vertexSparsify}
	
\end{figure}

\begin{theorem}
	\label{thm:vertexSparsify}
	Given any graph $G$, vertex cover $\VC$ and $XG = V \setminus \VC$, such that the degree of each vertex in $XG$ is bounded by $d$, with weights in $[\gamma, O(\gamma W)]$ where $\log{W} = O(poly(\log{n}))$,
	and error $\epsilon$. Then there is a $t = O(d^2\log^3{n}\epsilon^{-3})$ whereby
	$\textsc{VertexSparsify}(G, \VC, XG, d, \epsilon)$ returns $H$
	s.t. w.h.p.
	\begin{enumerate}
		\item $H \setminus XG$ is a multi-graph on $\VC$ with $O(|VC|\poly(\log{n},\epsilon^{-1}))$ edges, and each $H_i$ is a bipartition with $\VC$ on one side, and at most $O(|\VC| t \log{n} )$ vertices of $XG$ on the other.
		\item $H \approx_{\epsilon}^{\VC} G$.
		\item All edge weights of $H$ are in $[\gamma, O(\gamma n W)]$ 
	\end{enumerate}
	Here the constant in front of $t$ depends on $W$ as well as in the w.h.p. condition.
\end{theorem}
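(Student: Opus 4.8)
Let me sketch a proof plan.

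The plan is to prove \Cref{thm:vertexSparsify} by assembling two ingredients, established in the subsequent two subsections, and combining them via the decomposability of terminal cut approximations. The first ingredient is the preprocessing lemma of \Cref{subsec:reduction}: from $G$ one builds $\widehat{G}$ on the same vertex set with $G \approx_{\epsilon/2} \widehat{G}$, such that within every star $N_x$ (for $x \in XG$) all edge weights lie within a factor $O(d/\epsilon)$ of one another, each $x$ still has degree $O(d)$, and the weight range is essentially unchanged (no increase in the maximum, only finer multiplicities). The second ingredient is the bounded-weight routine $\textsc{BoundedVertexSparsify}$ of \Cref{subsec:boundedSparsify}: given a bipartite graph between $\VC$ and a set $X$ of degree $\le d$ whose stars all have weights in a single range of width $U$, and an error $\delta$, it returns (w.h.p.) a graph on $\VC$ together with $\le O(|\VC| t \log n)$ surviving $X$-vertices of degree $\le d$ that $\approx_\delta^{\VC}$-approximates the input, using $t = O(d^2 \log^3 n \epsilon^{-3})$. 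Internally this routine iterates the peel-and-sample step of \Cref{lem:sampleHeavy} $O(\log n)$ times with per-round error $\delta/O(\log n)$, at each round taking the light (non-certifiable) vertices to be the pre-images of a $t$-bundle-type structure on the multigraph $G_{\VC}$ of \Cref{def:kX}, so that the surviving heavy subset stays $\gamma(dU)t$-heavy in the sense of \Cref{def:heavyVertex} and \Cref{lem:sampleHeavy} applies afresh each round; the multigraph cut-counting bound (Theorem~1.6 of~\cite{FungHHP11}) already underlies \Cref{lem:sampleHeavy} and is what makes this iteration uniform.

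Granting these, the core of the proof of \Cref{thm:vertexSparsify} is bookkeeping. Step~2 of $\textsc{VertexSparsify}$ is an edge-disjoint decomposition $\widehat{G} = (\widehat{G}\setminus XG) \cupdot \widehat{G}_1 \cupdot \dots \cupdot \widehat{G}_L$, where each $x \in XG$ is placed in exactly one bucket $\widehat{G}_j$ according to the rounded maximum weight in $N_x$. Because every star belongs to a single bucket, the minimal extension of a terminal cut decomposes additively: $\Delta_{\widehat{G}}(S_{\VC}) = \Delta_{\widehat{G}\setminus XG}(S_{\VC}) + \sum_{j} \Delta_{\widehat{G}_j}(S_{\VC})$ for every $S_{\VC} \subseteq \VC$, since $\sum_{x} \ww^{(x)}(S_{\VC})$ splits across buckets. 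Hence approximating each piece with error $\epsilon/2$ and taking the union yields an $\epsilon/2$ terminal cut approximation of $\widehat{G}$, by the analogue for $\approx^{\VC}$ of \Cref{lem:decomposability cut sparsifier}. Inside a bucket $\widehat{G}_j$ the bucket width and the within-star factor $O(d/\epsilon)$ keep all edge weights in a common range of width $U = O(d/\epsilon)$, so $\textsc{BoundedVertexSparsify}$ applies with this $U$; then $t_{\min} = O(dU\log n\,\epsilon^{-2}) = O(d^2\log n\,\epsilon^{-3})$ inflated by the $O(\log n)$ rounds matches the claimed $t = O(d^2\log^3 n\,\epsilon^{-3})$, and the number of buckets is $L = O(\log(nW)) = O(\polylog n)$ using $\log W = O(\polylog n)$ and the post-preprocessing range.

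It remains to collect the three parameters. \textbf{Error:} by decomposability $H = (H\setminus XG) \cupdot H_1 \cupdot \dots \cupdot H_L \approx_{\epsilon/2}^{\VC} \widehat{G}$, while $\widehat{G} \approx_{\epsilon/2} G$; composing a same-vertex-set approximation with a terminal one gives $H \approx_\epsilon^{\VC} G$ after the usual $(1+\epsilon/2)^2 \le 1+\epsilon$ estimate (and its lower-side counterpart) for $\epsilon \le 1$, which can be made exact by starting from $\epsilon' = \epsilon/4$. \textbf{Size:} sparsifying $\widehat{G}\setminus XG$ with the static core of \Cref{thm:fully dynamic cut sparsifier} gives $O(|\VC|\poly(\log n, \epsilon^{-1}))$ edges on $\VC$; each $H_j$ keeps $\le O(|\VC| t \log n)$ vertices of $XG$, each of degree $\le d$ (as $\textsc{Sample}$ only retains and rescales existing star edges), contributing $O(|\VC| t d \log n)$ edges; summing over the $L = O(\polylog n)$ buckets keeps both the vertex count and the edge count at $O(|\VC|\poly(\log n, \epsilon^{-1}))$. \textbf{Weights:} preprocessing does not raise the maximum weight, and since $|X|$ shrinks geometrically ($|X| \mapsto |X^{light}| + |X^{heavy}|/2$ per round with $|X^{light}| = O(|\VC| t)$) the number of doublings a surviving star undergoes is $O(\log n)$, a factor at most $n$; starting from $[\gamma, O(\gamma W)]$ this stays in $[\gamma, O(\gamma n W)]$ with hidden constants depending on $W$ and the w.h.p. parameter.

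I expect the main obstacle to lie inside $\textsc{BoundedVertexSparsify}$, namely showing that the iterated peel-and-sample actually drives the light $X$-vertex set down to $O(|\VC| t \log n)$ while the surviving heavy subset remains $\gamma(dU)t$-heavy at every round — this is exactly where the correspondence between heavy $X$-subsets and heavy edges of the multigraph $G_{\VC}$ (pre-images of $t$-bundle $\alpha$-MSTs), together with the cut-counting bound used in \Cref{lem:sampleHeavy}, must be made to work uniformly over $O(\log n)$ rounds. Within the proof of \Cref{thm:vertexSparsify} itself the only delicate point is the lossless bucketing: verifying that the minimal extension $\sum_{x}\ww^{(x)}(S_{\VC})$ of a terminal cut decomposes additively across buckets, which hinges on each $x \in XG$ lying in exactly one bucket and is precisely what makes the decomposability step go through for $\approx^{\VC}$ rather than only for $\approx$.
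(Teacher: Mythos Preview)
Your proposal is correct and follows essentially the same route as the paper: preprocess via \textsc{VertexBucketing} to get $G \approx_{\epsilon/2} \widehat{G}$ with per-star weight ratio $O(d/\epsilon)$, apply \textsc{BoundedVertexSparsify} (Theorem~\ref{thm:fullBounded}) to each bucket with $U = O(d/\epsilon)$, and combine using the additive decomposition of $\Delta_{\widehat{G}}(S_{\VC})$ across buckets, which works precisely because each $x \in XG$ lands in a single bucket. Your identification of this last point as the delicate step, and your tracking of $t$, the bucket count $L$, and the weight doublings, all match the paper's argument.
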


A proof of Theorem~\ref{thm:vertexSparsify} will be given at the end of Section~\ref{subsec:boundedSparsify}.



\subsubsection{Reduction to Bounded Weight Case}
\label{subsec:reduction}

The idea here will be to look at each $N_x$ and move the low weight edges into $G \setminus X$ , thereby ensuring that the remaining edges in $N_x$ have weight within a $O(\poly(\log{n},\epsilon^{-1}))$ factor. This will create a multi-graph in $G \setminus X$, where will use the normal notation $(u,v)_x$ to denote an edge added by $N_x$.

\begin{figure}
	
	\begin{algbox}
		$\textsc{VertexBucketing}(G, \VC, XG, d, \epsilon)$
		
		\textbf{Input:} Bipartite graph $G$ with bipartition $(\VC, XG)$ s.t. the degree of each vertex in $XG$ is bounded by $d$.
		
		\begin{enumerate}
			\item Initialize $\widehat{G} \setminus X = G \setminus X$, and $\widehat{G}_{i} = (\VC, \emptyset)$ for $i = 1 \ldots L$ with $L = O(\log{W})$
			
			\item For each $x \in XG$
			
			\begin{enumerate}
				\item Let $(x,u)$ be the edge with maximum weight in $N_x$, where $\ww (x,u) \in [\gamma 2^{i-1}, \gamma 2^i]$
				
				\item For each $(x,v) \in N_x$, if $\ww (x,v) < \frac{\epsilon}{d} \ww (x,u)$, then put $(u,v)_x$ in $\widehat{G}\setminus X$. Otherwise, put $(x,v)$ in $\widehat{G}_i$
				
			\end{enumerate}

			\item Return the multi-graph $\widehat{G}\setminus X$, and graphs $\widehat{G}_{1} \ldots \widehat{G}_{L}$
		\end{enumerate}
	\end{algbox}
	
	\caption{Vertex Bucketing in $G$}
	\label{fig:vertexBucket}
	
\end{figure}

\begin{theorem}
	\label{thm:bucketing}
	Given $G$ with bipartition $(VC,XG)$ with weights in $[\gamma,\gamma W]$, such that the degree of each vertex in $XG$ is bounded by $d$, for any $\epsilon$, $\textsc{VertexBucketing}(G,VC,XG,d,\epsilon)$ will return $\widehat{G} = \widehat{G} \setminus X \cupdot \widehat{G}_1 \cupdot \ldots \cupdot \widehat{G}_L$ such that 
	\begin{enumerate}
		\item $G \approx_{\epsilon} \widehat{G}$
		
		\item For each $\widehat{G}_i$, the weights of $\widehat{G}_i$ are in $[\gamma,2\gamma d \epsilon^{-1}]$ for some $\gamma$
		
		\item Any edge $e_{\emptyset} \in \widehat{G}$ must be in $\widehat{G} \setminus X$
		
		\item If $x \in X$ has non-zero degree in $\widehat{G}_i$, then $x$ has zero degree in $\widehat{G} \setminus \widehat{G}_i$, and the degree of $x$ in $\widehat{G}_i$ is bounded by $d$
		
	\end{enumerate}	
	
\end{theorem}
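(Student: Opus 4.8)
The plan is to analyze the routine \textsc{VertexBucketing} claim by claim, all four following more or less directly from the construction, with the only real work going into the approximation guarantee (claim 1). First I would fix notation: for each $x \in XG$, let $(x,u)$ be the chosen maximum-weight edge in $N_x$ with $\ww(x,u) \in [\gamma 2^{i-1}, \gamma 2^i]$, write $N_x^{\text{big}} = \{(x,v) \in N_x : \ww(x,v) \geq \tfrac{\epsilon}{d}\ww(x,u)\}$ and $N_x^{\text{small}} = N_x \setminus N_x^{\text{big}}$. The algorithm places $N_x^{\text{big}}$ into $\widehat{G}_i$ and, for each $(x,v) \in N_x^{\text{small}}$, places the edge $(u,v)_x$ of weight $\ww(x,v)$ into $\widehat{G}\setminus X$. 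Claims 2, 3, 4 are immediate bookkeeping: for claim 2, every edge surviving in $\widehat{G}_i$ has weight in $[\tfrac{\epsilon}{d}\cdot \gamma 2^{i-1}, \gamma 2^i]$, so after rescaling the bucket so its floor is $\gamma$ the ratio is $O(d/\epsilon)$ — and one adjusts the statement's claimed bound $[\gamma, 2\gamma d\epsilon^{-1}]$ accordingly (this is exactly the "$O(d/\epsilon)$ factor" promised in Step~1 of \textsc{VertexSparsify}). Claim 3 holds because the only edges we ever add to $\widehat{G}\setminus X$ of the form $e_{\emptyset}$ are those already in $G \setminus X$; all newly created edges carry a subscript $x$. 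Claim 4 holds because each $x$ is touched once, all of its surviving edges land in a single bucket $\widehat{G}_i$ determined by its max-weight edge, and $|N_x^{\text{big}}| \leq |N_x| \leq d$.

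The substance is claim 1, $G \approx_\epsilon \widehat{G}$, i.e.\ for every $S \subset V_G$ we need $(1-\epsilon)\Delta_{\widehat{G}}(S) \leq \Delta_G(S) \leq (1+\epsilon)\Delta_{\widehat{G}}(S)$ (in the $\approx_\epsilon$ sense of the same vertex set). The key structural observation is the one already flagged in the text before Definition~\ref{def:kX}: a degree-$2$ star at $x$ with legs to $u$ and $v$ of weights $\ww(x,u) \geq \ww(x,v)$ behaves, cut-wise, like a single edge $(u,v)$ of weight $\ww(x,v)$ plus the fact that $x$'s own placement is free (it goes with whichever of $u,v$ gives the smaller contribution). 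Concretely, I would argue edge-group by edge-group: for a fixed $x$ and a fixed cut $S$, compare the contribution of $N_x$ to $\Delta_G$ against the contribution of $N_x^{\text{big}} \subseteq \widehat{G}_i$ plus the small-leg edges $\{(u,v)_x\}_{v}$ to $\Delta_{\widehat{G}}$, where we use the \emph{minimally extended} cut value (placing $x$ optimally on whichever side it prefers). The big legs contribute identically on both sides. For the small legs, moving $(x,v)$ to $(u,v)_x$ is \emph{exact} when $u$ and $x$ end up on the same side of the (extended) cut, and introduces an error of at most $\ww(x,v) \leq \tfrac{\epsilon}{d}\ww(x,u)$ otherwise. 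Since the extended cut must separate $x$ from $u$ only if it already pays $\ww(x,u)$ somewhere attributable to $N_x$ (because $x$ goes with its cheaper side and $(x,u)$ is the heaviest leg, so $x$ on the far side from $u$ forces $\ww(x,u)$ into the cut — or more carefully, the total $N_x$-contribution is at least $\ww(x,u)$ whenever any small leg is "mishandled"), the total per-$x$ distortion is at most $\sum_{v \in N_x^{\text{small}}} \tfrac{\epsilon}{d}\ww(x,u) \leq d \cdot \tfrac{\epsilon}{d}\ww(x,u) = \epsilon\ww(x,u) \leq \epsilon \cdot (\text{$N_x$-contribution to }\Delta_G(S))$. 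Summing over all $x$ and noting the $G\setminus X$ edges are untouched gives $|\Delta_G(S) - \Delta_{\widehat{G}}(S)| \leq \epsilon\,\Delta_G(S)$, which rearranges to the two-sided bound with $\epsilon$ replaced by $\epsilon/(1-\epsilon)$; absorbing constants (or running with $\epsilon/2$) yields the stated claim.

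The main obstacle I anticipate is pinning down the charging argument in the previous paragraph rigorously — specifically, establishing the inequality "$N_x$'s contribution to $\Delta_G(S)$ is at least $\ww(x,u)$ whenever some small leg $(x,v)$ crosses $S$ differently than $(u,v)_x$ would." The delicate case is when the cut $S$ separates $u$ from $v$ but the minimally-extended placement of $x$ sits on $v$'s side; then in $G$ the leg $(x,u)$ crosses and contributes $\ww(x,u)$, so the bound holds, but one must check all the sign/side combinations and confirm that the "free extension" of $x$ in the definition of $\Delta_{\widehat{G}}$ (via $\ww^{(x)}$ in Definition of $\Delta$) never makes $\widehat{G}$ cheaper than $G$ by more than the claimed slack. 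I would handle this by a short case analysis on the four placements of $(u,v)$ relative to $S$, treating $x$'s contribution as the minimum over its two possible sides in both graphs, and observing that the two minima differ by at most $\sum_{v\in N_x^{\text{small}}}\ww(x,v)$. Everything else is routine.
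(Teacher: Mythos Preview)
Your approach is correct and is essentially the paper's, just packaged differently: the paper isolates the move of a single light leg as a three-point lemma (Lemma~\ref{lem:threePoint}) and then applies it up to $d$ times per star (Lemma~\ref{lem:reduce}), whereas you do the whole star at once with a direct charging argument; the underlying case analysis is identical. One clarification that dissolves your anticipated obstacle: since $V_{\widehat G}=V_G$, the relation $G\approx_\epsilon\widehat G$ is over \emph{all} cuts $S\subset V_G$, so $x$'s side is \emph{fixed} by $S$, not chosen optimally---drop the ``minimally extended'' language. With $x$'s side fixed, your analysis is immediate: if $x$ and $u$ lie on the same side of $S$ every moved leg $(x,v)\mapsto(u,v)_x$ crosses iff it did before (zero error), and if $x$ and $u$ lie on opposite sides the big leg $(x,u)$ is cut in both graphs and contributes $\ww(x,u)$, against which the total small-leg discrepancy $\sum_{v\in N_x^{\mathrm{small}}}\ww(x,v)\le d\cdot\tfrac{\epsilon}{d}\ww(x,u)=\epsilon\,\ww(x,u)$ is charged; no further case analysis is needed.
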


\begin{proof}
	Items 2, 3, and 4 follow from construction, and because $XG$ is an independent set in $G$, we can conclude that $G \approx_{\epsilon} \widehat{G}$ from Lemma~\ref{lem:threePoint} and Lemma~\ref{lem:reduce} below.
	
\end{proof}

\begin{lemma}
	\label{lem:threePoint}
	Consider a graph on three vertices, $x$, $u$, and $v$
	with edges between $xu$ and $xv$.
	If $\ww (x, v) \leq \epsilon \ww (x, u)$,
	then the graph with edges  $xu$ with weight $\ww (x, u)$
	and $uv$ with weight $\ww (x, v)$ is an $\epsilon$-approximation
	on all cuts.
\end{lemma}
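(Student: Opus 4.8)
The plan is to prove Lemma~\ref{lem:threePoint} by directly comparing the cut values of the two graphs over all $2^3$ subsets of $\{x,u,v\}$. Since a cut value only depends on which vertices lie on which side, it suffices to fix the side of $u$ and $v$ and consider both choices for the side of $x$. The original graph $G_0$ has edges $xu$ of weight $\ww(x,u)$ and $xv$ of weight $\ww(x,v)$, while the modified graph $G_1$ has edge $xu$ of weight $\ww(x,u)$ and edge $uv$ of weight $\ww(x,v)$. Write $a = \ww(x,u)$ and $b = \ww(x,v)$, with the hypothesis $b \le \epsilon a$.

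First I would enumerate the cases. If $u$ and $v$ are on the same side, then in $G_1$ the edge $uv$ is never cut, so $\Delta_{G_1}(S) \in \{0, a\}$ depending on whether $x$ is separated from $\{u,v\}$; in $G_0$ the corresponding value is $\{0, a+b\}$. Thus $\Delta_{G_0}(S)$ and $\Delta_{G_1}(S)$ are either both $0$ or satisfy $a \le a+b \le (1+\epsilon)a$, giving $\Delta_{G_1}(S) \le \Delta_{G_0}(S) \le (1+\epsilon)\Delta_{G_1}(S)$. If $u$ and $v$ are on opposite sides, then the edge $uv$ is always cut in $G_1$, contributing $b$; the edge $xu$ contributes $a$ iff $x$ is with $v$, and edge $xv$ contributes $b$ iff $x$ is with $u$. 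So when $x$ is with $u$: $\Delta_{G_0}(S) = a+b$ (edge $xv$ cut, edge $xu$ not) wait—edge $xu$ is not cut, edge $xv$ is cut, so $\Delta_{G_0} = b$; and $\Delta_{G_1}(S) = b$ (edge $uv$ cut, edge $xu$ not). When $x$ is with $v$: $\Delta_{G_0}(S) = a$ and $\Delta_{G_1}(S) = a + b$. In all of these cases one checks $\Delta_{G_1}(S) \ge \Delta_{G_0}(S) \cdot \frac{1}{1+\epsilon}$ is not quite what we want, so the cleaner statement to extract is $|\Delta_{G_0}(S) - \Delta_{G_1}(S)| \le b \le \epsilon a \le \epsilon \max(\Delta_{G_0}(S), \Delta_{G_1}(S))$, and then argue that whenever the difference is nonzero the larger of the two cut values is at least $a$, so the relative error is at most $\epsilon$. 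I would phrase the conclusion as: for every cut $S$, $(1-\epsilon)\Delta_{G_1}(S) \le \Delta_{G_0}(S) \le (1+\epsilon)\Delta_{G_1}(S)$, matching the $\approx_\epsilon$ definition.

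The only subtle point — and the part I would be most careful about — is the edge case where the cut value on one side is $0$. This happens exactly when $x$ is with $u$ and $u,v$ are on the same side (both graphs give $0$), so there is no division-by-zero issue: whenever $\Delta_{G_0}(S) \ne \Delta_{G_1}(S)$, both are strictly positive and in fact at least $a$ wait no — when $x$ is with $u$ and $u,v$ opposite we got $\Delta_{G_0} = \Delta_{G_1} = b$, equal, so no error; the genuine mismatch only arises when $x$ is with $v$ and $u,v$ opposite, where values are $a$ and $a+b$, both $\ge a > 0$. So the relative error bound $b/a \le \epsilon$ holds cleanly. I would present this as a short case analysis, probably in a compact inline form rather than a displayed enumeration, since the arithmetic is elementary; the main obstacle is purely bookkeeping, making sure every one of the $\le 4$ essentially-distinct cases is covered and that the direction of the inequality ($G_1$ approximates $G_0$, i.e.\ the perturbation never shrinks a cut by more than a $(1+\epsilon)$ factor) is stated consistently with how Theorem~\ref{thm:bucketing} later invokes it.
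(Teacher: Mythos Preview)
Your proposal is correct and follows essentially the same approach as the paper: a direct case analysis over the (at most three nontrivial) singleton cuts, checking that the cut values in the two graphs differ by at most $\ww(x,v) \le \epsilon\,\ww(x,u)$ while the larger value is at least $\ww(x,u)$. The paper's write-up is simply a cleaner version of what you arrive at after your self-corrections---it lists the three singletons $v$, $x$, $u$ and observes the values are $(b,b)$, $(a+b,a)$, $(a,a+b)$ respectively.
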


\begin{proof}
	The only interesting cuts are singletons:
	\begin{enumerate}
		\item Removing $v$ has $\ww (x, v)$ before and after.
		\item Removing $x$ has $\ww (x, u) + \ww (x, v)$ before,
		and $\ww (x, u)$ after, a factor of $\epsilon$ difference since
		\[
		\ww (x, u) + \ww (x, v) \leq (1 + \epsilon) \ww (x, u).
		\]
		\item Removing $u$ has $\ww (x, u)$ before, and $\ww (x, u) + \ww (x, v)$ after, same as above.
	\end{enumerate}
	
\end{proof}

Invoking this repeatedly on small stars gives:
\begin{lemma}
	\label{lem:reduce}
	A star $x$ with degree $d$
	can be reduced to one whose maximum and minimum
	weights is within a factor of $O(d \epsilon^{-1})$
	while only distorting cuts by a factor of $1 + \epsilon$. 
\end{lemma}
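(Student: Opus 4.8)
The plan is to prove Lemma~\ref{lem:reduce} by iterating Lemma~\ref{lem:threePoint}, peeling off one light edge of the star at a time while always pivoting on its heaviest edge. Write the star as $x$ joined to $u_1,\dots,u_d$, and let $u_1$ realize $w_{\max}\defeq\max_i\ww(x,u_i)=\ww(x,u_1)$. First I would fix a threshold $\theta\defeq c\epsilon\, w_{\max}/d$ for a small absolute constant $c$ and call $(x,u_i)$ \emph{light} if $\ww(x,u_i)<\theta$; there are at most $d$ light edges, and $(x,u_1)$ is never light. For each light edge $(x,u_i)$ in turn I would apply Lemma~\ref{lem:threePoint} to the triple $\{x,u_1,u_i\}$, whose two edges have weights $w_{\max}$ and $\ww(x,u_i)<\theta$: setting $\epsilon_i\defeq\ww(x,u_i)/w_{\max}<c\epsilon/d$, the hypothesis of Lemma~\ref{lem:threePoint} holds with parameter $\epsilon_i$ (indeed $\ww(x,u_i)=\epsilon_i w_{\max}$), and the lemma replaces $(x,u_i)$ by an edge $(u_1,u_i)$ of weight $\ww(x,u_i)$ while distorting every cut of $\{x,u_1,u_i\}$ by at most a factor $1\pm\epsilon_i$. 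Because the pivot edge $(x,u_1)$ is never modified, it remains available for all steps.

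The next step is to argue that each such \emph{local} replacement lifts to a $(1\pm\epsilon_i)$-cut approximation of the whole current graph. This is just additivity of the cut function over edge-disjoint subgraphs: writing the current graph as the edge-disjoint union of the two-edge piece $G'$ on $\{x,u_1,u_i\}$ being changed and the rest $R$, we have $\Delta(S)=\Delta_{G'}(S)+\Delta_R(S)$ for every vertex subset $S$, and $\Delta_{G'}(S)$ depends only on $S\cap\{x,u_1,u_i\}$; so Lemma~\ref{lem:threePoint}'s guarantee on $\Delta_{G'}$, together with $\epsilon_i\ge0$, yields $(1-\epsilon_i)\Delta_{\mathrm{new}}(S)\le\Delta_{\mathrm{old}}(S)\le(1+\epsilon_i)\Delta_{\mathrm{new}}(S)$ for the post-replacement graph. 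Each newly created edge $(u_1,u_i)$ belongs to $R$ in all later steps, so the moves do not interfere with one another and can be composed.

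The last step is to compose the at most $d$ local approximations and read off the weight ratio. Composition of the $\approx$ relation gives, for the original star $G$ and the fully reduced star $H$, the bound $\prod_i(1-\epsilon_i)\,\Delta_H(S)\le\Delta_G(S)\le\prod_i(1+\epsilon_i)\,\Delta_H(S)$ for all $S$; since $\sum_i\epsilon_i<d\cdot c\epsilon/d=c\epsilon$, taking $c$ small enough (say $c=\tfrac12$) makes $\prod_i(1+\epsilon_i)\le e^{c\epsilon}\le 1+\epsilon$ and $\prod_i(1-\epsilon_i)\ge1-c\epsilon\ge1-\epsilon$ for $\epsilon\le1$, so $H$ distorts cuts by at most $1+\epsilon$. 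Finally, in $H$ every surviving edge at $x$ has weight in $[\theta,w_{\max}]$, so the maximum-to-minimum weight ratio at $x$ is at most $w_{\max}/\theta=d/(c\epsilon)=O(d\epsilon^{-1})$, as claimed. The one point requiring care is exactly this error bookkeeping: the threshold must be $\Theta(\epsilon/d)$ rather than $\Theta(\epsilon)$, because up to $d$ edges are removed and the individual distortions $\epsilon_i$ must sum to $O(\epsilon)$ — and this is precisely why the resulting weight ratio is $O(d\epsilon^{-1})$ and not $O(\epsilon^{-1})$. Applying the reduction independently to each star of $X$ in $\widehat G$, which are pairwise edge-disjoint and meet only in $\VC$, then gives Theorem~\ref{thm:bucketing}.
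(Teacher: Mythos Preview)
Your proposal is correct and follows essentially the same approach as the paper: pivot on the heaviest edge $u_1$, peel off each edge of weight below $\Theta(\epsilon/d)\cdot w_{\max}$ via Lemma~\ref{lem:threePoint}, and compose the at most $d$ resulting $(1+O(\epsilon/d))$-distortions. The paper's own proof is just a two-sentence sketch of exactly this; your version is more explicit about lifting the three-vertex lemma to the whole graph by additivity and about the product-of-errors bookkeeping, but there is no genuine difference in strategy.
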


\begin{proof}
	Let the neighbors of $x$ be $v_1 \ldots v_d$ s.t.
	$\ww (x, v_1) \geq \ww (x, v_2) \geq \ldots \geq \ww (x, v_d)$.
	Suppose $\ww (x, v_i) < \epsilon /d \ww (x, v_1)$,
	then applying Lemma~\ref{lem:threePoint} gives a
	multiplicative error of $1 + \epsilon / d$.
	Applying this at most $d$ times gives the approximation
	ratio, and moves all the light edges onto $v_1$.
\end{proof}

\subsubsection{Bounded Weight Vertex Sparsification}
\label{subsec:boundedSparsify}

\begin{figure}
	
	\begin{algbox}
		$\textsc{BoundedVertexSparsify}(G, \VC, XG,t)$
		
		\textbf{Input:} Bipartite graph $G$ with bipartition $(\VC, XG)$
		
		\begin{enumerate}
			\item Initialize 	$ G_0 \gets G $, $ XG_0 \gets XG$, and $H \gets \emptyset$
			
			\item For each $i = 0$ to $l-1$
			
			\begin{enumerate}
				\item Compute a $t$-bundle vertex set $ XG_{i}^{light} \subseteq XG_{i}$ of $ G_{i} $
				
				\item $ (G_{i+1}, XG_{i+1}) \gets Sample(G_i, \VC, XG_i, XG_i^{light})$
				
				\item Add $\bigcup_{x \in XG_i^{light}} N^i_x$ to $H$
				
			\end{enumerate}

			\item Return $H = H \cup G_{l }$
		\end{enumerate}
	\end{algbox}
	
	\caption{Bounded Weight Vertex Sparsification in $G$}
	\label{fig:boundedVertexSparsification}
	
\end{figure}

The bucketing of vertices in the independent set ensures that all the weights in each bucket are within a factor $O(d/\epsilon)$, which will allow us to iteratively reduce the number of vertices by applying the $\textsc{Sample}$ algorithm given in Section~\ref{sec:vertSparsify} $O(\log{n})$ times. Note that our $\textsc{Sample}$ algorithm doubles the weights of each sampled star, so $N^i_x$ will denote the star $x$ in $i$th iteration graph $G_i$ with updated weights for that graph.

\begin{theorem}
	\label{thm:fullBounded}
	Given a bipartite graph $G$ with bipartition $(VC,XG)$, and weights in $[\gamma, U \gamma]$ where $U= O(poly(n))$, with degree of $x \in XG$ bounded by $d$, and error $\epsilon$. Then there is a $t = O(dU\log^3{n} \epsilon^{-2})$ whereby $\textsc{BoundedVertexSparsify}(G,VC,XG,t)$ returns $H$, s.t. w.h.p.
	\begin{enumerate}
		\item $H$ is a bipartition with $\VC$ on one side and at most $O(|VC|t \log{n})$ vertices on the other
		
		\item $H \approx_{\epsilon}^{VC} G$
	\end{enumerate}
	
\end{theorem}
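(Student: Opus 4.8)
\textbf{Proof plan for Theorem~\ref{thm:fullBounded}.}
The plan is to analyze the iterative loop of \textsc{BoundedVertexSparsify} as $l = O(\log n)$ rounds of the one-shot vertex sampling of Lemma~\ref{lem:sampleHeavy}, tracking three invariants: (i) the running cut-approximation factor, (ii) the number of $XG$-vertices surviving into $G_i$, and (iii) the fact that every $G_i$ still has all edge weights in a range of ratio $O(U)$ so that Lemma~\ref{lem:sampleHeavy} remains applicable. First I would fix the per-round error to $\epsilon' = \Theta(\epsilon/\log n)$ and set $t = O(dU\log^3 n\,\epsilon^{-2})$ so that in each round $t \geq t_{\min} = O(dU\log n\,(\epsilon')^{-2})$; this is exactly the threshold Lemma~\ref{lem:sampleHeavy} needs once we plug in the per-round error. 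Crucially, to invoke Lemma~\ref{lem:sampleHeavy} in round $i$ we must exhibit the light set $XG_i^{light}$ as a genuine ``$\gamma(dU)t$-heavy complement'' in the sense of Definition~\ref{def:heavyVertex}: this is what ``compute a $t$-bundle vertex set'' is promising, and I would state (and defer to Section~\ref{sec:onePlusEpsilon}) that taking pre-images of a suitable $t$-bundle on $(G_i)_{\VC}$ produces such a set. Weight tracking is easy: \textsc{Sample} only doubles some stars, and since a star is either kept-and-doubled or dropped entirely (never partially), the weight ratio within $G_{i+1}$ stays $O(U)$ (the doubling is absorbed into the constant, or one re-buckets—but here we are already inside one bucket).

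Next I would set up the size recursion. In each round the heavy vertices are sampled with probability $1/2$, so $\expec{|XG_{i+1}|} \le |XG_i^{light\text{-complement}}|/2 + |XG_i^{light}|$, and the point of the $t$-bundle construction is that $|XG_i^{light}| = O(|\VC|\,t)$ (analogous to how a $t$-bundle $\alpha$-MST has $O(tn)$ edges: each of the $t$ peeled spanning structures on the $\VC$-side contributes $O(|\VC|)$ vertices of $XG$). Hence after one round $|XG_{i+1}| \le \tfrac12 |XG_i| + O(|\VC| t)$, and unrolling $l = O(\log n)$ times gives $|XG_l| \le |XG_0|/2^l + O(|\VC| t) = O(|\VC| t)$, absorbing the geometric tail. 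Adding the leftover stars $\bigcup_i \bigcup_{x\in XG_i^{light}} N_x^i$ that got moved into $H$: there are $O(\log n)$ rounds, each adding $O(|\VC| t)$ vertices of $XG$, for a total of $O(|\VC|\, t\log n)$ on the non-$\VC$ side, which is claim~1. For claim~2, I would compose the $l$ per-round guarantees: conditioning successively, $\abs{\Delta_{G_{i+1}}(S) - \Delta_{G_i}(S)} \le \epsilon'\Delta_{G_i}(S)$ for all $S\subseteq\VC$ w.h.p.\ by Lemma~\ref{lem:sampleHeavy} (applied to the heavy part, with the light part carried over identically, exactly as in Lemma~\ref{lem:repGood}), so telescoping gives $\prod_i (1\pm\epsilon') = 1 \pm \epsilon$ for $\epsilon' = \epsilon/(2l)$, i.e.\ $H \approx_\epsilon^{\VC} G$. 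A union bound over the $O(\log n)$ rounds keeps the overall failure probability polynomially small since each round already fails with probability $n^{-c}$ for a tunable constant $c$.

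The main obstacle I anticipate is the correctness of the light/heavy decomposition used in step (a) of the loop, i.e.\ justifying that ``compute a $t$-bundle vertex set'' actually yields an $XG_i^{light}$ whose complement is $\gamma(dU)t$-heavy in $G^{light}_{\VC} = \bigcup_{x\notin XG_i^{heavy}} K_x$ per Definition~\ref{def:heavyVertex}. This requires the bridge of Lemma~\ref{lem:equivEquiv}-type reasoning (that connectivity of a pair $u,v$ in the clique multigraph $G_{\VC}$ certifies the heaviness needed to apply the cut-counting bound) together with the observation that peeling $t$ spanning-tree-like bundles off the $\VC$-side multigraph leaves every non-peeled $K_x$-edge with connectivity $\ge t$—the vertex analogue of the $\alpha$-MST argument in Lemma~\ref{lem:connectivity_sampling}. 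I would state this as a separate lemma (to be proved in Section~\ref{sec:onePlusEpsilon}, where $t$-bundle vertex sets are defined) and treat it as a black box here, since the rest of the argument is the clean round-by-round composition sketched above. A secondary, more routine nuisance is bookkeeping the weight doubling across rounds so that the hypothesis ``weights in $[\gamma, U\gamma]$'' of Lemma~\ref{lem:sampleHeavy} is literally met at each invocation; since $l = O(\log n)$ and only a $1/2$-fraction of stars double each round, the accumulated blow-up is $\poly(n)$ and can be folded into a slightly larger $U$ in $t$'s constant, or handled by noting Lemma~\ref{lem:sampleHeavy} itself already allows $U = O(\poly(n))$.
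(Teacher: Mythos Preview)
Your proposal is correct and follows essentially the same approach as the paper: $l = O(\log n)$ rounds with per-round error $\epsilon/l$, the light-vertex bound $|XG_i^{light}| \le t|\VC|$ (which the paper proves inline via a $t$-clique-forest construction in Lemma~\ref{lem:lightVertex} rather than deferring it), and telescoping the per-round approximations from Lemma~\ref{lem:sampleHeavy}. Your ``secondary nuisance'' about weight tracking is simpler than you anticipate: since $G_{i+1}$ consists \emph{only} of doubled heavy stars (the light stars are moved to $H$, not carried forward), the weight range of $G_i$ is exactly $[2^i\gamma,\,2^i\gamma U]$ and the ratio stays $U$ with no accumulation whatsoever.
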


\begin{proof}
	\textbf{(1)}: Set $l = O(\log{n})$ and note that $|XG| \leq n$, so $G_l$ is unlikely to have many remaining vertices after sampling $O(\log{n})$ times by a standard argument using concentration bounds. Then, Lemma~\ref{lem:lightVertex} will show $|XG_i^{light}| \leq t|VC|$ for all $i$, giving the desired size.
	
	\noindent\textbf{(2)}: By construction of $\textsc{BoundedVertexSparsify}(G,VC,XG)$, the weights of each $G_i$ are in $[2^{i}\gamma,2^{i}\gamma U]$. We will show in Lemma~\ref{lem:lightVertex} that for each $G_i$ and $XG_i$, we can find a $t$-bundle vertex set $XG_i^{light}$ of $XG_i$, such that  $XG_i^{heavy} = XG_i \setminus XG_i^{light}$ is a $2^i\gamma (dU)^{-1} t - heavy$ vertex subset. Assuming that this is the case, from Lemma~\ref{lem:sampleHeavy}, if we set $\widehat{\epsilon} = \frac{\epsilon}{l}$, then with high probability \[G_i \approx_{\widehat{\epsilon}}^{\VC} G_{i+1} \cup \bigcup_{x \in XG_i^{light}} N^i_x \]
	
	By construction, for all $j < i$, $XG_j^{light} \cap XG_i = \emptyset$, so adding each $\bigcup_{x \in XG_j^{light}} N^j_x$ to both sides will still preserve the relation above. Applying this argument inductively and using $\widehat{\epsilon} = \frac{\epsilon}{l}$ gives $H \approx_{\epsilon}^{\VC} G$ with high probability.
	
\end{proof}

In order to complete the proof of Theorem~\ref{thm:fullBounded}, it is now necessary to show that for each $G_i$ and $XG_i$, we can construct $XG_i^{light}$ such that $XG_i \setminus XG_i^{light}$ is a $2^i \gamma (dU)^{-1} t-heavy$ subset of $XG_i$. The idea will simply be to construct $XG^{light}_i$ from $t$ disjoint spanning forests in $G_{\VC}^i$ with some additional properties that will allow $O(\poly(\log{n},\epsilon^{-1}))$ dynamic maintenance in the following subsection. 

\begin{definition}
	Given $G$ with vertex bipartition $(\VC,X)$, we say that $F = F_1 \cupdot \ldots \cupdot F_t$ is a $t$-clique forest if 
	
	\begin{enumerate}
		\item Each $F_{i}$ is a forest of $G_{\VC}$ and all are disjoint. 
		
		\item For any $x \in X$, at most one edge $e_x \in K_x$ is in $F$. 
		
		\item For all $x \in X$ such that $ F \cap K_x = \emptyset$, for any $e_x = (u,v)_x \in K_x$, $u$ and $v$ are connected in all $F_i$
		
	\end{enumerate}
	
\end{definition}

\begin{lemma}
	Given $G$ with vertex bipartition $(\VC,X)$ such that all $x \in X$ have maximum degree $d$, weights in $[\gamma,\gamma U]$ and a $t$-clique forest $F$, if $X^{light} = \{x \in X | F \cap K_x = \emptyset \}$, then $X^{heavy} = X \setminus X^{light}$ is an $\gamma(dU)^{-1} t -heavy$ subset of $X$
	
\end{lemma}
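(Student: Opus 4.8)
The plan is to unwind \Cref{def:heavyVertex}. Writing $G^{light}_{\VC} = \bigcup_{y \in X^{light}} K_y$, we must show that for every $x \in X^{heavy}$ and every pair $u,v \in N(x)$ the weighted local edge connectivity satisfies $\lambda_{G^{light}_{\VC}}(u,v) \geq \gamma (dU)^{-1} t$. Two weight facts about the elimination cliques $K_y$ drive the argument. First, by \Cref{lem:elimWeight} every edge of $G_{\VC}$, hence every edge of $G^{light}_{\VC}$, has weight at least $\gamma(dU)^{-1}$. Second, for any bipartition $(A,B)$ of $\VC$ and any $y \in X^{light}$ whose neighborhood is split with $p = |A \cap N(y)| \geq 1$ and $q = |B \cap N(y)| \geq 1$, there are exactly $pq$ edges of $K_y$ crossing $(A,B)$ and their total weight equals $\frac{(\sum_{a \in A\cap N(y)} \ww(y,a))(\sum_{b\in B\cap N(y)}\ww(y,b))}{\sum_{i\in N(y)}\ww(y,i)} \geq \frac{p q \gamma^2}{d\gamma U} \geq pq\,\gamma(dU)^{-1}$. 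It therefore suffices to prove the combinatorial statement that $G^{light}_{\VC}$ contains $t$ pairwise edge-disjoint $u$--$v$ paths: pushing $\gamma(dU)^{-1}$ units of flow along each one is feasible (every edge has weight at least $\gamma(dU)^{-1}$), giving a $u$--$v$ flow of value $t\gamma(dU)^{-1}$, and max-flow/min-cut yields the claimed connectivity.

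To produce the $t$ edge-disjoint paths I would start from the $t$ edge-disjoint spanning forests $F_1,\dots,F_t$ of $G_{\VC}$ underlying the $t$-clique forest $F$ (recall each $F_i$ is taken to be a spanning forest of $G_{\VC}$). Since $(u,v)_x$ is an edge of $K_x \subseteq G_{\VC}$, the vertices $u$ and $v$ lie in one component of $G_{\VC}$, so each $F_i$ contains a simple $u$--$v$ path $\pi_i$, and $\pi_1,\dots,\pi_t$ are pairwise edge-disjoint. If every $\pi_i$ used only edges of light cliques we would already be done. The structural content of the $t$-clique forest conditions that we exploit is that each clique contributes at most one edge to $F=\bigcup_i F_i$; consequently the heavy cliques touched by different forests are disjoint, each heavy clique $K_z$ contributes exactly one edge $e_z=(a_z,b_z)_z$ lying in exactly one forest, and so $e_z$ occurs on at most one of the $\pi_i$.

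The main work --- and the step I expect to be the real obstacle --- is rerouting the heavy-clique edges appearing on the $\pi_i$ through $G^{light}_{\VC}$ while keeping the whole family edge-disjoint. The mechanism should be a telescoping argument in the spirit of \Cref{lem:stretch resistance correspondence}: process the heavy cliques in the order in which the construction introduced them and, for each $K_z$ with edge $e_z=(a_z,b_z)_z$, replace its (unique) occurrence among the $\pi_i$ by a detour from $a_z$ to $b_z$ living in $G^{light}_{\VC}$; such a detour should exist because $a_z,b_z$ are themselves the neighborhood of a heavy vertex, so the statement one level down applies. The delicate point is that all these reroutings must be carried out simultaneously, without the detours colliding with one another or with the light portions of the $\pi_i$; I would manage this by maintaining an integral $u$--$v$ flow of value $t$ and checking that peeling off one heavy clique at a time preserves feasibility in the shrinking light graph, leaning on the inductive connectivity guarantee at $N(z)$ together with the fact that each heavy clique is used exactly once in $F$. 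Once $t$ edge-disjoint $u$--$v$ paths inside $G^{light}_{\VC}$ are in hand, the two weight facts from the first paragraph immediately give $\lambda_{G^{light}_{\VC}}(u,v) \ge t\gamma(dU)^{-1}$, finishing the proof.
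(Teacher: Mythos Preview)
Your proposal is built on a misreading that the lemma's wording unfortunately invites: the condition $F\cap K_x=\emptyset$ is meant to characterise $X^{heavy}$, not $X^{light}$. The paper's own proof, the \textsc{LightVertices} routine, and Lemma~\ref{lem:lightVertex} are all consistent with $X^{light}=\{x:F\cap K_x\neq\emptyset\}$ (each $x$ is placed in $XG_i^{light}$ exactly when an edge $e_x$ is added to some $F_{i,j}$). With the intended reading, every edge of $F=\bigcup_j F_j$ belongs to some $K_y$ with $y\in X^{light}$, so $F\subseteq G^{light}_{\VC}$. There are no heavy-clique edges on the forest paths, and the entire rerouting programme of your third paragraph is addressing a difficulty that does not exist.

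Once the definitions line up, the argument collapses to the three lines the paper gives. Fix $x\in X^{heavy}$ (so $F\cap K_x=\emptyset$) and $u,v\in N(x)$. Property~3 of a $t$-clique forest says $u$ and $v$ lie in the same tree of every $F_j$, so there are $t$ pairwise edge-disjoint $u$--$v$ paths in $F\subseteq G^{light}_{\VC}$. By \Cref{lem:elimWeight} each edge of $G_{\VC}$ weighs at least $\gamma(dU)^{-1}$; hence any $u$--$v$ cut in $G^{light}_{\VC}$ of weight below $\gamma(dU)^{-1}t$ is crossed by fewer than $t$ edges and must miss some $F_j$ entirely, contradicting property~3. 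Your flow formulation is fine in spirit, but note also that the definition does \emph{not} assert each $F_i$ is a spanning forest of $G_{\VC}$; it is precisely property~3 (applied only to vertices with $F\cap K_x=\emptyset$) that supplies the needed connectivity.
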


\begin{proof}
	For some $(u,v)_x \in K_x$ with $x \in X^{heavy}$, suppose $(u,v)_x$ is in a cut $S_{\VC} \subset \VC$ such that $\Delta_{G_{\VC}}(S_{\VC}) < {\gamma(dU)^{-1} t}$. From Lemma~\ref{lem:elimWeight}, all edges in $G_{\VC}$ have weight at least $\gamma(dU)^{-1}$. Therefore, there must exist some $F_{j}$ such that $u$ and $v$ are not connected, giving a contradiction.
	
\end{proof}

\begin{figure}
	
	\begin{algbox}
		$\textsc{LightVertices}(G_i, \VC, XG_i)$
		
		\textbf{Input:} Bipartite graph $G_i$ with bipartition $(\VC, XG_i)$
		
		\begin{enumerate}
			\item Initialize $ XG^{light}_i \gets \emptyset$ and $F_i = \bigcup_{j \in [t]} F_{i,j}$ with $F_{i,j} \gets \emptyset$ for all $j$
			
			\item For each $j = 1$ to $t$
			
			\begin{enumerate}
				\item While some edge $e_x \in G_{\VC}^{i}$ can be added to forest $F_{i,j}$
				
				\item Place $e_x$ in $F_{i,j}$, place $x$ in $XG^{light}_i$, and remove $K_x$ from $G_{\VC}^i$
				
			\end{enumerate}

			\item Return $XG^{light}_i$
		\end{enumerate}
	\end{algbox}
	
	\caption{Light Vertex Set of $XG$}
	\label{fig:lightVertices}
	
\end{figure}

Note that after the algorithm terminates $G_{\VC}^i = \bigcup_{x \in XG^{heavy}_i} K_x$, which will be necessary for the dynamic maintenance. The following lemma follows by construction and the fact that each forest has at most $|VC| - 1$ edges.

\begin{lemma}
	\label{lem:lightVertex}
	$F_i$ is a $t$-clique forest of $G_{\VC}^i$, $|XG_i^{light}| \leq t|VC|$, and $XG^{heavy}_i$ is a $2^i \gamma (dU)^{-1} t-heavy$ subset of $XG_i$	
\end{lemma}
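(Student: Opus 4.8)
The plan is to verify the three claims by unpacking the definition of the greedy procedure $\textsc{LightVertices}$ and matching it against the definition of a $t$-clique forest.

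\textbf{First, that $F_i$ is a $t$-clique forest of $G_{\VC}^i$.} I would check the three defining properties in turn. Property~1 (each $F_{i,j}$ is a forest of $G_{\VC}^i$, all disjoint): the inner loop only ever adds an edge $e_x$ to $F_{i,j}$ when it "can be added to the forest", i.e.\ without creating a cycle, so each $F_{i,j}$ is acyclic; disjointness is immediate because once an edge $e_x$ is placed, the whole clique $K_x$ is deleted from $G_{\VC}^i$, so no later edge of $K_x$ (in particular none in a different $F_{i,j'}$) can ever be selected. Property~2 (for any $x\in X$, at most one $e_x\in K_x$ lies in $F_i$): this is the same observation — as soon as one edge of $K_x$ is taken, $K_x$ is removed, so no second edge of $K_x$ survives to be chosen. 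Property~3 (for every $x$ with $F_i\cap K_x=\emptyset$, both endpoints of every $e_x=(u,v)_x\in K_x$ are connected in each $F_{i,j}$): this is exactly the stopping condition of the $j$-th inner loop. When the loop for a fixed $j$ terminates, no edge $e_x\in G_{\VC}^i$ (among the edges still present, i.e.\ those $x$ for which $K_x$ has not been removed — these are precisely the $x$ with $F_i\cap K_x=\emptyset$) can be added to $F_{i,j}$; since $e_x$ is a single edge between $u$ and $v$, the only obstruction to adding it is that $u$ and $v$ are already connected in $F_{i,j}$. (One has to note that removing edges $K_x$ for the already-chosen $x$ never disconnects a pair that was connected, because those $x$ are disjoint from the current $x$ and their edges were never part of $F_{i,j}$ in the first place; more carefully, the forests are built on the full vertex set $\VC$ and an edge is "addable to $F_{i,j}$" iff its endpoints are in different components of $F_{i,j}$, a condition that does not reference the current edge set of $G_{\VC}^i$ at all.) This is the step I expect to require the most care: making precise that the greedy termination condition for each $F_{i,j}$ really does certify connectivity of every remaining clique's endpoints in that particular forest, despite the interleaving of forest-building and clique-deletion across the $t$ iterations.

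\textbf{Second, that $|XG_i^{light}|\le t|\VC|$.} Each $x$ is added to $XG_i^{light}$ exactly once, at the moment an edge $e_x$ is placed into some $F_{i,j}$. For a fixed $j$, the forest $F_{i,j}$ is a forest on the vertex set $\VC$, hence has at most $|\VC|-1$ edges, so at most $|\VC|-1$ vertices $x$ can be charged to iteration $j$. Summing over the $t$ iterations gives $|XG_i^{light}|\le t(|\VC|-1)\le t|\VC|$.

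\textbf{Third, that $XG_i^{heavy}$ is a $2^i\gamma(dU)^{-1}t$-heavy subset of $XG_i$.} This is where I would invoke the auxiliary lemma proved just above the statement: if $F$ is a $t$-clique forest of a graph $G$ with bipartition $(\VC,X)$, all $x$ of degree $\le d$, and weights in $[\gamma',\gamma' U]$, then $X\setminus\{x: F\cap K_x=\emptyset\}$ is a $\gamma'(dU)^{-1}t$-heavy subset. By the proof of Theorem~\ref{thm:fullBounded}, the graph $G_i$ has edge weights in $[2^i\gamma,2^i\gamma U]$, so by Lemma~\ref{lem:elimWeight} every edge of $G_{\VC}^i$ has weight at least $2^i\gamma(dU)^{-1}$. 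Applying the auxiliary lemma with $\gamma'=2^i\gamma$ to the $t$-clique forest $F_i$ constructed in part one (whose associated light set is exactly $XG_i^{light}=\{x: F_i\cap K_x=\emptyset\}$, by definition of the algorithm) yields that $XG_i^{heavy}=XG_i\setminus XG_i^{light}$ is a $2^i\gamma(dU)^{-1}t$-heavy subset of $XG_i$, as desired. The only thing to double-check is the bookkeeping that the "light" set produced by $\textsc{LightVertices}$ coincides with the set $\{x:F_i\cap K_x=\emptyset\}$ used in the auxiliary lemma — which holds because a vertex $x$ is put into $XG_i^{light}$ precisely when an edge of $K_x$ enters $F_i$, and at no other time.
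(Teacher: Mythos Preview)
Your proposal is correct in substance and matches the paper's approach exactly: the paper merely states that the lemma ``follows by construction and the fact that each forest has at most $|\VC|-1$ edges,'' and you have spelled out precisely that verification, together with the invocation of the preceding auxiliary lemma for the heaviness claim.

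One bookkeeping slip to fix: in your third part you write $XG_i^{light}=\{x: F_i\cap K_x=\emptyset\}$, but your own final sentence (``a vertex $x$ is put into $XG_i^{light}$ precisely when an edge of $K_x$ enters $F_i$'') correctly describes the opposite set, $\{x: F_i\cap K_x\neq\emptyset\}$. The $=\emptyset$ is a typo already present in the paper's statement of the auxiliary lemma; its proof and the algorithm both make clear that the light set consists of those $x$ with \emph{nonempty} intersection with $F_i$, so that property~3 of the $t$-clique forest applies to the complementary heavy set $\{x:F_i\cap K_x=\emptyset\}$. Correct the set-builder notation and your argument is complete.
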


\paragraph{Proof of Theorem~\ref{thm:vertexSparsify}}

\noindent\textbf{(1)} The first part follows from Theorem~\ref{thm:fully dynamic cut sparsifier} and the second part follows from Theorem~\ref{thm:fullBounded}

\noindent\textbf{(2)} Property (1) of Theorem~\ref{thm:bucketing} gives us $G \approx_{\epsilon /2} \widehat{G}$ with $U = 4d\epsilon^{-1}$ for each $\widehat{G}_i$ from property (2). Then, property (3) implies that each $\widehat{G}_i$ is bipartite, and property (4) implies that each vertex in $X\widehat{G}_i$ is bounded by $d$. We can then apply Theorem~\ref{thm:fullBounded} to each $\widehat{G}_i$, with $U = 4d\epsilon^{-1}$ to get $\widehat{G}_i \approx_{\epsilon/2}^{VC} H_i$ with high probability. Note that we are implicitly assuming $U = O(poly(n))$, aka $\epsilon^{-1} = O(poly(n))$. As was discussed at the end of Section~\ref{sec:vertSparsify}, we could avoid this assumption by adding an extra $\epsilon^{-1}$ factor to the $t$-bundle, but any $\epsilon^{-1} = \omega(poly(n))$ loses any practical value. $L = O(\log{W}) = O(poly(\log{n}))$ by assumption, and property (4) of Theorem~\ref{thm:bucketing} ensures that a vertex is only sampled in one $\widehat{G}_i$, so taking the union over $O(poly(\log{n}))$ buckets preserves $\widehat{G} \approx_{\epsilon /2}^{VC} H$ w.h.p. for sufficient constants in $t$.  $G \approx_{\epsilon/2} \widehat{G}$ is a stronger statement than $G \approx_{\epsilon/2}^{VC} \widehat{G}$, implying $G \approx_{\epsilon}^{VC} H$

\noindent\textbf{(3)} Edge weights are only changed in $\textsc{Sample}$ where they are either doubled or left alone. $\textsc{VertexSparsify}$ calls $\textsc{Sample}$ at most $O(\log{n})$ times for each bucket of $\widehat{G}$, giving the appropriate bound.

\subsubsection{Improved Static Algorithm for General Graphs}
\label{subsec:generalGraph}

Composing this routine $O(\log{n})$ times,
along with spectral sparsifiers, leads to a static routine:


\terminalSparisfy*

Now that we have sufficient notation in place, by $terminal-cut-sparsifier$, we mean that $G \approx_{\epsilon}^{\VC} H$ with high probability. Note that this is almost equivalent to Theorem~\ref{thm:vertexSparsify}, but we make no assumptions on the degree of vertices in $X$. Also, we will specify $\poly(\log^{n},\epsilon^{-1})$ as $\log^{18}{n}\epsilon^{-7}$.

\begin{proof}

	Consider running the following routine iteratively:
	\begin{enumerate}
		\item Sparsify $G$ with error $\epsilonhat = \frac{\epsilon}{O(\log{n})}$ and output $\tilde{G}$
		\item Find the bipartite subgraph $\widehat{G}$ containing $\VC$ and vertices $X\widehat{G} \subseteq X$
		whose degree are less than $O( \log^{2}{n} \epsilon^{-2})$.
		Run \textsc{VertexSparsify} on $\widehat{G}$, $\VC$, $X\widehat{G}$, with $d = O( \log^{2}{n} \epsilon^{-2})$ and with error $\epsilonhat = \frac{\epsilon}{O(\log{n})}$, returning $\widehat{H}$
		\item $G \gets \tilde{G} \setminus \widehat{G}$ and $H \gets H \cup \widehat{H}$
	\end{enumerate}
	
	If at any point, we have $|X| < |\VC|\log^{17}{n} \epsilon^{-7}$, then return $H \cup G$.
	
	From \cite{SpielmanS11}, and the number of edges in $\tilde{G}$ is $O(n\log{n}\epsilonhat^{-2})$ with high probability.
	Therefore, at least half of $|X|$ have degree less than $O( \log^{2}{n} \epsilonhat^{-2})$ because otherwise the number of edges in $\tilde{G}$ would be $O(|X|\log^2{n} \epsilonhat^{-2}) = O(n\log^2{n} \epsilonhat^{-2})$ by the assumption $|X| \geq |\VC|$.
	This eliminates half the vertices in $X$ with high probability for every run of the routine, so the process can continue at most $O(\log{n})$ times.
	From Theorem~\ref{thm:vertexSparsify} each bucket of $\widehat{H}$ will have at most $O(|\VC| t\log{n})$ vertices with $t = O(d^2\log^3{n}\epsilonhat^{-3})$ and $d = O(\log^2{n}\epsilonhat^{-2})$, giving $O(|\VC| \log^{15}{n} \epsilon^{-7})$. We run sparsification on $G$ and $\textsc{VertexSparsify}$ on $\widehat{G}$ $O(\log{n})$ times, so from the guarantees of Theorem~\ref{thm:fully dynamic cut sparsifier} and property (3) of Theorem~\ref{thm:vertexSparsify}, the weights are within a factor $O(n^{O(\log{n})})$. Therefore, there are at most $O(\log^2{n})$ buckets of $\widehat{H}$, and at most $O(|\VC| \log^{17}{n} \epsilon^{-7})$ vertices which has the appropriate size requirement.
	
	Sparsification gives $G \approx_{\widehat{\epsilon}} \tilde{G}$ with high probability, which is a stronger statement than $G \approx_{\widehat{\epsilon}}^{\VC} \tilde{G}$. Theorem~\ref{thm:vertexSparsify}, which is still applicable for weight within a factor $O(n^{O(\log{n})})$, gives $\widehat{G} \approx_{\widehat{\epsilon}}^{\VC} \widehat{H}$ with high probability. Therefore $(\tilde{G} \setminus \widehat{G}) \cup \widehat{H} \approx_{2\widehat{\epsilon}}^{\VC} G$ with high probability. Applying this inductively for $O(\log{n})$ steps gives the desired relation by setting $\widehat{\epsilon} = \frac{\epsilon}{O(\log{n})}$ as was done in the iterative routine above.
	
	Sparsifying $G$ requires $O(m \cdot \poly(\log{n}, \epsilon^{-1}))$ work \cite{SpielmanS11}.
	Furthermore, in Section~\ref{subsec:dynMinCut1} we will show that \textsc{VertexSparsify} can be maintained dynamically in worst-case update time of $O(\poly(\log{n}, \epsilon^{-1}))$, so it's static runtime must be $O(m \cdot \poly(\log{n}, \epsilon^{-1}) )$.
	
\end{proof}

\subsection{Dynamic Minimum Cut of Bipartite Graphs}
\label{subsec:dynMinCut1}

Now that we have the full process of $\textsc{VertexSparsify}$, we will give the dynamic algorithm for maintaining a $(1 + \epsilon)$-approximate minimum cut in amortized $O(\poly(\log{n},\epsilon^{-1}))$ time. The algorithm in Figure~\ref{fig:dynamic1approx} will be analogous to the one given in Section~\ref{sec:dynamic min cut}, but will replace sparsification of $G_{\VC}$ with $\textsc{VertexSparsify}$, improving the approximation by a factor of $2$.

\begin{figure}
	
	\begin{algbox}
		
		\begin{enumerate}
			\item Dynamically maintain a sparsified $G$, which we will denote $\tilde{G}$ 
			
			\item Dynamically maintain a $\textit{branch vertex cover}$,  $\VC$, on $\tilde{G}$, where we ensure $s,t \in \VC$
	
			\item Dynamically maintain a vertex sparsified $\tilde{G}$ using $\VC$ and $X\tilde{G} = V \setminus \VC$ which we will denote $H$
	
			\item Every $\frac{\epsilon}{2}\Delta_H(\widehat{S}_{V_H})$ dynamic steps, recompute $\widehat{S}_{V_H} \subset V_H$, an approximate minimum $s-t$ cut on $H$, ignoring all degree zero vertices	
		\end{enumerate}
		
	\end{algbox}
	
	\caption{Dynamic $(1 + \epsilon)$-approximate Minimum $s-t$ Cut}
	\label{fig:dynamic1approx}
	
\end{figure}

In this algorithm we run into the same issue of returning a cut of size $O(n)$ in amortized $O(\poly(\log{n},\epsilon^{-1}))$ time, and will allow a similar querying scheme. Let $V_H$ be the non-zero degree vertex set of $H$. Our vertex sparsification process ensures that $\VC \subseteq V_H$, so for the computed $\widehat{S}_{V_H} \subset V_H$, we will maintain the cut $\widehat{S}_{V_H} \cap \VC \subset \VC$ with $s \in \widehat{S}_{V_H} $. For a vertex $v \in \VC$, return $v$ is with $s$ iff $v \in \widehat{S}_{V_H} \cap \VC$, which takes $O(1)$ time. For a vertex $x \notin \VC$, note that all of $N(x)$ must be in $\VC$, and return that $x$ is with $s$ iff $\ww(x,\widehat{S}_{V_H} \cap \VC) = \ww^{(x)}(\widehat{S}_{V_H} \cap \VC)$ in $\tilde{G}$, taking $O(\poly(\log{n},\epsilon^{-1}))$ time to compute $\ww(x,\widehat{S}_{V_H} \cap \VC)$  and $\ww^{(x)}(\widehat{S}_{V_H} \cap \VC)$, by Corollary~\ref{lem:additional properties of fully dynamic cut sparsifier} and Corollary~\ref{cor:lowdegInd}. Note that by restricting to $\VC$ we will be able take advantage of the approximation guarantees of vertex sparsification in the corollary below.

\begin{corollary}
	\label{cor:dynCorrect}
	The dynamic algorithm maintains a $(1 + \epsilon)$-approximate minimum $s-t$ cut in $G$, and will only compute an approximate minimum $s-t$ cut on $H$ every $O(\epsilon \OPT)$ dynamic steps, taking $O(\OPT \cdot \poly(\log{n}, \epsilon^{-1}))$ time each computation 
	
\end{corollary}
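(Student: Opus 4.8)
\textbf{Proof plan for Corollary~\ref{cor:dynCorrect}.}

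The plan is to mirror the argument structure of Corollary~\ref{cor:dyn2Correct} from Section~\ref{sec:dynamic min cut}, replacing the factor-$2$-lossy Schur-complement reduction (Theorem~\ref{thm:schurComplement}) with the vertex sparsification routine of Theorem~\ref{thm:vertexSparsify}, which introduces only a $(1\pm\widehat\epsilon)$-distortion on all cuts over $\VC$. First I would establish the chain of approximation relations: by Corollary~\ref{lem:additional properties of fully dynamic cut sparsifier} we have $G \approx_{\widehat\epsilon} \tilde G$ with high probability, and by Theorem~\ref{thm:vertexSparsify}(2) we have $H \approx_{\widehat\epsilon}^{\VC} \tilde G$ with high probability (here we need $\tilde G$ to have polylogarithmic arboricity so that the branch vertex cover $\VC$ has size $O(\OPT\cdot\poly(\log n,\epsilon^{-1}))$ with all non-cover vertices of low degree, via Lemma~\ref{lem:mvcOPT} and Corollary~\ref{cor:lowdegInd}, which is exactly what makes $\textsc{VertexSparsify}$ applicable). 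Crucially, since $G \approx_{\widehat\epsilon} \tilde G$ is stronger than $G \approx_{\widehat\epsilon}^{\VC}\tilde G$, composing the two relations gives $H \approx_{2\widehat\epsilon}^{\VC} G$, i.e.\ for every $S_{\VC}\subset\VC$ we have $(1-O(\widehat\epsilon))\Delta_H(S_{\VC}) \le \Delta_G(S_{\VC}) \le (1+O(\widehat\epsilon))\Delta_H(S_{\VC})$.

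Next I would run the sandwich argument exactly as in the proof of Theorem~\ref{thm:approxDyn2}, but now \emph{without} the factor of $2$. Let $\widehat S_{V_H}$ be a $(1+\widehat\epsilon)$-approximate minimum $s$-$t$ cut in $H$ and let $\widehat S_{\VC} = \widehat S_{V_H}\cap\VC$; let $\widehat S$ be its minimal extension onto $\tilde G$ (adding each non-cover vertex $x$ to whichever side it is more strongly connected to). On one side, $\Delta_G(\widehat S) \le (1+\widehat\epsilon)\Delta_{\tilde G}(\widehat S_{\VC}) \le (1+\widehat\epsilon)^2\Delta_H(\widehat S_{\VC})$, where the first inequality uses $G\approx_{\widehat\epsilon}\tilde G$ together with the fact that $\Delta_{\tilde G}(\widehat S)$ equals the minimally-extended cut weight $\Delta_{\tilde G}(\widehat S_{\VC})$ by definition of $\Delta$ on a cut of $\VC$, and the second uses $H\approx_{\widehat\epsilon}^{\VC}\tilde G$. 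On the other side, letting $\overline S$ be the true minimum $s$-$t$ cut of $G$ and $\overline S_{\VC}$ its restriction to $\VC$, we get $\OPT = \Delta_G(\overline S)\ge\Delta_G(\overline S_{\VC}) \ge (1-\widehat\epsilon)\Delta_{\tilde G}(\overline S_{\VC}) \ge (1-\widehat\epsilon)^2\Delta_H(\overline S_{\VC}) \ge (1-\widehat\epsilon)^3\Delta_H(\widehat S_{\VC})$, the last step by near-optimality of $\widehat S_{\VC}$ on $H$. Combining, $\Delta_G(\widehat S) \le \frac{(1+\widehat\epsilon)^2}{(1-\widehat\epsilon)^3}\OPT$, which is $(1+\epsilon)\OPT$ for $\widehat\epsilon = \epsilon/O(1)$ — here the gain over Corollary~\ref{cor:dyn2Correct} is precisely that vertex sparsification replaces the lossy $\tilde G_{\VC}$ by a $(1\pm\widehat\epsilon)$-faithful $H$.

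For the amortization claim: the recomputation happens every $\frac{\epsilon}{2}\Delta_H(\widehat S_{V_H})$ dynamic steps, and since each edge insertion/deletion changes $\OPT$ by at most $1$ and $\Delta_H(\widehat S_{V_H}) = \Theta(\OPT)$ (the lower bound $\Delta_H(\widehat S_{\VC})\ge(1-\widehat\epsilon)^{-3}\cdot\frac{(1-\widehat\epsilon)^3}{(1+\widehat\epsilon)^2}\cdots$ chain above, the upper bound $\OPT\le\Delta_G(\widehat S)\le(1+O(\widehat\epsilon))^2\Delta_H(\widehat S_{\VC})$), the cut stays $(1+\epsilon)$-approximate for $\Theta(\epsilon\,\OPT)$ steps, so recomputation is invoked at most once per $\Theta(\epsilon\,\OPT)$ updates. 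Finally, a single recomputation runs approximate max-flow (via~\cite{Sherman13,KelnerLOS14,Peng16}) on $H$, which has $O(|\VC|\poly(\log n,\epsilon^{-1})) = O(\OPT\cdot\poly(\log n,\epsilon^{-1}))$ vertices and edges by Theorem~\ref{thm:vertexSparsify}(1), hence takes $O(\OPT\cdot\poly(\log n,\epsilon^{-1}))$ time, as claimed. The main obstacle I anticipate is the bookkeeping needed to justify that $\Delta_H(\widehat S_{V_H})$ (computed on $H$) can be safely substituted for $\OPT$ in the recomputation-interval bound without a circular dependency — one must verify both directions of $\Delta_H(\widehat S_{V_H}) = \Theta(\OPT)$ hold with high probability using only the already-established approximation chain, and check that the low-probability failure events of the two layers of sparsification (cut sparsifier of $G$, and $\textsc{VertexSparsify}$) compose correctly over the polynomially many recomputations, which is handled by the standard union bound / periodic-restart machinery of Section~\ref{lem:arbitrarily long sequences} but should be stated explicitly.
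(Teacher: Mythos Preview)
Your proposal is correct and follows essentially the same approach as the paper's proof: establish the approximation chain $G \approx_{\widehat\epsilon} \tilde G$ (cut sparsifier) and $\tilde G \approx_{\widehat\epsilon}^{\VC} H$ (Theorem~\ref{thm:vertexSparsify}), run the two-sided sandwich exactly as in Theorem~\ref{thm:approxDyn2} but without the factor~$2$ from Theorem~\ref{thm:schurComplement}, and conclude $\Delta_H(\widehat S_{V_H}) = \Theta(\OPT)$ for the amortization. The paper orders the inequalities slightly differently on the lower-bound side (it passes to $\tilde G$ before restricting to $\VC$, whereas you restrict in $G$ first), but this is immaterial; your extra care in explicitly justifying both directions of $\Delta_H(\widehat S_{V_H}) = \Theta(\OPT)$ is in fact more precise than the paper's one-line derivation of that step.
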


\begin{proof}
	$\tilde{G} = F_1 \cupdot \ldots \cupdot F_K$ for some $K = O(\poly(\log{n},\epsilon^{-1}))$ by Corollary~\ref{lem:additional properties of fully dynamic cut sparsifier}, so from Lemma~\ref{lem:mvcOPT} and Corollary~\ref{cor:lowdegInd}, we know $|\VC| = O(\OPT\cdot \poly(\log{n},\epsilon^{-1}))$ and the degree of all vertices in $X\tilde{G}$ is $O(\poly(\log{n},\epsilon^{-1}))$.
	From Corollary~\ref{lem:additional properties of fully dynamic cut sparsifier}, the weights of $\tilde{G}$ are in $[1,O(n)]$, and so property (1) of Theorem~\ref{thm:vertexSparsify} implies that $H$ has $O(\OPT \cdot \poly(\log{n},\epsilon^{-1}))$ edges.
	Therefore, we can find
	a $(1 + \epsilonhat)$ approximate minimum $s-t$ cut in $H$,
	in $O(\OPT\cdot \poly(\log{n}, \epsilon^{-1}))$ time.
	
	Assume $\widehat{S}_{V_H} \subset V_H$ is returned as a $(1 + \epsilonhat)$-approximate minimum $s-t$ cut in $H$, with $\epsilonhat = \frac{\epsilon}{O(1)}$. Let $\widehat{S}_{\VC} = \widehat{S}_{V_H}  \cap \VC$ be its restriction to $\VC$, and let
	\[
	\widehat{S} = \widehat{S}_{VC} \cup
	\{x \in X\tilde{G}: \ww(x,\widehat{S}_{VC}) = \ww^{(x)}(\widehat{S}_{VC}) \}
	\]
	be the extension of $\widehat{S}_{\VC}$ onto $\tilde{G}$, which is the cut returned by our vertex querying scheme.	From Corollary~\ref{lem:additional properties of fully dynamic cut sparsifier} and Theorem~\ref{thm:vertexSparsify}, we have $G \approx_{\epsilonhat} \tilde{G}$ and $\tilde{G} \approx_{\epsilonhat}^{\VC} H$, respectively, which gives

	\[
	\Delta_{G}(\widehat{S}) \leq (1 + \epsilonhat )\Delta_{\tilde{G}}(\widehat{S})
	=  (1 + \epsilonhat) \Delta_{\tilde{G}}(\widehat{S}_{VC})
	\leq (1 + \epsilonhat )^2 \Delta_{H}(\widehat{S}_{VC}).
	\]

	On the other hand, let $\overline{S} \subset V$ be the minimum
	$s-t$ cut in $G$, and $\overline{S}_{VC} \subset \VC$
	be its restriction to $\VC$. Using the fact that $\Delta_{\tilde{G}}(\overline{S}_{\VC})$ is the weight of the minimal extension of $\overline{S}_{\VC}$ in $\tilde{G}$, along with the approximations $G \approx_{\epsilonhat} \tilde{G}$
	and $\tilde{G} \approx_{\epsilonhat}^{\VC} H$ gives

	\[
	\Delta_{G}(\overline{S}) \geq (1 - \epsilonhat ) \Delta_{\tilde{G}}(\overline{S})
	\geq (1 - \epsilonhat ) \Delta_{\tilde{G}}(\overline{S}_{VC})
	\geq (1 - \epsilonhat )^2 \Delta_{H}(\overline{S}_{VC}).
	\]	
	
	The near-optimality of $\widehat{S}_{V_H}$
	on $H$ and setting $\widehat{S}_{\VC} = \widehat{S}_{V_H} \cap \VC$, gives, \[ \Delta_{H}(\overline{S}_{VC}) \geq
	(1 - \epsilonhat) \Delta_{H}(\widehat{S}_{V_H}) \geq
	(1 - \epsilonhat) \Delta_{H}(\widehat{S}_{\VC})\]
	Therefore, $\Delta_{G}(\widehat{S}) \leq (1 + \epsilonhat)^5\Delta_{G}(\overline{S})$, and by choosing $\epsilonhat = \frac{\epsilon}{O(1)}$ we maintain a $(1 + \frac{\epsilon}{2})$-approximate minimum $s-t$ cut in $G$.

	An approximate minimum $s-t$ cut on $H$ will be re-computed in $\frac{\epsilon}{2}\Delta_H(\widehat{S}_{V_H})$ dynamic steps. $\OPT = \Delta_{G}(\overline{S}) \leq (1 + \epsilon)\Delta_H(\widehat{S}_{V_H})$, so $\Delta_H(\widehat{S}_{V_H}) = O(\OPT)$
	
\end{proof}

All that is left to be shown is that data structures can be maintained in $O(\poly(\log{n},\epsilon^{-1}))$ time per dynamic update. As a result of Corollary~\ref{lem:additional properties of fully dynamic cut sparsifier}, it suffices to show the following

\begin{theorem}
	\label{thm:dynMaintenance}
	For each addition/deletion of an edge in $\tilde{G}$, maintaining $\widehat{G}$, $H$, and $\VC$ takes $O(\poly(\log{n},\epsilon^{-1}))$ time.
	
\end{theorem}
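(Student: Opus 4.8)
The plan is to extend the data-structure bookkeeping of \Cref{subsec:dataStructures} from the $(2+\epsilon)$-algorithm to the two additional objects that the $(1+\epsilon)$-algorithm maintains: the bucketed graph $\widehat{G}$ produced by \textsc{VertexBucketing} and the vertex sparsifier $H$ produced by \textsc{VertexSparsify}. As in \Cref{sec:dynamic min cut}, everything rests on the fact that $\tilde{G}$ is a union of $K=\poly(\log n,\epsilon^{-1})$ forests (\Cref{lem:additional properties of fully dynamic cut sparsifier}), so every vertex of $X=V\setminus\VC$ has degree $d=\poly(\log n,\epsilon^{-1})$ (\Cref{cor:lowdegInd}). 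I would maintain the branch vertex cover $\VC$ exactly as before: keep $\textsc{ADJ-LIST}_{\tilde{G}}$ with its $\textsc{LEAF}/\textsc{BRANCH}$ lists and call $\textsc{UpdateADJ}$ after each edge change of $\tilde{G}$, costing $O(\log n)$ and changing $\VC$ by $O(1)$ vertices, each of polylogarithmic degree (\Cref{lem:updateADJ}). Because $\VC$ is a branch vertex cover, $X$ stays an independent set of $\tilde{G}$, so no single update ever touches more than $O(1)$ stars $N_x$.

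Next I would maintain $\widehat{G}=(\widehat{G}\setminus X)\cupdot \widehat{G}_1\cupdot\cdots\cupdot\widehat{G}_L$. When an edge $(v,x)$ is inserted into or deleted from $\tilde{G}$ with $x\notin\VC$, I simply re-run the body of \textsc{VertexBucketing} on the single star $N_x$; this rereads $O(d)$ edges and rewrites at most one bucket $\widehat{G}_i$ (changing it by $O(1)$ stars) together with $O(d)$ multi-edges $(u,w)_x$ of $\widehat{G}\setminus X$. When $\VC$ changes, the $O(1)$ vertices that enter or leave $\VC$ have polylogarithmic degree; a vertex entering $\VC$ has its whole star re-expressed as $\poly(\log n,\epsilon^{-1})$ edges $e_\emptyset$ in $\widehat{G}\setminus X$ and removed from its bucket, and symmetrically for a vertex leaving $\VC$ --- again $\poly(\log n,\epsilon^{-1})$ work with no cascading, since independence of $X$ is preserved. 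Hence $\widehat{G}$ is maintained in $\poly(\log n,\epsilon^{-1})$ time per update, and per update $\widehat{G}\setminus X$ changes by $\poly(\log n,\epsilon^{-1})$ edges while each of a constant number of buckets changes by $O(1)$ stars.

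Finally I would maintain $H=(H\setminus X)\cupdot H_1\cupdot\cdots\cupdot H_L$. The part $H\setminus X$ is the dynamic cut sparsifier of $\widehat{G}\setminus X$ from \Cref{thm:fully dynamic cut sparsifier}; since $\widehat{G}\setminus X$ receives $\poly(\log n,\epsilon^{-1})$ edge updates per step and that data structure processes each in $\poly(\log n,\epsilon^{-1})$ worst-case time, this costs $\poly(\log n,\epsilon^{-1})$. Each $H_i=\textsc{BoundedVertexSparsify}(\widehat{G}_i,\VC,X\widehat{G}_i,t)$ is built by $l=O(\log n)$ rounds: construct a $t$-clique forest $F_i=F_{i,1}\cupdot\cdots\cupdot F_{i,t}$ of the clique graph $G_{\VC}^{i}$ via the greedy \textsc{LightVertices}, read off the light set $XG_i^{light}$ (the cliques hit by $F_i$), and call \textsc{Sample} on the heavy vertices to obtain $G_{i+1}$. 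I would maintain each $t$-clique forest as $t$ chained instances of a dynamic spanning-forest algorithm run on the \emph{unweighted} version of $G_{\VC}^{i}$ --- the greedy choice in \textsc{LightVertices} only depends on whether a clique edge is present, not its weight, so the $\poly(\log n,\epsilon^{-1})$ weight changes inside a clique $K_x$ that accompany a star update are irrelevant. The crucial structural point, which keeps the $l=O(\log n)$-deep pipeline from blowing up, is condition $2$ of a $t$-clique forest: $F_i$ contains at most one edge of each clique $K_x$. Consequently a star update to $G_i$ --- a batch of $O(d)$ clique-edge changes all confined to the single clique $K_x$ --- can flip only the light status of $x$ and trigger at most one replacement forest edge per affected forest; since that replacement is drawn from a previously unassigned clique, adding or removing that whole clique from the residual ground sets of the later forests leaves those forests unchanged. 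Thus one star update to $G_i$ causes $O(1)$ changes to $XG_i^{light}$, hence $O(1)$ star changes (that is, $O(d)$ edge changes) in the sampled graph $G_{i+1}$, so the update reaches round $i+1$ in the same form. Multiplying the $\poly(\log n,\epsilon^{-1})$ cost per round by $l=O(\log n)$ rounds, $t$ spanning-forest instances per round, and $L=\poly(\log n)$ buckets gives $\poly(\log n,\epsilon^{-1})$ total, which establishes \Cref{thm:dynMaintenance}. As in \Cref{lem:arbitrarily long sequences} and \Cref{lem:extending sparsifier to long sequence}, running two instances that periodically restart and blending their outputs (using decomposability) removes the polynomial-length restriction inherited from the w.h.p.\ guarantee of \Cref{thm:vertexSparsify}, with $O(1)$ overhead.

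The step I expect to be the main obstacle is exactly this structural/propagation claim: verifying that a batch of clique-edge changes localized to one clique $K_x$ induces only $O(1)$ changes to $XG_i^{light}$ --- so that the number of propagated star updates does not compound multiplicatively across the $O(\log n)$ sampling rounds --- and chasing this invariant carefully through the chained spanning-forest instances, the \textsc{Sample} step, and the reconstruction of $G_{\VC}^{i+1}$. The remaining pieces (maintaining $\VC$, re-bucketing a single star, and driving the worst-case dynamic cut sparsifier) are straightforward adaptations of the machinery already developed in \Cref{subsec:dataStructures} and \Cref{thm:fully dynamic cut sparsifier}.
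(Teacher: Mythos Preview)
Your high-level plan matches the paper: maintain $\VC$ via the branch-cover machinery, maintain $\widehat{G}$ by re-bucketing the $O(1)$ affected stars (the paper packages this as \textsc{RemoveXG}/\textsc{InsertXG}), sparsify $\widehat{G}\setminus X$ with the dynamic cut sparsifier, and for each bucket propagate star insertions/deletions through the $l=O(\log n)$ sampling levels of \textsc{BoundedVertexSparsify}, arguing that one star change at level $i$ causes $O(1)$ star changes at level $i+1$.

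The gap is in the mechanism you propose for maintaining the $t$-clique forest. You suggest running $t$ chained off-the-shelf dynamic spanning-forest instances on (the unweighted) $G_{\VC}^{i}$. But a standard spanning forest of $G_{\VC}^{i}$ may select several edges of a single clique $K_x$ (up to $d-1$ of them, spanning $N(x)$), so condition~2 of a $t$-clique forest --- the very invariant you correctly flag as crucial --- is not enforced. Then deleting the star $N_x$ removes up to $d-1$ forest edges, each of which may be replaced by an edge from a distinct clique $K_y$, turning one star update into $\Theta(d)$ star updates at the next level; across $l=O(\log n)$ levels this compounds to $d^{\,O(\log n)}$, not $\poly(\log n,\epsilon^{-1})$.

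The paper avoids this by peeling at clique granularity rather than edge granularity. Its \textsc{LightVertices} picks \emph{one} edge $e_x\in K_x$ into some $F_{i,j}$ and then removes all of $K_x$ from the ground set before proceeding, so each clique contributes to at most one forest. Dynamic maintenance is then the bespoke \textsc{InsertStar}/\textsc{RemoveStar}: on inserting $N_x$, scan the $t$ forests for the first $F_{i,j}$ that can accept some $e_x\in K_x$ and add only that one edge (else $x$ is heavy and is sampled into level $i{+}1$); on removing $N_x$, drop the single $e_x$ from its $F_{i,j}$, look for one replacement $f_y$ from the current heavy cliques, and recurse once (to remove $N_y$ from level $i{+}1$ if $y$ was there). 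Each call makes at most one recursive call, so a star change at level $0$ costs $O(l\cdot t\cdot\poly(\log n,\epsilon^{-1}))$. Your proposal becomes correct once you replace ``chained spanning forests on $G_{\VC}^{i}$'' with this clique-level peeling; everything else in your outline then goes through exactly as the paper does.
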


As in Section~\ref{subsec:dataStructures}, most of the necessary analysis for Theorem~\ref{thm:dynMaintenance} will follow from the fact that all $x \in X\tilde{G}$ have degree $O(\poly(\log{n},\epsilon^{-1}))$, and the only substantial changes made to the data structures in one dynamic step, are done within the neighborhood of some $x \in X\tilde{G}$. We will also assume all of the dynamic data structure analysis of Section~\ref{subsec:dataStructures} with regards to maintaining a corresponding $G_{\VC}$ of some $G$. 

In the rest of this section, we will first examine dynamically maintaining the pre-processing routine, particularly when vertices are moved in and out of the vertex cover. Then we will consider dynamically maintaining our vertex sparsification routine. Most of the time complexity analysis will follow from Section~\ref{subsec:dataStructures}, and the only tricky part will be ensuring that dynamic changes do not multiply along iterations of the sparsification routine.

\paragraph{Maintaining $\widehat{G}$}
	\label{subsec:bucketMaintenance}

As with the multi-graph $G_{\VC}$, for $\widehat{G}\setminus X\tilde{G}$, an edge $e_{\emptyset}$ denotes an edge originally in $\tilde{G}$ and $e_x$ denotes an edge that was moved into $\widehat{G}\setminus X\tilde{G}$ from $N_x$. For each $x \in X\tilde{G}$, let $x_{\max}$ denote the vertex such that $(x,x_{\max})$ has the maximum weight in $N_x$. Let $bucket(x)$ be the $i \in [L]$ such that $\ww (x,x_{\max}) \in [2^{i-1},2^i]$. We can use 1 as our scalar here because all weights of $G$ are 1, so from Corollary~\ref{lem:additional properties of fully dynamic cut sparsifier}, all weights of $\tilde{G}$ are in $[1, O(n)]$. In order to maintain each $x_{\max}$, we will assume that the data structure of $\tilde{G}$ is such that the adjacency list of each $x$ is sorted by edge weight. Consequently, edge insertions/deletions in $\tilde{G}$ will require $O(\log{n})$ time.

Maintaining each bucket for an edge insertion/deletion in $\tilde{G}$ will be analogous to maintaining $G_{\VC}$ in Section~\ref{subsec:dataStructures}. We will first show that moving a vertex in and out of $X\tilde{G}$ can be done in $O(\poly(\log{n},\epsilon^{-1}))$ time, then give the overall update process, which will primarily just be composed of these two operations.

\begin{figure}
	\begin{algbox}
		$\textsc{RemoveXG}(\tilde{G}, X\tilde{G}, v)$
		\begin{enumerate}
			\item Delete all edges $e_v$ incident to $v_{\max}$ from $\widehat{G} \setminus X\tilde{G}$
			\item Delete all edges incident to $v$ from  $\widehat{G}_{bucket(v)}$
			\item For all edges $e$ incident to $v$ in $\tilde{G}$, add $e_{\emptyset}$  into $\widehat{G} \setminus X\tilde{G}$
		\end{enumerate}
	\end{algbox}
	
	\caption{Removing a Vertex from $X\tilde{G}$}
	\label{fig:removeXG}
	
\end{figure}

\begin{lemma}
	\label{lem:removeXG}
	If $v$ is not in $\VC$, then running $\textsc{RemoveXG}(\widehat{G},X\tilde{G},v)$ will output $\widehat{G}$ with $v \in \VC$ in $O(deg_v\log{n})$ time, where $deg_v$ is the degree of $v$ in $\tilde{G}$	
\end{lemma}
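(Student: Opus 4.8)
The plan is to show that the three steps of $\textsc{RemoveXG}$ transform the stored structure $\widehat{G}$ into exactly the bucketing structure that $\textsc{VertexBucketing}$ (Figure~\ref{fig:vertexBucket}) would produce on $\tilde G$ for the enlarged cover $\VC \cup \{v\}$ and independent set $X\tilde{G}\setminus\{v\}$; this is the bucketing analogue of Lemma~\ref{lem:insertVC} and Lemma~\ref{lem:removeVC}. First I would record the single structural fact driving everything: since $v\notin\VC$ and $\VC$ is a vertex cover of $\tilde G$, every neighbour $w$ of $v$ in $\tilde G$ lies in $\VC$. Hence in the new partition each edge of $\tilde G$ incident to $v$ runs between two cover vertices, so in the target structure it must appear as an ordinary edge $e_{\emptyset}$ of $\widehat{G}\setminus X$, and $v$ must have no incident edge in any bucket $\widehat{G}_i$.

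\textbf{Correctness.} When $v$ was an independent vertex, the only way it affected the stored structure was through the iteration of $\textsc{VertexBucketing}$ that processed $x=v$: writing $v_{\max}$ for the heaviest neighbour of $v$, that iteration added $(v_{\max},w)_v$ to $\widehat{G}\setminus X$ for every neighbour $w$ with $\ww(v,w)<\tfrac{\epsilon}{d}\,\ww(v,v_{\max})$, and added $(v,w)$ to the single bucket $\widehat{G}_{bucket(v)}$ for every remaining neighbour $w$. No other iteration produces an edge incident to $v$ inside a bucket, and by the reduction in Lemma~\ref{lem:reduce} every reduced edge $e_v$ is incident to $v_{\max}$. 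Step~1 and Step~2 delete precisely these two families, so afterwards the structure is what $\textsc{VertexBucketing}$ would have produced had it skipped $v$ entirely. Step~3 then inserts $e_{\emptyset}$ for each edge of $v$ in $\tilde G$, which is exactly the contribution of $v$'s edges to $G\setminus X$ in the initialisation of $\textsc{VertexBucketing}$ for the new partition. All edges not incident to $v$ — in particular edges $(\cdot,v)_x$ contributed by other independent vertices $x$, which remain valid because the bucketing of $x$ is unchanged — are untouched. Hence the output is the correct bucketing structure for the cover $\VC\cup\{v\}$, and in particular $v\in\VC$ in the updated structure.

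\textbf{Running time.} Vertex $v$ has $\deg_v$ neighbours in $\tilde G$, so Step~1 performs at most $\deg_v-1$ deletions of $e_v$-edges (one per light neighbour, all found among the edges incident to $v_{\max}$), Step~2 performs at most $\deg_v$ deletions, all inside the single bucket $\widehat{G}_{bucket(v)}$, and Step~3 performs exactly $\deg_v$ insertions of $e_{\emptyset}$-edges. Each edge insertion or deletion in the balanced-tree representation of the (multi-)graphs costs $O(\log n)$, and computing $v_{\max}$ and $bucket(v)$ costs $O(\log n)$ since the adjacency list of $v$ in $\tilde G$ is kept sorted by weight. Summing gives the claimed $O(\deg_v\log n)$; by Corollary~\ref{cor:lowdegInd} this is in fact $O(\poly(\log n,\epsilon^{-1}))$.

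\textbf{Main obstacle.} The delicate part is the bookkeeping in Steps~1--2: one must argue the local search is \emph{complete}, i.e.\ reaches every structure edge ``owned'' by $v$. This rests on the two facts that (a) all reduced edges $e_v$ are incident to $v_{\max}$ and (b) all bucket edges incident to $v$ lie in $\widehat{G}_{bucket(v)}$, both of which follow from the definition of $\textsc{VertexBucketing}$ and Lemma~\ref{lem:reduce} but must be invoked explicitly to guarantee that the local update restores a globally consistent structure.
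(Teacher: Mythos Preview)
Your proof is correct and follows essentially the same approach as the paper: both bound the running time by counting $O(\deg_v)$ insertions/deletions at $O(\log n)$ each, and both argue correctness from the fact that $v\notin\VC$ forces all neighbours of $v$ into $\VC$, so Steps~1--2 undo exactly $v$'s contribution as an independent vertex and Step~3 reinstates its edges as cover-to-cover edges $e_\emptyset$. One small remark: your parenthetical about ``edges $(\cdot,v)_x$ contributed by other independent vertices $x$'' is vacuous, since $v\notin\VC$ together with $\VC$ being a vertex cover means no other $x\in X\tilde G$ is adjacent to $v$ and hence no edge with subscript $x\neq v$ can have $v$ as an endpoint; this does not affect the argument.
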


\begin{proof}
	Costs of the three steps are:
	\begin{enumerate}
		\item  Deleting all edges $e_v$ incident to $v_{\max}$ from $\widehat{G} \setminus X\tilde{G}$ takes $O(\log{n})$ time per deletion and $O(deg_v)$ deletions.
		
		\item  Deleting all edges incident to $v$ from  $\widehat{G}_{bucket(v)}$ takes $O(\log{n})$ time per deletion and $O(deg_v)$ deletions.
		
		\item Adding $e_{\emptyset}$  into $\widehat{G} \setminus X\tilde{G}$ takes $O(\log{n})$ time and is done for all edges $e$ incident to $v$ in $\tilde{G}$, so $O(deg_v)$ times
	\end{enumerate}
	
	If $v$ is not in $\VC$, then $v$ cannot be incident to any vertices in $X\tilde{G}$. Therefore, placing $v$ in $\VC$ implies that $v$ cannot be incident to any edges in all $\widehat{G}_k$ and no edges $e_v$ exist in $\widehat{G} \setminus X\tilde{G}$. $\textsc{RemoveXG}(\widehat{G},X\tilde{G},v)$ performs exactly these removals and inserts all necessary $e_{\emptyset}$ incident to $v$ into $\widehat{G}\setminus X\tilde{G}$

\end{proof}

\begin{figure}
	\begin{algbox}
		$\textsc{InsertXG}(\widehat{G}, X\tilde{G}, v)$
		\begin{enumerate}
			\item Delete all edges $e_{\emptyset}$ incident to $v$ in $\widehat{G} \setminus X\tilde{G}$
			\item For all edges $e = (v,w) \in \tilde{G}$ incident to $v$
			\begin{enumerate}
				\item If $\ww (v,w) < \frac{\epsilon}{d} \ww (v,v_{\max})$: insert $(w,v_{\max})_v$ into $\widehat{G} \setminus X\tilde{G}$
				\item Otherwise: insert $(v,w)$ into $\widehat{G}_{bucket(v)}$
			\end{enumerate}
		\end{enumerate}
	\end{algbox}
	
	\caption{Inserting a Vertex into $X\tilde{G}$}
	\label{fig:insertXG}
	
\end{figure}

\begin{lemma}
	\label{lem:insertXG}
	If $v$ is not in $\VC$, but was placed in $\VC$ for $\widehat{G}$, then running $\textsc{InsertXG}(\widehat{G},X\tilde{G},v)$ will output $\widehat{G}$ with $v \notin \VC$ in $O(deg_v \log{n})$ time, where $deg_v$ is the degree of $v$ in $\tilde{G}$	
\end{lemma}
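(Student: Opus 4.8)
The plan is to mirror the argument for \Cref{lem:removeXG}, since $\textsc{InsertXG}$ is built to be the exact inverse of $\textsc{RemoveXG}$. First I would dispatch the running time. Finding $v_{\max}$ costs $O(\log{n})$ because the adjacency list of each vertex in $\tilde{G}$ is kept sorted by edge weight, so $v_{\max}$ is simply the heaviest entry; computing $bucket(v)$ from $\ww(v,v_{\max})$ is then $O(1)$. Step~1 deletes at most $deg_v$ edges $e_{\emptyset}$ incident to $v$ from $\widehat{G}\setminus X\tilde{G}$, each deletion costing $O(\log{n})$. Step~2 iterates once over the $deg_v$ edges incident to $v$ in $\tilde{G}$, and for each performs a single insertion — either $(w,v_{\max})_v$ into $\widehat{G}\setminus X\tilde{G}$ or $(v,w)$ into $\widehat{G}_{bucket(v)}$ — again $O(\log{n})$ each. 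Summing yields the claimed $O(deg_v\log{n})$ bound.

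Next, correctness. The structural fact I would lean on is that since $v\notin\VC$ (the true branch vertex cover), every neighbor of $v$ in $\tilde{G}$ lies in $\VC$; in particular $X\tilde{G}$ is an independent set, so relocating $v$ out of the cover cannot alter the star $N_x$ of any other $x\in X\tilde{G}$, and the only portion of $\widehat{G}$ that must change consists of edges touching $v$. By hypothesis $\widehat{G}$ is currently in the state produced by $\textsc{RemoveXG}(\widehat{G},X\tilde{G},v)$, meaning the edges incident to $v$ appear in $\widehat{G}\setminus X\tilde{G}$ precisely as the $deg_v$ edges $e_{\emptyset}$, no $e_v$ edges are present, and $v$ has no incident edges in any $\widehat{G}_k$. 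I would then check that Steps~1 and~2 bring $\widehat{G}$ to exactly the state that $\textsc{VertexBucketing}(\tilde{G},\VC,X\tilde{G},d,\epsilon)$ produces when $v$ is treated as a non-cover vertex: Step~1 removes the stale $e_{\emptyset}$ edges, and Step~2 replays the inner loop of $\textsc{VertexBucketing}$ on the star $N_v$, contracting the light edges (those with $\ww(v,w)<\frac{\epsilon}{d}\ww(v,v_{\max})$) onto $v_{\max}$ as multi-edges $(w,v_{\max})_v$ in $\widehat{G}\setminus X\tilde{G}$ and depositing the rest into $\widehat{G}_{bucket(v)}$. Because no other vertex's bucketing decision depends on $v$, the resulting $\widehat{G}$ is the correct bucketed graph with $v\notin\VC$.

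The main obstacle I anticipate is not the time bound but the bookkeeping needed to guarantee that $\widehat{G}$ is genuinely in the ``$\textsc{RemoveXG}$ state'' at every invocation of $\textsc{InsertXG}$, so that Step~1 deletes exactly the intended set of edges and nothing is left dangling; this has to be threaded through the overall update process of \Cref{subsec:dynMinCut1} (the analogue of the routine in \Cref{subsec:dataStructures}), where $\textsc{InsertXG}$ and $\textsc{RemoveXG}$ are invoked in matched pairs around each edge update to $\tilde{G}$. A secondary subtlety is that contracting a light edge $(v,w)$ to $(w,v_{\max})_v$ presupposes that $\ww(v,v_{\max})$ is the true maximum in $N_v$ at the moment of insertion, so I would make sure the sorted-adjacency-list invariant for $v$ has been restored before $\textsc{InsertXG}$ runs — which is ensured since edge updates to $\tilde{G}$ maintain that order in $O(\log{n})$ time.
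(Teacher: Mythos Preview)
Your proposal is correct and follows essentially the same approach as the paper's proof: it analyzes the two steps of $\textsc{InsertXG}$ for time ($O(\deg_v)$ operations at $O(\log n)$ each), then argues correctness by observing that since $v\notin\VC$ all neighbors of $v$ lie in $\VC$, so the only stale data are the $e_{\emptyset}$ edges incident to $v$, and Step~2 exactly replays the bucketing rule of $\textsc{VertexBucketing}$ on $N_v$. Your extra remarks about the sorted-adjacency-list invariant for $v_{\max}$ and the paired-invocation bookkeeping are sound sanity checks that the paper leaves implicit.
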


\begin{proof}
	Costs of the two steps are:
	\begin{enumerate}
		\item  Deleting all edges $e_{\emptyset}$ incident to $v$ in $\widehat{G} \setminus X\tilde{G}$ takes $O(\log{n})$ time per deletion and $O(deg_v)$ deletions.
		
		\item  Checking if $\ww (v,w) < \frac{\epsilon}{d} \ww (v,v_{\max})$ and inserting $(w,v_{\max})_v$ into $\widehat{G} \setminus X$ or inserting $(v,w)$ into $\widehat{G}_{bucket(v)}$ takes $O(\log{n})$ time. This is done for all edges $e = (v,w) \in \tilde{G}$ incident to $v$, so $O(deg_v)$ times
		
	\end{enumerate}
	
	If $v$ is not in $\VC$, but was placed in $\VC$ for $\widehat{G}$, then only edges $e_{\emptyset}$ are incident to $v$ in $\widehat{G}$. Removing $v$ from $\VC$ requires deleting all of these edges. Further, all edges $e$ in $N_v$ of sufficiently small weight must be moved to $\widehat{G} \setminus X\tilde{G}$ as $e_v$, and the rest of $N_v$ must be placed in the appropriate $\widehat{G}_i$. $\textsc{InsertXG}(\widehat{G},X\tilde{G},v)$ performs exactly these operations.
	
\end{proof}

The full dynamic update process of $\widehat{G}$ for each $e = (u,v)$ insertion/deletion in $\tilde{G}$ will then be as follows.

\begin{enumerate}
	\item For $u$ and $v$, $\textsc{RemoveXG}(\widehat{G},X\tilde{G},v)$ if $v \notin \VC$
	
	\item Update $\VC$ and $\tilde{G}$ as done in section 5
	
	\item Add/delete $(u,v)_{\emptyset}$ from $\widehat{G} \setminus X\tilde{G}$
	
	\item Update $u_{\max}$ and $v_{\max}$, which will simply require looking at the first edge incident to $u$ and $v$ in $\tilde{G}$, as the list is sorted by weight
	
	\item For $u$ and $v$, $\textsc{InsertXG}(\widehat{G},X\tilde{G},v)$ if $v \notin \VC$
	
\end{enumerate}

\begin{lemma}
	\label{lem:widehatMaintanence}
	For each edge addition/deletion in $\tilde{G}$, maintaining $\widehat{G} = \widehat{G} \setminus X\tilde{G} \cupdot \widehat{G}_1 \cupdot \ldots \cupdot \widehat{G}_L$ takes $O(\poly(\log{n},\epsilon^{-1}))$ time.
	
\end{lemma}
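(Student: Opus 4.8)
The plan is to establish Lemma~\ref{lem:widehatMaintanence} by tracking, for a single edge update $e = (u,v)$ in $\tilde{G}$, exactly which operations are performed and bounding each by $O(\poly(\log{n},\epsilon^{-1}))$. The key structural fact we exploit is that, by Corollary~\ref{lem:additional properties of fully dynamic cut sparsifier} and Corollary~\ref{cor:lowdegInd}, the sparsified graph $\tilde{G}$ decomposes into $K = O(\poly(\log{n},\epsilon^{-1}))$ disjoint forests, the branch vertex cover $\VC$ makes every vertex $x \in X\tilde{G}$ have degree $\deg_x = O(\poly(\log{n},\epsilon^{-1}))$ in $\tilde{G}$, and each update in the underlying graph changes at most one edge in each forest, so $\tilde{G}$ itself changes by $O(\poly(\log{n},\epsilon^{-1}))$ edges per update. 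Thus it suffices to bound the cost of processing one such edge change $(u,v)$ in $\tilde{G}$.

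First I would walk through the five-step update process listed just before the lemma. Steps~1 and~5 invoke $\textsc{RemoveXG}$ and $\textsc{InsertXG}$ on each of $u$ and $v$; by Lemma~\ref{lem:removeXG} and Lemma~\ref{lem:insertXG} each of these costs $O(\deg_v \log{n})$, and since $v \notin \VC$ only when it has low degree (Corollary~\ref{cor:lowdegInd} guarantees non-cover vertices have degree at most $K$), this is $O(\poly(\log{n},\epsilon^{-1}))$. Step~2 reuses the machinery of Section~\ref{subsec:dataStructures} (maintaining $\VC$, $\textsc{ADJ-LIST}_{\tilde{G}}$, $\textsc{VC}_{\tilde{G}}$), which was already shown there to take $O(\poly(\log{n},\epsilon^{-1}))$; I would cite Theorem~\ref{thm:dyn2Maintenance} / Lemma~\ref{lem:updateADJ} for this. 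Step~3 is a single insertion or deletion of $(u,v)_{\emptyset}$ in $\widehat{G} \setminus X\tilde{G}$, costing $O(\log{n})$. Step~4 updates $u_{\max}$ and $v_{\max}$, and because the adjacency lists in the $\tilde{G}$ data structure are kept sorted by edge weight (as assumed in the paragraph preceding the lemma), reading off the maximum-weight incident edge is $O(\log{n})$ per endpoint, and a single edge insertion/deletion into a sorted list is also $O(\log{n})$. Summing these $O(1)$-many contributions gives the claimed $O(\poly(\log{n},\epsilon^{-1}))$ bound.

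The one point requiring care — and the main obstacle — is correctness propagation: I must argue that after this sequence of operations, $\widehat{G}$ is exactly the bucketed graph $\textsc{VertexBucketing}(\tilde{G}, \VC, X\tilde{G}, d, \epsilon)$ would produce on the updated $\tilde{G}$ and updated $\VC$. The subtlety is that a single edge change can shift $v_{\max}$ for an endpoint $v \in X\tilde{G}$, which in turn changes $bucket(v)$ and re-classifies \emph{every} edge in $N_v$ between ``light'' (moved to $\widehat{G}\setminus X\tilde{G}$ as $e_v$) and ``kept'' (placed in $\widehat{G}_{bucket(v)}$). This is precisely why the update process brackets the edge change with a full $\textsc{RemoveXG}$ followed by a full $\textsc{InsertXG}$ on each endpoint: $\textsc{RemoveXG}(v)$ tears down all of $v$'s contribution to the buckets and to the $e_v$ edges, temporarily placing $v$ in $\VC$, and after $\tilde{G}$ and $v_{\max}$ are updated, $\textsc{InsertXG}(v)$ rebuilds $N_v$'s classification against the fresh $v_{\max}$. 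Since $v$'s edges are the only ones whose bucket membership can change (an edge $(v,w)$ with $v \in X\tilde{G}$, $w \in \VC$ has its classification determined solely by $v_{\max}$, which only $v$'s own incident edges can move), re-processing just the two endpoints suffices; for endpoints already in $\VC$ the guards ``if $v \notin \VC$'' correctly skip the rebuild. I would verify this invariant by a short case analysis: (i) the update is an insertion vs.\ deletion; (ii) each endpoint is in $\VC$ or in $X\tilde{G}$ before and after; and (iii) whether the vertex-cover status of $u$ or $v$ flips during Step~2 (handled because Step~2 itself, via the Section~\ref{subsec:dataStructures} routines, calls the analogues $\textsc{InsertVC}/\textsc{RemoveVC}$ that keep $\textsc{GRAPH}_{\VC}$-style structures consistent, and the ordering of Steps~1--5 ensures $\widehat{G}$ is rebuilt from the final $\VC$). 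Establishing this invariant rigorously is the bulk of the work; the running-time bound itself is then immediate from the per-operation costs already tabulated.
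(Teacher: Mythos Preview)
Your proposal is correct and follows essentially the same approach as the paper: bound each of the five steps of the update process individually, invoking Lemmas~\ref{lem:removeXG} and~\ref{lem:insertXG} together with the low-degree guarantee for non-cover vertices for Steps~1 and~5, the Section~\ref{subsec:dataStructures} analysis for Step~2, and the trivial $O(\log n)$ bounds for Steps~3 and~4. The paper's proof is in fact terser and only argues the running-time bound; your additional correctness/invariant discussion is more thorough than what the paper provides, but it is not required for the lemma as stated (which asserts only a time bound).
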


\begin{proof}
	Note that $\textsc{InsertXG}(\widehat{G},X\tilde{G},v)$ and $\textsc{RemoveXG}(\widehat{G},X\tilde{G},v)$ are only performed if $v \notin \VC$, which implies that the degree of $v$ in $\tilde{G}$ is $O(\poly(\log{n},\epsilon^{-1}))$. Updating $\VC$ and $\tilde{G}$ is known to take $O(\poly(\log{n},\epsilon^{-1}))$ time. Steps 3 and 4 clearly take $O(\log{n})$ time. Therefore, the full runtime of this update process is $O(\poly(\log{n},\epsilon^{-1}))$.
	
\end{proof}

\paragraph{Maintaining BoundedVertexSparsify}
\label{subsec: boundedMaintenance}

We will dynamically sparsify the multi-graph $\widehat{G}\setminus X\tilde{G}$ as per usual, so each edge insertion/deletion requires $O(\poly(\log{n},\epsilon^{-1}))$ update time for $\widehat{G} \setminus X\tilde{G}$. Accordingly, we will only consider maintaining the necessary data structures for $\textsc{BoundedVertexSparsify}$ of each $\widehat{G}_k$, which we will simply denote as $G$ with bipartition $(\VC,XG)$.

Alterations to $G$ are made by the dynamic update process in the previous section, which implies that we only need to consider the following changes to $G$. Add/Delete a vertex $x$ from $X$, and add/delete $N_x$ from $G$. Add/Delete an edge within $N_x$ for some $x \in X$. If an edge is added/deleted from $N_x$, we will simply delete $N_x$ from $G$, and then add $N_x$ with the edge added/deleted to $G$. Accordingly, in order to establish that our data structures can be maintained in $O(\poly(\log{n},\epsilon^{-1}))$ update time, we just need to show that adding/deleting any $N_x$ from $G$ can be done in $O(\poly(\log{n},\epsilon^{-1}))$ update time.

For each level $i$ of computing a light vertex set and running $\textsc{Sample}$, we need to maintain $G_i$, $G_{\VC}^i$, $XG^{light}_i$ and all $F_{i,j}$ in $F_i$. The data structures for $G_i$ and $G_{\VC}^i$ will be as in Subsection~\ref{subsec:dataStructures}. Assume that the data structure of each $F_{i,j}$ is such that we can search for edges in $O(\log{n})$-time, either by search trees or linked lists with back pointers (see e.g.~\cite{CormenLRS09}, Chapters 10.2, 10.3, and 13). The data structure each $XG^{light}_i$ will just be a list of vertices with insertion/deletion taking $O(\log{n})$ time.

We will still assume edge additions/deletions in $G_i$, $G_{\VC}^i$ can be maintained in $O(\poly(\log{n},\epsilon^{-1}))$, as was shown in Subsection~\ref{subsec:dataStructures}. Most of the time complexity analysis will then follow from this, and we just need to establish that the additions/deletions will not multiply as we move down the pipeline. This will ultimately follow from our construction of the $t$-clique forests.

\paragraph{Adding some $N_x$ to $G_i$}

The algorithm in Figure~\ref{fig:insertStar} will add a vertex $x$ to $G_i$, along with the corresponding $N_x$. 

\begin{figure}
	\begin{algbox}
		$\textsc{InsertStar}(G_i, XG_i, XG_i^{light}, N_x)$
		\begin{enumerate}
			\item Update $G_i \gets G_i \cup N_x$, $XG_i \gets XG_i \cup x$, and insert $K_x$ into $G_{\VC}^i$
			\item For the first $e_x \in K_x$ that can be added to some $F_{i,j}$: Update $F_{i,j} \gets F_{i,j} \cup e_x$, $XG^{light}_i \gets XG^{light}_i \cup x$, and remove $K_x$ from $G_{\VC}^i$
			\item If no $e_x \in K_x$ can be added to any $F_{i,j}$, with probability $\frac{1}{2}$: run $\textsc{InsertStar}(G_{i+1}, XG_{i+1}, XG_{i+1}^{light}, 2N_x)$
		\end{enumerate}
	\end{algbox}
	
	\caption{Add $N_x$ to $G_i$}
	\label{fig:insertStar}
	
\end{figure}

\begin{lemma}
	\label{lem:insertStar}
	$\textsc{InsertStar}(G_i, XG_i, XG_i^{light}, N_x)$ adds $N_x$ to $G_i$ while maintaining $t$-clique forest $F_i$
	
\end{lemma}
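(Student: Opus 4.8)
The plan is to prove the lemma by verifying that, once $\textsc{InsertStar}$ returns, the updated collection $F_i=\bigcup_{j\in[t]}F_{i,j}$ still satisfies the three defining conditions of a $t$-clique forest of the updated $G^i_{\VC}$, arguing by induction on the recursion depth (equivalently, on $l-1-i$). First I would observe that step~1 is pure bookkeeping: replacing $G_i$ by $G_i\cup N_x$ and inserting $K_x$ into $G^i_{\VC}$ is exactly the incremental version of forming the $\textsc{VertexElimination}$ multigraph of Definition~\ref{def:kX} for the enlarged $G_i$, since $N_x$ is an independent star and $K_x$ depends only on $N_x$. Hence it suffices to track the forests and the light/heavy status of $x$.

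For the inductive step I would split on the test in steps~2--3. In the first case some $e_x=(u,v)_x\in K_x$ can be added to some $F_{i,j}$, i.e.\ $u$ and $v$ lie in distinct components of $F_{i,j}$, and step~2 inserts exactly this one edge. Then condition~2 of the definition holds because $F$ now contains precisely one edge of $K_x$; condition~1 holds because adding an edge across two components keeps $F_{i,j}$ acyclic, hence a forest of $G^i_{\VC}$, and because the edge $(u,v)_x$ is brand new and so lies in no other $F_{i,j'}$, preserving disjointness. Condition~3 is vacuous for $x$ since $F\cap K_x\neq\emptyset$, and for every other heavy vertex $x'$ it is preserved: removing $K_x$ from $G^i_{\VC}$ does not touch any forest, and inserting $e_x$ into $F_{i,j}$ can only merge components, so any pair of neighbours of $x'$ that was connected in every forest still is. No recursive call is made, so all levels $\geq i+1$ are left untouched, consistent with $x$ joining $XG^{light}_i$ and not reappearing downstream.

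In the second case no edge of $K_x$ can be added to any forest. Then no $F_{i,j}$ changes, so conditions~1 and~2 hold trivially; $x$ is not placed in $XG^{light}_i$, so $F\cap K_x=\emptyset$, and for each $e_x=(u,v)_x\in K_x$ the failure of the addability test for every $j$ says precisely that $u$ and $v$ are already connected in each $F_{i,j}$ --- this is condition~3 for $x$. All other vertices are unaffected, and $K_x$ stays in $G^i_{\VC}$, preserving the post-condition $G^i_{\VC}=\bigcup_{x'\in XG^{heavy}_i}K_{x'}$. Finally, with probability $1/2$ the procedure recurses on level $i+1$ with the doubled star $2N_x$; by the induction hypothesis that call maintains $F_{i+1}$ as a $t$-clique forest of $G^{i+1}_{\VC}$, and the fair coin is exactly the $1/2$-sampling of $\textsc{Sample}$ (Figure~\ref{fig:sampleHeavy}), so $x$ ends up in $XG_{i+1}$ iff the flip was heads. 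Since each recursive call advances the level and there are only $l=O(\log n)$ levels, the recursion terminates; moreover each level does only $O(\poly(\log n,\epsilon^{-1}))$ work (scanning the $O(d^2)$ pairs of $K_x$ against the $t$ forests with $O(\log n)$ connectivity queries), so the total cost is $O(\poly(\log n,\epsilon^{-1}))$.

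The hard part here is bookkeeping rather than a deep argument: one must check that the incremental rule ``add the first addable $e_x$ to the first available forest'' yields a collection that is still consistent with a valid run of the static greedy $\textsc{LightVertices}$ on the current $G^i_{\VC}$, and that chaining the recursive coin flips reproduces exactly the sampling distribution on which the downstream statements (the heaviness and size bounds of Lemma~\ref{lem:lightVertex}, and then Lemma~\ref{lem:sampleHeavy}) are built. Once the $t$-clique forest property is in hand, the $2^i\gamma(dU)^{-1}t$-heaviness of $XG^{heavy}_i$ follows immediately from the lemma deriving heaviness from a $t$-clique forest, so no further work on heaviness is needed in this proof.
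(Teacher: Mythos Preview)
Your proposal is correct and follows essentially the same case split as the paper's proof: if some $e_x$ can be added to a forest then $x$ becomes light and the clique-forest properties are preserved; otherwise $x$ is heavy and the coin flip propagates to level $i+1$. The paper's argument is terser---it simply asserts ``Therefore, $F_i$ is still a $t$-clique forest'' in each branch---whereas you spell out each of the three defining conditions explicitly, which is a welcome expansion. Your additional material on running time and on matching the distribution of $\textsc{Sample}$ is correct but not needed for this lemma; the paper defers the time bound to Lemma~\ref{lem:boundedMaintenance}.
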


\begin{proof}
If some $e_x \in K_x$ can be added to some $F_{i,j}$, then by construction,  $F_i \cap K_x = e_x$ and $x \in XG^{light}_i$. Therefore, $F_i$ is still a $t$-clique forest, and $x \in XG^{light}_i$ implies $x \notin XG_{i+1}$, so it is only necessary to add $e_x$ to $F_{i,j}$ and $x$ to $XG^{light}_i$.

If no $e_x \in K_x$ can be added to any $F_{i,j}$, then $F_i \cap K_x = \emptyset$ and $x \in XG^{heavy}_i$. Therefore, $F_i$ is still a $t$-clique forest, and $x \in XG^{heavy}_i$ implies a coin must be flipped to determine whether $x$ is added to $XG_{i+1}$ and $2N_x$ is added to $G_{i+1}$.

\end{proof}

Furthermore, we still maintain $G_{\VC}^i = \bigcup_{x \in XG^{heavy}_i} K_x$

\paragraph{Deleting some $N_x$ from $G_i$}

The algorithm in Figure~\ref{fig:removeStar} will delete a vertex $x$ from $G_i$, along with the corresponding $N_x$.

\begin{figure}
	\begin{algbox}
		$\textsc{RemoveStar}(G_i, XG_i, XG_i^{light}, N_x)$
		\begin{enumerate}
			\item Update $G_i \gets G_i \setminus N_x$, $XG_i \gets XG_i \setminus x$, and remove $K_x$ from $G_{\VC}^i$
			\item If some $e_x$ is in some $F_{i,j}$
			\begin{enumerate}
				\item Update $F_{i,j} \gets F_{i,j} \setminus e_x$, $XG^{light}_i \gets XG^{light}_i \setminus x$
				\item If some edge $f_y \in G_{\VC}^i$ can be added to $F_{i,j}$
				\begin{itemize}
					\item Update $F_{i,j} \gets F_{i,j} \cup f_y$, $XG^{light}_{i} \gets XG^{light}_{i} \cup y$, and remove $K_y$ from $G_{\VC}^i$
					\item run $\textsc{RemoveStar}(G_{i+1}, XG_{i+1}, XG_{i+1}^{light}, 2N_y)$ if $y \in XG_{i+1}$
				\end{itemize}
			\end{enumerate}
			\item If no $e_x \in K_x$ is in any $F_{i,j}$,  run $\textsc{RemoveStar}(G_{i+1}, XG_{i+1}, XG_{i+1}^{light}, 2N_x)$ if $x \in XG_{i+1}$
		\end{enumerate}
	\end{algbox}
	
	\caption{Remove $N_x$ from $G_i$}
	\label{fig:removeStar}
	
\end{figure}

\begin{lemma}
	\label{lem:removeStar}
	$\textsc{RemoveStar}(G_i, XG_i, XG_i^{light}, N_x)$ removes $N_x$ from $G_i$ while maintaining $t$-clique forest $F_i$
	
\end{lemma}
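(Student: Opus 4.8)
The plan is to prove \Cref{lem:removeStar} by the same two-case analysis used for \Cref{lem:insertStar}, while carrying along the side invariant $G_{\VC}^i = \bigcup_{y \in XG_i^{heavy}} K_y$ and the observation that $\textsc{RemoveStar}$ at level $i$ issues at most one recursive call at level $i+1$. I fix the state just before the call, verify that afterwards the three defining properties of a $t$-clique forest hold for $F_i$ and that $G_{\VC}^i$ is again the union of the heavy cliques, and finally track how one removal propagates down the $O(\log n)$ levels of $\textsc{BoundedVertexSparsify}$.

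The first case is $x \in XG_i^{heavy}$, equivalently $F_i \cap K_x = \emptyset$. Deleting $N_x$ from $G_i$ and $K_x$ from $G_{\VC}^i$ leaves every forest $F_{i,j}$ untouched, so properties 1 and 2 are immediate; property 3 survives because the only clique constraint that disappears is the one for $x$, which has left $X$. The invariant is preserved since $x$ leaves $XG_i^{heavy}$ exactly when $K_x$ leaves $G_{\VC}^i$. Because $x$ was heavy, $2 N_x$ is present in $G_{i+1}$ precisely when $x \in XG_{i+1}$, and the single recursive call $\textsc{RemoveStar}(G_{i+1},\dots,2N_x)$ handles that sub-case; no other recursion is made.

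The second case is $x \in XG_i^{light}$, so by property 2 there is a unique $e_x = (u,v)_x$ lying in a unique forest $F_{i,j}$. Removing it splits $F_{i,j}$ into two trees $C_1, C_2$; since the other forests are untouched and, by the invariant, every edge of $G_{\VC}^i$ lies in $K_y$ for some heavy $y$, property 3 can fail only if $G_{\VC}^i$ contains an edge crossing the $(C_1, C_2)$ cut — which is exactly what the replacement search looks for (and the only edge of $G_{\VC}^i$ that can now become addable to $F_{i,j}$, since $F_{i,j}$ was a maximal forest of $G_{\VC}^i$ beforehand). If there is no such edge, no required pair of a heavy clique is disconnected, $F_i$ is again a $t$-clique forest, and since $x$ was light we have $x \notin XG_{i+1}$, so no recursion is needed. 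If a replacement $f_y \in K_y$ is found, adding it to $F_{i,j}$ reconnects $C_1$ and $C_2$ into a tree on the same vertex set, so every pair connected in $F_{i,j}$ before the deletion is connected again; this restores property 3 for \emph{every} heavy vertex whose clique straddled the cut, property 2 holds because $y$ was heavy so no edge of $K_y$ was previously in $F_i$ (and $e_x$ has been removed from the one affected clique), and property 1 holds because we swapped one tree edge for another reconnecting one and $f_y$ lies in no other $K_{y'}$. Marking $y$ light and deleting $K_y$ from $G_{\VC}^i$ restores the invariant, and the single call $\textsc{RemoveStar}(G_{i+1},\dots,2N_y)$, executed iff $y \in XG_{i+1}$, propagates the change — again the only recursion.

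The main obstacle is the reconnection argument in the light case: one must see that inserting a single replacement edge simultaneously repairs property 3 for every heavy clique that straddled the split, so that a deletion never forces more than one new forest edge per level. This structural fact — reconnecting $F_{i,j}$ with any valid replacement edge restores all pairwise connectivity within that forest — is what keeps dynamic changes from multiplying along the levels of $\textsc{BoundedVertexSparsify}$, and, combined with the degree bound $d = O(\poly(\log n, \epsilon^{-1}))$ and the per-operation data-structure costs of \Cref{subsec:dataStructures}, gives the $O(\poly(\log n, \epsilon^{-1}))$ update time needed for \Cref{thm:dynMaintenance}.
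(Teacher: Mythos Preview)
Your proof is correct and follows essentially the same two-case analysis as the paper (splitting on whether $x \in XG_i^{heavy}$ or $x \in XG_i^{light}$), though you flesh out the verification of the three $t$-clique forest properties and the side invariant $G_{\VC}^i = \bigcup_{y \in XG_i^{heavy}} K_y$ more explicitly than the paper does. Your added remark that each call spawns at most one recursion is really material for \Cref{lem:boundedMaintenance} rather than this lemma, but it is correct and harmless here.
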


\begin{proof}
	If we had $F_i \cap K_x = e_x$, then $x$ was in  $XG^{light}_i$, so $e_x$ must be removed from some $F_{i,j}$ and $x$ must be removed from $XG^{light}_i$. $F_i$ was a $t$-clique forest and $G_{\VC}^i = \bigcup_{x \in XG^{heavy}_i} K_x$ (as was noted), implying that multiple edges in $G_{\VC}^i$ cannot be added to $F_{i,j}$ without creating a cycle. If $f_y$ is added to $F_{i,j}$ then $y$ is added to $XG^{light}_i$ and $F_i \cap K_y = f_y$. Therefore, $F_i$ is still a $t$-clique forest, and because $y \in XG^{light}_i$, it is now necessary to remove $2N_y$ from $G_{i+1}$ if $y \in XG_{i+1}$.

	If we have $F_i \cap K_x = \emptyset$, then $x \in XG^{heavy}_i$ and $F_i$ is still a $t$-clique forest. Further $XG_{i+1} \subseteq XG_i$, so it is necessary to remove $2N_x$ from $G_{i+1}$ if $x \in XG_{i+1}$.

\end{proof}

Furthermore, we still maintain $G_{\VC}^i = \bigcup_{x \in XG^{heavy}_i} K_x$

\begin{lemma}
	\label{lem:boundedMaintenance}
	For any addition/deletion of some $x$ from $XG_0$ and $N_x$ from $G_0$, maintaining $H$ takes $O(t \cdot \poly(\log{n},\epsilon^{-1}))$ time
	
\end{lemma}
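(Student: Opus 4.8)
The plan is to argue by induction on the level $i$ that a single insertion or deletion of a star $N_x$ at level $0$ triggers at most one recursive call of \textsc{InsertStar} / \textsc{RemoveStar} at each subsequent level, and that the work done at each level is $O(\poly(\log{n},\epsilon^{-1}))$. Multiplying by the $l = O(\log{n})$ levels then gives the claimed bound. First I would record the invariant that the dynamic update process only ever inserts or deletes stars $N_x$ with $x \notin \VC$, so by Corollary~\ref{cor:lowdegInd} every such $x$ has degree $O(\poly(\log{n},\epsilon^{-1}))$ in $\tilde{G}$, hence $K_x$ has $O(\poly(\log{n},\epsilon^{-1}))$ edges; this controls the per-level cost of scanning $K_x$, testing membership in the forests $F_{i,j}$, and updating the adjacency structures of $G_i$, $G_{\VC}^i$, $XG_i^{light}$, each operation taking $O(\log n)$ time by the data-structure assumptions stated just above the lemma and in Section~\ref{subsec:dataStructures}. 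The key structural fact is the one already established in Lemmas~\ref{lem:insertStar} and~\ref{lem:removeStar}: after processing $N_x$ at level $i$ the collection $F_i$ is still a $t$-clique forest, and crucially $G_{\VC}^i = \bigcup_{x \in XG_i^{heavy}} K_x$ is re-established.

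The heart of the argument is bounding the propagation. For \textsc{InsertStar}: either some $e_x \in K_x$ is addable to one of the $F_{i,j}$, in which case $x$ joins $XG_i^{light}$, no new vertex reaches $G_{i+1}$, and the recursion stops; or no $e_x$ is addable, $x$ is heavy, a single coin flip either stops the recursion or passes exactly one star $2N_x$ to level $i+1$. Either way at most one call descends. For \textsc{RemoveStar} the case analysis is slightly more delicate: removing $N_x$ may free a forest slot $F_{i,j}$, and because $F_i$ was a $t$-clique forest with $G_{\VC}^i = \bigcup_{x \in XG_i^{heavy}} K_x$, at most \emph{one} replacement edge $f_y \in G_{\VC}^i$ can be added to $F_{i,j}$ without creating a cycle (this is exactly the ``spanning-forest replaces one edge'' phenomenon). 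That single replacement moves $y$ from heavy to light, so $2N_y$ must be deleted from $G_{i+1}$; together with the case where $x$ itself was heavy and $2N_x$ must be deleted from $G_{i+1}$, we again get at most one recursive call per level. So the total number of star-operations over all levels is $O(l) = O(\log n)$, each costing $O(\poly(\log{n},\epsilon^{-1}))$, and after the last level we touch $G_l$ and the corresponding piece of $H$, again at polylog cost; summing yields $O(t\cdot\poly(\log{n},\epsilon^{-1}))$ — the factor $t$ appearing because each level maintains $t$ forests $F_{i,j}$ and in the worst case we may scan all of them when searching for an addable edge.

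The main obstacle I anticipate is the \textsc{RemoveStar} replacement step: one must verify carefully that searching $G_{\VC}^i$ for an addable edge $f_y$, and then the deletion of $K_y$ and the recursive removal of $2N_y$, does not in turn cascade into multiple forests or multiple stars at the \emph{same} level — i.e.\ that the single-edge-replacement property of each $F_{i,j}$ composes correctly across the $t$ forests, and that the identity $G_{\VC}^i = \bigcup_{x\in XG_i^{heavy}}K_x$ is preserved after the replacement (which is precisely what Lemma~\ref{lem:removeStar} and the remark following it assert, and which I would invoke rather than reprove). A secondary point to check is that the cost of locating an addable edge in $G_{\VC}^i$ can be charged appropriately — here one uses that only $K_x$-edges incident to the (few) vertices whose connectivity in some $F_{i,j}$ changed need to be re-examined, keeping this within $O(\poly(\log n,\epsilon^{-1}))$ per forest and $O(t\cdot\poly(\log n,\epsilon^{-1}))$ overall.
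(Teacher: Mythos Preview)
Your proposal is correct and follows essentially the same approach as the paper: bound the per-level work by $O(t \cdot \poly(\log n, \epsilon^{-1}))$ using the degree bound on $x$ and the $t$ forests to scan, observe that \textsc{InsertStar} and \textsc{RemoveStar} each make at most one recursive call so propagation across the $l = O(\log n)$ levels is linear, and multiply. The paper's proof is a terse three-sentence version of exactly this; your more detailed case analysis of the recursion and your flagging of the replacement-edge search in \textsc{RemoveStar} as the delicate point are both apt elaborations but not a different argument.
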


\begin{proof}
	Checking each forest for an edge insertion/deletion takes $O(t\log{n})$ time. It follows almost immediately from the analysis in Subsection~\ref{subsec:dataStructures} that the rest of the computation in one iteration of $\textsc{InsertStar}$ and $\textsc{RemoveStar}$ takes $O(\poly(\log{n},\epsilon^{-1}))$ time. Furthermore, both can make at most one recursive call to themselves, so adding/deleting $N_x$ from $G_0$ takes $O(l\cdot t\cdot \poly(\log{n},\epsilon^{-1}))$ time where $l = O(\log{n})$.

\end{proof}

{\paragraph{Proof of Theorem~\ref{thm:dynMaintenance}}: Any edge insertion/deletion in $\tilde{G}$ requires $O(\poly(\log{n},\epsilon^{-1}))$ update time for $\widehat{G}$ and $\VC$ from Lemma~\ref{lem:widehatMaintanence}. Therefore, there are at most $O(\poly(\log{n},\epsilon^{-1}))$ additions/deletions of some $N_x$ to some $\widehat{G}_i$, which will require $O(t \cdot \poly(\log{n},\epsilon^{-1}))$ update time from Lemma~\ref{lem:boundedMaintenance}, where $t = O(\poly(\log{n}, \epsilon^{-1}))$. Thus, the full dynamic update process of all data structures takes $O(\poly(\log{n}, \epsilon^{-1}))$ time per dynamic update of $\tilde{G}$.

\printbibliography

\newpage
\appendix

\section{Omitted Proofs of \Cref{sec:spectral sparsifier general}}\label{apx:proofs spectral sparsifier}

In the following we give the omitted proofs of section \Cref{sec:spectral sparsifier general}, which mainly use standard arguments.

\sparsifyingstep*

\begin{proof}
Let \begin{equation*}
R = \frac{\epsilon^2}{3 (c+1) \ln{n}}.
\end{equation*}
For every edge $ e \in G \setminus B $, let $ X_e $ be the random variable that is $ 4 \ww_{G} (e) \cdot \lap_e $ with probability $ 1/4 $ and $ 0 $ with probability $ 3/4 $.
We further set $ \lap_{B^{(j)}} $ for every $ 1 \leq j \leq \lceil 1 / R \rceil $ as follows:
\begin{equation*}
	\lap_{B_i^{(j)}} =
	\begin{cases}
		R \cdot \lap_{B_i} & \text{if $ 1 \leq j \leq \lfloor 1 / R \rfloor $} \\
		\lap_{B_i} - \lfloor 1 / R \rfloor R \cdot \lap_{B_i} & \text{if $ j = \lceil 1 / R \rceil $}
	\end{cases}
\end{equation*}
Note that this definition simply guarantees that $ \sum_{j=1}^{\lceil 1 / R \rceil} \lap_{B^{(j)}} = \lap_{B_i} $ and $ \lap_{B_i^{(j)}} \leq R \cdot \lap_{B_i} $ for every $ 1 \leq j \leq \lceil 1 / R \rceil $.
We now want to apply \Cref{thm:matrix concentration} with the random variables $ Y = \sum_{e \in G \setminus B} X_e + \sum_{j=1}^{\lceil 1 / R \rceil} \lap_{B^{(j)}} $ and $ Z = \lap_{G} $.
Observe that
\begin{eqnarray*}
	\expec{Y} & = & E \left[ \sum_{e \in G \setminus B} X_e + \sum_{j=1}^{\lceil 1 / R \rceil} \lap_{B^{(j)}} \right]\\ & = & \sum_{e \in G \setminus B} \expec{X_e} + \sum_{j=1}^{\lceil 1 / R \rceil} \lap_{B^{(j)}} \\
	& = & \sum_{e \in G \setminus B} \lap_e + \lap_{B} =  \lap_{G} = Z \, .
\end{eqnarray*}
For every edge $ e \in G \setminus B $, using \Cref{lem:stretch resistance correspondence}, we have
\begin{equation*}
	 X_e \preceq 4 \ww_{G} (e) \cdot \lap_e \preceq \frac{\alpha}{t} \cdot \lap_G \leq R \cdot \lap_G \, .
\end{equation*}
Furthermore, using $ B\preceq G $, we have
\begin{equation*}
	\lap_{B_i^{(j)}} \leq R \cdot \lap_{B_i} \preceq R \cdot \lap_{G_{i-1}}
\end{equation*}
for every $ 1 \leq j \leq \lceil 1 / R \rceil $.
Thus, the preconditions of \Cref{thm:matrix concentration} are satisfied.
We conclude that we have $ \lap_{G_H} \preceq (1 + \epsilon ) \lap_{G} $ with probability at least
\begin{equation*}
n \cdot \exp (- \epsilon^2 / 2R) \geq n \cdot \exp ((c+1) \ln{n}) = 1 / n^{c+1} \, .
\end{equation*}

A symmetric argument can be used for $ (1 - \epsilon ) \lap_{G} \preceq \lap_{H} $.
\end{proof}

\spectralsparsifiercorrectness*

\begin{proof}
Note that since $ H = \bigcup_{i=1}^k B_i \cup G_k $ we have
\begin{equation*}
	\lap_H = \lap_{G_k} + \sum_{i=1}^k \lap_{B_i} \, .
\end{equation*}
We now prove by induction on $ j $ that $ \lap_{G_k} + \sum_{i=k-j+1}^k \lap_{B_i} \preceq (1 + \epsilon/(2k))^j \lap_{G_{k-j}} $.
This claim is trivially true for $ j = 0 $.
For $ 1 \leq j \leq k $, we use the induction hypothesis and \Cref{lem:sparsifying step}, which both hold with high probability, to get
\begin{align*}
	\lap_{G_k} + \sum_{i=k-j+1}^k \lap_{B_i} &= \lap_{G_k} + \sum_{i=k-j+2}^k \lap_{B_i} + \lap_{B_{k-j+1}} \\
		&\preceq (1 + \epsilon/(2k))^{j-1} \lap_{G_{k-j+1}} + \lap_{B_{k-j+1}} \\
		&\preceq (1 + \epsilon/(2k))^{j-1} (\lap_{G_{k-j+1}} + \lap_{B_{k-j+1}}) \\
		&\preceq (1 + \epsilon/(2k))^j \lap_{G_{k-j}} \, .
\end{align*}
We now have $ \lap_H \preceq (1 + \epsilon/(2k))^k \lap_G $ with high probability by setting $ j = k $.
Using symmetric arguments we can prove $ (1 - \epsilon/(2k))^k \lap_G \preceq \lap_H $.
Since $ (1 - \epsilon/(2k))^k \geq 1 - \epsilon $ and $ (1 + \epsilon/(2k))^k \leq 1 + \epsilon $, the claim follows.
\end{proof}

\spectralsparsifiersize*

\begin{proof}
We will show that, with probability $1-2n^{-c+1}$, every iteration
$j$ computes a graph $G_{j+1}$ with half the number of edges in $G_j$.
By a union bound, the probability that this fails to be true
for any $j<n$ is at most $2n^{-c}$. This implies all claims.

We use the following  standard Chernoff bound:
Let $ X = \sum_{k=1}^N X_k $, where $ X_k = 1 $ with probability $ p_k $ and $ X_k = 0 $ with probability $ 1 - p_k $, and all $ X_k $ are independent. Let $ \mu = \expec{X} = \sum_{k=1}^N p_k $.
Then $ \prob{X \geq (1 + \delta) \mu} \leq \exp(- \frac{\delta^2}{2+\delta} \mu) $ for all $ \delta > 0 $.

We apply this bound on the output of \textsc{Light-Spectral-Sparsify} for every $j$.
Concretely, we assign a random variable to each edge $e$ of $G_j$, with $ X_e = 1 $ if and only if $ e $ is added to $ G_{j+1} $. Then $ \expec{X} = N / 4 $. By construction, the number of edges in $G_j$ is
$ N  \geq  (c+1) \log{n} $. Applying the Chernoff bound with $ \delta = 2 $ we get
\begin{equation*}
\prob{X \geq 2N} \leq \frac{1}{e^{N/4}} \leq \frac{1}{e^{((c+1) \log{n}) / 4}} = \frac{1}{e^{1/4} n^{c+1}} \leq \frac{1}{2 n^{c+1}} \, .
\end{equation*}
\end{proof}

\section{Guarantees of Combinatorial Reductions}
\label{sec:min_cut_proofs}

We show some of the structural results necessary for the
reductions in Sections~\ref{sec:dynamic min cut}, \ref{sec:vertSparsify}, and \ref{sec:onePlusEpsilon}.
We first show the guarantees of $K_x$:

\begin{proof} (of Theorem~\ref{thm:schurComplement})
	For any $x \in X$ and $S_{\VC} \subset \VC$, let $\ww_{K_x}(S_{\VC})$ denote the weight of cutting $S_{\VC}$ in $K_x$. Consequently, for any $S_{\VC} \subset \VC$, $\Delta_{G_{\VC}}(S_{\VC}) = \Delta_{G \setminus X}(S_{\VC}) + \sum_{x \in X} \ww_{K_x}(S_{\VC})$, and it suffices to show that for all $x \in X$, $\frac{1}{2} \ww^{(x)}(S_{\VC}) \leq \ww_{K_x}(S_{\VC}) \leq \ww^{(x)}(S_{\VC})$.
\end{proof}

\begin{lemma}
	For any $x \in X$ and $S \subset \VC$, we have $\frac{1}{2} \ww_{K_x}(S) \leq \ww^{(x)}(S) \leq \ww_{K_x}(S)$
\end{lemma}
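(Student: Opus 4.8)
The plan is to collapse the claim to an elementary inequality between two nonnegative scalars. First I would record the structural fact that, since $\VC$ is a vertex cover of $G$ and $x\in X=V\setminus\VC$, every edge incident to $x$ has its other endpoint in $\VC$; hence $N(x)\subseteq\VC$, and therefore $\ww(x,S)$ and $\ww(x,\VC\setminus S)$ partition the total star weight: $\ww(x,S)+\ww(x,\VC\setminus S)=\sum_{i\in N(x)}\ww(x,i)=:W$. Abbreviate $a:=\ww(x,S)$ and $b:=\ww(x,\VC\setminus S)$, so that $a,b\ge 0$, $a+b=W$, and $\ww^{(x)}(S)=\min\{a,b\}$ straight from the definition.

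Next I would evaluate $\ww_{K_x}(S)$, the weight of the cut $S$ inside the (multi)clique $K_x$. Because $N(x)\subseteq\VC$, the edges of $K_x$ crossing $S$ are exactly the pairs $(u,v)_x$ with $u\in S\cap N(x)$ and $v\in N(x)\setminus S$, and each such edge carries weight $\ww(x,u)\ww(x,v)/W$. The double sum then factors:
\[
\ww_{K_x}(S)=\frac{1}{W}\Big(\sum_{u\in S\cap N(x)}\ww(x,u)\Big)\Big(\sum_{v\in N(x)\setminus S}\ww(x,v)\Big)=\frac{ab}{a+b},
\]
with the value $0$ when $a+b=0$. So the lemma becomes the purely scalar statement $\tfrac12\,\ww_{K_x}(S)\le\ww^{(x)}(S)\le\ww_{K_x}(S)$ with $\ww_{K_x}(S)=\frac{ab}{a+b}$ and $\ww^{(x)}(S)=\min\{a,b\}$, i.e.\ a two-sided comparison between $\min\{a,b\}$ and the (half-)harmonic mean $\frac{ab}{a+b}$ of $a$ and $b$.

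The last step is to dispatch this scalar comparison: normalize by taking $a\le b$ without loss of generality, at which point each of the two inequalities reduces to a one-line manipulation of $a$ and $b$, and one separately checks the degenerate cases in which all three quantities vanish (for instance $S\cap N(x)=\emptyset$, or $N(x)\subseteq S$, or $\deg(x)\le 1$). I expect the only part that needs genuine care to be the middle step — keeping track that $K_x$ is a multigraph and that $N(x)$ really lies inside $\VC$, so that $\partial_{K_x}(S)$ is precisely the $(S\cap N(x))$-versus-$(N(x)\setminus S)$ bipartition and the factorization above is legitimate; once that is in place, the arithmetic is routine. Summing the resulting per-$x$ bounds over $x\in X$ and adding $\Delta_{G\setminus X}(S)$ to each side is then exactly what yields Theorem~\ref{thm:schurComplement}.
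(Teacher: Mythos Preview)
Your setup and reduction are exactly what the paper does: with $a:=\ww(x,S)$ and $b:=\ww(x,\VC\setminus S)$ one has $\ww^{(x)}(S)=\min\{a,b\}$ and, factoring the double sum, $\ww_{K_x}(S)=ab/(a+b)$. The paper then (taking $a\le b$) writes $\ww_{K_x}(S)=a\cdot\frac{b}{a+b}$ and notes $\frac{b}{a+b}\in[\tfrac12,1]$, which is precisely your ``one-line manipulation''.

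There is, however, a genuine gap in your final step. The lemma as printed has its two quantities swapped: the upper bound $\ww^{(x)}(S)\le\ww_{K_x}(S)$, i.e.\ $\min\{a,b\}\le ab/(a+b)$, is \emph{false} whenever $0<a<b$ (with $a\le b$ it would say $a+b\le b$; concretely, $a=1$, $b=3$ gives $1\le 3/4$). What the paper's proof actually establishes---and what the preceding reduction from Theorem~\ref{thm:schurComplement} asks for---is the reversed chain
\[
\tfrac12\,\ww^{(x)}(S)\ \le\ \ww_{K_x}(S)\ \le\ \ww^{(x)}(S).
\]
Your own computation $\ww_{K_x}(S)=\min\{a,b\}\cdot\frac{\max\{a,b\}}{a+b}$ yields this correct version immediately. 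The gap is that you transcribed the stated inequalities verbatim and declared both ``routine'' without carrying them out; had you done so, you would have seen that one of them fails and that the statement needs correcting.
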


\begin{proof}
	Without loss of generality, assume $\ww(x,S) \leq \ww(x,\VC\setminus S)$, so $\ww^{(x)}(S) = \ww(x,S) = \sum_{u \in S \cap N(x)} \ww (x,u)$ 
	
	\[\ww_{K_x}(S) = \sum_{u \in S \cap N(x)} \sum_{v \in (\VC \setminus S) \cap N(x)} \frac{\ww (x,u)\ww (x,v)}{\sum_{i \in N(x)} \ww (x,i)} = \sum_{u \in S \cap N(x)} \ww (x,u) \frac{\ww(x,\VC\setminus S)}{\sum_{i \in N(x)} \ww (x,i)}\] where by definition $\sum_{i \in N(x)} \ww (x,i) = \ww(x,S) + \ww(x,\VC\setminus S)$ and so by assumption \[\frac{1}{2} \leq \frac{\ww(x,\VC\setminus S)}{\sum_{i \in N(x)} \ww (x,i)} \leq 1\]
\end{proof}

\begin{proof}(of Lemma~\ref{lem:elimWeight})
	Each edge in $ (u,v)_x \in G_{\VC}$ has weight \[ \ww_{(u,v)_x} = 
	\frac{\ww (x,v) \ww (x,u)}{\sum_{i \in N(x)} \ww (x,i)}
	\], $\ww (x,v) \ww (x,u) \geq \gamma^2$ and $\sum_{i \in N(x)} w_{(x,i)} \leq \gamma U d$. Also, $\sum_{i \in N(x)} \ww (x,i) \geq \max\{\ww (x,v) \ww (x,u) \},$ implying \[ \frac{\ww (x,v) \ww (x,u)}{\sum_{i \in N(x)} \ww (x,i)} \leq
	\frac{\max\{\ww (x,v) \ww (x,u)\}^2}{\sum_{i \in N(x)} \ww (x,i)} \leq \max\{\ww (x,v) \ww (x,u)\}
	\]
\end{proof}

Next we bound the size of the vertex cover formed
by removing all leaves, compared to the optimum.

\begin{proof}(of Lemma~\ref{lem:treeApprox})
	From~\cite{GuptaS09,DabneyDH09}, given a tree $T_0$ with root $r_0$, leaves $l(T_0)$, and parents of the leaves $p(T_0)$, the greedy algorithm of taking $p(T_0)$ and iterating on $T_1 = T_0 \setminus \{l(T_0) \cup p(T_0)\}$, with $r_1 = r_0$ or $r_1$ arbitrary if $r_0 \in p(T_0)$, will give a minimum vertex cover of $T_0$. If $T_1$ is a forest, iterate on each tree of the forest, where $r_0$ is the root of whichever tree it is contained in, and the remaining trees are arbitrarily rooted. Assume that if $T_i = r_i$ for some $i$, then $p(T_i) = \emptyset$.
	
	Set $T = T_0$ and $r = r_0$, and suppose $T_0$ can be decomposed into $T_0 \ldots T_d$ as above. Therefore, $\bigcup_{i=0}^d p(T_i)$ is a minimum vertex cover, and $\VC$ is $p(T_d) \cup \bigcup_{i=0}^{d-1} (p(T_i) \cup l(T_{i+1}))$
	
	By construction, all $p(T_i)$ and $l(T_j)$ are disjoint, and we claim that $|p(T_i)| \geq |l(T_{i+1})|$ for all $i$. Assume $T_i$ is a tree, and this will clearly still hold if $T_i$ is a collection of disjoint trees. Each vertex in $l(T_{i+1})$ was not a leaf in $T_i$ and is now a leaf in $T_{i+1}$. Further, $r_i \notin l(T_{i+1})$ because if $r_i \in T_{i+1}$, then $r_{i+1} = r_i$. Therefore, each vertex in $l(T_{i+1})$ must have had its degree reduced by removing $l(T_i)$ and $p(T_i)$. A vertex in $l(T_{i+1})$ cannot be connected to a vertex in $l(T_i)$ because then it would be in $p(T_i)$. Consequently, it must be connected to some vertex in $p(T_i)$, and if $|p(T_i)| < |l(T_{i+1})|$, then two vertices in $l(T_{i+1})$ must be connected to the same vertex in $p(T_i)$, creating a cycle in $T_i$, giving a contradiction. Thus	\[|\VC| = p(T_d) + \sum_{i=0}^{d-1} (|p(T_i)| + |l(T_{i+1})|) \leq p(T_d) + \sum_{i=0}^{d-1} 2|p(T_i)| \leq 2 \sum_{i=0}^d |p(T_i)| = 2|\MVC| \]

\end{proof}

\end{document}